\newtheorem{theorem}{Theorem}[section]
\newtheorem{lemma}[theorem]{Lemma}
\newtheorem{corollary}[theorem]{Corollary}
\newtheorem{proposition}[theorem]{Proposition}
\newtheorem{reduction}{Reduction}
\newtheorem{claim}{Claim}
\newenvironment{definition}{$\;$\newline \noindent {\bf
    Definition}$\;$}{$\;$\newline}
\def\boxit#1{\vbox{\hrule\hbox{\vrule\kern4pt
  \vbox{\kern1pt#1\kern1pt}
\kern2pt\vrule}\hrule}}
\newcommand{\fis}{forbidden induced subgraph}
\newcommand{\fiss}{forbidden induced subgraphs}
\newcommand{\mfis}{minimal forbidden induced subgraph}
\newcommand{\mfiss}{minimal forbidden induced subgraphs}
\newcommand{\mfset}{minimal forbidden set}
\newcommand{\mfsets}{minimal forbidden sets}
\newcommand{\smodule}{module}
\begin{document}

\title{\bf Interval Deletion is Fixed-Parameter
  Tractable\thanks{Supported in part by the European Research Council
    (ERC) under grant 280152 and the Hungarian Scientific Research
    Fund (OTKA) under grant NK105645.}}

\author{{\sc Yixin Cao}\thanks{Institute for Computer Science and
    Control, Hungarian Academy of Sciences (MTA SZTAKI), Budapest,
    Hungary.  Email: \href{mailto:yixin@sztaki.hu}{\tt
      yixin@sztaki.hu}, \href{mailto:dmarx@cs.bme.hu}{\tt
      dmarx@cs.bme.hu}.}  \and {\sc D\'aniel Marx}$^{\dag}$}

\date{}%{\today}
\maketitle

\begin{abstract}
  We study the minimum \emph{interval deletion} problem, which asks
  for the removal of a set of at most $k$ vertices to make a graph of
  $n$ vertices into an interval graph.  We present a parameterized
  algorithm of runtime $10^k \cdot n^{O(1)}$ for this problem,
  that is, we show the problem is fixed-parameter tractable.

  % \keywords{interval graph, parameterized computation, interval
  % deletion, interval completion, asteroidal triple, modular
  % decomposition, congenial hole}
\end{abstract}

\section{Introduction} 
A graph is an interval graph if its vertices can be assigned to
intervals of the real line such that there is an edge between two
vertices if and only if their corresponding intervals intersect.
Interval graphs are the natural models for DNA chains in biology and
many other applications, among which the most cited ones include jobs
scheduling in industrial engineering
\cite{bar-noy-01-resource-allocation} and seriation in archeology
\cite{kendall-69-seriation-1}.  Motivated by pure contemplation of
combinatorics and practical problems of biology respectively,
Haj{\'o}s \cite{hajos-57-interval-graphs} and Benzer
\cite{benzer-59-topology-genetic-structure} independently initiated
the study of interval graphs.

Interval graphs are a proper subset of chordal graphs.  After more
than half century of intensive investigation, the properties and the
recognition of interval and chordal graphs are well understood
\cite{booth-76-test-c1p}.  More generally, many NP-hard problems
(coloring, maximum independent set, etc.) are known to be
polynomial-time solvable when restricted to interval or chordal
graphs.  Therefore, one would like to generalize these results to
graphs that do not belong to these classes, but close to them in the
sense that they have only a few ``erroneous''/``missing'' edges or
vertices.  As a first step in understanding such generalizations, one
would like to know how far the given graph is from the class and to
find the erroneous/missing elements.  This leads us naturally to the
area of graph modification problems, where given a graph $G$, the task
is to apply a minimum number of operations on $G$ to make it a member
of some prescribed graph class ${\cal F}$.  Depending on the
operations we allow, we can consider, e.g., completion
(edge-addition), edge-deletion, and vertex-deletion versions of these
problems.  Let us point out that, when $\cal F$ is hereditary, the
vertex deletion version can be considered as the most robust variant,
which in some sense encompasses both edge addition and edge deletion:
if $G$ can be made a member of ${\cal F}$ by $k_1$ edge additions and
$k_2$ edge deletions, then it can be also made a member of ${\cal F}$
by deleting at most $k_1+k_2$ vertices (e.g., by deleting one endpoint
of each added/deleted edge).

Unfortunately, most of these graph modification problems are
computationally hard: for example, a classical result of Lewis and
Yannakakis \cite{lewis-80-node-deletion-np} shows that the vertex
deletion problem is NP-hard for {\em every} nontrivial and hereditary
class ${\cal F}$, and according to Lund and Yannakakis
\cite{lund-93-approximation-maximum-subgraph}, they are also MAX
SNP-hard.  Therefore, early work of Kaplan et
al.~\cite{kaplan-99-chordal-completion} and
Cai~\cite{cai-96-hereditary-graph-modification} focused on the
fixed-parameter tractability of graph modification problems.  Recall
that a problem, parameterized by $k$, is {\em fixed-parameter
  tractable (FPT)} if there is an algorithm with runtime $f(k)\cdot
n^{O(1)}$, where $f$ is a computable function depending only on $k$
\cite{downey-fellows-13}.  In the special case when the desired graph
class ${\cal F}$ can be characterized by a finite number of forbidden
(induced) subgraphs, then fixed-parameter tractability of such a
problem follows from a basic bounded search tree algorithm
\cite{cai-96-hereditary-graph-modification}.  However, many important
graph classes, such as forests, bipartite graphs, and chordal graphs
have minimal obstructions of arbitrarily large size (cycles, odd
cycles, and holes, respectively).  It is much more challenging to
obtain fixed-parameter tractability results for such classes, see
results on, e.g., bipartite graphs
\cite{reed-04-odd-cycle-transversals,
  kawarabayashi-10-linear-odd-cyles-transversal}, planar graphs
\cite{marx-12-planar-deletion,
  kawarabayashi-09-linear-planar-deletion}, acyclic graphs
\cite{cao-10-ufvs,chen-08-dfvs}, and minor-closed classes
\cite{adler-08-excluded-minors,fomin-12-f-deletion}.

For interval graphs, the fixed-parameter tractability of the
completion problem was raised as an open question by Kaplan et
al.~\cite{kaplan-99-chordal-completion} in 1994, to which a positive
answer with a $k^{2k}\cdot n^{O(1)}$-time algorithm was given by
Villanger et al.~\cite{villanger-09-interval-completion} in 2007.  In
this paper, we answer the complementary question on vertex deletion:
\begin{theorem}[{\bf Main result}]\label{thm:main-alg}
  There is a $10^k\cdot n^{O(1)}$-time algorithm for deciding whether
  or not there is a set of at most $k$ vertices whose deletion makes
  an $n$-vertex graph $G$ an interval graph.
\end{theorem}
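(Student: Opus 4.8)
The plan is to build a bounded-depth search tree in which every branch destroys one obstruction and decreases the budget $k$ by one. The starting point is the classical characterization of Lekkerkerker and Boland: a graph is an interval graph if and only if it is chordal and has no asteroidal triple; equivalently, if and only if it contains no \emph{hole} (induced cycle of length at least four) and no induced subgraph from a certain \emph{finite} family $\mathcal F$ of small graphs (an asteroidal triple inside a chordal graph can always be witnessed on a bounded number of vertices). Hence a set $X$ of at most $k$ vertices makes $G$ an interval graph if and only if $X$ meets the vertex set of every induced copy of a member of $\mathcal F$ and of every hole. Two observations drive the algorithm. First, the ``no obstruction'' property is inherited by induced subgraphs, so an invariant of the form ``no member of $\mathcal F$ and no hole of length at most $c$'' is preserved along every branch that only deletes vertices. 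Second, a hole minus any one of its vertices is a path, hence chordal, so deleting a vertex \emph{off} a hole $C$ leaves $C$ intact; therefore the deletion set is forced to contain a vertex \emph{of} $C$ for every hole $C$.

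First I would handle the \textbf{small obstructions}. Fix a constant $c$ bounding the size of every graph in $\mathcal F$ (and set the hole threshold a bit larger). While the current graph contains an induced copy of some $F\in\mathcal F$ or a hole of length at most $c$, pick such a copy and branch over its at most $c$ vertices, recursing with $k$ decreased by one; a copy can be found in polynomial time via an interval/chordality recognition test together with brute force over the $O(1)$-size templates. This branching has constant factor and depth at most $k$, so it runs in $c_1^{\,k}\cdot n^{O(1)}$ for a constant $c_1$, and it terminates with an instance having no small obstruction and no short hole. By the characterization and the first observation above, such an instance becomes an interval graph the moment it becomes chordal; so the only remaining obstructions are \textbf{long holes}, and by the second observation we must hit each of them on one of its own vertices.

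The crux --- and the bulk of the proof --- is to hit all long holes with few deletions while keeping the branching factor bounded, the difficulty being that a single long hole has $\Omega(n)$ vertices, so blind branching over ``which vertex of the hole to delete'' is hopeless. The plan is to exploit the rigidity forced by the absence of small obstructions. If $C$ is a long hole, then no outside vertex can see three pairwise far-apart arcs of $C$ (such a vertex together with three short sub-arcs would induce a bounded-size non-interval graph, hence a member of $\mathcal F$), so every outside vertex attaches to $C$ within at most two stretches; after computing the modular decomposition --- interval graphs are closed under substitution, and a \emph{prime} interval graph has an essentially unique clique path --- and passing to the prime case, the graph around a long hole becomes an almost rigid ``chain'' of cliques. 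In this situation I would prove a \emph{delete-or-reduce} dichotomy: for a suitably chosen long hole $C$ (say one realized around an extreme end of the chain, or around an outside vertex seeing a longest arc of $C$) there is either a set $W$ of \emph{constantly many} vertices that every minimum deletion set must intersect --- and we branch over $W$, decreasing $k$ --- or the graph is so close to a blow-up of a single long cycle that a polynomial-time reduction strictly simplifies it (or reads off the forced deletions). Iterating this until no long hole remains finishes the instance.

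Correctness follows from the Lekkerkerker--Boland characterization together with the exhaustiveness of each branching rule and the preservation of the ``no small obstruction'' invariant; the running time is $10^k\cdot n^{O(1)}$ because every node of the search tree does polynomial work (recognition, template search, modular decomposition, the reduction rule) and, once the case analysis is tuned, branches into at most $10$ children while decreasing $k$, so the tree has at most $10^k$ leaves. The step I expect to be by far the hardest is the long-hole analysis of the third paragraph: establishing the rigidity lemmas for graphs with no small obstruction, combining them cleanly with the modular decomposition, and --- especially --- extracting a \emph{bounded} witness set together with a correct fallback reduction, so that the branching factor, and hence the base of the exponential, can be pushed down to $10$.
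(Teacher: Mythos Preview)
Your plan rests on a false premise. You assert that ``an asteroidal triple inside a chordal graph can always be witnessed on a bounded number of vertices,'' so that after clearing a finite list $\mathcal F$ the only unbounded obstructions are holes. That is not the Lekkerkerker--Boland characterization: besides the four sporadic graphs (long claw, whipping top, net, tent) and the holes, there are two \emph{infinite} families of minimal chordal non-interval graphs, the $\dag$-AWs and $\ddag$-AWs, of size $d+4$ and $d+5$ for every $d\ge 3$ (resp.\ $d\ge 2$). These are chordal and minimal as AT-witnesses, so a graph with no small obstruction and no hole at all can still be arbitrarily far from interval. Your sentence ``such an instance becomes an interval graph the moment it becomes chordal'' is therefore wrong, and the entire third-paragraph programme (``the only remaining obstructions are long holes'') collapses.

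The paper's architecture is essentially the reverse of what you sketch, and the order matters. After branching on forbidden sets of size at most $10$ (your first step), the paper applies a \emph{module reduction}: a nontrivial non-clique module $M$ is either localised, completed to a clique, or split into subinstances with smaller parameter. Only in the resulting \emph{reduced} graph does the hole analysis go through, via two structural theorems: (i) every shallow terminal of any AW is simplicial, and (ii) any two holes are \emph{congenial} (each is contained in the closed neighbourhood of the other). From (ii) one shows that minimal hole covers are cliques and correspond to minimal separators of an interval subgraph $G_0-N[v]$, so there are at most $n^2$ of them and they can be enumerated; the algorithm branches over all of them \emph{once}, contributing only a polynomial factor --- there is no $c^k$-type branching for holes at all. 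What remains is a chordal graph where, by (i), the large AWs are still present but their shallow terminals are simplicial; this forces a \emph{caterpillar} clique tree, and a careful analysis of a leftmost minimal-container AW with short base produces a set of $10$ vertices that some optimum solution must hit. The base $10$ in $10^k$ comes from this AW step, not from the holes. Your proposal buries the modular decomposition inside the hole analysis and never addresses the unbounded chordal AWs; both of those are where the actual work lies.
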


\textbf{Related work.}  Let us put our result into context. Interval
graphs form a subclass of chordal graphs, which are graphs containing
no induced cycle of length greater than 3 (also called {\em holes}).
In other words, the minimal obstruction for being a chordal graph
might be holes of arbitrary length, hence infinitely many of them.
Even so, {\sc chordal completion} (to make a graph chordal by the
addition of at most $k$ edges) can still be solved by a bounded search
tree algorithm by observing that a large hole immediately implies a
negative answer to the problem
\cite{kaplan-99-chordal-completion,cai-96-hereditary-graph-modification}.
No such simple argument works for {\sc chordal deletion} (to make the
graph chordal by removing at most $k$ edges/vertices) and its
fixed-parameter tractability was procured by a completely different
and much more complicated approach \cite{marx-10-chordal-deletion}.

It is known that a graph is an interval graph if and only if it is
chordal and does not contain a structure called ``asteroidal triple''
(AT for short), i.e., three vertices such that each pair of them is
connected by a path avoiding neighbors of the third one
\cite{lekkerkerker-62-interval-graphs}.  Therefore, in the graph
modification problems related to interval graphs, one has to destroy
not only all holes, but all ATs as well.  The algorithm of Villanger
et al.~\cite{villanger-09-interval-completion} for the {\sc interval
  completion} problem first destroys all holes by the same bounded
search tree technique as in {\sc chordal completion}.  This step is
followed by a delicate analysis of the ATs and a complicated branching
step to break them in the resulting chordal graph.

A subclass of interval graphs that received attention is the class of
unit interval graphs: graphs that can be represented by intervals of
unit length. Interestingly, this class coincides with proper interval
graphs, which are those graphs that have a representation with no
interval containing another one. It is known that unit interval graphs
can be characterized as not having holes and three other specific
forbidden subgraphs, thus graph modification problems related to unit
interval graphs \cite{kaplan-99-chordal-completion,villanger-13-pivd}
are very different from those related to interval graphs, where the
minimal obstructions include an infinite family of ATs.

\textbf{Our techniques.}  Even though both {\sc chordal deletion} and
{\sc interval completion} seem related to {\sc interval deletion}, our
algorithm is completely different from the published algorithms for
these two problems.  The algorithm of
Marx~\cite{marx-10-chordal-deletion} for {\sc chordal deletion} is
based on iterative compression, identifying irrelevant vertices in
large cliques, and the use of Courcelle's Theorem on a bounded
treewidth graph; none of these techniques appears in the present
paper.

Villanger et al.~\cite{villanger-09-interval-completion} used a simple
bounded search tree algorithm to try every minimal way of completing
all the holes; therefore, one can assume that the input graph is
chordal.  ATs in a chordal graph are known to have the property of
being {\em shallow}, and in a minimal witness of an AT, every vertex
of the triple is {\em simplicial}.  This means that the algorithm of
\cite{villanger-09-interval-completion} can focus on completing such
ATs (see also \cite{cao-13-interval-completion}).  On the other hand,
there is no similar upper bound known on the number of minimal ways of
breaking all holes by removing vertices, and it is unlikely to exist.
Therefore, in a sense, {\sc interval deletion} is inherently harder
than {\sc interval completion}: in the former problem, we have to deal
with two types of forbidden structures, holes and shallow ATs, while
in the second problem, only shallow ATs concern us.  Indeed, we spend
significant effort in the present paper to make the graph chordal; the
main part of the proof is understanding how holes interact and what
the minimal ways of breaking them are.

 The main technical idea to handle holes is developing a reduction
 rule based on the modular decomposition of the graph and analyzing
 the structural properties of reduced graphs.  It turns out that the
 holes remaining in a reduced graph interact in a very special way
 (each hole is fully contained in the closed neighborhood of any other
 hole).  This property allows us to prove that the number of minimal
 ways of breaking the holes is polynomially bounded, and thus a simple
 branching step can reduce the problem to the case when the graph is
 chordal.  As another consequence of our reduction rule, we can prove
 that this chordal graph already has a structure close to interval
 graphs (it has a clique tree that is a caterpillar).  We can show
 that in such a chordal graph, ATs interact in a
 well-behaved way and we can find a set of 10 vertices such that there
 always exists a minimum solution that contains at least one of these
 10 vertices.  Therefore, we can complete our algorithm by branching
 on the deletion of one of these vertices.

 \textbf{Motivation.}  The motivation for the graph modification
 problem studied in this paper is twofold: theoretical and coming from
 applications.  Many classical graph-theoretic problems can be
 formulated as graph deletion to special graph classes.  For instance,
 \textsc{vertex cover}, \textsc{feedback vertex set}, \textsc{cluster
   vertex deletion}, and \textsc{odd cycle transversal} can be viewed
 as vertex deletion problems where the class ${\cal F}$ is the class
 of all empty graphs, forests, cluster graphs (i.e., disjoint union of
 cliques), and bipartite graphs, respectively.  Thus, the study of
 graph modification problems related to important graph classes can be
 seen as a natural extension of the study of classical combinatorial
 problems. In light of the importance of interval graphs, it is not
 surprising that there are natural combinatorial problems that can be
 formulated as, or computationally reduced to \textsc{interval
   deletion}, and then our algorithm for \textsc{interval deletion}
 can be applied.  For instance, Narayanaswamy and Subashini
 \cite{narayanaswamy-13-d-cos-r} recently used Theorem~\ref{thm:main-alg}
 as a subroutine to solve the maximum \textsc{consecutive ones
   sub-matrix} problem and the minimum \textsc{convex bipartite
   deletion} problem.

 As a historical coincidence, interval graph modification problems are
 motivated not only from the aforementioned theoretical studies, but
 because they have wide applications.  One central problem in molecular
 biology is to reconstruct the relative positions of clones along the
 target DNA based on their pairwise overlap information obtained via
 experimental methods.  These data are naturally formulated as a
 graph, where each clone is a vertex, and two clones are adjacent
 if{f} they overlap.  The graph should be an interval graph provided
 the relations are \emph{perfect}, and the problem is then equivalent
 to the construction of its interval model, which can be done in
 linear time.  However, real data are always inconsistent and
 contaminated by a few but crucial errors, which have to be detected
 and fixed.  In particular, on the detection of false-positive errors
 that correspond to false edges, Goldberg et
 al.~\cite{goldberg-95-interval-edge-deletion} proposed the
 \textsc{interval edge deletion} problem (to make the graph an
 interval graph by the deletion of at most $k$ edges) and showed its
 NP-hardness.  This problem is equivalent to the {maximum
   \emph{spanning} interval subgraph}, and is not known to be FPT or
 not.  Moreover, false-negative errors are also possible, which
 significantly complicates the situation.

 In this regard, we turn to the clones (vertices) involved in
 erroneous relations (edges) instead of the relations themselves, and
 try to identify them based on a similar assumption.  More
 specifically, we study the \textsc{interval (vertex) deletion}
 problem, which is equivalent to finding the {maximum \emph{induced}
   interval subgraph}.  Conceptually, this formulation is capable of
 dealing with both false-negatives and false-positives.
 Computationally, the number of clones involved in mis-observed
 relations is never larger, and believed to be significantly smaller,
 than the number of erroneous relations.  It might thus provide better
 assistance to biologists by revealing more meaningful information in
 less time, as proclaimed by Karp \cite{karp-93-mapping-genomes}:
\begin{quote} 
  {\em Thus, optimization methods should be viewed not as vehicles for
    solving a problem, but for proposing a plausible hypothesis to be
    confirmed or disconfirmed by further experiments. The search for
    the correct solution of a reconstruction problem must inevitably
    be an iterative process involving a close interaction between
    experimentation and computation.}
\end{quote}

In a seriation problem of archeology, overlap information of a
collection of artifacts is given, and we are asked to put them in
chronological order.  Again we cannot expect the data to be consistent
and have to deal with errors first.  In particular, the famous
\emph{Berge mystery story} \cite{golumbic-2004-perfect-graphs} is
essentially a seriation problem with false overlap information given
by a \emph{cheater}, and can be viewed as {\sc interval deletion} with
$k=1$.

\section{Outline}\label{sec:results}
The purpose of this section is to describe the main steps of our
algorithm at a high level.  We say that a set $Q\subset V(G)$ is an
\emph{interval deletion set} to a graph $G$ if $G - Q$ is an interval
graph.  An interval deletion set $Q$ is {\em minimum} if there is no
interval deletion set strictly smaller than $|Q|$, and it is {\em
  minimal} if no proper subset $Q'\subset Q$ is an interval deletion
set.  A set $X$ of vertices is called a {\em \mfset}\ if $X$ does not
induce an interval graph but every proper subset $X'\subset X$ does;
the subgraph $G[X]$ is called a {\em \mfis}.  Clearly, set $Q$ is an
interval deletion set if and only if it intersects every \mfset. Our
goal is to find an interval deletion set of size at most $k$. For
technical reasons, it will be convenient to define the problem as
follows: \medskip

\fbox{\parbox{0.85\linewidth}{
    {\sc interval deletion}: Given a graph $G$ and an integer
    parameter $k$, return
    \begin{itemize}
    \item if an interval deletion set of size $\le k$ exists, a {\em
        minimum} interval deletion set $Q\subset V(G)$;
    \item if no interval deletion set of size
        $\le k$ exists, ``NO.''
    \end{itemize}}
}

\paragraph{PHASE 1: Preprocessing.}  The first phase of the algorithm
applies two reduction rules exhaustively.  They either simplify the
instance or branch into a constant number of instances with strictly
smaller parameter value. The first reduction rule is straightforward:
we destroy every forbidden set of size at most $10$.

\begin{reduction}\label{rule:small-forbidden-subgraph}
  {\bf [Small forbidden sets]} Given an instance $(G,k)$ and a \mfset\
  $X$ of no more than $10$ vertices, we branch into $|X|$ instances,
  $(G-v, k-1)$ for each $v \in X$.
\end{reduction}

A graph on which Reduction~\ref{rule:small-forbidden-subgraph} cannot
be applied is called \emph{prereduced}.  

The second reduction rule is less obvious and more involved. Recall
that a subset $M$ of vertices forms a \emph{module} if every vertex in
$M$ has the same neighbors outside $M$
\cite{gallai-67-transitive-orientation}.  A module $M$ of $G$ is {\em
  nontrivial} if $1<|M|<|V(G)|$.  We observe (see
Section~\ref{sec:modul-decomp}) that a \mfset\ $X$ of at least 5
vertices is either fully contained in a module $M$ or contains at most
one vertex of $M$.  Moreover, if $X \cap M = \{x\}$, then replacing
$x$ by any other vertex $x'\in M\setminus\{x\}$ in $X$ results in
another \mfset.  This permits us to branch on modules, as described in
the following reduction rule.

\begin{reduction}\label{rule:non-interval-modules}{\bf [Main]}
  Let $I = (G,k)$ be an instance where the graph $G$ is prereduced,
  and a nontrivial \smodule\ $M$ that does not induce a clique.
  \begin{enumerate}
  \item If every \mfset\ is contained in $M$, then return the instance
    $(G[M],k)$.
  \item If no \mfset\ is contained in $M$, then return the instance
    $(G_M,k)$, where $G_M$ is obtained from $G$ by inserting edges to
    make $G[M]$ a clique.
  \item Otherwise, we solve three instances: $I_1 = (G-M,k-|M|)$, $I_2
    = (G[M],k-1)$, and $I_3=(G',k-1)$, where $G'$ is obtained from $G$
    by adding a clique $M'$ of $(k+1)$ vertices, connecting every pair
    of vertices $u\in M'$ and $v\in N(M)$, and deleting $M$; letting
    $Q_1$, $Q_2$, and $Q_3$ be the solutions of these instances
    respectively, we return either $Q_1 \cup M$ or $Q_2\cup Q_3$
    (``NO'' when $|Q_2\cup Q_3| > k$), whichever is smaller.
  \end{enumerate}
\end{reduction}
That is, in the third case we branch into two directions: the solution
is obtained either as the union of $M$ and the solution of $I_1$, or
as the union of solutions of $I_2$ and $I_3$. The two branches
correspond to the two cases where the solution fully contains $M$ or
only a minimum interval deletion set to $G[M]$ (i.e., $Q_2$),
respectively.  Note that in the second branch, it can be shown that
$Q_3$ is disjoint from $M'$; hence $Q_2\cup Q_3$ is indeed a subset of
$V(G)$.  Moreover, we have to clarify what the behavior of the
reduction is if one or more of $Q_1$, $Q_2$, and $Q_3$ are ``NO.''  If
$Q_2$ or $Q_3$ is ``NO,'' then we define $Q_2\cup Q_3$ to be ``NO'' as
well.  If one of $Q_1$ and $Q_2\cup Q_3$ is ``NO,'' we return the
other one; if both of them are ``NO,'' we return ``NO'' as well.

A graph on which neither reduction rule applies is called
\emph{reduced}; in such a graph, every nontrivial module induces a
clique.  In Section~\ref{sec:reduct-rules-branch}, we prove the
correctness of the reductions rules and that it can be checked in
polynomial time if a reduction rule is applicable.  Hence after
exhaustive application of the reductions, we may assume that the graph
is reduced.

The reductions are followed by a comprehensive study on reduced graphs
that yields two crucial combinatorial statements.  The first statement
is on an AT $\{x,y,z\}$ that are witnessed by a \mfis\ $W$ different
from a hole.  We say that $x$ is the \emph{shallow terminal} if $W -
N[x]$ is an induced path.  We prove the shallow terminal $x$ is
simplicial, i.e., $N(x)$ induces a clique.

\begin{theorem}{\bf [Shallow terminals]}
  \label{thm:reduced-instances-ap}
  {All shallow terminals in a reduced graph are simplicial.}
\end{theorem}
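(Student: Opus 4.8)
The plan is to argue by contradiction, converting a ``surplus'' neighbour of a shallow terminal into a \mfset\ of at most $10$ vertices, which cannot exist in a reduced (hence prereduced) graph.

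\emph{Set-up.} Let $W$ be a non-hole \mfis\ witnessing an AT $\{x,y,z\}$ with shallow terminal $x$, and suppose towards a contradiction that $N(x)$ is not a clique, so that there are $u,v\in N(x)$ with $uv\notin E(G)$. I would first pin down the shape of $W$. Since $W$ is a \mfis\ other than a hole it has no induced hole, hence is chordal, so it is a vertex-minimal chordal graph possessing an AT; by the classical fact recalled in the introduction every terminal of that AT is simplicial in $W$, and in particular $K:=N_W(x)$ is a clique. By the definition of shallow terminal $P:=W-N_W[x]$ is an induced path; it contains $y$ and $z$, a short argument from the minimality of $W$ shows they are its endpoints, and chordality forces every $\kappa\in K$ to see a contiguous sub-arc of $P$. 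Finally, since $G$ is prereduced and $V(W)$ is a \mfset\ we have $|V(W)|\ge 11$; and since $K$ is a clique, not both of $u,v$ lie in $V(W)$, so we may assume $v\notin V(W)$ (and $u\in K$ if $u\in V(W)$).

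\emph{The construction.} Because $W$ is minimal, no proper induced subgraph of it is forbidden, so $v$ must be used. The point is that $v$, being adjacent to the shallow terminal $x$, is ``far'' from $y$ and from $z$, so that $x$, one or two vertices of $K$, short stubs of $P$ near $y$ and near $z$, and $v$ together re-create a bounded-size witness. Concretely, let $S$ consist of $x$, the second vertices of an $x$--$y$ path avoiding $N[z]$ and an $x$--$z$ path avoiding $N[y]$ (one or two vertices of $K$), and the couple of $P$-vertices that those paths use at the two ends. If $v$ is non-adjacent to $S\setminus\{x\}$, then $S\cup\{v\}$ is already a small \mfis: a net — the triangle on $x$ and the two selected vertices of $K$, with the two $P$-stubs and $v$ as its three pendants — when $|K|\ge 2$, and the spider $S(2,2,2)$ — with the single neighbour of $x$ as the centre and legs to $y$, to $z$, and to $v$ through $x$ — when $|K|=1$, i.e.\ when $x$ is a pendant of $W$. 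The rest is a case analysis on how $v$, and where needed $u$, attaches to $S$: if $v$ is adjacent to a vertex of $P$ that some $\kappa\in K$ also sees, then $\{v,x,\kappa\}$ together with that vertex (or a five-vertex extension through $y$ or $z$) contains an induced $C_4$ or $C_5$; if $v$ is adjacent to a stub vertex one likewise produces a net, a $C_4$, or a $C_5$ on a handful of vertices. In each branch $G$ contains a \mfset\ of at most $10$ vertices, contradicting prereducedness, so $N(x)$ is a clique.

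\emph{The main obstacle.} The delicate point is that $u$ and $v$ lie outside $W$, so their adjacencies to $K$ and to the interior of $P$ are a priori arbitrary, and a single well-placed edge can neutralise the net/spider skeleton while producing a long cycle rather than a bounded obstruction. Two configurations demand care. First, $v$ (or $u$) may be adjacent to many vertices of the arc $V(P)\cap N_W(\kappa)$ of some $\kappa\in K$ without being adjacent to $\kappa$, behaving as a second apex; here chordality of $W$ and the fact that $P$ is an induced path let one choose two non-consecutive arc vertices $p_i,p_j$ and extract the induced $4$-cycle $\{v,\kappa,p_i,p_j\}$. Second — and this is where I expect the real work — $u$ and $v$ may both be adjacent to $x$ and to the \emph{same} vertex $\kappa$ of $K$ while $uv\notin E(G)$; now no bounded \mfset\ appears locally, and I would instead lean on the \smodule\ rule: if $u$ and $v$ are false twins then $\{u,v\}$ is a nontrivial non-clique \smodule, contradicting that $G$ is reduced, while otherwise a vertex $w$ distinguishing $u$ and $v$ closes an induced $C_4$ with $\{u,x,v\}$ or with $\{u,\kappa,v\}$ — unless $w$ itself is a common neighbour of $u$ and $v$ adjacent to both $x$ and $\kappa$, in which case the argument is repeated with $w$ in place of $\kappa$. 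Getting this descent to terminate, and bookkeeping the remaining small-obstruction cases, is where most of the effort will go.
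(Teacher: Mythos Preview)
Your local approach recovers the easier cases, but the ``main obstacle'' you flag is real and the proposed descent does not close it. After the preliminary case analysis (which, done correctly, amounts to the paper's Lemma~\ref{lem:shallow}(1): every neighbour of the shallow terminal $s$ is adjacent to all centers), you are left with $u,v\in N(s)$, $uv\notin E(G)$, both adjacent to the center(s). Take the case where neither $u$ nor $v$ is adjacent to any base vertex. Then $\{s,u,v,c_1,c_2\}$ contains no hole (the edges $s c_i$ are diagonals of every potential $4$-cycle), and adjoining base vertices or base terminals does not help since $u$ and $v$ see none of them; one checks that $W\cup\{u,v\}$ contains no \mfset\ of size at most $10$. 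Your fallback also fails: $\{u,v\}$ need not be a module---they can differ on a third vertex $w\in N(s)$ of the same type---and such a $w$ does \emph{not} close a $4$-hole on $\{u,x,v,w\}$ (for that you would need $w\sim u,v$ and $w\not\sim x$, the opposite of what a distinguishing vertex inside $N(x)$ provides). Replacing the center $\kappa$ by $w$ is not a meaningful descent either: $w$ plays no structural role analogous to a center, and you have no decreasing measure.

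The paper's argument is different in kind. Rather than trapping $u,v$ in a bounded obstruction, it proves a structural result (Theorem~\ref{thm:shallow-is-module}): with $C=N(s)\cap \widehat N(B)$ and $M$ the component of $G-C$ containing $s$, the set $M$ is completely connected to $C$ and $C$ is a clique, so $M$ is a module with clique neighbourhood. In a reduced graph $M$ is then trivial or induces a clique, and either way $s$ is simplicial. The proof is an induction along chordless paths inside $M$, driven by a three-way classification of the vertices of $N(s)$ into ``full'', ``partial'', and ``none'' with respect to the base (Lemma~\ref{lem:shallow}(2)); the ``none'' vertices---precisely the ones your argument cannot handle---are all absorbed into $M$. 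The right module is thus the whole component $M$, not $\{u,v\}$, and identifying it requires the classification lemma rather than an ad hoc two-vertex argument.
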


We say that two holes are \emph{congenial} to each other if each
vertex of one hole is a neighbor of the other hole.  It turns out that
the holes are pairwise congenial in a reduced graph.

\begin{theorem}{\bf [Congenial holes]}
  \label{thm:reduced-instances-b}
  {All holes in a reduced graph are congenial to each other.}
\end{theorem}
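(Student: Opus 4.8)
The plan is to argue by contradiction: assuming two holes $C_1,C_2$ of a reduced graph $G$ are not congenial, I will exhibit either a \mfset\ of size at most $10$, which cannot occur in a prereduced graph, or a \mfis\ $W$ other than a hole whose shallow terminal lies on a hole, hence is not simplicial, contradicting Theorem~\ref{thm:reduced-instances-ap}. Two facts about reduced graphs are used throughout. Every hole has at least $11$ vertices, since a hole on at most $10$ vertices is itself a \mfset\ of size at most $10$. And a reduced graph containing a hole is connected: a connected component carrying a hole would be a nontrivial \smodule\ not inducing a clique, contradicting reducedness; so $C_1$ and $C_2$ lie in a common component. Suppose the holes are not congenial; by symmetry some $v\in C_1$ has $N[v]\cap C_2=\emptyset$, hence $v\notin C_2$. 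Fix a shortest path $P=p_0p_1\cdots p_m$ from $v=p_0$ into $N[C_2]$ with $p_0,\dots,p_{m-1}\notin N[C_2]$; then $m\ge 1$, $P$ is induced, $p_m\notin C_2$ (otherwise $p_{m-1}$ would be adjacent to $C_2$), $p_m$ has a neighbour on $C_2$, and $x:=p_{m-1}$ (or $x:=v$ if $m=1$) is adjacent to $p_m$ but to no vertex of $C_2$.

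The core construction is a bipartite claw. Suppose $p_m$ has a neighbour $c\in C_2$ such that the four vertices of $C_2$ within distance two of $c$ along the cycle $C_2$ are all non-neighbours of $p_m$; writing $C_2=c\,g_1\,g_2\cdots g_{n-1}\,c$ with $n=|C_2|\ge 11$, the seven vertices $x,p_m,c,g_1,g_2,g_{n-1},g_{n-2}$ induce the subdivided claw with centre $c$ and legs $c\,p_m\,x$, $c\,g_1\,g_2$, $c\,g_{n-1}\,g_{n-2}$: the non-adjacencies among $g_1,g_2,g_{n-1},g_{n-2}$ follow from $n\ge 6$, and the remaining non-adjacencies from $x\notin N[C_2]$ and from the choice of $c$. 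This $W$ is a \mfis, its three leaves form an asteroidal triple, and deleting $N[g_2]$ leaves the induced path $x\,p_m\,c\,g_{n-1}\,g_{n-2}$, so $g_2$ is a shallow terminal of $W$. But $g_2$ lies on the hole $C_2$, so its two $C_2$-neighbours are non-adjacent and $g_2$ is not simplicial, contradicting Theorem~\ref{thm:reduced-instances-ap}. When no such $c$ exists but $p_m$ still misses a vertex of $C_2$, one falls back on short holes: between two cyclically consecutive neighbours of $p_m$ on $C_2$ there is a $p_m$-avoiding sub-arc of $C_2$, which together with $p_m$ forms a chordless cycle; if it is short this is a \mfset\ of size at most $10$, and if it is long it is a new hole of length between $11$ and $n$ on which the same argument is repeated, now with a shorter path or a shorter hole. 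These cases settle everything except when $p_m$ is adjacent to \emph{all} of $C_2$, or attaches to it in one of a few exceptional ``clustered'' patterns from which no short hole and no bipartite claw can be carved.

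The residual case --- essentially, $p_m$ universal (or nearly so) to $C_2$, making $\{p_m\}\cup C_2$ a wheel in which the only \fis\ is the long hole itself --- is the one I expect to be the main obstacle, and it is precisely where the fact that $v$ lies on the \emph{second} hole $C_1$ must be used. The plan is to route an induced path from $v$ through $C_1$ into $N[C_2]$ and glue it to $P$, obtaining an induced path through $v$ joining $p_m$ to a vertex $w\in N[C_2]\setminus C_2$ whose interior avoids $N[C_2]$. If $w$ is also universal to $C_2$, then $w$ and $p_m$ are non-adjacent (they are joined by an induced path through $v$ of length at least three), so $\{p_m,c,w,c'\}$ is a $4$-hole for any two non-adjacent $c,c'\in C_2$ --- a \mfset\ of size $4$. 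If $w$ attaches to only part of $C_2$, the previous paragraph applies with $w$ in place of $p_m$. The genuinely delicate sub-case is when $C_1$ is disjoint from $N[C_2]$ and every connecting path meets $N[C_2]$ first at a vertex universal to $C_2$: here one must argue locally around $N[C_2]$, either forcing a short chordless cycle out of $C_1$, the connecting path and an arc of $C_2$, or exhibiting a nontrivial \smodule\ that does not induce a clique, again contradicting reducedness. Pinning down this last configuration is, I expect, the hardest part of the proof.
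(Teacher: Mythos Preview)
Your proposal is an honest sketch, and you already identify its own weakest point: the ``residual case'' where the first vertex $p_m$ that reaches $N[C_2]$ is a common neighbour of $C_2$ is not settled, and your plan for it (route a second path through $C_1$, find a second attachment $w$, etc.) terminates in another unresolved sub-case. That alone makes the argument incomplete. But there are also gaps earlier that you should be aware of.

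Your Case~1 (the long-claw construction) only fires when $p_m$ has an \emph{isolated} neighbour on $C_2$, i.e.\ essentially when $|N_{C_2}(p_m)|=1$. When $p_m$ has exactly two or three consecutive neighbours on $C_2$, Case~1 fails, and your Case~2 (``take the arc between consecutive neighbours and close it through $p_m$'') does not produce a strictly shorter hole: with two consecutive neighbours the resulting cycle has length $|C_2|$, and with a single neighbour there is no cycle at all. In those situations the correct small obstructions are a \emph{net} (two neighbours) and a size-$7$ $\dag$-AW (three neighbours), as in the paper's Proposition~\ref{lem:neighbors-are-consecutive} and the proof of Lemma~\ref{lem:hole-is-module}; your case split misses them. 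Moreover, when Case~2 does yield a new long hole $C_2'$ through $p_m$, your induction (``shorter path or shorter hole'') is not well-founded as stated: if $m=1$ then $v\sim p_m\in C_2'$, so the path length does not decrease and you would need to argue separately that $C_1$ and $C_2'$ are still non-congenial (which is not automatic).

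The paper sidesteps all of this with one structural observation you do not use: in a prereduced graph, every common neighbour of a hole $H$ is adjacent to \emph{every} vertex of $N[H]$ (Corollary~\ref{lem:common-neighbor-of-holes}, a direct consequence of Proposition~\ref{lem:five-path}). This is exactly what resolves your residual case. Once you know it, pick $H$ to be a \emph{shortest} hole among those congenial to it; the small-obstruction analysis then shows that no vertex of $N[H]\setminus\widehat N(H)$ can have a neighbour outside $N[H]$ (this is Lemma~\ref{lem:hole-is-module}), so $\widehat N(H)$ separates $N[H]\setminus\widehat N(H)$ from the rest of the graph. Since $\widehat N(H)$ is completely connected to $N[H]\setminus\widehat N(H)$, the latter is a module; it contains the hole $H$, so it is not a clique; and it is proper because some other non-congenial hole has a vertex outside $N[H]$. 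This directly contradicts reducedness --- no path-chasing, no iteration, and the ``$p_m$ universal to $C_2$'' case becomes the punchline rather than the obstacle.

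A minor point: invoking Theorem~\ref{thm:reduced-instances-ap} for the long claw is unnecessary (and formally outside its scope, since ``shallow terminal'' is defined only for $\dag$/$\ddag$-AWs). The long claw has seven vertices, so its mere existence already contradicts prereducedness.
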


We point out that circular-arc graphs form an important example of
graphs where the holes are pairwise congenial.  Indeed, all holes of a
reduced graph induce a circular-arc graph, but such a proof will not
be given in this paper, as it is unnecessary for our purpose here.
One may refer to \cite{villanger-13-pivd} on more intuition.

\paragraph{PHASE 2: Breaking holes.}  A consequence of
Theorem~\ref{thm:reduced-instances-b} is that if a vertex $v$ is in a
hole, then $N[v]$ intersects every hole and thus makes a \emph{hole
  cover}.  Intuitively, this suggests that a minimal hole cover has to
be very local in a certain sense. Indeed, by relating minimal hole
covers in the reduced graph to minimal separators in the subgraph $G -
N[v]$, we are able to establish a quadratic bound on the number of
minimal hole covers, and more importantly, a cubic time algorithm to
construct them.

\begin{theorem}{\bf [Hole covers]}
  \label{thm:alg-reduced-with-holes}
  Every reduced graph of $n$ vertices contains at most $n^2$ minimal
  hole covers, and they can be enumerated in ${O}(n^3)$ time.
\end{theorem}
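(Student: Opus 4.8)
If $G$ contains no hole, then $\emptyset$ is the unique minimal hole cover and there is nothing to do, so assume a hole exists and fix once and for all a vertex $v$ lying on some hole. The argument hinges on Theorem~\ref{thm:reduced-instances-b}: since every hole of $G$ is congenial to the hole through $v$, every hole contains a neighbor of $v$, and hence $N[v]$ is itself a hole cover. Consequently $G - N[v]$ is a \emph{chordal} graph, and a chordal graph on $n$ vertices has fewer than $n$ (inclusion-)minimal separators, all of which can be enumerated in polynomial time (for instance by reading off the minimal separators as intersections of adjacent bags of a clique tree). This is the only source of the polynomial bound, so the whole task is to reduce the counting of minimal hole covers of $G$ to the counting of minimal separators of the chordal graph $G - N[v]$, up to a bounded amount of ``boundary'' information living in $N[v]$.

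So fix a minimal hole cover $Q$ and split it as $Q = (Q\cap N[v]) \cup (Q\setminus N[v])$, viewing the ``interior part'' $Q\setminus N[v]$ as a subset of $G-N[v]$. The core structural claim I would prove is that minimality forces the interior part to be extremely thin: restricted to each component $C$ of $G-N[v]$ that it meets, $Q\setminus N[v]$ must be exactly a minimal separator of $C$ (equivalently of $G-N[v]$), because a vertex of $Q$ that does not separate some pair of vertices which must be separated in order to destroy a hole could be removed from $Q$; and the congeniality property (Theorem~\ref{thm:reduced-instances-b}) confines the interior part to only a bounded number of components, since holes living in ``far apart'' components could not all be pairwise congenial. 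Here one uses the precise description of how a hole through $v$ arises -- namely as $v$ together with an induced path in $G-N[v]$ joining two non-adjacent neighbors of $v$ -- together with the fact that $G$ is reduced so that no nontrivial module interferes. Once the interior part is pinned to a bounded combination of minimal separators, one shows that after deleting it the surviving holes all pass through $N[v]$ and remain pairwise congenial, so the residual ``boundary part'' $Q\cap N[v]$ is a minimal hole cover of a small, well-structured graph and is essentially determined by the interior part (up to one further vertex). Accounting for the at most $n$ minimal separators together with this residual boundary data yields the bound of $n^2$, and also handles the $O(1)$-sized minimal hole covers that sit entirely inside $N[v]$.

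For the enumeration algorithm I would run the constructive version of this analysis: compute the chordal graph $G-N[v]$, list its fewer than $n$ minimal separators in polynomial time, and for each of them (and each bounded combination of them) construct the candidate hole cover by solving the residual bounded-size problem on $N[v]$, then verify in polynomial time that the candidate is indeed a hole cover (a chordality test on $G$ minus the candidate) and that it is minimal (deleting each of its vertices in turn and re-testing). Tuning the per-separator work to $O(n^2)$ gives the claimed $O(n^3)$ total running time.

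The main obstacle is the core structural claim of the second paragraph: proving that minimality forces $Q\setminus N[v]$ to be precisely a (bounded union of) minimal separator(s) of $G-N[v]$ and that the part of $Q$ inside $N[v]$ is then determined. This is where all the hypotheses are used at full strength -- congeniality of holes to localize everything around $N[v]$, the reducedness of $G$ to rule out interfering modules, and the exact mechanism by which induced paths of $G-N[v]$ close up into holes through pairs of non-adjacent neighbors of $v$. Once this localization is in place, the chordal-graph separator bound and the algorithmic enumeration are routine.
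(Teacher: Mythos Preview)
Your plan has the right destination---relate minimal hole covers to minimal separators of a chordal subgraph obtained by deleting $N[v]$---but the route you sketch has a real gap and is more complicated than what the paper actually does. You fix a \emph{single} vertex $v$ and then try to decompose every minimal hole cover $Q$ as $(Q\cap N[v])\cup(Q\setminus N[v])$, hoping to pin the interior part down as a minimal separator and to argue the boundary part is ``essentially determined.'' But you never prove either claim; your justification that ``a vertex of $Q$ that does not separate some pair of vertices\ldots could be removed'' names no such pair, and the assertion that $Q\cap N[v]$ is determined up to one vertex has no argument behind it. With $v$ fixed once and for all, a minimal hole cover may well meet $N[v]$ in an uncontrolled way, and I do not see how your outline yields the $n^2$ bound.

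The paper avoids this difficulty by two structural lemmas you are missing. First, every minimal hole cover of the subgraph $G_0$ induced by the union of all holes is a \emph{clique} (Lemma~\ref{lem:hole-cover-is-clique}); this is where congeniality really bites. Second, rather than fixing $v$, the paper lets $v$ range over the vertices of a fixed shortest hole $H$: for each minimal hole cover $HC$ there exists some $v\in H$ with $HC$ not merely disjoint from $N[v]$ but \emph{nonadjacent} to $N_{G_0}[v]$ (Lemma~\ref{lem:non-neighbor-of-hole-cover}). With that choice of $v$, the whole of $HC$ sits cleanly inside $G_0-N[v]$, and Lemma~\ref{lem:hole-cover-is-separator} shows $HC$ is exactly a minimal separator of the interval graph $G_0-N[v]$. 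There is no boundary part to control at all. The $n^2$ bound then falls out as (at most $|H|\le n$ choices of $v$) $\times$ (at most $n$ minimal separators of an interval graph), and the enumeration is the obvious loop over $v\in H$. So the fix to your plan is: do not fix $v$; first prove the clique property, then vary $v$ over a shortest hole to place the entire hole cover outside $N[v]$.
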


Any {interval deletion} set must be a hole cover, and thus contains a
minimal hole cover.  This allows us to branch into at most $n^2$
instances, in each of which the input graph is chordal.  Note that
this branching step is applied only once; hence only a polynomial
factor will be induced in the running time.

\paragraph{PHASE 3: Breaking ATs.}
As all the holes have been broken, the graph is already chordal at the
onset of the third phase.  It should be noted that, however, the graph
might not be reduced, as new nontrivial non-clique modules can be
introduced with the deletion of a hole cover in Phase~2.  In
principle, we could rerun the reductions of Phase~1 to obtain a
reduced instance, but there is no need to do so at this point.  The
properties that we need in this phase are that graph is prereduced,
chordal, and every shallow terminal is simplicial
(Theorem~\ref{thm:reduced-instances-ap}).  We give a name to such graphs
and compare it with previously defined notions here.

\begin{itemize}
\item A graph is \emph{prereduced} if
  Reduction~\ref{rule:small-forbidden-subgraph} does not apply.
\item A prereduced graph is \emph{reduced} if
  Reduction~\ref{rule:non-interval-modules} does not apply.
\item A prereduced graph is \emph{nice} if it is chordal and every
  shallow terminal in it is simplicial.
\end{itemize}

While both reduced graphs and nice graphs are prereduced, they are
incomparable to each other.  As only vertex deletions are applied
after Phase 1, in the remainder of this algorithm the graph is an
induced subgraph of that in a previous step.  In other words, once a
hereditary property is obtained after Phase 1, it remains true
thereafter.  It is easy to verify that the three defining properties
of nice graphs are all hereditary.  On the one hand, after the end of
Phase~1, a reduced graph is prereduced by definition, and according to
Theorem~\ref{thm:reduced-instances-ap}, every shallow terminal in it is
simplicial.  On the other hand, Phase~2 destroys all holes and the
chordal property is obtained.  Therefore, the graph becomes nice after
Phase~2 and will remain nice till the end of our algorithm.

The removal of all simplicial vertices from a nice graph breaks all
ATs (Theorem~\ref{thm:reduced-instances-ap}), thereby yielding an
interval graph.  This implies that a nice graph has a very special
structure: It has a clique tree decomposition where the tree is a
caterpillar, i.e., a path with degree-1 vertices attached to it. In
other words, all vertices other than the shallow terminals can be
arranged in a linear way, which greatly simplifies the examination of
interactions between ATs.  As a consequence, we can select an AT that
is minimal in a certain sense, and single out $10$ vertices such that
there must exist a minimum interval deletion set destroying this AT
with one of these $10$ vertices. We can therefore safely branch on
removing one of these 10 vertices.
\begin{theorem}{\bf [Nice graphs]}
  \label{thm:alg-reduced-and-chordal}
  There is a $10^k\cdot n^{O(1)}$ time algorithm for \textsc{interval
    deletion} on nice graphs.
\end{theorem}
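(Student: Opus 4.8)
The plan is to exploit the caterpillar clique-tree structure that nice graphs possess. First I would make the iterative-compression-style observation explicit: since the only operation after Phase~1 is vertex deletion and niceness is hereditary, it suffices to give an algorithm that, given a nice graph $G$ together with a parameter $k$, either reports that there is no interval deletion set of size $\le k$ or finds a minimum one; the $10^k$ factor will come from a branching tree of depth $\le k$ where each node has $\le 10$ children. So the real content is the branching rule: in any nice graph that is not already an interval graph, exhibit a set $S$ of at most $10$ vertices such that some minimum interval deletion set contains a vertex of $S$. Given such an $S$, we branch on $(G - v, k-1)$ for each $v \in S$, recurse (re-checking that the resulting graph is still nice, which is automatic), and return the best solution found, augmented by $v$; if $G$ is already interval we return $\emptyset$.

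The core of the argument is the structural claim. By Theorem~\ref{thm:reduced-instances-ap}, removing all simplicial vertices of a nice graph $G$ destroys every shallow AT, and since $G$ is already chordal this yields an interval graph; hence $G$ has a clique tree that is a caterpillar, with the simplicial vertices (in particular the shallow terminals) sitting in the leaf bags and everything else arranged along the spine. I would use this linear order on the spine to pick a ``leftmost'' obstruction. Concretely, among all \mfiss\ $W$ of $G$ (which, the graph being chordal, are all shallow ATs), choose one whose shallow terminal $x$ and whose attachment to the spine is as far to the left as possible in a suitable well-ordering; then argue that any interval deletion set either contains $x$, or contains one of a bounded number of vertices adjacent to $x$ that lie on the relevant part of the caterpillar. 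The point is that if a solution $Q$ avoids all of these $\le 10$ vertices, then the obstruction $W$, or a slightly modified obstruction built from the caterpillar structure to its left, survives in $G - Q$, contradicting that $Q$ is an interval deletion set. Making ``to its left'' precise and showing that only a constant number of candidate vertices need to be examined is where the caterpillar shape does the heavy lifting: the two ends of the spine-interval spanned by $W - N[x]$, together with $x$ and a constant number of neighbors, should form the set $S$.

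The hard part will be pinning down exactly which $10$ vertices to branch on and proving the exchange argument that at least one of them lies in some minimum solution. This requires a careful case analysis of how an AT $\{x,y,z\}$ with shallow terminal $x$ sits inside the caterpillar: the two ``deep'' terminals $y,z$ and the connecting paths live along the spine, and one must show that if a minimum solution $Q$ destroys $W$ without touching $x$ or any of the designated spine vertices near $x$, then $Q$ can be rerouted to one that does touch $S$ without increasing its size — typically by showing that a minimal such $Q$ must cut one of a bounded number of specific minimal separators of the chordal graph, each contributing $O(1)$ relevant vertices. I would also need the (routine) facts that niceness, chordality, prereducedness, and the shallow-terminal property are preserved under vertex deletion, so that the recursion stays within the class, and that testing whether $G$ is already an interval graph and, if not, extracting a suitable minimal obstruction $W$ and the set $S$ can be done in polynomial time. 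With the branching set $S$ of size $\le 10$ in hand, the running-time bound $10^k \cdot n^{O(1)}$ follows from the standard search-tree analysis, completing the proof of Theorem~\ref{thm:alg-reduced-and-chordal}.
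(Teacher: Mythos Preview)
Your high-level plan matches the paper's: caterpillar clique tree, a leftmost minimal AW, and an exchange argument. But there is a real gap at precisely the point you flag as ``the hard part.'' You repeatedly commit to a branching set $S$ of \emph{at most $10$ vertices}, and suggest that minimal separators ``each contribute $O(1)$ relevant vertices.'' This is not what the paper proves, and there is no indication it holds: an AW can have an arbitrarily long base, and nothing forces a constant number of its vertices to meet some optimum solution. Your first proposed mechanism (``if $Q$ avoids $S$ then $W$ survives in $G-Q$'') cannot work either, since $Q$ may perfectly well destroy $W$ by hitting base vertices deep in the interior, far from any bounded set $S$.

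What Lemma~\ref{lem:chordal-to-interval} actually establishes is that, for a suitably chosen AW $W=(s:c_1,c_2:l,B,r)$, some minimum interval deletion set either (i) contains one of nine specific vertices $V_B=\{s,c_1,c_2,l,b_1,b_{d-2},b_{d-1},b_d,r\}$, or (ii) contains \emph{the entire set} $X=S_\ell\setminus\widehat N(B)$, where $S_\ell$ is a smallest separator on the spine strictly between $b_1$ and $b_{d-2}$. The set $X$ need not be of bounded size; the tenth branch deletes all of $X$ at once, and since $|X|\ge 1$ the parameter still drops, so the $10^k$ bound goes through. The exchange argument is: if a minimum solution $Q$ avoids $V_B$, then $Q'=(Q\setminus V_I)\cup X$ is again a minimum solution, where $V_I$ is the interior segment of the container $\gemini(W)$ outside $\widehat N(B)$. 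The cardinality inequality $|X|\le |Q\cap V_I|$ holds because $Q\cap V_I$ must itself separate $b_1$ from $b_{d-2}$ in $G-\widehat N(B)$ (otherwise one builds an AW on the surviving $l$--$r$ path that avoids $Q$), while $X$ was chosen to be a minimum such separator. Verifying that $Q'$ is still an interval deletion set is the delicate part and uses the leftmost-minimality of $\gemini(W)$ and shortness of $B$ to control how any surviving AW $W'$ can cross $V_I$. This separator-replacement idea, with one branch deleting an unbounded-size set, is the ingredient your sketch is missing.
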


Putting together these steps, the fixed-parameter tractability of
  \textsc{interval deletion} follows (see Figure~\ref{fig:alg}).

\begin{figure*}[t]
\setbox4=\vbox{\hsize28pc \noindent\strut
\begin{quote}
  \vspace*{-5mm} \footnotesize {\bf Algorithm
    Interval-Deletion($G,k$)}
  \\
  {\sc input}: a non-interval graph $G$ and a positive integer $k$.
  \\
  {\sc output}: a minimum interval deletion set $Q\subset V(G)$ of
  size $\leq k$ or ``NO.''

  1 \hspace*{3mm} Reduction~\ref{rule:small-forbidden-subgraph}: Let
  $U$ be a \mfset\ of at most 10 vertices;
  \\
  \hspace*{9mm} {\bf branch} on deleting one vertex of $U$;
  \\
  {\em $\setminus\!\!\setminus$ the graph will then be prereduced and
    remains so hereafter;}
  \\
  2 \hspace*{3mm} Reduction~\ref{rule:non-interval-modules}: Let $M$ be
  a nontrivial module of $G$ not inducing a clique;
  \\
  2.1 \hspace*{5mm} {\bf if} all \mfsets\ of $G$ are contained in $M$
  {\bf then}
  \\
  \hspace*{14mm} {\bf return} Interval-Deletion$(G[M],k)$;
  \\
  2.2 \hspace*{5mm} {\bf else if} no \mfset\ is contained in $M$ {\bf
    then}
  \\
  \hspace*{14mm} {\bf return} Interval-Deletion$(G_M,k)$, where edges
  are inserted to make $G[M]$ a clique;
  \\
  2.3 \hspace*{5mm} {\bf else branch} into three instances $I_1$,
  $I_2$, $I_3$;
  \\
  {\em $\setminus\!\!\setminus$ now the graph is reduced;}
  \\
  3 \hspace*{3mm} use the algorithm of
  Theorem~\ref{thm:alg-reduced-with-holes} to enumerate the at most
  $n^2$ minimal hole covers of $G$;
  \\
  {\em $\setminus\!\!\setminus$ the graph will then be nice and
    remains so hereafter;}
  \\
  4 \hspace*{3mm} {\bf for each} minimal hole cover $HC$ {\bf do}
  \\
  \hspace*{10mm} use the algorithm of
  Theorem~\ref{thm:alg-reduced-and-chordal} to solve {\bf ($G-HC,
    k-|HC|$)};
  \\
  5 \hspace*{3mm} {\bf return} the smallest solution obtained, or
  ``NO'' if all solutions are ``NO.''

\end{quote} \vspace*{-6mm} \strut} $$\boxit{\box4}$$
\vspace*{-9mm}
\caption{Outline of algorithm for \textsc{interval deletion}}
\label{fig:alg}
\end{figure*}
 
\paragraph{Theorem 1.1 (restated).}
  There is a $10^k\cdot n^{O(1)}$ time algorithm for deciding whether
  or not there is a set of at most $k$ vertices whose deletion makes
  an $n$-vertex graph $G$ an interval graph.
\begin{proof}
  The algorithm described in Figure~\ref{fig:alg} solves the problem
  by making recursive calls to itself, or calling the algorithm of
  Theorem~\ref{thm:alg-reduced-and-chordal} ${O}(n^2)$ times. In the
  former case, at most 10 recursive calls are made, all with parameter
  value at most $k-1$. In the latter case, the running time is
  $10^k\cdot n^{O(1)}$. It follows that the total running time of the
  algorithm is $10^k\cdot n^{O(1)}$.
\end{proof}

The paper is organized as follows.  Section~\ref{sec:pre} sets the
definitions and recalls some basic facts.
Section~\ref{sec:reduct-rules-branch} presents the details of the
first phase.  The next four sections are devoted to the proofs of
Theorems~\ref{thm:reduced-instances-ap}--\ref{thm:alg-reduced-and-chordal}.
Sections~\ref{sec:shallow} and \ref{sec:holes-in-modules} put shallow
terminals and congenial holes under thorough examination, and prove
Theorems~\ref{thm:reduced-instances-ap} and
\ref{thm:reduced-instances-b}, respectively.  Section~\ref{sec:holes}
fully characterizes minimal hole covers in reduced graphs and proves
Theorem~\ref{thm:alg-reduced-with-holes}.
Section~\ref{sec:caterp-decomp} presents the algorithm that destroys
ATs in nice graphs and proves
Theorem~\ref{thm:alg-reduced-and-chordal}.  Section~\ref{sec:remark}
closes this paper by some possible improvement and new directions.

\section{Preliminaries}\label{sec:pre}

All graphs discussed in this paper shall always be undirected and
simple.  A graph $G$ is given by its vertex set $V(G)$ and edge set
$E(G)$.  If a pair of vertices $v_1$ and $v_2$ is connected by an
edge, they are \emph{adjacent} to each other, and denoted by $v_1 \sim
v_2$, otherwise \emph{nonadjacent} and denoted by $v_1 \not\sim v_2$.
By $v \sim X$ we mean $v$ is adjacent to at least one vertex of the
set $X$.  Two vertex sets $X$ and $Y$ are completely connected if $x
\sim y$ for each pair of $x \in X$ and $y \in Y$.  A graph is {\it
  complete} if each pair of vertices are adjacent.  A \emph{clique} in
a graph is a subgraph that is complete, and a clique is maximal if it
is not contained in another clique.  A vertex is \emph{simplicial} if
its neighbors induce a clique.  A neighbor of a vertex is another
vertex that is adjacent to it, and the set of \emph{neighborhood} of a
vertex $v$ is denoted by $N(v)$.  The \emph{closed neighborhood} of
$v$ is defined as $N[v] = N(v) \cup \{v\}$.  This is generalized to a
vertex set $U$, whose closed neighborhood and neighborhood are defined
to be $N[U] = \bigcup_{v \in U} N[v]$ and $N(U) = N[U] \backslash U$.
The notation $N_U(v)$ ($N_U[v]$) stands for the neighbors of $v$ in
the set $U$, i.e., $N_U(v) = N(v) \cap U$ ($N_U[v] = N[v] \cap U$),
regardless of whether $v \in U$ or not.  The subgraph of a graph $G$
induced by a subset of vertices $U$ is denoted by $G[U]$, and $G - U$
is used as a shorthand for the subgraph induced by $V(G) \setminus U$.

A sequence of \emph{distinct} vertices $(v_0 v_1 \dots v_\ell)$ such
that $v_i \sim v_{i+1}$ for each $0 \leq i < \ell$ is called a
\emph{$v_0$-$v_\ell$ path}, whose \emph{length} is defined to be
$\ell$.  Vertices $v_0$ and $v_\ell$ are the \emph{ends} of the path,
while others, $\{v_1,\dots,v_{\ell-1}\}$, are called \emph{inner
  vertices}.  If the ends are distinct and adjacent, i.e., $\ell > 1$
and $v_0 \sim v_\ell$, then $(v_0 v_1 \dots v_\ell v_0)$ is called a
\emph{cycle}, whose \emph{length} is defined to be $\ell + 1$.  As an
abuse of notation, by $u \in P$ (resp. $u \in C$) we mean that the
vertex $u$ appears in the path $P$ (resp. cycle $C$), i.e., we use $P$
or $C$ as the set of vertices in the path (resp.  cycle).  A
\emph{chord} in a path or cycle is an edge between two non-consecutive
vertices in the path or cycle.  It is worth noting that the edge
$v_0v_\ell$, if exists, is a chord in the path $(v_0 v_1 \dots
v_{\ell})$, but not in the cycle $(v_0 v_1 \dots v_{\ell} v_0)$.  It
is easy to verify that no shortest path can contain a chord, so
between each pair of vertices of a connected graph there is a
chordless path.  A chordless cycle of length $\ell$, where $\ell\ge
4$, is called an \emph{($\ell$-)hole}.  A graph is \emph{chordal} if
it contains no hole, in other words, any cycle of length at least $4$
contains a chord.

Chordal graphs admit several important and related characterizations.
A set $S$ of vertices \emph{separates} $x$ and $y$, and is called an
\emph{$x$-$y$ separator} if there is no $x$-$y$ path in the subgraph
$G -S$, and \emph{minimal $x$-$y$ separator} if no proper subset of
$S$ separates $x$ and $y$.  For any pair of vertices $x$ and $y$, a
minimal $x$-$y$ separator is also called a \emph{minimal separator}.
A graph is chordal if and only if each minimal separator in it induces
a clique \cite{dirac-61-chordal-graphs}.  A nontrivial chordal graph
contains at least two simplicial vertices, and there is at least one
simplicial vertex in each component after the removal of any
separator.

A tree ${\cal T}$ whose nodes are the maximal cliques of a graph $G$
is a \emph{(maximal) clique tree} of $G$ if it satisfies the following
conditions: any pair of adjacent nodes $K_i$ and $K_j$ defines a
minimal separator that is $K_i \cap K_j$; for any vertex $x \in V$,
the maximal cliques containing $x$ correspond to a subtree of ${\cal
  T}$.  A graph is chordal if and only if it has such a clique tree.
A clique tree of a graph $G$ will be denoted by ${\cal T}(G)$, or
${\cal T}$ when the graph $G$ is clear from the context.  Without
distinguishing the node in a clique tree and the maximal clique in the
graph $G$ corresponding to it, we use $K$ to denote both.  A set of
vertices is a minimal separator of $G$ if and only if it is the
intersection of $K_i$ and $K_j$ for some edge $K_iK_j$ in ${\cal T}$
\cite{buneman-1974-rigid-circuit-graphs}.  This separator, $K_i\cap
K_j$, is a minimal $x$-$y$ separator for any pair of vertices $x \in
K_i \setminus K_j$ and $y \in K_j \setminus K_i$.

As interval graphs are chordal, all aforementioned properties also
apply to interval graphs.  Moreover, by the following characterization
of Fulkerson and Gross, each interval graph has a clique tree that is
a path.
\begin{theorem}[\cite{fulkerson-65-interval-graphs}]
  \label{thm:linear-tree}
  A graph $G$ is an interval graph if and only if the maximal cliques
  of $G$ can be linearly ordered such that, for each vertex $v$, the
  maximal cliques containing $v$ occur consecutively.
\end{theorem}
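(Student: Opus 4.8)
The plan is to prove both implications directly from the geometric definition of interval graphs, the only real tool being the Helly property of intervals on the line. Start with the easy direction ($\Leftarrow$). Given a linear order $K_1,\dots,K_m$ of the maximal cliques of $G$ in which, for every vertex $v$, the set of indices $\{i : v\in K_i\}$ consists of consecutive integers, let $l_v$ and $r_v$ be the smallest and largest such indices (well defined, since the clique $\{v\}$ lies in at least one maximal clique) and assign $v$ the interval $I_v=[l_v,r_v]$. I would then verify that $\{I_v\}_{v\in V(G)}$ represents $G$: if $u\sim v$, then $\{u,v\}$ extends to some maximal clique $K_i$, so $i\in I_u\cap I_v$; conversely, if $I_u\cap I_v\neq\emptyset$, then — the endpoints being integers — this intersection contains an integer $i$, whence $u,v\in K_i$ and, lying in a common clique, $u\sim v$ (when $u\neq v$). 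Thus $G$ is an interval graph.

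For the harder direction ($\Rightarrow$), fix an interval representation $\{I_v\}_{v\in V(G)}$ of $G$. I would first record the Helly property: any finite family of pairwise intersecting intervals on the real line has a common point — take the maximum $a$ of the left endpoints and the minimum $b$ of the right endpoints; pairwise intersection forces $a\le b$, and $[a,b]$ lies in every interval of the family. Applying this to a maximal clique $K$, whose intervals pairwise intersect, yields a point $p_K\in\bigcap_{v\in K}I_v$. The crucial observation is the identity $K=\{v\in V(G) : p_K\in I_v\}$: the right-hand side is a clique (any two of its members share the point $p_K$, so their intervals meet) that contains $K$, and hence equals $K$ by maximality.

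From this identity everything follows. Distinct maximal cliques receive distinct points, since $p_K=p_{K'}$ would give $K=\{v:p_K\in I_v\}=\{v:p_{K'}\in I_v\}=K'$; so the maximal cliques can be listed as $K_1,\dots,K_m$ with $p_{K_1}<\dots<p_{K_m}$. For any vertex $v$ we have $v\in K_i$ if and only if $p_{K_i}\in I_v$, and since $I_v$ is an interval and the points $p_{K_1},\dots,p_{K_m}$ occur in increasing order, the set of indices $i$ with $p_{K_i}\in I_v$ is a block of consecutive integers — exactly the asserted property.

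The single step that needs genuine care is the identity $K=\{v:p_K\in I_v\}$: this is where maximality of the clique is used, and it is simultaneously responsible for the distinctness of the points and for the final consecutiveness statement. Everything else — the Helly argument and the integrality bookkeeping in the converse — is routine, so I expect the write-up to be short once that observation is in place.
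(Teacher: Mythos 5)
Your proof is correct. Note, however, that the paper does not prove this statement at all: it is quoted as a classical characterization of Fulkerson and Gross \cite{fulkerson-65-interval-graphs} and used as a black box (in the construction of the caterpillar clique tree in Section 7). So there is no proof in the paper to compare against; your argument --- the Helly property of intervals yielding a point $p_K$ for each maximal clique $K$, the identity $K=\{v: p_K\in I_v\}$ via maximality, and ordering the cliques by these points --- is the standard textbook derivation and is sound, including the easy converse via the integer-endpoint representation $[l_v,r_v]$.
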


For a comprehensive treatment and for references to the extensive
literature on chordal graphs and interval graphs, one may refer to the
monograph of Golumbic~\cite{golumbic-2004-perfect-graphs} and the
survey of Brandst\"{a}dt et al.~\cite{brandstadt-99-graph-classes}.

\section{Reduction rules and branching}
\label{sec:reduct-rules-branch}
This section discusses the reduction rules described in
Section~\ref{sec:results} in more details.

\subsection{Forbidden induced subgraphs}
\label{sec:forbidden-subgraphs}

Three vertices form an \emph{asteroidal triple}, AT for short, if each
pair of them is connected by a path that avoids the neighborhood of
the third one.  We use \emph{asteroidal witness} (AW) to refer to a
minimal induced subgraph that is not a hole and contains an AT but
none of its proper induced subgraphs does.  It should be easy to check
that an AW contains precisely one AT, and its vertices are the union
of these three defining paths for this triple; the three defining
vertices will be called \emph{terminals} of this AW.  It can be
observed from Figure~\ref{fig:at} that the three terminals are the
only simplicial vertices of this AW and they are nonadjacent to each
other.  Lekkerkerker and Boland \cite{lekkerkerker-62-interval-graphs}
observed that a graph is an interval graph if and only if it is
chordal and contains no AW.  Not stopping here, they rolled up their
sleeves and got their hands dirty by checking each possible \fis.
Their effort brought the following less beautiful but more useful
characterization, here a minimal non-interval graph refers to a graph
whose every proper induced subgraph is an interval graph but itself is
not.

\begin{theorem}[\cite{lekkerkerker-62-interval-graphs}]
  A minimal non-interval graph is either a hole or an AW depicted in
  Figure~\ref{fig:at}.
\end{theorem}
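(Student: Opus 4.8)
The plan is to derive this sharper list from the asteroidal-triple characterization recalled just above, namely that a graph is an interval graph exactly when it is chordal and contains no AW. Let $G$ be a minimal non-interval graph. If $G$ is not chordal it has an induced hole $C$; since $C$ is itself not an interval graph, minimality forces $G = C$, so $G$ is a hole. Otherwise $G$ is chordal, hence not AW-free, so $G$ contains an AW $H$ as an induced subgraph; being an induced subgraph of the chordal graph $G$, $H$ is itself chordal, hence contains an AW (namely itself) and so is not an interval graph, and minimality of $G$ gives $G = H$. Thus every minimal non-interval graph is a hole or an AW, and all that remains is to prove that the AWs are precisely the graphs drawn in Figure~\ref{fig:at}.

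To classify the AWs, fix an AW $G$, its unique AT $\{x,y,z\}$, and connecting paths $P_{xy}, P_{yz}, P_{zx}$ that are chordless, avoid the closed neighbourhood of the opposite terminal, and together use as few vertices as possible; minimality of $G$ then gives $V(G) = V(P_{xy}) \cup V(P_{yz}) \cup V(P_{zx})$. I would first record the basic structural facts: the terminals are pairwise nonadjacent (immediate, since otherwise the path joining two of them would already meet the neighbourhood of the third) and simplicial, and they are the only simplicial vertices. The real leverage comes from two principles, used repeatedly: (i) chordality forbids any chordless cycle stitched together from pieces of these three paths, which severely limits when an inner vertex of one path may be adjacent to vertices of another; and (ii) if a stretch of some connecting path is not strictly needed to reach its terminal while avoiding the forbidden neighbourhood, it can be short-circuited, contradicting the minimal choice of the paths. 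Combining (i) and (ii) one pins down how the three paths may overlap and how cross-adjacencies between them are distributed, reducing the possibilities to a short list of configurations, each of which may carry one long ``internal'' path between two terminals -- which is precisely why shallow terminals (where $G - N[x]$ is an induced path) appear.

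What is then left is a careful enumeration: organize the analysis by the multiset of the three path lengths, by which vertices (if any) the paths share, and by the permitted adjacencies between the paths, and at each step discard any candidate that properly contains a smaller AW or a hole. Carrying this through yields exactly the graphs (and parameterized families) depicted in Figure~\ref{fig:at}, among them the $3$-sun and the net. The only genuine obstacle is the completeness and non-redundancy of this enumeration: it is conceptually elementary but long and prone to overlooked cases, which is exactly the tedious verification Lekkerkerker and Boland carried out by hand, and the reason that in practice one simply cites their theorem rather than reproducing it.
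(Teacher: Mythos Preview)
The paper gives no proof of this theorem: it is stated with the citation \cite{lekkerkerker-62-interval-graphs} and nothing more. The sentence immediately preceding it (``a graph is an interval graph if and only if it is chordal and contains no AW'') is also attributed to Lekkerkerker and Boland, and the paper explicitly remarks that ``they rolled up their sleeves and got their hands dirty by checking each possible forbidden induced subgraph,'' making clear that the enumeration is being quoted, not reproduced.

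Your first paragraph---reducing ``minimal non-interval'' to ``hole or AW'' via the chordal/AT-free characterization---is correct and is exactly the logical content of the sentence the paper places before the theorem; so on that part you and the paper agree. Your second and third paragraphs go further than the paper does: you sketch a strategy for the actual enumeration (fix minimal connecting paths for the unique AT, exploit chordality and minimality to constrain cross-adjacencies, then case-split on path lengths and overlaps). That outline is reasonable and is indeed the shape of Lekkerkerker and Boland's original argument, but as you yourself note, it is not a proof until the case analysis is written out in full, and the paper makes no attempt to do so either. In short, there is nothing to compare your enumeration sketch against here; the appropriate response in the context of this paper is simply to cite the result, which is what both the paper and, in effect, your final sentence do.
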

\tikzstyle{corner}  = [fill=blue,inner sep=3pt]
\tikzstyle{special} = [fill=black,circle,inner sep=2pt]
\tikzstyle{vertex}  = [fill=black,circle,inner sep=2pt]
\tikzstyle{edge}    = [draw,thick,-]
\tikzstyle{at edge} = [draw,ultra thick,-,red]
\begin{figure*}[t]
  \centering
  \vspace*{-6mm}
  \begin{subfigure}[b]{0.2\textwidth}
    \centering
    \begin{tikzpicture}[scale=.2]
    \node [corner,label=above:$t_1$] (s) at (0,6.44) {};
    \node [corner,label=above:$t_2$] (a) at (-7, 0) {};
    \node [vertex] (a1) at (-4, 0) {};
    \node [special,label=45:$c$] (v) at (0, 0) {};
    \node [vertex] (b1) at (4, 0) {};
    \node [corner,label=above:$t_3$] (b) at (7, 0) {};
    \node [vertex] (c) at (0,3.5) {};
    \draw[] (a) -- (a1) -- (v) -- (b1) -- (b);
    \draw[] (v) -- (c) -- (s);
    \end{tikzpicture}
    \caption{long claw}
    \label{fig:long-claw}
  \end{subfigure}%  
  \quad
  \begin{subfigure}[b]{0.2\textwidth}
    \centering
    \begin{tikzpicture}[scale=.2]
    \node [corner,label=above:$t_1$] (s) at (0,2.8) {};
    \node [corner,label=above:$t_2$] (a) at (-7, 0) {};
    \node [vertex] (a1) at (-4, 0) {};
    \node [special,label=45:$c$] (v) at (0, 0) {};
    \node [vertex] (b1) at (4, 0) {};
    \node [corner,label=above:$t_3$] (b) at (7, 0) {};
    \node [vertex] (c) at (0,-3.8) {};
    \draw[] (a) -- (a1) -- (v) -- (b1) -- (b) -- (c) -- (a);
    \draw[] (b1) -- (c) -- (a1);
    \draw[] (v) -- (c) -- (s);
    \end{tikzpicture}
    \caption{whipping top}
    \label{fig:top}
  \end{subfigure}%  
  \quad
  \begin{subfigure}[b]{0.2\textwidth}
    \centering
    \begin{tikzpicture}[scale=.2]
    \node [corner,label=above:$t_1$] (s) at (0,6.44) {};
    \node [corner,label=above:$t_2$] (a) at (-7, 0) {};
    \node [vertex] (a1) at (-4, 0) {};
    \node [vertex] (b1) at (4, 0) {};
    \node [corner,label=above:$t_3$] (b) at (7, 0) {};
    \node [vertex] (c) at (0,3.5) {};
    \draw[] (a) -- (a1) -- (b1) -- (b);
    \draw[] (c) -- (s);
    \draw[] (a1) -- (c) -- (b1);
    \end{tikzpicture}
    \caption{net}
    \label{fig:net}
  \end{subfigure}%  
  \quad
  \begin{subfigure}[b]{0.2\textwidth}
    \centering
    \begin{tikzpicture}[scale=.2]
    \node [corner,label=above:$t_1$] (s) at (0,6.44) {};
    \node [corner,label=above:$t_2$] (a) at (-7, 0) {};
    \node [vertex] (a1) at (0, 0) {};
    \node [corner,label=above:$t_3$] (b) at (7, 0) {};
    \node [vertex] (c1) at (-3,3) {};
    \node [vertex] (c2) at (3,3) {};
    \draw[] (a) -- (a1) -- (b) -- (c2) -- (s) -- (c1) -- (a);
    \draw[] (c1) -- (c2) -- (a1) -- (c1);
    \end{tikzpicture}
    \caption{tent}
    \label{fig:tent}
  \end{subfigure}%  

  \begin{subfigure}[b]{0.4\textwidth}
    \centering
    \begin{tikzpicture}[scale=.45]
    \node [corner,label=right:$s$] (s) at (0,4.44) {};
    \node [corner,label=above:$l$,label=below:$b_{0}$] (a) at (-7, 0) {};
    \node [special,label=below:$b_{1}$] (a1) at (-5, 0) {};
    \node [vertex,label=below:$b_{2}$] (a2) at (-3, 0) {};
    \node [vertex,label=below:$b_{i}$] (bi) at (0, 0) {};
    \node [vertex,label=below:$b_{d-1}$] (b2) at (3, 0) {};
    \node [special,label=below:$b_d$] (b1) at (5, 0) {};
    \node [corner,label=above:$r$,label=below:$b_{d+1}$] (b) at (7, 0) {};
    \node [special,label=3:$c$] (c) at (0,2.7) {};
    \draw[] (a) -- (a1) -- (a2) (b2) -- (b1) -- (b);
    \draw[] (bi) -- (c) -- (s);
    \draw[] (a1) -- (c) -- (b1);
    \draw[] (a2) -- (c) -- (b2);
    \draw[dashed] (a2) -- (b2);
    \end{tikzpicture}
    \caption{$\dag$-AW $(s:c:l,B,r)\quad (d =|B| \geq 3)$}
    \label{fig:dag}
  \end{subfigure}%  
  \qquad   \quad
  \begin{subfigure}[b]{0.4\textwidth}
    \centering
    \begin{tikzpicture}[scale=.45]
    \node [corner,label=right:$s$] (s) at (0,4.44) {};
    \node [corner,label=above:$l$,label=below:$b_{0}$] (a) at (-7, 0) {};
    \node [special,label=below:$b_{1}$] (a1) at (-5, 0) {};
    \node [vertex,label=below:$b_{2}$] (a2) at (-3, 0) {};
    \node [vertex,label=below:$b_{i}$] (bi) at (0, 0) {};
    \node [vertex,label=below:$b_{d-1}$] (b2) at (3, 0) {};
    \node [special,label=below:$b_d$] (b1) at (5, 0) {};
    \node [corner,label=above:$r$,label=below:$b_{d+1}$] (b) at (7, 0) {};
    \node [special,label=177:$c_1$] (c1) at (-1,2.7) {};
    \node [special,label=3:$c_2$] (c2) at (1,2.7) {};
    \draw[] (a) -- (a1) -- (a2) (b2) -- (b1) -- (b);
    \draw[] (bi) -- (c1) -- (s) -- (c2) -- (bi);
    \draw[] (a) -- (c1) -- (c2) -- (b);
    \draw[] (a1) -- (c1) -- (b1) -- (c2) -- (a1);
    \draw[] (a2) -- (c1) -- (b2) -- (c2) -- (a2);
    \draw[dashed] (a2) -- (b2);
    \end{tikzpicture}
    \caption{$\ddag$-AW $(s:c_1,c_2:l,B,r)\quad (d = |B| \geq 2)$}
    \label{fig:ddag}
  \end{subfigure}%  
  \caption{Minimal asteroidal witnesses in a chordal graph (terminals
    are marked as squares).}
  \label{fig:at}
\end{figure*}
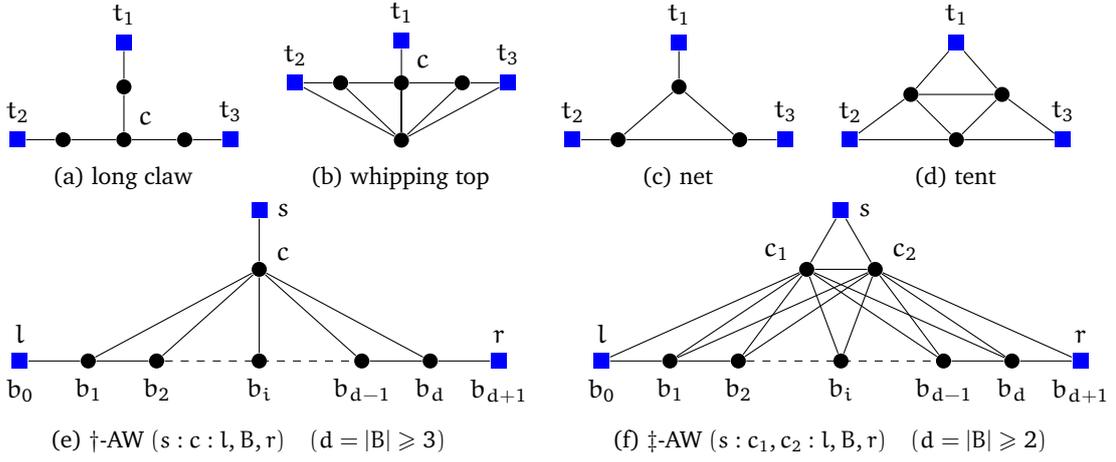

Some remarks are in order.  First, it is easy to verify that a hole of
six or more vertices witnesses an AT (specifically, any three
nonadjacent vertices from it) and is minimal, but following
convention, we only refer to it as a hole, while reserve the term AW
for graphs listed in Figure~\ref{fig:at}.  Second, the set of AWs
depicted in Figure~\ref{fig:at} is not a literal copy of the original
list in \cite{lekkerkerker-62-interval-graphs}, which contains neither
net nor tent; they are viewed as $\dag$-AW with $d=2$ and $\ddag$-AW
with $d=1$, respectively.  We single them out for the convenience of
later presentation.  To avoid ambiguities, in this paper we explicitly
require a $\dag$-AW (resp., $\ddag$-AW) to contain at least $7$
(resp., $8$) vertices.  Third, each of the four subgraphs in the first
row of Figure~\ref{fig:at} consists of a constant number, $6$ or $7$,
of vertices, and thus can be easily located and disposed of by
standard enumeration.  For the purpose of the current paper, we are
mainly concerned with the two kinds of AWs in the second row, whose
sizes are unbounded.  A $\dag$- or $\ddag$-AW $W$ contains a unique
terminal $s$, called the \emph{shallow terminal}, such that $W - N[s]$
is an induced path.  The neighbor(s) of the shallow terminal are the
\emph{center(s)}.  The other two terminals are called \emph{base
  terminals}, and other vertices are called \emph{base vertices}.  The
whole set of base vertices is called the \emph{base}.  We use
$(s:c:l,B,r)$ (resp., $(s:c_1,c_2:l,B,r)$) to denote the $\dag$-AW
(resp., $\ddag$-AW) with shallow terminal $s$, center $c$ (resp.,
centers $c_1$ and $c_2$), base terminals $l$ and $r$, and base $B =
\{b_1,\dots,b_d\}$.  For the sake of notational convenience, we will
also use $b_0$ and $b_{d+1}$ to refer to the base terminals $l$ and
$r$, respectively, even though they are not part of the base $B$.  The
center(s) and base vertices are called \emph{non-terminal vertices}.

Clearly, Reduction~\ref{rule:small-forbidden-subgraph} can be applied
in polynomial time: we can find a minimal forbidden set of size at
most 10 in polynomial time, e.g., by complete enumeration.  There are
ways to improve this, but optimizing the exponent is not the focus of
this paper.  After the exhaustive application of
Reduction~\ref{rule:small-forbidden-subgraph}, the graph is
\emph{prereduced}.  By definition, any AW in a prereduced graph
contains at least $11$ vertices, which rules out long claws, whipping
tops, nets, and tents.  Furthermore, the base of a $\dag$-AW (resp.,
$\ddag$-AW) in a prereduced graph contains at least $7$ (resp., $6$)
vertices.
  
The purpose of the following proposition and a detailed proof is
twofold.  These special structures arise frequently in this paper, and
we do not want to repeat the same argument again and again.  The proof
is exemplary in the sense that, by and large, most proofs of this
paper exploit a similar contradictory arguments: They explicitly
construct a \fis, either a small AW or a short hole, assuming the
property under discussion does not hold; because all graphs discussed
henceforth are prereduced, such a contradiction will suffice to prove
the desired property.

\begin{proposition}\label{lem:five-path}
  Let $P = (v_0\dots v_p)$ be a chordless path of length $p$ in a
  prereduced graph, and $u$ be adjacent to every inner vertex of $P$.
  \begin{itemize}
  \item[(1)] If $p \geq 4$ and $u$ is also adjacent to $v_0$ and
    $v_p$, then $N[v_{\ell}] \subseteq N[u]$ for every $2 \leq {\ell}
    \leq p-2$.
  \item[(2)] If $p \geq 3$ and $u$ is also adjacent to $v_0$ and
    $v_p$, then $N[v_{\ell}]\cap N[v_{\ell+1}] \subseteq N[u]$ for
    every $1 \leq {\ell} \leq p-2$.
  \item[(3)] If $p \geq 4$, then $N[v_{\ell}]\setminus ( N(v_{1})\cup
    N(v_{p - 1}) ) \subseteq N[u]$ for every $2 \leq {\ell} \leq p-2$.
  \end{itemize}
\end{proposition}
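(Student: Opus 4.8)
The plan is to prove all three parts by the same contradiction scheme described in the preamble: assume some vertex $w$ violates the stated containment, and use $w$ together with $u$ and a stretch of the path $P$ to build a small forbidden induced subgraph — either a short hole or one of the constant-size AWs (long claw, whipping top, net, tent, or a short $\dag$-/$\ddag$-AW). Since the graph is prereduced, any \fis\ on at most $10$ vertices is impossible, and this contradiction establishes the property. The key observation throughout is that $u$ is adjacent to all inner vertices $v_1,\dots,v_{p-1}$ of the chordless path, so $u$ together with any chordless subpath $(v_i\dots v_j)$ with $1\le i$ and $j\le p-1$ induces a "fan" (a path plus a dominating vertex), and attaching $w$ to this fan in various ways produces the desired obstructions.

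**Part (1).** Suppose $p\ge 4$, $u\sim v_0$ and $u\sim v_p$, but there is $w\in N[v_\ell]\setminus N[u]$ for some $2\le \ell\le p-2$; so $w\sim v_\ell$ and $w\not\sim u$. Consider the chordless subpath $(v_{\ell-2}v_{\ell-1}v_\ell v_{\ell+1}v_{\ell+2})$ (which exists since $0\le \ell-2$ and $\ell+2\le p$); $u$ is adjacent to all five of these vertices. Now look at the adjacency of $w$ to $v_{\ell-2},v_{\ell-1},v_{\ell+1},v_{\ell+2}$. If $w$ is nonadjacent to both $v_{\ell-1}$ and $v_{\ell+1}$, then $\{w,v_{\ell-1},v_{\ell+1},u,v_\ell\}$ — with $w$ attached only to $v_\ell$, and $u$ adjacent to all of $v_{\ell-1},v_\ell,v_{\ell+1}$ but not $w$ — extends (using also $v_{\ell-2}$ or $v_{\ell+2}$ as needed) to a long claw or a net on at most $7$ vertices centered appropriately. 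The remaining cases ($w$ adjacent to one or both of $v_{\ell-1},v_{\ell+1}$) either yield a short hole through $u$ (e.g.\ $u v_{\ell-1} w v_{\ell+1} u$ if $w\sim v_{\ell-1},v_{\ell+1}$ but $w\not\sim v_\ell$ — though here $w\sim v_\ell$, so instead one gets a $4$-hole like $w v_{\ell-1} u v_{\ell+1} w$ when $w\sim v_{\ell\pm1}$ and we need $w\not\sim v_\ell$; since $w\sim v_\ell$ we rather get triangles that force $w$ adjacent to a contiguous block of $P$, and then one step further out gives a hole) or, when $w$ sees a contiguous block of the path, a short hole $w v_{i} u v_{j} w$ or $(w\,v_{i}\,v_{i+1}\dots)$ closing through $u$. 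The bookkeeping is to check that in every configuration of $w$'s neighborhood on $\{v_{\ell-2},\dots,v_{\ell+2}\}$ a forbidden subgraph of size $\le 7$ appears; Proposition~\ref{lem:five-path} is deliberately stated with the five-vertex window $v_{\ell-2},\dots,v_{\ell+2}$ so that exactly such a window is always available.

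**Parts (2) and (3), and the main obstacle.** Part (2) is the analogue with a four-vertex window: if $w\in N[v_\ell]\cap N[v_{\ell+1}]\setminus N[u]$ with $1\le\ell\le p-2$, use the chordless subpath $(v_{\ell-1}v_\ell v_{\ell+1}v_{\ell+2})$ (endpoints possibly equal to $v_0$ or $v_p$, which is fine since here $u$ is assumed adjacent to $v_0$ and $v_p$); then $w\sim v_\ell, w\sim v_{\ell+1}, w\not\sim u$, and according to whether $w\sim v_{\ell-1}$ and $w\sim v_{\ell+2}$ one gets a $4$-, $5$-, or $6$-hole through $u$ and $w$ (e.g.\ $w\,v_{\ell-1}\,u\,v_{\ell+2}\,w$ when $w$ is nonadjacent to both, or a $4$-hole when it is adjacent to one but the chordlessness of $P$ blocks a chord). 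Part (3) drops the hypothesis $u\sim v_0,v_p$ but correspondingly strengthens the conclusion by removing from $N[v_\ell]$ any vertex adjacent to $v_1$ or to $v_{p-1}$: for $w\in N[v_\ell]$ with $w\not\sim u$, $w\not\sim v_1$, $w\not\sim v_{p-1}$ and $2\le\ell\le p-2$, the pair $u$ and $w$ together with $v_1,\dots,v_{p-1}$ (on which $u$ is fully adjacent and $w$ misses both ends) again yields a bounded obstruction — the fact that $w$ avoids $v_1$ and $v_{p-1}$ plays the role that $u\sim v_0,v_p$ played before, "closing off" the ends so that a long claw/net/whipping-top or a short hole is forced. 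The part I expect to be most delicate is the exhaustive case analysis in (1): one must verify that for every one of the possible adjacency patterns of $w$ on the five-vertex window — and in the degenerate situations where $\ell-2=0$ or $\ell+2=p$ so that an endpoint of the window is a path endpoint — some forbidden set of at most $7$ (and certainly at most $10$) vertices genuinely appears and is induced. Organizing this as "if $w$ sees a contiguous block of $P$ then push one vertex further out to get a hole; otherwise $w$ sees a non-contiguous set and we get a claw-like or net-like AW" should make the analysis uniform and keep every witness within the size bound guaranteed by prereducedness.
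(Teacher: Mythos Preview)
Your overall strategy matches the paper's: assume a violating vertex $w$, then exhibit a small forbidden induced subgraph. But the concrete case analysis you sketch has real errors, not just missing bookkeeping.

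\textbf{Part (2).} You claim that when $w\not\sim v_{\ell-1}$ and $w\not\sim v_{\ell+2}$ one gets the hole $w\,v_{\ell-1}\,u\,v_{\ell+2}\,w$. This is not a cycle: you just assumed $w\not\sim v_{\ell-1}$, so there is no edge $w v_{\ell-1}$. In fact no short hole exists in this configuration. The paper instead observes that $\{u,v_{\ell-1},v_\ell,w,v_{\ell+1},v_{\ell+2}\}$ induces a \emph{tent} (inner triangle $u,v_\ell,v_{\ell+1}$; terminals $v_{\ell-1},w,v_{\ell+2}$ each adjacent to exactly two of the triangle). So ``holes suffice'' is simply false here.

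\textbf{Part (1).} Your organising dichotomy --- ``if $w$ sees a contiguous block of $P$ then push one step further out to get a hole; otherwise a claw/net'' --- does not survive contact with the cases. When $w$'s only neighbour on the path is $v_\ell$, pushing out yields nothing (there is no hole); the paper finds a \emph{whipping top} on $\{w,u,v_{\ell-2},\dots,v_{\ell+2}\}$. When $w$ sees $\{v_\ell,v_{\ell+1}\}$ (or $\{v_{\ell-1},v_\ell\}$), again no short hole appears; the obstruction is a \emph{tent}. The paper's actual structure is cleaner than what you propose: first show $w\not\sim v_i$ for every $|i-\ell|\ge 2$ via the $4$-hole $(u\,v_i\,w\,v_\ell\,u)$, so only the adjacencies to $v_{\ell-1}$ and $v_{\ell+1}$ remain; the four subcases then give $4$-hole / tent / tent / whipping top. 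You never isolate this first step, which is what reduces the analysis to four cases rather than $2^5$.

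\textbf{Part (3).} Your paragraph is too vague to be a proof, and more importantly it misses the shortcut the paper uses: if $u\sim v_0,v_p$ then (3) follows from (1); otherwise one only needs to handle the boundary case where $v_2$ (or symmetrically $v_{p-2}$) is the sole inner neighbour of $w$, producing a $4$-hole, a net, or a small $\dag$-AW depending on adjacencies to $v_0,v_4$ and whether $u\sim v_4$; all remaining configurations are already covered by (1) and (2).

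In short: the plan is right, but the specific obstructions you name are wrong in the key subcases, and the ``contiguous block $\Rightarrow$ hole'' heuristic fails precisely where the tent and whipping top are needed.
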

\begin{proof}
  Suppose to the contrary of statement (1), there is a vertex $x \in
  N[v_{\ell}] \setminus N[u]$, then we show the existence of a short
  hole or small AW in $G$, thus contradicting the assumption that $G$
  is prereduced.  Note that $x\not\sim v_i$ for any $i \leq \ell - 2$
  or $i \geq \ell + 2$, as otherwise, there is a $4$-hole $(u v_{i} x
  v_{{\ell}} u)$ (here $v_i \not\sim v_{\ell}$ because $P$ is
  chordless).  There is
  \begin{inparaitem}
    \item a $4$-hole $(u v_{{\ell} -1} x v_{\ell + 1} u)$ when $x$ is
      also adjacent to both $v_{\ell - 1}$ and $v_{\ell + 1}$;
    \item a tent $\{u,v_{\ell - 1},v_{\ell},x,v_{\ell + 1},v_{\ell +
        2}\}$ when $x$ is adjacent to $v_{\ell + 1}$ but not $v_{\ell
        - 1}$;
    \item a tent $\{u,v_{\ell - 2},v_{\ell - 1},x,v_{\ell},v_{\ell +
        1}\}$ when $x$ is adjacent to $v_{\ell - 1}$ but not $v_{\ell
        + 1}$; or
    \item a whipping top $\{x, u, v_{\ell - 2}, v_{\ell - 1},
      v_{\ell}, v_{\ell + 1}, v_{\ell + 2}\}$ otherwise ($x$ is only
      adjacent to $v_{\ell}$ in the path).
  \end{inparaitem}

  Suppose, for contradiction to statement (2), $x \in N[v_{\ell}]\cap
  N[v_{\ell+1}] \setminus N[u]$.  If $x$ is adjacent to $v_{\ell-1}$
  or $v_{\ell+2}$, then there is a $4$-hole; otherwise, there is a
  tent $\{u,v_{\ell - 1},v_{\ell},x,v_{\ell + 1},v_{\ell + 2}\}$.

  Statement (3) will follow from statement (1) if $u$ is also adjacent
  to $v_0$ and $v_p$; hence we assume otherwise, and without loss of
  generality, $u\not\sim v_0$.  Suppose to the contrary of statement
  (3), there is a vertex $x\in N[v_{\ell}]\setminus ( N(v_{1})\cup
  N(v_{p - 1}) \cup N[u])$.  If $v_2$ is the only inner vertex of $P$
  that is adjacent to $x$, then there is
  \begin{inparaitem}
    \item a $4$-hole $(x v_0 v_1 v_2 x)$ when $x\sim v_0$;
    \item a $4$-hole $(x v_4 v_3 v_2 x)$ when $x\sim v_4$;
    \item a net $\{v_0,v_1, x, v_2, u, v_4\}$ when $x\not\sim v_0,v_4$
      and $u\sim v_4$; or
    \item a $\dag$-AW $(x:v_2:v_0,v_1 u v_3, v_4)$ when $x\not\sim
      v_0,v_4$ and $u\not\sim v_4$.  
  \end{inparaitem} 
  A symmetric argument proves the case when $u\not\sim v_p$ and
  $v_{p-2}$ is the only inner vertex of $P$ that is adjacent to $x$.
  Other cases follow from statements (1) and (2).
\end{proof}

Let $X$ be a nonempty set of vertices.  A vertex $v$ is a \emph{common
  neighbor} of $X$ if it is adjacent to every vertex $x \in X$.  We
denote by $\widehat N (X)$ the set of all {common neighbors} of $X$.
It is easy to verify that in a prereduced graph, at least one of $X$
and $\widehat N (X)$ induces a clique, as otherwise two nonadjacent
vertices in $\widehat N (X)$, together with two nonadjacent vertices
in $X$, will induce a $4$-hole.  In particular, we have the following
proposition.

\begin{proposition}\label{lem:common-neighbors-is-clique}
  Let $X$ be a set of vertices of a prereduced graph that induces
  either a hole, an AW, or a path of length at least $2$.  Then
  $\widehat N (X)$ induces a clique.
\end{proposition}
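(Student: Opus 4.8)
The plan is to reduce everything to the observation recorded just before the proposition: in a prereduced graph, at least one of $X$ and $\widehat N(X)$ induces a clique. To see this (and to fix notation for what follows), suppose neither did; then we could pick nonadjacent $x_1,x_2\in X$ and nonadjacent $y_1,y_2\in\widehat N(X)$. Since $y_1,y_2$ are common neighbors of $X$, each of them is adjacent to both $x_1$ and $x_2$; and since $\widehat N(X)\cap X=\emptyset$ (a common neighbor of $X$ would have to be adjacent to itself), the four vertices are distinct. Hence $(x_1\,y_1\,x_2\,y_2\,x_1)$ is a $4$-hole, contradicting that $G$ is prereduced. Consequently, to prove the proposition it suffices to verify that in each of the three cases $X$ itself does \emph{not} induce a clique; then the observation forces $\widehat N(X)$ to induce a clique.

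The remaining step is the case check, all three parts being immediate. If $G[X]$ is a hole, then $X$ is a chordless cycle on at least four vertices and thus contains two nonadjacent vertices. If $G[X]$ is an AW, then — as noted in the discussion of Figure~\ref{fig:at} — the three terminals are pairwise nonadjacent, so again $X$ contains nonadjacent vertices. If $G[X]$ is a (chordless) path $(v_0\dots v_\ell)$ with $\ell\ge2$, then $v_0\not\sim v_\ell$ precisely because the path is chordless and has length at least $2$. In every case $G[X]$ has two nonadjacent vertices, hence is not a clique, and we are done.

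There is no real obstacle here; the one point deserving a remark is that ``path of length at least $2$'' must be read as an \emph{induced} (chordless) path — which is the convention used throughout the paper — since a path of length $2$ carrying its long chord is a triangle, and for $X$ a triangle the conclusion can fail (e.g.\ $K_5$ minus an edge is prereduced, yet the common neighborhood of a suitable triangle is a nonadjacent pair). With that reading, the proof is exactly the short deduction above.
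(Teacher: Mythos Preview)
Your proof is correct and follows exactly the paper's approach: the paper states the key observation (if neither $X$ nor $\widehat N(X)$ is a clique, then a $4$-hole arises) in the sentence immediately preceding the proposition and treats the proposition as an immediate consequence, without even writing out the case check that holes, AWs, and induced paths of length $\ge 2$ are not cliques. Your added remark about reading ``path'' as induced is harmless but unnecessary here, since the hypothesis is that $X$ \emph{induces} a path, so $G[X]$ is a path graph by definition and has no chords.
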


\subsection{Modular decomposition}
\label{sec:modul-decomp}

A subset $M$ of vertices forms a \emph{module} of $G$ if all vertices
in $M$ have the same neighborhood outside $M$.  In other words, for
any pair of vertices $u,v \in M$ and vertex $x \not\in M$, $u \sim x$
if and only if $v \sim x$.  The set $V(G)$ and all singleton vertex
sets are modules, called \emph{trivial}.
A brief inspection shows that no graph in Figure~\ref{fig:at} has any
nontrivial modules and this is true also for holes of length greater
than 4:
\begin{proposition}\label{lem:module-exchangeable}
  Let $M$ be a module, and $X$ be a \mfset.  If $|X|>4$, then either
  $X \subseteq M$, or $|M \cap X| \leq 1$.
\end{proposition}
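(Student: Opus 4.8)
The plan is to reduce the statement to the fact, already noted just above it, that no \fis\ on more than four vertices admits a nontrivial module. The bridge is the elementary observation that a module of $G$ restricts to a module of any induced subgraph. So I would aim to show that if $X$ met a module $M$ in at least two vertices without being contained in it, then $M\cap X$ would be a \emph{nontrivial} module of the \mfis\ $G[X]$, which is impossible when $|X|>4$.

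Concretely, I would first verify that $M\cap X$ is a module of $G[X]$: for $u,v\in M\cap X$ and $x\in X\setminus M$ we have $x\notin M$, so the module property of $M$ in $G$ gives $u\sim x\iff v\sim x$, and since $u,v,x\in X$ this equivalence survives in $G[X]$. Thus $M\cap X$ is a module of $G[X]$, and it is nontrivial precisely when $1<|M\cap X|<|X|$. Now suppose, toward a contradiction, that $|M\cap X|\ge 2$ and $X\not\subseteq M$; then $1<|M\cap X|<|X|$, so $M\cap X$ is a nontrivial module of $G[X]$.

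It remains to rule this out. Since $X$ is a \mfset\ with $|X|>4$, the theorem of Lekkerkerker and Boland forces $G[X]$ to be either a hole of length at least five or one of the asteroidal witnesses of Figure~\ref{fig:at} (each of which has at least six vertices). For a hole of length at least five, I would argue that a module containing two of its vertices is forced to contain both cycle-neighbors of each of its vertices---any vertex of a hole has exactly two neighbors, and when the length is at least five these two neighbors cannot both be adjacent to a second module vertex, so at least one of them must itself lie in the module---and hence, propagating around the cycle, contains the whole hole; so no nontrivial module exists. For the finitely many AWs this is exactly the ``brief inspection'' already invoked in the text, a routine check on explicit small graphs. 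Either way we contradict the previous paragraph, so $|M\cap X|\ge 2$ implies $X\subseteq M$, which is the claim. The only part that takes any care is the hole argument and the finite case analysis over the AWs, but both are entirely routine, so I do not expect a genuine obstacle: the proposition is essentially bookkeeping on top of the fact that forbidden sets of size $>4$ have no nontrivial module.
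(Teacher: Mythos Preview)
Your argument is correct and is exactly the approach the paper (implicitly) uses: the paper does not give a formal proof but merely records the observation that holes of length at least five and the AWs of Figure~\ref{fig:at} have no nontrivial modules, leaving the reader to supply the easy bridge that $M\cap X$ is a module of $G[X]$. You have simply written that bridge out explicitly, together with a clean propagation argument for the hole case, so there is nothing to add.
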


Indeed, the only \mfis\ of no more than $4$ vertices is a $4$-hole, of
which the pair of nonadjacent vertices might belong to a module.  This
observation allows us to prove the following statement, which is the
main combinatorial reason behind the correctness of the branching in
Reduction~\ref{rule:non-interval-modules}.
\begin{theorem}
  \label{thm:separable-modules}
  Let $G$ be a graph that contains no $4$-hole and $M$ be a module of
  $G$.  A minimum interval deletion set to $G$ contains either all
  vertices of $M$, or only a minimum interval deletion set to $G[M]$.
\end{theorem}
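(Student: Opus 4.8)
The plan is to prove the two implications underlying the statement: first, that there is always a minimum interval deletion set $Q$ with either $M\subseteq Q$ or $|Q\cap M|\le 1$; and second, that if $|Q\cap M|\le 1$, then $Q$ restricted appropriately is a minimum interval deletion set to $G[M]$, while if $M\subseteq Q$ then $Q\setminus M$ together with $M$ is handled by the first branch. The starting point is Proposition~\ref{lem:module-exchangeable}: since $G$ has no $4$-hole, the only \mfis\ on at most $4$ vertices that could meet a module in two vertices is the $4$-hole, which is excluded; hence every \mfset\ $X$ of $G$ satisfies $X\subseteq M$ or $|X\cap M|\le 1$. This ``exchange'' property is what lets us move deleted vertices around inside $M$.

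First I would show the dichotomy for $Q\cap M$. Let $Q$ be a minimum interval deletion set and suppose $M\not\subseteq Q$, so fix some $m^\ast\in M\setminus Q$. I claim we may assume $|Q\cap M|\le 1$: if $|Q\cap M|\ge 2$, pick $m\in Q\cap M$ and consider $Q' = (Q\setminus\{m\})\cup\{$ nothing $\}$, i.e., simply delete fewer vertices — but of course $Q'$ need not be a deletion set. The correct move is to argue that $Q\cap M$ acts as an interval deletion set \emph{within} $G[M]$, so that we can replace $Q\cap M$ by a minimum interval deletion set $R$ of $G[M]$ without creating new forbidden sets outside $M$. Concretely, set $Q' = (Q\setminus M)\cup R$ where $R$ is a minimum interval deletion set of $G[M]$. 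We must check $G - Q'$ is an interval graph. Any \mfset\ $X$ of $G-Q'$ is a \mfset\ of $G$ avoiding $Q'$; by the exchange property either $X\subseteq M$ — impossible since $X$ avoids $R$ and $G[M]-R=G[M\setminus R]$ is an interval graph — or $|X\cap M|\le 1$. In the latter case I want to transport $X$ to a \mfset\ avoiding all of $Q$: if $X\cap M=\emptyset$ then $X$ already avoids $Q$ (as $X$ avoids $Q\setminus M$), contradicting that $G-Q$ is interval; if $X\cap M=\{x\}$ with $x\notin R$, use the second half of Proposition~\ref{lem:module-exchangeable} (replacing $x$ by any $x'\in M$ yields another \mfset) to replace $x$ by a vertex of $M\setminus Q$ — which is nonempty since $M\not\subseteq Q$ — again producing a \mfset\ of $G-Q$, a contradiction. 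Hence $G-Q'$ is interval, and since $|R|\le |Q\cap M|$ by minimality of $R$ in $G[M]$, we get $|Q'|\le|Q|$, so $Q'$ is also a minimum interval deletion set, and $|Q'\cap M| = |R|$; moreover $Q'$ contains exactly a minimum interval deletion set to $G[M]$, as claimed. The case $M\subseteq Q$ needs no modification: then $Q$ already ``contains all vertices of $M$.'' One small point: the dichotomy in the theorem statement is ``all of $M$ or a minimum deletion set to $G[M]$,'' and the argument above shows that whenever $M\not\subseteq Q$ we can assume the second alternative holds exactly.

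The main obstacle I anticipate is the transport step in the reverse direction — verifying that a \mfset\ $X$ of $G-Q'$ with $X\cap M=\{x\}$ really can be replaced by a \mfset\ of $G-Q$. The subtlety is that $Q'$ and $Q$ differ inside $M$, so when I swap $x\in M\setminus R$ for some $x'\in M\setminus Q$, I need $(X\setminus\{x\})\cup\{x'\}$ to still avoid $Q'$ (clear, since it meets $M$ only in $x'$ and $x'\notin Q'$ iff $x'\notin R$, but I should choose $x'\in M\setminus(Q\cup R)$, nonempty provided $|M|$ is large enough relative to $|Q\cap M|+|R|$ — and here I may need the hypothesis, implicit in the reduction's applicability, that $|M|$ is not too small, or else argue directly when $|M|$ is tiny). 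The cleanest fix is: since $R$ is a \emph{minimum} interval deletion set of $G[M]$ and $Q\cap M$ is \emph{some} interval deletion set of $G[M]$, we have $|R|\le|Q\cap M|$, so $|M\setminus(Q\cup R)|\ge |M|-|Q\cap M|-|R|\ge |M|-2|Q\cap M|$; if this is nonnegative we are fine, and if $|Q\cap M| > |M|/2$ we can instead just take $R=\emptyset$ only when $G[M]$ is already interval, handling the residual small cases by inspecting that $|X\cap M|\le 1$ forces $X\cap M$ to not actually be needed. I would isolate this bookkeeping in a short lemma so that the main argument reads smoothly, and I expect the write-up to be a page or less.
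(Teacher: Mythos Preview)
Your approach is essentially the paper's: take a minimum deletion set $Q$ with $M\not\subseteq Q$, replace $Q\cap M$ by an arbitrary minimum interval deletion set $R$ of $G[M]$, and argue that $Q'=(Q\setminus M)\cup R$ is still a deletion set; minimality of $Q$ then forces $|Q\cap M|=|R|$, so $Q\cap M$ is itself minimum for $G[M]$.

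The only real problem is your ``main obstacle'' paragraph, which is a self-inflicted confusion. When you transport the offending forbidden set $X$ (with $X\cap M=\{x\}$) by swapping $x$ for some $x'\in M$, you do \emph{not} need the new set $X'=(X\setminus\{x\})\cup\{x'\}$ to avoid $Q'$; you need it to avoid $Q$, so as to contradict that $Q$ is a deletion set. For that you only require $x'\in M\setminus Q$, which is nonempty by the standing assumption $M\not\subseteq Q$. Indeed, $X'\setminus\{x'\}=X\setminus\{x\}$ lies outside $M$ and already avoids $Q'\setminus M=Q\setminus M$, hence avoids $Q$; and $x'\notin Q$ by choice. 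All the cardinality bookkeeping about $M\setminus(Q\cup R)$, the inequality $|M|-2|Q\cap M|$, and the proposed side lemma for small $|M|$ can be deleted entirely. (Incidentally, your opening plan to force $|Q\cap M|\le 1$ is also a red herring: $Q\cap M$ can be large whenever $G[M]$ is far from interval, and the theorem asserts nothing about its size, only that it is minimum for $G[M]$.)
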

\begin{proof}
  Let $Q$ be a minimum interval deletion set to $G$ such that $M
  \not\subseteq Q$; otherwise we are already done.  To show that $Q_M
  = Q \cap M$ is precisely a minimum interval deletion set to $G[M]$,
  it suffices to show that for any minimum interval deletion set
  $Q'_M$ to $G[M]$, the set $Q' = (Q \setminus Q_M) \cup Q'_M$ is an
  interval deletion set to $G$: Trivially $Q_M$ is an interval
  deletion set to $G[M]$; if it is not minimum, then $|Q_M| > |Q'_M|$,
  and $|Q| > |Q'|$, which contradicts the fact that $Q$ is minimum.

  Suppose the contrary and $X$ is a \mfset\ in $G- Q'$.  By
  construction, $Q'_M$ intersects every \mfset\ in $G[M]$, while $Q
  \setminus Q_M$ intersects every \mfset\ in $G - M$. Thus $X$
  intersects both $M$ and $V(G)\setminus M$.  On the other hand,
  $|X|>4$ as the graph is $4$-hole free.  According to
  Proposition~\ref{lem:module-exchangeable}, $X\cap M$ contains
  exactly one vertex; let it be $x$.  Let $x'$ be a vertex in $M
  \setminus Q$, which is nonempty by the assumption $M \not\subseteq
  Q$, and let $X' = X \setminus \{x\} \cup \{x'\}$; it is immaterial
  whether $x'=x$ or not.  The set $X'$ is disjoint from $Q$, and by
  definition of modules, $G[X']$ and $G[X]$ are isomorphic.  In other
  words, $X'$ is a \mfset\ in $G - Q$, which is impossible.
  Therefore, $Q'$ is a interval deletion set to $G$ and this finishes
  this proof.
\end{proof}

To apply Reduction~\ref{rule:non-interval-modules}, we have to first
find a nontrivial module that is not a clique. For this purpose, we do
not need to compute a modular decomposition tree of the graph.  The
simple algorithm described in Figure~\ref{fig:alg-module} is
sufficient.
\begin{lemma}
  \label{lem:find-non-clique-module}
  We can find in polynomial time a nontrivial module $M$ such that
  $G[M]$ is not a clique, or report no such a module exists.
\end{lemma}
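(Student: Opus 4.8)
The plan is to reduce the task to computing, for each pair of nonadjacent vertices, the smallest module containing that pair. First I would record the elementary fact that the intersection of two modules is again a module: if $M_1,M_2$ are modules and $x,y\in M_1\cap M_2$ while $z\notin M_1\cap M_2$, then $z$ lies outside $M_1$ (say), so $z\sim x$ if and only if $z\sim y$. Consequently, for any vertex set $S$ there is a unique smallest module $M(S)$ containing $S$, namely the intersection of all modules containing $S$ (the set $V(G)$ being one of them). The crucial observation is then: $G$ has a nontrivial module that does not induce a clique if and only if there is a pair of nonadjacent vertices $u\not\sim v$ with $M(\{u,v\})\ne V(G)$. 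For the forward direction, let $M$ be a nontrivial module with $G[M]$ not complete and pick nonadjacent $u,v\in M$; then $M(\{u,v\})\subseteq M\subsetneq V(G)$, so $M(\{u,v\})$ has size between $2$ and $|V(G)|-1$, and it contains the nonadjacent pair $u,v$, hence is a nontrivial non-clique module. The reverse direction is immediate.

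Next I would give the closure procedure that computes $M(\{u,v\})$. Start with $S=\{u,v\}$; as long as there exist $x,y\in S$ and $w\in V(G)\setminus S$ with $w\sim x$ but $w\not\sim y$, add $w$ to $S$; stop when no such $w$ exists and return $S$. Termination within $n-2$ iterations is clear, and each iteration takes polynomial time. Correctness has two halves. When the procedure stops, no outside vertex distinguishes two vertices of $S$, so $S$ is a module containing $\{u,v\}$, whence $M(\{u,v\})\subseteq S$. Conversely $S\subseteq M(\{u,v\})$, by induction on the iterations: if the current set is contained in every module that contains $\{u,v\}$, and we add $w$ because it splits $x,y$ in the current set, then for any module $M\supseteq\{u,v\}$ we have $x,y\in M$, so $w$ cannot lie outside $M$ (otherwise it would fail to split $x,y$), i.e.\ $w\in M$. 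Hence $S=M(\{u,v\})$.

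The algorithm then loops over all $O(n^2)$ nonadjacent pairs $\{u,v\}$, computes $M(\{u,v\})$ by the closure above, and returns the first such set that is a proper subset of $V(G)$; such a set contains the nonadjacent pair $u,v$, hence is a nontrivial module that does not induce a clique. If every nonadjacent pair yields $V(G)$ (in particular if $G$ has no nonadjacent pair at all), it reports that no such module exists, which is correct by the equivalence of the first paragraph. The total running time is polynomial. I expect the only point needing genuine care to be the correctness of the closure step — specifically the inductive ``forcing'' argument that every vertex added must belong to every module containing $\{u,v\}$ — together with the first-paragraph equivalence that justifies restricting attention to nonadjacent pairs; the rest is bookkeeping, and I would not attempt to optimize the polynomial exponent.
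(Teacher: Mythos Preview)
Your proposal is correct and follows essentially the same approach as the paper: iterate over all nonadjacent pairs $\{u,v\}$ and grow the smallest module containing them by repeatedly absorbing vertices that distinguish two current members. The paper's version adds all such distinguishing vertices in one batch per round rather than one at a time, and its correctness argument is terser (it only argues the invariant $M\subseteq U$ for a witnessing module $U$, remarking afterward that the output is in fact the inclusion-minimal module containing $u,v$), whereas you spell out the two-sided containment $S=M(\{u,v\})$ explicitly; but these are presentational differences, not substantive ones.
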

\begin{proof}
  The algorithm described in Figure~\ref{fig:alg-module} finds such a
  module in a greedy manner.  It starts from a pair of nonadjacent
  vertices $u$ and $v$, and generates the module by adding vertices.
  Note that each vertex in the set $X$ defined at step~2.1 is a
  witness for the fact that $M$ is not a module, in other words, $M$
  is a module only if $X=\emptyset$.  When a nonempty vertex set $M$
  is returned at step~2.2, from the algorithm we can derive that $X =
  \emptyset$ and $M \neq V(G)$; hence $M$ must be a nontrivial module.
  Now it remains to show that as long as there is a nontrivial
  non-clique module $U$ in the graph, the algorithm is guaranteed to
  return a nonempty set (not necessarily $U$ itself).  As $U$ does not
  induce a clique, it contains a pair of nonadjacent vertices $u$ and
  $v$, which shall be considered in some iteration of the for-loop.
  In this iteration, initially $M\subseteq U$, and by induction we are
  able to show that no vertex of $V(G)\setminus U$ can be included in
  $X$ during this iteration; hence $M\subseteq U$ will remain an
  invariant.  As a consequence, a subset $M$ that satisfies
  $\{u,v\}\subseteq M\subseteq U$ is returned.
\end{proof}

\begin{figure}[t]
\setbox4=\vbox{\hsize28pc \noindent\strut
\begin{quote}
  \vspace*{-5mm} \footnotesize

  \hspace*{-.5mm} {\bf for each} pair of nonadjacent vertices
  $u$ and $v$ {\bf do}\\
  1 \hspace*{5mm} $M = \{u,v\}$;\\
  2 \hspace*{5mm} {\bf while} $M\neq V(G)$ {\bf do}\\
  2.1 \hspace*{7mm} $X = \{x\not\in M : 0 < |N_M(x)| < |M|\}$;\\
  2.2 \hspace*{7mm} {\bf if} $X = \emptyset$ {\bf then return} $M$;\\
  2.3 \hspace*{7mm} {\bf else} $M = M \cup X$;\\
  \hspace*{-0.5mm} {\bf return} $\emptyset$.
  $\qquad\setminus\!\!\setminus$  there is no such a module

\end{quote} \vspace*{-6mm} \strut} $$\boxit{\box4}$$
\vspace*{-9mm}
\caption{Algorithm Find-Module}
\label{fig:alg-module}
\end{figure}

Indeed, one can easily verify that the module found as above is the
inclusive-wise minimal one containing both $u$ and $v$.  We are now
ready to explain the application of
Reduction~\ref{rule:non-interval-modules} and prove its correctness.

\begin{lemma}
  \label{lem:rule-module-ok}
  Reduction~\ref{rule:non-interval-modules} is correct and it can be
  checked in polynomial time whether
  Reduction~\ref{rule:non-interval-modules} (and which case of it) is
  applicable.
\end{lemma}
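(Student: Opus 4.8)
The plan is to handle the polynomial-time checkability first, and then the correctness of each of the three cases separately, relying on Theorem~\ref{thm:separable-modules} as the main engine.

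For the checkability part, I would argue as follows. A prereduced graph contains no $4$-hole (since a $4$-hole is a \mfset\ of $4\le 10$ vertices), so Theorem~\ref{thm:separable-modules} applies to it. Given a nontrivial non-clique module $M$ (which can be found in polynomial time by Lemma~\ref{lem:find-non-clique-module}), we must decide which of the three cases holds, i.e., whether there exists a \mfset\ contained in $M$, and whether there exists a \mfset\ not contained in $M$. The first is equivalent to ``$G[M]$ is not an interval graph,'' which is testable in polynomial time by the recognition algorithm for interval graphs. For the second, I claim that some \mfset\ is not contained in $M$ if and only if $G_M$ (the graph obtained by making $G[M]$ a clique) is not an interval graph: any \mfset\ not contained in $M$ survives in $G_M$ because, by Proposition~\ref{lem:module-exchangeable}, it meets $M$ in at most one vertex and so is unaffected by adding edges inside $M$; conversely, adding edges inside a module cannot create a new \mfset\ touching $M$ in at least two vertices (again by Proposition~\ref{lem:module-exchangeable}, such a set would have to be a $4$-hole, but $G$ and hence $G_M$ has none, and in fact $G_M[M]$ is a clique). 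Hence this too is a polynomial-time interval-graph recognition test, and we can distinguish the three cases in polynomial time.

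For correctness of Case~1 (every \mfset\ is contained in $M$): then $Q$ is an interval deletion set to $G$ if and only if it hits every \mfset, which all lie in $M$, so $Q\cap M$ is an interval deletion set to $G[M]$ and conversely any interval deletion set to $G[M]$ is one to $G$; the minimum sizes coincide and the answer to $(G[M],k)$ is exactly the answer to $(G,k)$. For Case~2 (no \mfset\ is contained in $M$): by the equivalence established above, $G$ and $G_M$ have exactly the same \mfsets\ (none inside $M$, and adding clique-edges inside $M$ neither destroys nor creates any \mfset\ by Proposition~\ref{lem:module-exchangeable} plus $4$-hole-freeness), so $(G,k)$ and $(G_M,k)$ are equivalent instances with the same minimum interval deletion sets.

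Case~3 is the main obstacle and requires the most care. By Theorem~\ref{thm:separable-modules}, a minimum interval deletion set $Q$ to $G$ either contains all of $M$ or satisfies $Q\cap M=$ a minimum interval deletion set to $G[M]$. The first alternative is captured by $Q_1\cup M$ where $Q_1$ solves $I_1=(G-M,k-|M|)$: since $M\subseteq Q$, what remains is an interval deletion set to $G-M$, so $|Q|=|M|+|Q\cap(V(G)\setminus M)|\ge |M|+|Q_1|$, and conversely $Q_1\cup M$ is a valid interval deletion set to $G$ of that size. For the second alternative I would take $Q_2$ a minimum interval deletion set to $G[M]$ (instance $I_2$), and observe that $G-Q_2$ still has $M\setminus Q_2\ne\emptyset$ as a module inducing an interval (indeed, an induced subgraph of $G[M]-Q_2$) graph; collapsing that module to a clique of at most $k+1$ fresh vertices yields precisely $G'$ up to the harmless enlargement (any solution uses at most $k$ vertices, so $|M'|=k+1$ is large enough that no solution can afford to delete all of $M'$, and one checks $Q_3$ may be taken disjoint from $M'$ because $M'$ induces a clique with a module structure and replacing an $M'$-vertex in a \mfset\ by a vertex of $N(M)$-type neighbor or by another $M'$-vertex preserves the \mfset\ — here Proposition~\ref{lem:module-exchangeable} is again the tool). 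Then a \mfset\ of $G$ disjoint from $Q_2\cup M$-interior is in correspondence with a \mfset\ of $G'$, so $Q_2\cup Q_3$ hits all \mfsets\ of $G$ that are not already killed by $Q_2$, giving an interval deletion set to $G$ of size $|Q_2|+|Q_3|$; minimality and the two-way comparison with $Q_1\cup M$ then show the reduction returns a minimum interval deletion set (and ``NO'' exactly when none of size $\le k$ exists). The delicate points to nail down are the exact correspondence of \mfsets\ across the module-contraction $M\rightsquigarrow M'$ and the verification that $Q_3\cap M'=\emptyset$ can be assumed; both reduce, via Proposition~\ref{lem:module-exchangeable}, to the fact that a large \mfis\ meets a module in at most one vertex and that vertex is freely exchangeable within the module.
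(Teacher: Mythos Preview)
Your approach is essentially the same as the paper's: both use Theorem~\ref{thm:separable-modules} as the engine, handle Cases~1 and~2 identically, and split Case~3 into the two branches ``$M\subseteq Q$'' versus ``$Q\cap M$ is a minimum solution to $G[M]$.'' One minor but genuine difference is your applicability test: to detect a \mfset\ not contained in $M$ you test whether $G_M$ is an interval graph, whereas the paper tests whether $G-(M\setminus\{x\})$ is an interval graph for an arbitrary $x\in M$. Both are correct and polynomial, and your version has the pleasant feature of reusing exactly the argument from Case~2.

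Your Case~3 write-up, however, has a couple of imprecise spots that you should tighten. First, the clause ``replacing an $M'$-vertex in a \mfset\ by a vertex of $N(M)$-type neighbor'' is simply wrong---a vertex of $N(M)$ does not have the same outside-neighborhood as a vertex of $M'$, so no such replacement preserves the \mfset. What you actually need (and what the paper does) is to invoke Theorem~\ref{thm:separable-modules} on $G'$ and the module $M'$: since $G'$ is $4$-hole free (check this---any $4$-hole in $G'$ would have at most one vertex in the clique module $M'$, and swapping it for a vertex of $M$ gives a $4$-hole in $G$) and $G'[M']$ is a clique, a minimum solution $Q_3$ satisfies either $M'\subseteq Q_3$ (impossible, $|M'|=k+1>k-1$) or $Q_3\cap M'=\emptyset$. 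Second, the reverse inequality---that $|Q_2|+|Q_3|$ is at most the size of any minimum solution $Q'$ with $M\not\subseteq Q'$---deserves an explicit line: set $Q'_2=Q'\cap M$ and $Q'_3=Q'\setminus M$, check each is feasible for $I_2$ and $I_3$ respectively (the latter via the same one-vertex-exchange through Proposition~\ref{lem:module-exchangeable}), and conclude $|Q_2|\le|Q'_2|$, $|Q_3|\le|Q'_3|$. Your sketch gestures at this but the phrase ``$Q_2\cup M$-interior'' is unclear and should be replaced by the clean partition $Q'=Q'_2\cup Q'_3$.
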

\begin{proof}
  The correctness of the reduction is clear in case~1: removing the
  vertices of $V(G)\setminus M$ does not make the problem any easier,
  as these vertices do not participate in any \mfset.
  
  In case~2, the correctness of the reduction follows from the fact
  that $G$ and $G_M$ have the same set of \mfsets.  Note that a clique
  is an interval graph, and more importantly, the insertion of edges
  to make $M$ a clique neither breaks the modularity of $M$ nor
  introduces any new 4-hole; thus
  Proposition~\ref{lem:module-exchangeable} is applicable to $G_M$.
  As $M$ induces an interval graph in both $G$ and $G_M$, if $X$ is a
  \mfset\ of $G$ or $G_M$, then
  Proposition~\ref{lem:module-exchangeable} implies that $X$ contains
  at most one vertex of $M$.  In other words, the insertion of edges
  has no effect on any \mfset, which means that $Q$ is an interval
  deletion set to $G$ if and only if it is an interval deletion set to
  $G_M$.
  
  The correctness of case~3 can be argued using
  Theorem~\ref{thm:separable-modules}, which states the two
  possibilities of any interval deletion set to $G$ with respect to
  $M$.  In particular, the two branches of case~3 correspond to these
  two cases.  The first branch is straightforward: we simply remove
  all vertices of $M$ from the graph and solve the instance $I_1=(G -
  M,k-|M|)$.  It is the second branch (where we assume $M
  \not\subseteq Q$) that needs more explanation.  Recall that by
  construction of $I_3$, the set $M'$ is a module of $G'$ and induces
  an interval graph.  It is clear that either solution $Q_2$ or $Q_3$
  being ``NO'' will rule out the existence of an interval deletion set
  of $G$ that does not fully contain $M$.  Hence we may assume $Q_2$
  and $Q_3$ are minimum interval deletion sets of $I_2$ and $I_3$,
  respectively; and $Q=Q_2\cup Q_3$.  Note that both $|Q_2|$ and
  $|Q_3|$ are upper bounded by $k - 1$.

  \begin{claim}\label{claim:solution}
    Set $Q$ is an interval deletion set of $G$.
  \end{claim}
  \begin{proof}
    According to Theorem~\ref{thm:separable-modules}, if $Q_3$ intersects
    $M'$, which is a module of $G'$, then it must contain all $(k+1)$
    vertices in $M'$,\footnote{Indeed, $\min(k+1,|N(M)|)$ vertices
      will suffice for our bookkeeping purpose, and an alternative way
      to this is to add only one vertex but mark it as ``forbidden.''}
    i.e., $|Q_3| > k$; a contradiction.  Therefore, $Q_3\cap M' =
    \emptyset$, which means $Q \subset V(G)$.  Suppose that there is a
    \mfset\ $X$ of $G$ disjoint from $Q$.  It cannot be fully
    contained in $M$, as $Q_2\subseteq Q$ is an interval deletion set
    of $G[M]$.  Then by Proposition~\ref{lem:module-exchangeable}, $X$
    contains exactly one vertex $x$ of $M$ and $X'=X\setminus \{x\}
    \cup \{x'\}$ is also a minimal forbidden set of $G'$ for any
    $x'\in M'$.  Since $Q_3$ is an interval deletion set of $G'$ disjoint
    from $M'$, it has to contain a vertex of $X'\setminus \{x'\} = X
    \setminus \{x\}$; a contradiction.
    \renewcommand{\qedsymbol}{$\lrcorner$}
  \end{proof}

  \begin{claim}\label{claim:minimum}
    Set $Q$ is not larger than the smallest interval deletion set $Q'$
    satisfying $M\not\subseteq Q'$.
  \end{claim}
  \begin{proof}
    Suppose that $Q'$ is an interval deletion set of $G$ of size at
    most $k$ with $M\not\subseteq Q'$; let $Q'_2=Q'\cap M$ and
    $Q'_3=Q'\setminus M$.  We claim that $Q'_2$ and $Q'_3$ are
    interval deletion sets of $I_2$ and $I_3$, respectively.  First,
    we argue that $Q'_2$ and $Q'_3$ are not empty; hence both of them
    have sizes at most $k-1$.  The assumption that $G[M]$ is not an
    interval graph implies $Q'_2\neq \emptyset$.  By assumption,
    $M\not\subseteq Q'$, thus there is a vertex $x\in M\setminus Q'$.
    Now $Q'_3=\emptyset$ would imply that $G - (M\setminus \{x\})$ is
    an interval graph, that is, there is no \mfset\ containing only
    one vertex of $M$, and it follows that we should have been in Case
    1. Since $|Q'_2|\le k-1$, it is clear that $Q'_2$ is a solution of
    instance $I_2=(G[M],k-1)$. The only way $Q'_3$ is not a solution
    of $I_3$ is that there is a \mfset\ $X$ containing a vertex of the
    $(k+1)$-clique introduced to replace $M$. As this $(k+1)$-clique
    is a module, Proposition~\ref{lem:module-exchangeable} implies
    that $X$ contains exactly one vertex $y$ of this clique. But in
    this case $X'=X\setminus \{y\} \cup \{x\}$ (where $x$ is a vertex
    of $M\setminus Q'$) is a \mfset\ disjoint from $Q'$, a
    contradiction.  Thus $|Q|\le |Q'|$ follows from the fact that both
    $Q_2$ and $Q_3$ are minimum.
    \renewcommand{\qedsymbol}{$\lrcorner$}
  \end{proof}

  As a consequence of Claim~\ref{claim:minimum}, if $|Q| > k$, then
  there cannot be an interval deletion set of size no more than $k$
  that does not fully include $M$.  This finishes the proof of the
  correctness of Reduction~\ref{rule:non-interval-modules}.

  On the applicability of Reduction~\ref{rule:non-interval-modules},
  we first use Lemma~\ref{lem:find-non-clique-module} to find a
  nontrivial module that does not induce a clique.  If such a module
  $M$ is found, then Reduction~\ref{rule:non-interval-modules} is
  applicable, and it remains to figure out which case should apply by
  checking the conditions in order.  To check whether case~1 holds, we
  need to check if there is a \mfset\ $X$ not contained in $M$. By
  Proposition~\ref{lem:module-exchangeable}, such an $X$, if exists,
  contains at most one vertex $x$ from $M$; and $x$ can be replaced by
  any other vertex of $M$.  Therefore, it suffices to pick any vertex
  $x\in M$, and test in linear time whether $G - (M\setminus \{x\})$ is an
  interval graph.  If it is not an interval graph, then there is a
  \mfset\ $X$ not contained in $M$ (as it contains at most one vertex
  of $M$).  Otherwise, $G- (M\setminus \{x\})$ is an interval graph for
  every $x\in M$, and there is no such $X$; hence case~1 holds.  To
  check whether case~2 holds, observe that the condition ``there is no
  \mfset\ contained in $M$'' is equivalent to saying that $G[M]$ is an
  interval graph, which can be checked in linear time. In all
  remaining cases, we are in case~3.
\end{proof}

\section{Shallow terminals}
\label{sec:shallow}

This section proves Theorem~\ref{thm:reduced-instances-ap} by showing
that each shallow terminal is contained in a \smodule\ whose
neighborhood induces a clique.  This module either is trivial
(consisting of only this shallow terminal), or induces a clique (after
the application of Reduction~\ref{rule:non-interval-modules}).
Therefore, this shallow terminal is always simplicial.  Recall that an
AW in a prereduced graph $G$ has to be a $\dag$- or $\ddag$-AW.  Let
us start from a thorough scrutiny of neighbors of its shallow
terminal, which, by definition, is disjoint from the base and base
terminals.

\begin{lemma}
  \label{lem:common-neighbor-of-base}
  Let $W$ be an AW in a prereduced graph.  Every common neighbor $x$
  of the base $B$ is adjacent to the shallow terminal $s$.
\end{lemma}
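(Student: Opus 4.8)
The plan is to assume $x \not\sim s$ and derive a contradiction by exhibiting an induced subgraph on at most $10$ vertices that contains an asteroidal triple — hence is not an interval graph — which is forbidden because $G$ is prereduced.

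\textbf{Setup and first step.} Since $G$ is prereduced, $W$ is a $\dag$- or $\ddag$-AW; let $s$ be its shallow terminal, $l=b_0$ and $r=b_{d+1}$ its base terminals, $B=\{b_1,\dots,b_d\}$ its base, and $c$ (resp.\ $c_1,c_2$) its center(s), so that $(b_0,b_1,\dots,b_{d+1})=W-N[s]$ is a chordless path, $d\ge 6$, every center is adjacent to all of $B$ and to $s$, $c\not\sim l,r$ (resp.\ $c_1\sim l$, $c_1\not\sim r$, $c_2\sim r$, $c_2\not\sim l$), and $N(s)\cap W$ is exactly the set of centers. First I would observe that $B$ induces a chordless path of length $d-1\ge 2$, so by Proposition~\ref{lem:common-neighbors-is-clique} the set $\widehat N(B)$ of common neighbors of $B$ is a clique; as both $x$ and the center(s) lie in $\widehat N(B)$, the vertex $x$ is adjacent to every center of $W$. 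Moreover, assuming $x\not\sim s$, $x$ cannot be a center (centers are adjacent to $s$), cannot be a base vertex (no $b_i$ is adjacent to all of $B$ once $d\ge 3$), cannot be a base terminal (since $l\sim b_1$ but $l\not\sim b_2$, and symmetrically for $r$, neither is a common neighbor of $B$), and is not $s$ itself; hence $x\notin W$, and in particular $s,l,r,b_1,b_d,x$ together with the center(s) are pairwise distinct.

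\textbf{The small witness.} Set $T=\{s,l,r,b_1,b_d,x\}\cup\{\text{center(s) of }W\}$, so $|T|=7$ in the $\dag$ case and $|T|=8$ in the $\ddag$ case; I claim $\{s,l,r\}$ is an asteroidal triple in $G[T]$. These three vertices are pairwise nonadjacent ($s\not\sim l,r$ because $l,r$ are base terminals, and $l\not\sim r$ because the base path is chordless and $d\ge 2$). Since $x$ is adjacent to $b_1$ and to $b_d$ while $b_1\not\sim b_d$, one of the sequences $l\,x\,r$, $l\,x\,b_d\,r$, $l\,b_1\,x\,r$, $l\,b_1\,x\,b_d\,r$ (according to whether $x\sim l$ and/or $x\sim r$) is an $l$--$r$ path using only vertices of $\{l,r,b_1,b_d,x\}$, and it therefore avoids $s$ and the center(s), which are the only other members of $T$ — so in particular it avoids $N[s]$. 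For the other two paths: in the $\dag$ case I would take $s\,c\,b_1\,l$ and $s\,c\,b_d\,r$; using $c\not\sim l,r$ together with $\{s,c,b_1,l\}\cap\{r,b_d,x\}=\emptyset=\{s,c,b_d,r\}\cap\{l,b_1,x\}$, these avoid $N[r]\cap T$ and $N[l]\cap T$ respectively. In the $\ddag$ case I would take $s\,c_1\,l$ and $s\,c_2\,r$; here $c_1\sim l$, $c_2\sim r$, $c_1\not\sim r$, $c_2\not\sim l$, and $c_1\ne c_2$, so these again miss the closed neighborhoods of the respective third terminal.

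\textbf{Conclusion.} Hence $G[T]$ contains an asteroidal triple, so by the Lekkerkerker--Boland characterization $G[T]$ is not an interval graph; since $|T|\le 8\le 10$, $G[T]$ contains a \mfset\ of at most $10$ vertices, so Reduction~\ref{rule:small-forbidden-subgraph} applies to $G$, contradicting that $G$ is prereduced. Therefore $x\sim s$. The hard part will be the bookkeeping in the middle step: checking that the prescribed short paths genuinely avoid the closed neighborhoods of the third terminal in all four sub-cases of $x$'s adjacency to $l$ and $r$, which rests on the non-adjacencies $c\not\sim l,r$ (resp.\ $c_1\not\sim r$, $c_2\not\sim l$) and on the distinctness of the named vertices established in the first step; the $\ddag$ case is slightly more delicate, since no single center is nonadjacent to both base terminals, which is the reason both centers are put into $T$.
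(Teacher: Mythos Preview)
Your proof is correct and takes a genuinely different route from the paper's.

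Both arguments begin identically: from Proposition~\ref{lem:common-neighbors-is-clique}, the common neighbors of $B$ form a clique, so $x$ is adjacent to every center, and (assuming $x\not\sim s$) one checks $x\notin W$. From there, the paper splits into six cases ($\dag$/$\ddag$ times the three possibilities for $x$'s adjacency to $\{l,r\}$) and in each case exhibits a \emph{named} small forbidden subgraph: a whipping top, net, tent, or a $\dag$-/$\ddag$-AW on at most eight vertices, using exactly the same vertex set $\{s,l,r,b_1,b_d,x\}\cup\{\text{centers}\}$ (or a subset of it) that you collect into $T$. You instead give a single unified argument that $\{s,l,r\}$ is an asteroidal triple in $G[T]$ directly, only mildly splitting on $x\sim l$, $x\sim r$ to write down the $l$--$r$ path; the Lekkerkerker--Boland characterization then yields a forbidden set of size at most $|T|\le 8$.

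What each buys: your approach is cleaner and avoids the six-fold case analysis entirely, at the price of not identifying which minimal obstruction actually appears. The paper's explicit identification is in the spirit of its exemplary proof style (announced before Proposition~\ref{lem:five-path}) and makes each case independently checkable against Figure~\ref{fig:at}; but your argument is shorter and arguably more robust, since it relies only on the non-adjacencies built into the definition of an AW and never needs to verify that some seven-vertex set realises a specific template.
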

\begin{proof}
  The center(s) of $W$ are also common neighbors of $B$, and hence
  according to Proposition~\ref{lem:common-neighbors-is-clique}, they are
  adjacent to $x$.  Suppose, for contradiction, $x\in \widehat
  N(B)\setminus N(s)$.  If $W$ is a $\dag$-AW, then there is (see the
  first row of Figure~\ref{fig:common-neighbor})
    \begin{inparaitem}
    \item a whipping top $\{s,c,l,b_1,x,b_d,r\}$ centered at $c$ when $x
      \sim l,r$;
    \item a net $\{s,c,l,b_1,r,x\}$ when $x \sim r$ but $x \not\sim l$
      (similarly for $x \sim l$ but $x \not\sim r$); or
    \item a $\dag$-AW $(s:c:l, b_1 x b_d, r)$ when $x \not\sim l,r$.
    \end{inparaitem}
    If $W$ is a $\ddag$-AW, then there is (see the second row of
    Figure~\ref{fig:common-neighbor})
    \begin{inparaitem}
    \item  a tent $\{x,c_1,b_1,s,b_d,c_2\}$ when $x
      \sim l,r$;
    \item a $\ddag$-AW $(s:c_1,c_2:l,b_1 x,r)$ when $x \sim r$ but
      $x \not\sim l$ (similarly for $x \sim l$ but $x \not\sim r$); or
    \item a $\ddag$-AW $(s:c_1,c_2:l,b_1 x b_d,r)$ when $x \not\sim
      l,r$.
    \end{inparaitem}
    As none of these structures can exist in a prereduced graph,
    this lemma is proved.
\end{proof}
\begin{figure*}[t]
  \centering
  \begin{subfigure}[b]{0.3\textwidth}
    \centering
    \begin{tikzpicture}[scale=.35]
      \node [corner,label=right:$s$] (s) at (0,6.44) {};
      \node [corner,label=below:$l$] (a) at (-7, 0) {};
      \node [vertex,label=below:$b_1$] (a1) at (-5, 0) {};
      \node [vertex] (a2) at (-3, 0) {};
      \node [vertex] (b2) at (3, 0) {};
      \node [vertex,label=below:$b_d$] (b1) at (5, 0) {};
      \node [corner,label=below:$r$] (b) at (7, 0) {};
      \node [special,label=right:$c$] (c) at (0,4) {};
      \node [vertex,label=below:$x$] (u) at (0,-1.5) {};
      \draw[] (a1) -- (a2) -- (c) -- (b2) -- (b1);
      \draw[dashed] (a2) -- (b2);
      
      \draw[] (a2) -- (u) -- (b2);
      \draw[at edge] (a) -- (u) -- (b1) -- (b) -- (u) -- (a1) -- (a);
      \draw[at edge] (a1) -- (c) -- (b1);
      \draw[at edge] (s) -- (c) -- (u);
    \end{tikzpicture}
    \caption{$\dag$-AW, $x\sim l,r$}
  \end{subfigure}
  \qquad
  \begin{subfigure}[b]{0.3\textwidth}
    \centering
    \begin{tikzpicture}[scale=.35]
      \node [corner,label=right:$s$] (s) at (0,6.44) {};
      \node [corner,label=below:$l$] (a) at (-7, 0) {};
      \node [vertex,label=below:$b_1$] (a1) at (-5, 0) {};
      \node [vertex] (a2) at (-3, 0) {};
      \node [vertex] (b2) at (3, 0) {};
      \node [vertex,label=below:$b_d$] (b1) at (5, 0) {};
      \node [corner,label=below:$r$] (b) at (7, 0) {};
      \node [vertex,label=right:$c$] (c) at (0,4) {};
      \node [vertex,label=below:$x$] (u) at (0,-1.5) {};
      \draw[] (a1) -- (a2) -- (c) -- (b2) -- (b1) -- (b);
      \draw[dashed] (a2) -- (b2);
      \draw[] (u) -- (b1) -- (c);
      \draw[] (a2) -- (u) -- (b2);

      \draw[at edge] (b) -- (u) -- (a1) -- (a);
      \draw[at edge] (a1) -- (c);
      \draw[at edge] (s) -- (c) -- (u);
    \end{tikzpicture}
    \caption{$\dag$-AW, $x\not\sim l$, and $x\sim r$}
  \end{subfigure}
  \qquad
  \begin{subfigure}[b]{0.3\textwidth}
    \centering
    \begin{tikzpicture}[scale=.35]
      \node [corner,label=right:$s$] (s) at (0,6.44) {};
      \node [corner,label=below:$l$] (a) at (-7, 0) {};
      \node [special,label=below:$b_1$] (a1) at (-5, 0) {};
      \node [vertex] (a2) at (-3, 0) {};
      \node [vertex] (b2) at (3, 0) {};
      \node [special,label=below:$b_d$] (b1) at (5, 0) {};
      \node [corner,label=below:$r$] (b) at (7, 0) {};
      \node [special,label=right:$c$] (c) at (0,4) {};
      \node [vertex,label=below:$x$] (u) at (0,-1.5) {};
      \draw[] (a1) -- (a2) -- (c) -- (b2) -- (b1);
      \draw[dashed] (a2) -- (b2);
      
      \draw[] (a2) -- (u) -- (b2);
      \draw[at edge] (u) -- (b1) -- (b);
      \draw[at edge] (a1) -- (c) -- (b1);
      \draw[at edge] (s) -- (c) -- (u) -- (a1) -- (a);
    \end{tikzpicture}
    \caption{$\dag$-AW, $x\not\sim l,r$}
  \end{subfigure}

  \begin{subfigure}[b]{0.3\textwidth}
    \centering
    \begin{tikzpicture}[scale=.35]
      \node [corner,label=right:$s$] (s) at (0,6.44) {};
      \node [corner,label=below:$l$] (a) at (-7, 0) {};
      \node [vertex,label=below:$b_1$] (a1) at (-5, 0) {};
      \node [vertex] (a2) at (-3, 0) {};
      \node [vertex] (b2) at (3, 0) {};
      \node [vertex,label=below:$b_d$] (b1) at (5, 0) {};
      \node [corner,label=below:$r$] (b) at (7, 0) {};
      \node [special,label=left:$c_1$] (c1) at (-1.5,4) {};
      \node [special,label=right:$c_2$] (c2) at (1.5,4) {};
      \node [vertex,label=below:$x$] (u) at (0, -1.5) {};
      \draw[] (a) -- (a1) -- (a2) -- (c1) -- (b2) -- (b1) -- (b);
      \draw[] (a1) -- (c1) -- (b1);
      \draw[] (a1) -- (c2) -- (b1);
      \draw[] (a2) -- (c2) -- (b2);
      \draw[] (a1) -- (u) -- (b1);
      \draw[] (a2) -- (u) -- (b2);
      \draw[dashed] (a2) -- (b2);
      \draw[at edge] (a) -- (c1) -- (s) -- (c2) -- (b) -- (u) -- (a);
      \draw[at edge] (u) -- (c1) -- (c2) -- (u);
    \end{tikzpicture}
    \caption{$\ddag$-AW, $x\sim l,r$}
  \end{subfigure}%  
  \qquad
  \begin{subfigure}[b]{0.3\textwidth}
    \centering
    \begin{tikzpicture}[scale=.35]
      \node [corner,label=right:$s$] (s) at (0,6.44) {};
      \node [corner,label=below:$l$] (a) at (-7, 0) {};
      \node [vertex,label=below:$b_1$] (a1) at (-5, 0) {};
      \node [vertex] (a2) at (-3, 0) {};
      \node [vertex] (b2) at (3, 0) {};
      \node [vertex,label=below:$b_d$] (b1) at (5, 0) {};
      \node [corner,label=below:$r$] (b) at (7, 0) {};
      \node [special,label=left:$c_1$] (c1) at (-1.5,4) {};
      \node [special,label=right:$c_2$] (c2) at (1.5,4) {};
      \node [vertex,label=below:$x$] (u) at (0, -1.5) {};
      \draw[] (a1) -- (a2) -- (c1) -- (b2) -- (b1) -- (b);
      \draw[] (c1) -- (b1) -- (c2);
      \draw[] (a2) -- (c2) -- (b2);
      \draw[] (a1) -- (u) -- (b1);
      \draw[] (a2) -- (u) -- (b2);
      \draw[dashed] (a2) -- (b2);

      \draw[at edge] (a) -- (a1)  -- (u);
      \draw[at edge] (c1) -- (a1) -- (c2);
      \draw[at edge] (a) -- (c1) -- (s) -- (c2) -- (b) -- (u);
      \draw[at edge] (u) -- (c1) -- (c2) -- (u);
    \end{tikzpicture}
    \caption{$\ddag$-AW, $x\not\sim l$, and $x\sim r$}
  \end{subfigure}%  
  \qquad
  \begin{subfigure}[b]{0.3\textwidth}
    \centering
    \begin{tikzpicture}[scale=.35]
      \node [corner,label=right:$s$] (s) at (0,6.44) {};
      \node [corner,label=below:$l$] (a) at (-7, 0) {};
      \node [special,label=below:$b_1$] (a1) at (-5, 0) {};
      \node [vertex] (a2) at (-3, 0) {};
      \node [vertex] (b2) at (3, 0) {};
      \node [special,label=below:$b_d$] (b1) at (5, 0) {};
      \node [corner,label=below:$r$] (b) at (7, 0) {};
      \node [special,label=left:$c_1$] (c1) at (-1.5,4) {};
      \node [special,label=right:$c_2$] (c2) at (1.5,4) {};
      \node [vertex,label=below:$x$] (u) at (0, -1.5) {};
      \draw[] (a1) -- (a2) -- (c1) -- (b2) -- (b1);
      \draw[] (a2) -- (c2) -- (b2);
      \draw[] (a1) -- (u) -- (b1);
      \draw[] (a2) -- (u) -- (b2);
      \draw[dashed] (a2) -- (b2);

      \draw[at edge] (a1) -- (c1) -- (b1);
      \draw[at edge] (a1) -- (c2) -- (b1);
      \draw[at edge] (a) -- (a1) -- (u) -- (b1) -- (b);
      \draw[at edge] (a) -- (c1) -- (s) -- (c2) -- (b);
      \draw[at edge] (u) -- (c1) -- (c2) -- (u);
    \end{tikzpicture}
    \caption{$\ddag$-AW, $x\not\sim l,r$}
  \end{subfigure}%  

  \caption{Adjacency between a common neighbor $x$ of $B$ and $s$
    [Lemma~\ref{lem:common-neighbor-of-base}].}
  \label{fig:common-neighbor}
\end{figure*}
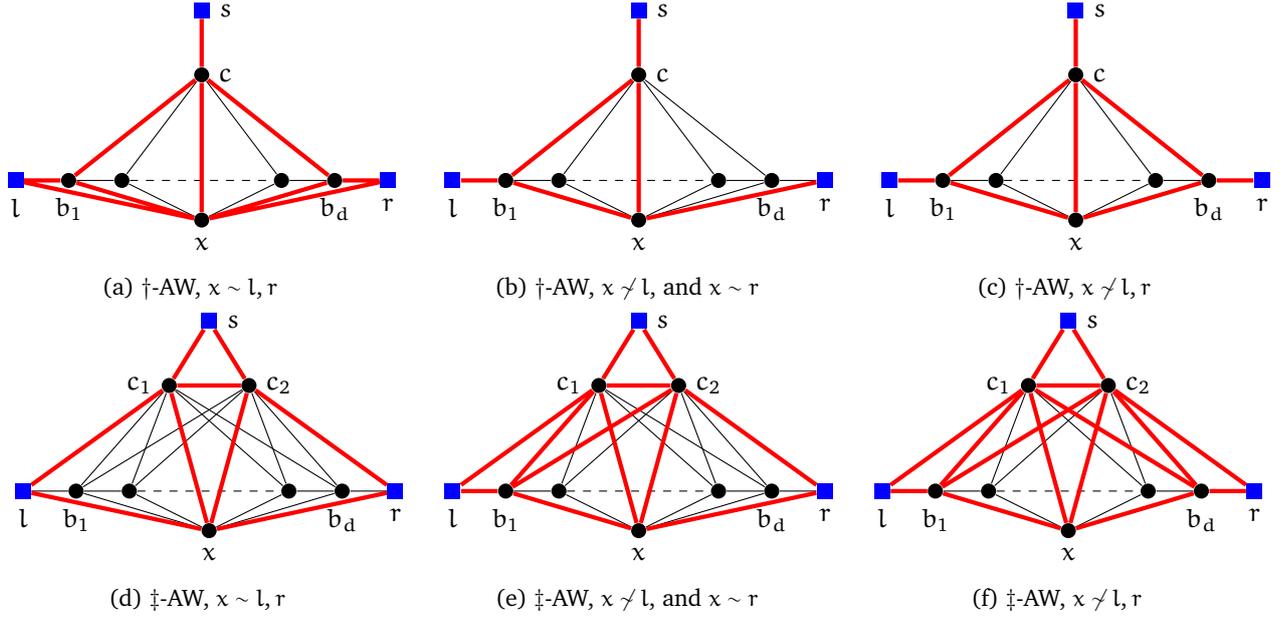

\begin{lemma}\label{lem:shallow}
  Let $W$ be an AW % with shallow terminal $s$ and base $B$
  in a prereduced graph, and $x$ be adjacent to the shallow terminal
  $s$.
  \begin{enumerate}
  \item[(1)] Then $x$ is also adjacent to the center(s) of $W$
    (different from $x$).
  \item[(2)] Classifying $x$ with respect to its adjacency to the base
    $B$, we have the following categories:
    \begin{description}
    \item [(full)] $x$ is adjacent to every base vertex.\\
      Then $x$ is also adjacent to every vertex in $N(s)\setminus
      \{x\}$.
    \item [(partial)] $x$ is adjacent to some, but not all base vertices.\\
      Then there is an AW whose shallow terminal is $s$, one center is
      $x$, and base is a proper sub-path of $B$.
    \item [(none)] $x$ is adjacent to no base vertex.\\
      Then $x$ is adjacent to neither base terminals, and thus
      replacing the shallow terminal of $W$ by $x$ makes another AW.
    \end{description}
  \end{enumerate}
\end{lemma}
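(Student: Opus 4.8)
The plan is to prove every assertion by contradiction: whenever a conclusion fails we exhibit, inside $G[W\cup\{x\}]$ (occasionally with one further vertex $y\in N(s)$), either a hole on at most ten vertices or one of the bounded‑size asteroidal witnesses of Figure~\ref{fig:at}---including the ``small'' $\dag$‑ and $\ddag$‑AWs obtained by keeping only a constant‑length stretch of the base---each of which is a \mfset\ on at most ten vertices, impossible in the prereduced graph $G$. We use the structure of $W$ throughout: every center is adjacent to $s$ and to all base vertices $b_1,\dots,b_d$; in a $\dag$‑AW the center misses $l$ and $r$, while in a $\ddag$‑AW the centers satisfy $c_1\sim c_2$, $c_1\sim l\not\sim c_2$, and $c_2\sim r\not\sim c_1$; and inside $W$ the only neighbors of $s$ are the center(s). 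We also use that the base path $(l\,b_1\cdots b_d\,r)$ is chordless and (since $G$ is prereduced) long, together with Propositions~\ref{lem:five-path} and~\ref{lem:common-neighbors-is-clique}.

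For statement~(1), suppose $x$ is not a center and misses a center, say $c$ (for $\ddag$ say $c_1$, the other case being symmetric). If $x\sim b_j$ for some base vertex, then $(x\,b_j\,c\,s)$ is a $4$‑hole, so $x$ is adjacent to no base vertex; moreover $x\sim l$ is impossible ($(x\,l\,b_1\,c\,s)$ is a $5$‑hole in the $\dag$ case, $(x\,l\,c_1\,s)$ a $4$‑hole in the $\ddag$ case), and symmetrically $x\not\sim r$. Hence the only neighbors of $x$ in $W$ are $s$ and possibly the other center, and when $x$ misses all remaining vertices of $W$ the set $\{l,b_1,c,b_d,r,s,x\}$ induces a long claw (respectively $\{l,c_1,c_2,r,s,x\}$ a net), the residual $\ddag$ configuration being ruled out by a short hole or small AW on a constant‑size subgraph---each contradicting prereducedness, so $x$ is adjacent to all centers. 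For the $\mathrm{(full)}$ case of~(2), take $y\in N(s)\setminus\{x\}$ with $y\not\sim x$; as in~(1), $y\sim b_j$ gives the $4$‑hole $(y\,b_j\,x\,s)$ (now using $x\sim b_j$), and $y\sim l$ gives a $4$‑ or $5$‑hole depending on whether $x\sim l$, so $y$ misses all base vertices and base terminals, and by~(1) $y$ is adjacent to all centers; then, after a short case analysis on whether $x$ sees $l$ and $r$, a small forbidden subgraph appears in $G[W\cup\{x,y\}]$---for instance a net $\{x,c,b_d,l,r,y\}$, or a small $\dag$‑/$\ddag$‑AW with shallow terminal $s$, centers among $\{x\}\cup\{\text{centers of }W\}$, a base terminal $y$, and a bounded piece of $B$---again impossible, so $x\sim y$.

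In the $\mathrm{(none)}$ case of~(2), $x$ misses every base vertex; since $x$ is adjacent to every center by~(1), $x\sim l$ produces a $4$‑hole ($(x\,l\,b_1\,c)$ in the $\dag$ case, $(x\,l\,b_1\,c_2)$ in the $\ddag$ case, using a center adjacent to $b_1$ but not to $l$), so $x\not\sim l$ and symmetrically $x\not\sim r$; thus $x$ has exactly the same neighbors inside $W$ as $s$, namely the center(s), so $s$ and $x$ are interchangeable in $W$ and replacing $s$ by $x$ yields an AW of the same type with the same base, base terminals and center(s). In the $\mathrm{(partial)}$ case of~(2), $x$ sees some but not all of $b_1,\dots,b_d$; first these neighbors must form one contiguous block, since a gap---indices $i<j$ with $x\sim b_i,b_j$ but $x\not\sim b_{i+1},\dots,b_{j-1}$---would make $(x\,b_i\,b_{i+1}\cdots b_j\,x)$ a chordless cycle, hence a hole, contradicting prereducedness when $j-i\le 8$ and excluded otherwise via Proposition~\ref{lem:five-path} applied to the center (which is adjacent to all of $b_i,\dots,b_j$). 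So the base‑neighbors of $x$ form $B'=\{b_p,\dots,b_q\}\subsetneq B$ with $x\not\sim b_{p-1},b_{q+1}$ (where $b_0=l$, $b_{d+1}=r$); since $x\sim s$, $x$ is adjacent to all of $B'$, the base path restricted to $b_{p-1}b_p\cdots b_q b_{q+1}$ is chordless, and $s$ misses each of $b_{p-1},\dots,b_{q+1}$, whence $(s:x:b_{p-1},B',b_{q+1})$ is a $\dag$‑AW with shallow terminal $s$, center $x$, and base $B'\subsetneq B$---precisely the asserted structure (when the center of $W$ is in addition adjacent to all of $B'$ and to $b_{p-1},b_{q+1}$ in the way dictated by the $\ddag$ pattern, one obtains instead a $\ddag$‑AW whose second center is the center of $W$).

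The laborious but routine part is the bookkeeping: for each adjacency pattern of $x$ (and $y$) to $W$, naming the short hole or small AW that appears. The genuinely delicate step is the $\mathrm{(partial)}$ case of~(2)---showing that $x$ sees a \emph{contiguous} portion of the base and that its two boundary base vertices, together with their base‑path neighbors, assemble into a bona fide smaller AW of the correct $\dag$/$\ddag$ type without introducing chords. This is exactly where Proposition~\ref{lem:five-path} is indispensable, and it is the part most in need of care.
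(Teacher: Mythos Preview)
Your overall strategy matches the paper's: exhibit a small forbidden subgraph whenever a claimed adjacency fails. Statement~(1) and the ``none'' case are essentially the paper's argument. However, there are two genuine gaps.

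\textbf{The ``partial'' case.} Your argument hinges on showing that $x$'s neighbors in $B$ form a \emph{single} contiguous block. When the gap $(x\,b_i\,b_{i+1}\cdots b_j\,x)$ has length at most $10$ you correctly get a short hole, but for longer gaps your appeal to Proposition~\ref{lem:five-path} does not go through: applying it to the center $c$ and any sub-path of the hole only yields inclusions of the form $N[b_\ell]\subseteq N[c]$, which are already known and do not produce a contradiction. A prereduced graph may contain holes of length $\ge 11$, and nothing in your sketch rules out $c$ being a common neighbor of such a hole. The paper sidesteps contiguity entirely: it takes the \emph{first} maximal block of base-neighbors of $x$, say $\{b_p,\dots,b_{q-1}\}$, and does a three-way case split on $q-p$ (Table~1). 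For $q-p\le 2$ one finds a $4$-hole, tent, whipping top, net, or long claw; for $q-p\ge 3$ one gets the asserted AW $(s:x:b_{p-1},\{b_p,\dots,b_{q-1}\},b_q)$ (or a $\ddag$-AW when $p=0$). You also omit the small-block cases $|B'|\in\{1,2\}$, which do not directly yield a $\dag$-AW and must be dispatched separately.

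\textbf{The ``full'' case.} Once you have shown that $y\in N(s)$ with $y\not\sim x$ must be in category ``none,'' the clean finish is the paper's: replace $s$ by $y$ in $W$ (which you just proved is legal), then apply Lemma~\ref{lem:common-neighbor-of-base} to the common neighbor $x$ of the base of this new AW to force $x\sim y$. Your direct case analysis on whether $x$ sees $l$ and $r$ is essentially a re-derivation of Lemma~\ref{lem:common-neighbor-of-base}; your sample net $\{x,c,b_d,l,r,y\}$ only works when $x\sim l$ and $x\not\sim r$, and the suggested ``small $\dag$-/$\ddag$-AW with base terminal $y$'' is problematic because $y$ is adjacent to every center you name.
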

\begin{figure*}[t]
  \centering
  \begin{subfigure}[b]{0.3\textwidth}
    \centering
    \begin{tikzpicture}[scale=.35]
      \node [corner,label=$x$] (x) at (-3,6.44) {};
      \node [corner,label=$s$] (s) at (0,6.44) {};
      \node [vertex] (a) at (-7, 0) {};
      \node [vertex] (a1) at (-5, 0) {};
      \node [corner] (a2) at (-3, 0) {};
      \node [vertex] (b2) at (3, 0) {};
      \node [vertex] (b1) at (5, 0) {};
      \node [vertex] (b) at (7, 0) {};
      \node [vertex,label=left:$c_1$] (c1) at (-1.5,4) {};
      \node [corner,label=right:$c_2$] (c2) at (1.5,4) {};
      \draw[] (a2) -- (a1) -- (a) -- (c1) -- (b2) -- (b1) -- (b);
      \draw[] (a1) -- (c1) -- (b1);
      \draw[] (a1) -- (c2) -- (b1);
      \draw[] (a2) -- (c1) -- (s);
      \draw[] (c1) -- (c2) -- (b);
      \draw[] (c2) -- (b2);
      \draw[dashed] (a2) -- (b2);
      
      \draw[at edge] (c2) -- (s) -- (x) -- (a2) -- (c2);
    \end{tikzpicture}
    \caption{$x \sim B$}
  \end{subfigure}
  \qquad
  \begin{subfigure}[b]{0.3\textwidth}
    \centering
    \begin{tikzpicture}[scale=.35]
      \node [corner,label=$x$] (x) at (-3,6.44) {};
      \node [corner,label=$s$] (s) at (0,6.44) {};
      \node [corner] (a) at (-7, 0) {};
      \node [corner] (a1) at (-5, 0) {};
      \node [vertex] (a2) at (-3, 0) {};
      \node [vertex] (b2) at (3, 0) {};
      \node [vertex] (b1) at (5, 0) {};
      \node [vertex] (b) at (7, 0) {};
      \node [vertex,label=left:$c_1$] (c1) at (-1.5,4) {};
      \node [corner,label=right:$c_2$] (c2) at (1.5,4) {};
      \draw[] (a2) -- (a1) -- (c1) -- (b2) -- (b1) -- (b);
      \draw[] (a) -- (c1) -- (b1);
      \draw[] (a1) -- (c2) -- (b1);
      \draw[] (a2) -- (c2) -- (b2);
      \draw[] (c1) -- (c2) -- (b);
      \draw[] (a2) -- (c1) -- (x);
      \draw[] (c1) -- (s);
      \draw[dashed] (a2) -- (b2);
      
      \draw[at edge] (c2) -- (s) -- (x) -- (a) -- (a1) -- (c2);
    \end{tikzpicture}
    \caption{$x \not\sim B$ but $x\sim \{l,r\}$}
  \end{subfigure}
  
  \medskip
  \begin{subfigure}[b]{0.3\textwidth}
    \centering
    \begin{tikzpicture}[scale=.35]
      \node [corner,label=$x$] (x) at (-3,6.44) {};
      \node [special,label=$s$] (s) at (0,6.44) {};
      \node [corner] (a) at (-7, 0) {};
      \node [special] (a1) at (-5, 0) {};
      \node [vertex] (a2) at (-3, 0) {};
      \node [vertex] (b2) at (3, 0) {};
      \node [special] (b1) at (5, 0) {};
      \node [corner] (b) at (7, 0) {};
      \node [special,label=right:$c$] (c) at (0,4) {};
      \draw[at edge] (a) -- (a1) -- (c) -- (b1) -- (b);
      \draw[at edge] (c) -- (s) -- (x);
      \draw[] (a1) -- (a2) -- (c) -- (b2) -- (b1);
      \draw[dashed] (a2) -- (b2);
    \end{tikzpicture}
    \caption{$x \sim s$ but $x \not\sim B$}
  \end{subfigure}%  
  \qquad
  \begin{subfigure}[b]{0.3\textwidth}
    \centering
    \begin{tikzpicture}[scale=.35]
      \node [corner,label=$x$] (x) at (-3,6.44) {};
      \node [vertex,label=$s$] (s) at (0,6.44) {};
      \node [corner] (a) at (-7, 0) {};
      \node [vertex] (a1) at (-5, 0) {};
      \node [vertex] (a2) at (-3, 0) {};
      \node [vertex] (b2) at (3, 0) {};
      \node [vertex] (b1) at (5, 0) {};
      \node [corner] (b) at (7, 0) {};
      \node [vertex,label=left:$c_1$] (c1) at (-1.5,4) {};
      \node [vertex,label=right:$c_2$] (c2) at (1.5,4) {};
      \draw[at edge] (a) -- (c1) -- (c2) -- (b);
      \draw[at edge] (c1) -- (s) -- (x);
      \draw[at edge] (c2) -- (s);
      \draw[] (a) -- (a1) -- (a2) -- (c1) -- (b2) -- (b1) -- (b);
      \draw[] (a1) -- (c1) -- (b1);
      \draw[] (a1) -- (c2) -- (b1);
      \draw[] (a2) -- (c2) -- (b2);
      \draw[dashed] (a2) -- (b2);
    \end{tikzpicture}
    \caption{$x \sim s$ but $x \not\sim B$}
  \end{subfigure}%  
  \qquad
  \begin{subfigure}[b]{0.3\textwidth}
    \centering
    \begin{tikzpicture}[scale=.35]
      \node [corner,label=$x$] (x) at (-3,6.44) {};
      \node [vertex,label=$s$] (s) at (0,6.44) {};
      \node [corner] (a) at (-7, 0) {};
      \node [vertex] (a1) at (-5, 0) {};
      \node [vertex] (a2) at (-3, 0) {};
      \node [vertex] (b2) at (3, 0) {};
      \node [vertex] (b1) at (5, 0) {};
      \node [corner] (b) at (7, 0) {};
      \node [vertex,label=left:$c_1$] (c1) at (-1.5,4) {};
      \node [special,label=right:$c_2$] (c2) at (1.5,4) {};
      \draw[at edge] (a) -- (c1) -- (c2) -- (b);
      \draw[at edge] (c1) -- (s) -- (x) -- (c1) -- (a1);
      \draw[at edge] (a) -- (a1) -- (c2) -- (s) -- (x);
      \draw[] (a1) -- (a2) -- (c1) -- (b2) -- (b1) -- (b);
      \draw[] (c1) -- (b1) -- (c2);
      \draw[] (a2) -- (c2) -- (b2);
      \draw[dashed] (a2) -- (b2);
    \end{tikzpicture}
    \caption{$x \sim c_1$ but $x \not\sim c_2,l$}
  \end{subfigure}%  
  
  \caption{Adjacency between a neighbor $x$ of $s$ and centers
    [Lemma~\ref{lem:shallow}].}
  \label{fig:neighbor-of-s-is-neighbor-of-c-2}
\end{figure*}
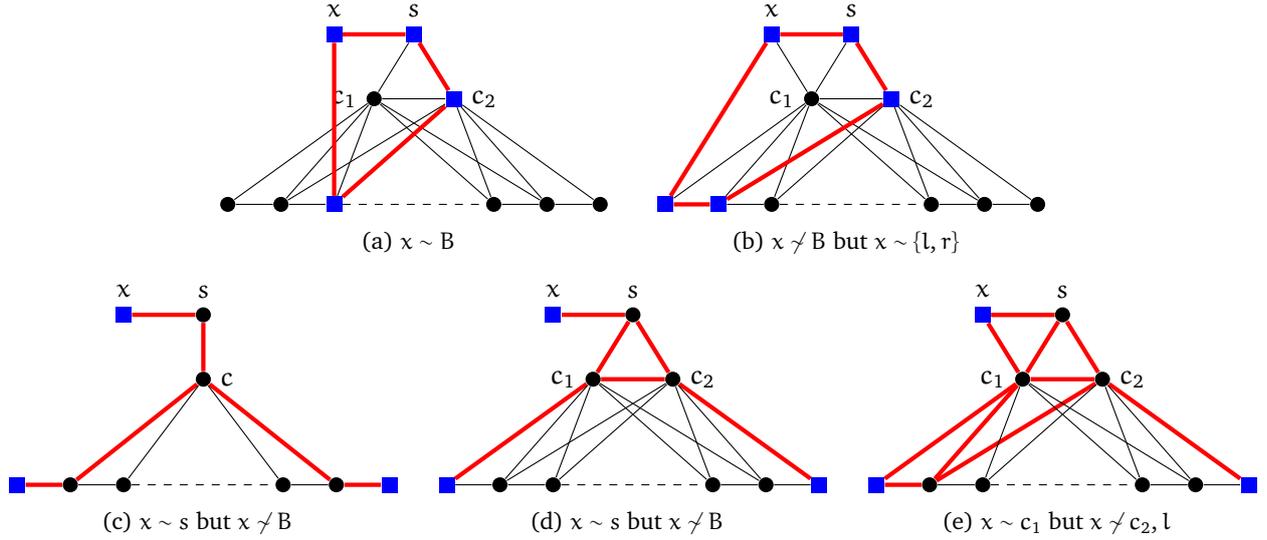
\begin{table*}[ht]
  \footnotesize
  \begin{center}
    \begin{tabular}{r|lccc}
      \hline
      & {} & {$q = p + 1$} & {$q = p + 2$} & {$q > p + 2$} \\
      \hline
      \multirow{6}{*}{$\dag$-AW} &
      $p = 0$  &  $4$-hole &  tent & $\ddag$-AW  \\
      & Fig.~\ref{fig:3.1} &  $(x c b_1 l x)^{*}$ &  $\{l, x, s, c, 
      b_2, b_1\}$ & $(s:x,c: l,b_1\dots b_{q-1},b_{q})^{**}$ 
      \\
      & & & & \\
      & $p = 1$  &  whipping top &  net &
      $\dag$-AW  \\
      & Fig.~\ref{fig:3}(b,c) &  $\{l,b_1,x,
      s,c,b_3,b_2\}$$^{***}$ &  $\{l,b_1,s,x,b_3,b_2\}$ &
      $(s:x:l,b_1\dots b_{q-1},b_{q})^{**}$ \\
      & & & & \\
      & $p > 1$  &  long claw$^{1}$ &  net& $\dag$-AW 
      \\
      & Fig.~\ref{fig:3}(d,e) &  $\{b_{p-2}, b_{p-1},b_p, s, x, 
      b_{p+2}, b_{p+1}\}$ & $\{b_{p-1},b_p,s,x, b_{q}, b_{q-1}\}$ 
      &  $(s:x:b_{p-1},b_p\dots b_{q-1},b_{q})$
      \\
      \hline
      \multirow{6}{*}{$\ddag$-AW} &
      $p = 0$  &  $4$-hole &  tent & $\ddag$-AW  \\
      & &  $(x c_2 b_1 l x)^{*}$ &  $\{l, x, s, c_2, 
      b_2, b_1\}$ & $(s:x,c_2: l,b_1\dots b_{q-1},b_{q})^{**}$ \\
      &&&&\\
      % \hline
      & $p = 1$  &  whipping top &  net &
      $\dag$-AW  \\
      & &  $\{l,b_1,x,s,c_2,b_3,b_2\}^{***}$ &  
      $\{l,b_1,s,x,b_3,b_2\}$ & $(s:x:l,b_1\dots b_{q-1},b_{q})^{**}$ \\
      &&&&\\          
      % \hline
      & $p > 1$  &  long claw &  net& $\dag$-AW \\
      & & $\{b_{p-2}, b_{p-1},b_p, s, x, 
      b_{p+2}, b_{p+1}\}$ & $\{b_{p-1},b_p,s,x, b_{q}, b_{q-1}\}$ 
      &  $(s:x:b_{p-1},b_p\dots b_{q-1},b_{q})$\\
      \hline
    \end{tabular}
  \end{center}
  
  \hspace*{20mm}{$*$} : The vertex $x$ is in  category ``none.''
  \\
  \hspace*{20mm}{$**$} : The vertex $x$ would be in  category ``full'' if 
  $q = d + 1$.
  \\
  \hspace*{20mm}{$***$} : A $4$-hole $(x b_p b_{p+1} b_{p+2} x)$ 
  would be introduced if $x \sim b_{p+2}$;
  
  \caption{Structures used in the proof of Lemma~\ref{lem:shallow}
    (category ``partial'' )}
  \label{fig:neighbor-of-shallow-terminal}
\end{table*}
\begin{figure*}[t]
  \centering
  \begin{subfigure}[b]{0.3\textwidth}
    \centering
    \begin{tikzpicture}[scale=.35]
      \node [special,label=above:$x$] (x) at (-3,4) {};
      \node [corner,label=above:$s$] (s) at (0,6.44) {};
      \node [corner,label=above:$l$] (a) at (-7, 0) {};
      \node [special,label=below:$b_1$] (a1) at (-5, 0) {};
      \node [corner] (a2) at (-3, 0) {};
      \node [vertex] (b2) at (3, 0) {};
      \node [vertex,label=above:$b_d$] (b1) at (5, 0) {};
      \node [vertex,label=above:$r$] (b) at (7, 0) {};
      \node [special,label=right:$c$] (c) at (0,4) {};
      \draw[] (c) -- (b1) -- (b);
      \draw[] (c) -- (b2) -- (b1);
      \draw[dashed] (a2) -- (b2);
      
      \draw[at edge] (a1) -- (a2) --(c) -- (s) -- (x) -- (a1);
      \draw[at edge] (x) -- (c) -- (a1) -- (a) -- (x);
    \end{tikzpicture}
    \caption{$N_B(x)=\{b_1\}$ and $x \sim l$.}
    \label{fig:3.1}
  \end{subfigure}%  
  \qquad
  \begin{subfigure}[b]{.3\textwidth}
    \centering
    \begin{tikzpicture}[scale=.35]
      \node [vertex,label=above:$x$] (x) at (-3,4) {};
      \node [corner,label=above:$s$] (s) at (0,6.44) {};
      \node [corner,label=above:$l$] (a) at (-7, 0) {};
      \node [special,label=below:$b_1$] (a1) at (-5, 0) {};
      \node [vertex] (a2) at (-3, 0) {};
      \node [vertex] (b2) at (3, 0) {};
      \node [vertex,label=above:$b_d$] (b1) at (5, 0) {};
      \node [vertex,label=above:$r$] (b) at (7, 0) {};
      \node [vertex,label=right:$c$] (c) at (0,4) {};
      \node [corner,label=below:$b_3$] (a3) at (-1.5,0) {};
      \draw[] (c) -- (b1) -- (b);
      \draw[] (c) -- (b2) -- (b1);
      \draw[dashed] (a2) -- (b2);
      
      \draw[at edge] (a) -- (a1) -- (a2) -- (a3) -- (c) -- (a2);
      \draw[at edge] (x) -- (c) -- (a1) -- (x) -- (s) -- (c);
    \end{tikzpicture}
    \caption{$N_B(x)=\{b_1\}$ and $x \not\sim l$.}
    \label{fig:3.2}
  \end{subfigure}%  
  \qquad
  \begin{subfigure}[b]{.3\textwidth}
    \centering
    \begin{tikzpicture}[scale=.35]
      \node [vertex,label=above:$x$] (x) at (-3,4) {};
      \node [corner,label=above:$s$] (s) at (0,6.44) {};
      \node [corner,label=above:$l$] (a) at (-7, 0) {};
      \node [vertex,label=below:$b_1$] (a1) at (-5, 0) {};
      \node [vertex] (a2) at (-3, 0) {};
      \node [vertex] (b2) at (3, 0) {};
      \node [vertex,label=above:$b_d$] (b1) at (5, 0) {};
      \node [vertex,label=above:$r$] (b) at (7, 0) {};
      \node [vertex,label=right:$c$] (c) at (0,4) {};
      \node [corner,label=below:$b_3$] (a3) at (-1.5,0) {};
      \draw[] (a1) -- (c) -- (b1) -- (b);
      \draw[] (c) -- (b2) -- (b1);
      \draw[dashed] (a2) -- (b2);
      \draw[] (a3) -- (c) -- (a2);
      \draw[] (x) -- (c) -- (s);
      
      \draw[at edge] (a) -- (a1) -- (a2) -- (a3);
      \draw[at edge] (x) -- (s);
      \draw[at edge] (a1) -- (x) -- (a2);
    \end{tikzpicture}
    \caption{$N_B(x)=\{b_1,b_2\}$.}
    \label{fig:3.3}
  \end{subfigure}%  

  \begin{subfigure}[b]{0.4\textwidth}
    \centering
    \begin{tikzpicture}[scale=.35]
      \node [vertex,label=above:$x$] (x) at (-3,4) {};
      \node [corner,label=above:$s$] (s) at (0,6.44) {};
      \node [vertex,label=above:$l$] (a) at (-7, 0) {};
      \node [vertex,label=above:$b_1$] (a1) at (-5, 0) {};
      \node [vertex,label=above:$b_d$] (b1) at (5, 0) {};
      \node [vertex,label=above:$r$] (b) at (7, 0) {};
      \node [vertex,label=right:$c$] (c) at (0,4) {};
      \node [corner,label=below:$b_{i-2}$] (v1) at (-2.5,0) {};
      \node [vertex] (v2) at (-1,0) {};
      \node [special,label=below:$b_{i}$] (v3) at (0,0) {};
      \node [vertex] (v4) at (1,0) {};
      \node [corner,label=below:$b_{i+2}$] (v5) at (2.5,0) {};
      \draw[] (a) -- (a1) -- (c) -- (b1) -- (b);
      \draw[] (v1) -- (c) -- (v5);
      \draw[] (v2) -- (c) -- (v4);
      \draw[] (v3) -- (c);
      \draw[dashed] (a1) -- (v1);
      \draw[dashed] (b1) -- (v5);
      \draw[] (x) -- (c) -- (s);
      \draw[at edge] (v1) -- (v2) -- (v3) -- (v4) -- (v5);
      \draw[at edge] (s) -- (x) -- (v3);
    \end{tikzpicture}
    \caption{$N_B(x)=\{b_i\}$ ($1<i<d$).}
    \label{fig:3.4}
  \end{subfigure}%  
  \quad
  \begin{subfigure}[b]{.4\textwidth}
    \centering
    \begin{tikzpicture}[scale=.35]
      \node [vertex,label=above:$x$] (x) at (-3,4) {};
      \node [corner,label=above:$s$] (s) at (0,6.44) {};
      \node [vertex,label=above:$l$] (a) at (-7, 0) {};
      \node [vertex,label=above:$b_1$] (a1) at (-5, 0) {};
      \node [vertex,label=above:$b_d$] (b1) at (5, 0) {};
      \node [vertex,label=above:$r$] (b) at (7, 0) {};
      \node [vertex,label=right:$c$] (c) at (0,4) {};
          
      \node [corner,label=below:$b_{i-1}$] (v1) at (-2.5,0) {};
      \node [vertex] (v2) at (-1,0) {};
      \node [vertex] (v4) at (1,0) {};
      \node [corner,label=below:$b_{j+1}$] (v5) at (2.5,0) {};
      \draw[] (a) -- (a1) -- (c) -- (b1) -- (b);
      \draw[] (v1) -- (c) -- (v5);
      \draw[] (v2) -- (c) -- (v4);
      \draw[dashed] (a1) -- (v1);
      \draw[dashed] (b1) -- (v5);
      \draw[] (x) -- (c) -- (s);
      
      \draw[at edge] (v1) -- (v2) -- (v4) -- (v5);
      \draw[at edge] (v2) -- (x) -- (v4);
      \draw[at edge] (s) -- (x);
    \end{tikzpicture}
    \caption{$N_B(x)=\{b_i, b_{i+1}\}$ ($1<i<d-1$).}
    \label{fig:3.5}
  \end{subfigure}%  
  
  \caption{Vertex $x$ in category ``partial'' w.r.t. $W$
    [Lemma~\ref{lem:shallow}].}
  \label{fig:3}
\end{figure*}
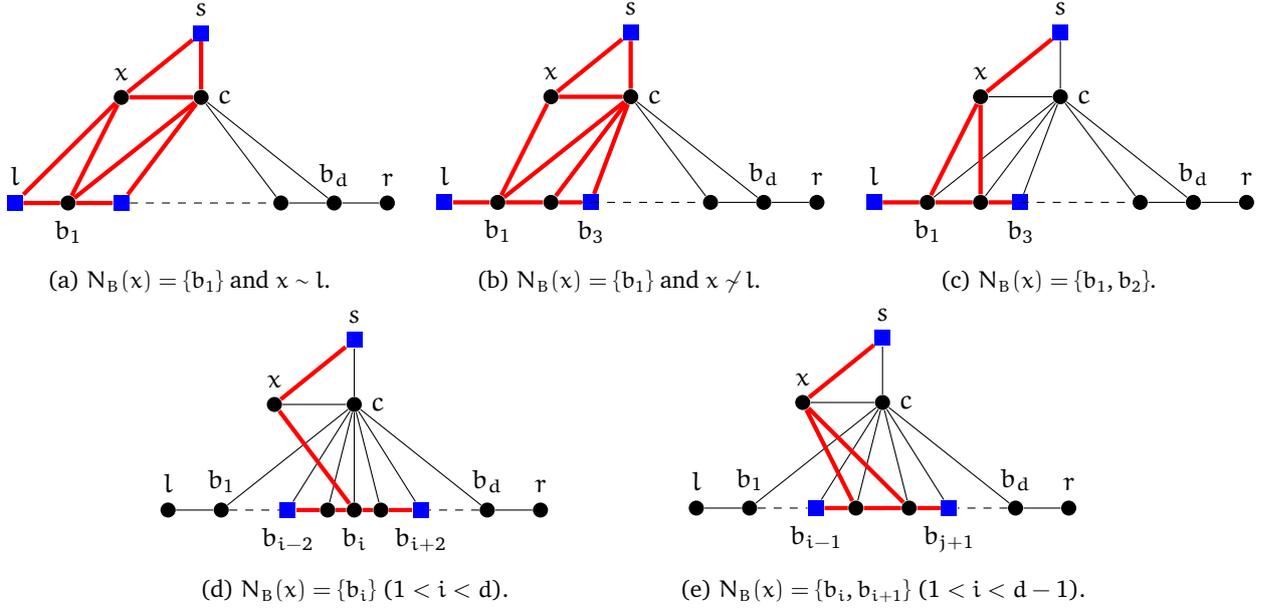
\begin{proof}
  Suppose to the contrary of statement (1), $x \not\sim c$ if $W$ is a
  $\dag$-AW or (without loss of generality) $x \not\sim c_2$ if $W$ is
  a $\ddag$-AW.  If $x \sim b_i$ for some $1 \leq i \leq d$ then there
  is a $4$-hole $(x s c b_i x)$ or $(x s c_2 b_i x)$ (See
  Figure~\ref{fig:neighbor-of-s-is-neighbor-of-c-2}(a)).  Hence we may
  assume $x \not\sim B$.  (See
  Figure~\ref{fig:neighbor-of-s-is-neighbor-of-c-2}(b,c,d,e).)  There
  is
  \begin{inparaitem}
  \item a $5$-hole $(x s c b_1 l x)$ or $(x s c b_d r x)$ if $W$ is a
    $\dag$-AW, and $x\sim l$ or $x\sim r$, respectively;
  \item a $5$-hole $(x s c_2 b_1 l x)$ or $4$-hole $(x s c_2 r x)$ if
    $W$ is a $\ddag$-AW, and $x\sim l$ or $x\sim r$, respectively;
  \item a long claw $\{x,s,c,b_1,l,b_d,r\}$ if $W$ is a $\dag$-AW and
    $x\not\sim l,r$;
  \item a net $\{x,s,l,c_1,r,c_2\}$ if $W$ is a $\ddag$-AW and $x
    \not\sim c_1,l,r$; or
  \item a whipping top $\{r,c_2,s,x,c_1,l,b_1\}$ centered at $c_2$ if
    $W$ is a $\ddag$-AW and $x\not\sim l,r$, but $x \sim c_1$.
  \end{inparaitem}
  Neither of these cases is possible, and thus statement (1) is
  proved.

  For statement (2), let us handle category ``none'' first.  Note that
  $x$, nonadjacent to $B$, cannot be a center of $W$.  If $x\sim l$,
  then there is a 4-hole $(x c b_1 l x)$ or $(x c_2 b_1 l x)$ when $W$
  is a $\dag$-AW or $\ddag$-AW, respectively.  A symmetrical argument
  will rule out $x\sim r$.  Now that $x$ is adjacent to the center(s)
  but neither base terminals nor base vertices of $W$, then
  $(x:c:l,B,r)$ (resp., $(x:c_1,c_2:l,B,r)$) makes another $\dag$-AW
  (resp., $\ddag$-AW).

  Assume now that $x$ is in category ``full.''  Suppose for
  contradiction that $x \not\sim v$ for some $v\in N(s)\setminus
  \{x\}$.  We have already proved in statement (1) that $v$ and $x$
  are adjacent to the center(s) of $W$ (different from them).  In
  particular, if one of $v$ and $x$ is a center, then they are
  adjacent. Therefore, we can assume that $v$ and $x$ are not centers.
  If $v\sim b_i$ for some $1\leq i \leq d$, then there is a $4$-hole
  $(x s v b_i x)$.  Otherwise, $v\not\sim B$, and it is in category
  ``none.''  Let $W'$ be the AW obtained by replacing $s$ in $W$ by
  $v$; then $x\sim v$ follows from
  Lemma~\ref{lem:common-neighbor-of-base}.

  Finally, assume that $x$ is in category ``partial,'' that is, $x
  \sim B$, but $x\not\sim b_i$ for some $1\le i \le d$. In this case,
  we construct the claimed AW as follows.  As the case $x \not\sim l$
  but $x \sim r$ is symmetric to $x \sim l$ but $x \not\sim r$, it is
  ignored in the following, i.e., we assume that $x \sim r$ only if $x
  \sim l$.  Let $p$ be the smallest index such that $x \sim b_p$, and
  $q$ be the smallest index such that $p < q \leq d+1$ and $x \not\sim
  b_q$ ($q$ exists by assumptions).  See
  Table~\ref{fig:neighbor-of-shallow-terminal} for the structures for
  $\dag$-AW and $\ddag$-AW respectively (see also
  Figure~\ref{fig:3}).\footnote{We omit the figure for $\ddag$-AWs:
    For a $\ddag$-AW $(s:c_1,c_2:l,B,r)$, we are only concerned with
    the relation between center $c_2$ and $B \cup \{l\}$, which is the
    same as the relation between $c$ and $B\cup \{l\}$ in a
    $\dag$-AW.}

  As the graph is prereduced and contains no small \fis, it is
  immediate from Table~\ref{fig:neighbor-of-shallow-terminal} that the
  case $q > p + 2$ holds; otherwise there always exists a small \fis.
  This completes the categorization of vertices in $N(s) \setminus T$
  and the proof.
\end{proof}
  
The proof of our main result of this section is an inductive
application of Lemma~\ref{lem:shallow}.  To avoid the repetition of
the essentially same argument for $\dag$-AWs and $\ddag$-AWs,
especially for the interaction between AWs, we use a generalized
notation to denote both.  We will uniformly use $c_1,c_2$ to denote
center(s) of an AW, and while the AW under discussion is a $\dag$-AW,
both $c_1$ and $c_2$ refer to the only center $c$.  As long as we do
not use the adjacency of $c_1$ and $l$, $c_2$ and $r$, or $c_1$ and
$c_2$ in any of the arguments, this unified (abused) notation will not
introduce inconsistencies.

\begin{theorem}
  \label{thm:shallow-is-module}
  Let $W$ be a $\dag$- or $\ddag$-AW in a prereduced graph $G$ with
  shallow terminal $s$ and base $B$.  Let $C=N(s)\cap N(B)$ and let
  $M$ be the vertex set of the component of $G-C$ containing $s$. Then
  $M$ is completely connected to $C$, and $G[C]$ is a clique.
\end{theorem}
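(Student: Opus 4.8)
The plan is to prove the two assertions — that $G[C]$ is a clique and that $M$ is completely connected to $C$ — by iterating Lemma~\ref{lem:shallow}, the point being that sliding the shallow terminal of $W$ onto one of its \emph{none}-type neighbors produces a new AW with \emph{the same} associated set $C$. The starting observation, valid for an arbitrary AW $W'$ in $G$ with shallow terminal $s'$, base $B'$, and $C'=N(s')\cap N(B')$, is that every vertex $v\in N(s')$ is adjacent to every vertex of $C'\setminus\{v\}$: any $w\in C'$ is adjacent to some base vertex, so $w$ is not in category \emph{none} with respect to $W'$; if $w$ is \emph{full} then Lemma~\ref{lem:shallow}(full) gives that $w$ is adjacent to every vertex of $N(s')\setminus\{w\}$, in particular to $v$, and if $w$ is \emph{partial} then Lemma~\ref{lem:shallow}(partial) produces an AW with shallow terminal $s'$ having $w$ as one of its centers, whence Lemma~\ref{lem:shallow}(1) applied to that AW and $v$ gives $v\sim w$. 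Taking $v$ to range over $C\subseteq N(s)$ already yields that $G[C]$ is a clique.

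The second ingredient is the sliding observation itself: if $z\in N(s')$ is in category \emph{none} with respect to $W'$, so $z$ is adjacent to no base vertex of $W'$ and hence $z\notin C'$, then by Lemma~\ref{lem:shallow}(none) replacing $s'$ by $z$ in $W'$ yields an AW $W''$ with the same base $B'$ and the same center(s), and furthermore $N(z)\cap N(B')=C'$. The inclusion $\supseteq$ follows from the previous paragraph (with $v=z$) together with $C'\subseteq N(B')$. For $\subseteq$, take any $y$ adjacent to $z$ and to some base vertex; then $y$ is \emph{full} or \emph{partial} with respect to $W''$. If $y$ is \emph{full}, it is adjacent to every vertex of $N(z)\setminus\{y\}$, in particular to $s'$; if $y$ is \emph{partial}, then $y$ is a center of an AW with shallow terminal $z$, so $s'\in N(z)$ is adjacent to $y$ by Lemma~\ref{lem:shallow}(1); either way $y\in N(s')\cap N(B')=C'$.

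To finish, fix $v\in M$ and a shortest $s$-$v$ path $(s=v_0,v_1,\dots,v_m=v)$ inside $G-C$, and prove by induction on $i$ that $v_i$ is the shallow terminal of an AW $W_i$ with base $B$ such that $N(v_i)\cap N(B)=C$, and that $v_i$ is adjacent to every vertex of $C$. For $i=0$ one takes $W_0=W$: then $N(v_0)\cap N(B)=C$ by definition and $s$ is adjacent to every vertex of $C$ because $C\subseteq N(s)$. For the inductive step, $v_{i+1}\in N(v_i)$ while $v_{i+1}\notin C=N(v_i)\cap N(B)$, so $v_{i+1}$ is adjacent to no base vertex of $W_i$, i.e.\ it is in category \emph{none} with respect to $W_i$; the sliding observation then furnishes $W_{i+1}$ with $N(v_{i+1})\cap N(B)=C$, and the first paragraph applied to $W_i$ (whose associated set is $C$) shows that $v_{i+1}\in N(v_i)$ is adjacent to every vertex of $C\setminus\{v_{i+1}\}$, hence of $C$ since $v_{i+1}\notin C$. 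In particular $v=v_m$ is adjacent to every vertex of $C$, so $M$ is completely connected to $C$.

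The one genuinely delicate point is the sliding observation — that $C=N(s)\cap N(B)$ is unchanged when the shallow terminal is moved to a \emph{none}-neighbor — and this invariance is exactly what lets the induction proceed along the path; the rest is bookkeeping with the three categories of Lemma~\ref{lem:shallow} and with its part~(1).
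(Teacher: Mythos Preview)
Your proof is correct and follows the same overall strategy as the paper: iterate Lemma~\ref{lem:shallow} along a path from $s$ inside $G-C$, showing that each vertex on the path becomes the shallow terminal of an AW with the same base $B$, and then use the \emph{full}/\emph{partial} dichotomy to connect the endpoint to every vertex of $C$. The one place where you diverge is in how you establish that each $v_{i+1}$ is in category \emph{none}. The paper takes a \emph{chordless} path and argues by contradiction: if $v_q$ were the first vertex adjacent to $B$, then (whether $v_q$ is \emph{full} or \emph{partial} with respect to $W_{q-1}$) Lemma~\ref{lem:shallow} forces $v_q\sim v_{q-2}$, contradicting chordlessness. You instead package the stronger invariant $N(v_i)\cap N(B)=C$ --- your ``sliding observation'' --- which is preserved under a \emph{none}-step and immediately gives $v_{i+1}\notin N(B)$ from $v_{i+1}\in N(v_i)\setminus C$, with no need for the path to be chordless. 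This is a clean reformulation that makes the induction more self-contained; the paper's version trades the extra lemma for a shorter contradiction step.
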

\begin{proof}
  Denote by $W = (s:c_1,c_2:l,B,r)$, where $c_1 = c_2$ when $W$ is a
  $\dag$-AW.  Let $x$ and $y$ be any pair of vertices such that $x \in
  C$ and $y \in M$.  By definition, $G[M]$ is connected, and there is
  a chordless path $P = (v_0 \dots v_p)$ from $v_0=s$ to $v_p=y$ in
  $G[M]$.  We claim that no vertex in $P$ is adjacent to $B$.  It
  holds vacuously if $p = 1$ and then $y\sim s$; hence we assume
  $p>1$.  Suppose the contrary and let $q$ be the smallest index such
  that $v_q\sim B$. This means that every $v_i$ with $i<q$ is in
  category ``none'' of Lemma~\ref{lem:shallow}(2). Therefore, applying
  Lemma~\ref{lem:shallow}(1,2) on $v_{i}$ and AW
  $(v_{i-1}:c_1,c_2:l,B,r)$ inductively for $i = 1,\dots,q-1$, we
  conclude that there is an AW $W_i = (v_i:c_1,c_2:l,B,r)$ for each $i
  < q$.  One more application of Lemma~\ref{lem:shallow}(1) shows that
  $v_q$ is adjacent to the center(s) of $W_{q-1}$ as well.  If $v_q$
  is adjacent to all vertices of $B$, i.e., in the category ``full''
  with respect to every $W_i$, then Lemma~\ref{lem:shallow}(2) on
  $v_q$ and $W_{q-1}$ implies that $v_q$ is adjacent to $v_{q-2}\in
  N(v_{q-1})$, contradicting the assumption that $P$ is
  chordless. Otherwise (the category ``partial''), according to
  Lemma~\ref{lem:shallow}(2), there is another AW $W'=
  (v_{q-1}:c'_1,c'_2:l',B',r')$, where $B' \subset B$, and $v_q \in
  \{c'_1,c'_2\}$.  Now an application of Lemma~\ref{lem:shallow}(1) on
  $v_q$ and $W'$ shows that $v_q$ is adjacent to $v_{q-2}\in
  N(v_{q-1})$, again a contradiction.  From these contradictions we
  can conclude $P$ is disjoint from $N(B)$.  Applying
  Lemma~\ref{lem:shallow} inductively on $v_{i+1}$ and
  $W_i=(v_i:c_1,c_2:l,B,r)$, we get an AW with the same centers for
  every $0 \le i \le p$.

  As $x$ is adjacent to both $s$ and $B$, it cannot be in category
  ``none'' with respect to $W$.  We now separate the discussion based
  on whether $x$ is in the category ``full'' or ``partial.''  Suppose
  first that $x$ is in the category ``full''; as $x\in N(s)$,
  Lemma~\ref{lem:shallow}(1) implies that $x\sim c_1,c_2$. Then
  applying Lemma~\ref{lem:shallow}(2) inductively, where $i =
  1,\dots,p$, on vertex $x$ and $W_{i-1}$ we get that $x \sim v_i$ for
  every $i \leq p$; in particular, $x \sim v_p$ ($= y$).  Suppose now
  that $x$ is in in category ``partial.'' Then by
  Lemma~\ref{lem:shallow}(2), there is an AW $W'_0 =
  (v_0:c'_1,c'_2:l',B',r')$, where $B' \subset B$, and $x\in
  \{c'_1,c'_2\}$.  For any $0\le i\le p$, as $v_i\not\sim B$, we have
  $v_i\not\sim B'$ as well, i.e., $v_i$ is in category ``none'' with
  respect to $W'_0$. Therefore, by an inductive application of
  Lemma~\ref{lem:shallow}(2) on the vertex $v_i$ and AW
  $W'_{i-1}=(v_{i-1}:c'_1,c'_2:l',B',r')$ for $i = 1,\dots,p$, we
  conclude that there is an AW $W'_p = (v_p:c'_1,c'_2:l',B',r')$, from
  which $x\sim y$ follows immediately.

  Now we show the second assertion.  For any pair of vertices $x$ and
  $y$ in $C$, we apply Lemma~\ref{lem:shallow} on $x$ and $W$; by
  definition, $x\sim B$ and thus cannot be in category ``none.''  If
  $x$ is in category ``full'' with respect to $W$, then
  Lemma~\ref{lem:shallow}(2) and the fact $y\in N(s)$ imply that
  $x\sim y$.  Otherwise, if $x$ is in category ``partial'' with
  respect to $W$, then Lemma~\ref{lem:shallow}(2) implies that there
  is an AW $W' = (s:c'_1,c'_2:l',B',r')$ where $B' \subset B$ and
  $x\in\{c'_1,c'_2\}$. Therefore, by Lemma~\ref{lem:shallow}(1) on the
  vertex $y\in N(s)$ and $W'$, we get that $y\sim c'_1,c'_2$ and hence
  $x\sim y$.
\end{proof}

Now Theorem~\ref{thm:reduced-instances-ap} follows from
Theorem~\ref{thm:shallow-is-module}: the set $M$ containing $s$ is in
a module whose neighborhood is a clique, hence every vertex in $M$ is
simplicial.  We point out that the set $C$ is the minimal $M$-$B$
separator.

\section{Long holes}\label{sec:holes-in-modules}

This section proves Theorem~\ref{thm:reduced-instances-b} by showing that
the holes in a reduced graph are pairwise congenial. During the study
of vertices of a hole, their indices become very subtle.  To simplify
the presentation, we will frequently apply a common technique, that
is, to number the vertices of a hole starting from a vertex of special
interest for the property at hand.  Needless to say, indexing two
adjacent vertices in a hole will determine the indices of all the
vertices in the hole, as well as the ordering used to traverse the
hole.  All indices of vertices in a hole $H$ should be understood as
modulo $|H|$, e.g., $h_{0} = h_{|H|}$.

We start from two simple facts on the relations between vertices and
holes, from which we derive the relations between two holes, and
finally generalize them to multiple holes.

\begin{proposition}
  \label{lem:neighbors-are-consecutive}
  For any vertex $v$ and hole $H$ of a prereduced graph, $N_H[v]$ are
  consecutive in $H$.  Moreover, either $N_H[v] = H$ or $|N_H[v]| <
  |H| - 7$.
\end{proposition}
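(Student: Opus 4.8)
The plan is to split on whether $v$ lies on $H$ and, whenever the conclusion fails, to produce a hole on at most $10$ vertices — impossible in a prereduced graph, where every hole has at least $11$ vertices (a shorter hole would be a \mfset\ of size at most $10$). If $v\in H$ the statement is immediate: any edge from $v$ to a vertex of $H$ not adjacent to it along the cycle would be a chord of $H$, so $N_H[v]$ consists of $v$ and its two hole-neighbours — three consecutive vertices — and $3<|H|-7$. So I would assume from now on $v\notin H$, whence $N_H[v]=N_H(v)$.

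The engine of the argument is a \emph{detour} construction. Suppose $h_a,h_c$ are distinct, non-adjacent vertices of $N_H(v)$ such that one of the two open arcs of $H$ between them, say $Q$, avoids $N(v)$. Then $C=(v\;h_a\;[Q]\;h_c\;v)$ is a hole: the sub-arc $h_a[Q]h_c$ is an induced path of $H$, $v$ has no neighbour in its interior $Q$, and $h_a\not\sim h_c$, so $C$ is chordless with $|C|=|Q|+3\ge 4$; prereducedness then forces $|C|\ge 11$, i.e. $|Q|\ge 8$. Applying this to a maximal block of consecutive non-neighbours of $v$ in $H$ (a \emph{gap}) — whose flanking vertices are neighbours of $v$, distinct and non-adjacent as soon as $v$ has two or more gaps — yields: every gap of $v$ in $H$ has at least $8$ vertices.

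For consecutiveness I would argue by minimal counterexample: among all $(v,H)$ for which $N_H(v)$ is not consecutive, pick one with $|H|$ least. Then $\emptyset\subsetneq N_H(v)\subsetneq H$, so there are at least two gaps $Q_1,Q_2$, each of size $\ge 8$ by the previous step. Taking $h_a,h_c$ to be the neighbours of $v$ flanking $Q_1$, the detour hole $C=(v\;h_a\;[Q_1]\;h_c\;v)$ has $N_C(v)=\{h_a,h_c\}$ non-adjacent, hence not consecutive in $C$; but $|C|=|Q_1|+3$ while $|H|\ge |Q_1|+|Q_2|+2\ge |Q_1|+10$ (the hole contains $Q_1$, $Q_2$, and at least two run-vertices), so $|C|<|H|$ — contradicting minimality. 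For the quantitative clause: $N_H(v)=H$ and $|N_H(v)|\le 2<|H|-7$ are trivial, and otherwise $N_H(v)$ is an arc with at least $3$ vertices whose complement is a single gap $Q$, so the detour construction gives $|Q|\ge 8$ and $|N_H[v]|=|H|-|Q|\le |H|-8<|H|-7$.

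The only step I expect to require care is this descent. Merely bounding gap sizes does not close the argument — a long hole can perfectly well carry two large gaps — so the crucial observation is that rerouting through $v$ produces a \emph{strictly smaller} hole on which $v$ still has a non-consecutive neighbourhood, which then collides with minimality. Everything else is routine bookkeeping about arcs of a cycle, together with the standing prohibition of holes on at most $10$ vertices.
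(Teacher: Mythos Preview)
Your descent argument has a fatal flaw. The hole $C=(v\,h_a\,[Q_1]\,h_c\,v)$ you construct \emph{contains} $v$, so $N_C[v]=\{h_a,v,h_c\}$ --- three consecutive vertices of $C$. Thus $(v,C)$ is not a counterexample at all, and nothing contradicts minimality. You wrote $N_C(v)=\{h_a,h_c\}$ and called this ``not consecutive'', but the statement concerns the closed neighbourhood $N_H[v]$; once $v$ lies on the hole you are back in exactly the trivial case you dismissed in your first paragraph.

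More seriously, the strategy of using \emph{only} the bound $|H|\ge 11$ cannot establish consecutiveness. Take a long hole $H$ and a vertex $v$ with exactly two neighbours $h_0,h_m$ in $H$, placed far apart (say $m\approx |H|/2$). Both detour holes through $v$ have length about $|H|/2$, so no short hole is available; yet $N_H[v]$ is not consecutive. What \emph{does} appear is the long claw $\{h_0,v,h_m,h_{m-2},h_{m-1},h_{m+1},h_{m+2}\}$, and this is precisely what the paper's proof exploits: after arranging that the gaps on either side are long enough (so that no short hole arises), it exhibits a long claw or a net on at most seven vertices, depending on the length of the run of neighbours between the gaps. Prereducedness forbids \emph{all} minimal forbidden sets of size at most $10$, not just short holes, and the non-hole obstructions are essential here. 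Your argument for the quantitative clause $|N_H[v]|<|H|-7$, by contrast, is fine once consecutiveness is in hand.
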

\begin{proof}
  Both assertions are trivially true when $v\not\sim H$, $N_H[v] = H$,
  or $v \in H$ (then $|N_H[v]| = 3$); it is hence assumed that none of
  them holds true.  We number the vertices of $H$ in a way that $h_0
  \sim v$ but $h_1 \not\sim v$.  Suppose the first assertion is not
  true, then we can find the following three vertices of $H$, whose
  existence is clear from assumptions:
  \begin{inparaenum}[(\itshape a\upshape)]
  \item ${p_1}$ is the smallest index such that $p_1 > 1$ and $h_{p_1}
    \sim v$;
  \item ${p_2}$ is the smallest index such that $p_2 > p_1$ and
    $h_{p_2} \not\sim v$; and
  \item ${p_3}$ is the smallest index such that ${p_2} < p_3 < |H|$
    and $h_{p_3} \sim v$, or $p_3 = |H|$ if $h_i\not\sim v$ for each
    $i > p_2$ (then $h_{p_3} = h_0$).
  \end{inparaenum}
  By the selection of $p_1$, $p_2$, and $p_3$, we have $v \sim h_i$
  for each $p_1 \leq i < p_2$ and $i=0,p_3$, but $v \not\sim h_j$ for
  $0 < j < p_1$ or $p_2 \leq j < p_3$ (relations between $v$ and $h_i$
  where ${p_3} < i < |H|$ are immaterial).
 
  Now we examine the distances between the three indices.  If $p_1 <
  4$ or $p_3 < p_2 + 4$, then there is a small hole, $(v h_{0} h_1
  \dots h_{p_1} v)$ or $(v h_{p_2 - 1} h_{p_2} \dots h_{p_3} v)$,
  respectively, of length at most $6$.  Thus $4 \leq p_1 < p_2 < p_2 +
  4 \leq p_3$.  Nonetheless, there is
  \begin{inparaitem}
  \item a long claw $\{h_0,v,h_{p_1},h_{p_1 - 2}, h_{p_1 - 1},
    h_{p_2}, h_{p_2 + 1}\}$ if $p_2 = p_1 + 1$;
  \item a net $\{h_0,v,h_{p_1 - 1},h_{p_1}, h_{p_2}, h_{p_2 - 1}\}$ if
    $p_2 = p_1 + 2$; or
  \item a long claw $\{h_{1},h_{0},v,h_{p_1-1},h_{p_1},h_{p_2},h_{p_2
      - 1}\}$ if $p_2 > p_1 + 2$.
  \end{inparaitem}
  None of these \fiss\ involves both $h_0$ and $h_{p_3}$ and thus they
  exist regardless of $h_{p_3} = h_0$ or not.  This contradiction
  ensures that $N_H[v]$ are consecutive, so are the vertices of $H
  \setminus N_H[v]$.  On the second assertion, note that there is a
  hole of length at most $10$ if $1 \leq |H \setminus N_H[v]| \leq 7$.
\end{proof}

Recall that $\widehat N(H)$ is the set of all common neighbors of the
hole $H$.  If a vertex $v\not\in\widehat N(H)$ is adjacent to more
than three vertices of $H$, then we can use $v$ as a shortcut for the
inner vertices of the path induced by $N_H[v]$ to obtain another hole
that is strictly shorter than $H$.  
\begin{corollary}\label{lem:no-more-than-3}
  Let $H$ be a shortest hole.  If $v \not\in \widehat N(H)$, then
  $N_H[v] \leq 3$.
\end{corollary}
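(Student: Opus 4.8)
The plan is to argue by contradiction using Proposition~\ref{lem:neighbors-are-consecutive}: assuming $v\notin\widehat N(H)$ but $|N_H[v]|\ge 4$, I will route a cycle through $v$ that short-cuts the ``interior'' neighbours of $v$ on $H$, and show that this cycle is a hole strictly shorter than $H$, contradicting the assumption that $H$ is a shortest hole.

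First I would clear away the trivial case $v\in H$: then $N_H[v]$ consists of $v$ and its two neighbours along $H$, so $|N_H[v]|=3$ and we are done. So assume $v\notin H$, hence $N_H[v]=N_H(v)$. Since $v\notin\widehat N(H)$, the vertex $v$ misses some vertex of $H$, so $N_H[v]\neq H$, and Proposition~\ref{lem:neighbors-are-consecutive} applies: $N_H(v)$ is a set of consecutive vertices of $H$ and $|N_H(v)|<|H|-7$. Write $H=(h_0h_1\cdots h_{m-1}h_0)$ with $m=|H|$ and indices taken mod $m$, and choose the numbering so that $N_H(v)=\{h_1,\dots,h_t\}$, where $t=|N_H(v)|$; suppose for contradiction that $t\ge 4$.

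Now consider the cycle $C=(v\,h_1\,h_0\,h_{m-1}\,h_{m-2}\cdots h_{t+1}\,h_t\,v)$, that is, we leave $H$ at $h_1$, traverse $H$ the ``long way'' back to $h_t$, and return through $v$; the neighbours $h_2,\dots,h_{t-1}$ of $v$ on $H$ are skipped. Its length is $m-(t-1)+2=m-t+3$, which is at most $m-1$ because $t\ge 4$, and is more than $10$ because $t<m-7$; in particular $C$ is a genuine cycle of length at least $4$, strictly shorter than $H$. It remains to verify $C$ is chordless. The arc $h_t,h_{t+1},\dots,h_0,h_1$ is a contiguous sub-path of the chordless cycle $H$, hence an induced path, and its ends $h_1$ and $h_t$ are non-adjacent in $H$ since $t\not\equiv 0,2\pmod m$ (as $4\le t<m-7$); so there is no chord among the $h$-vertices of $C$. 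Finally, $N_H(v)$ meets $\{h_t,h_{t+1},\dots,h_0,h_1\}$ only in $\{h_1,h_t\}$, so the sole neighbours of $v$ inside $C$ are $h_1$ and $h_t$, which are exactly the two cycle-neighbours of $v$ in $C$; hence there is no chord at $v$ either. Therefore $C$ is a hole with $|C|=m-t+3<m=|H|$, contradicting the minimality of $H$, and we conclude $|N_H[v]|\le 3$.

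I do not anticipate a real obstacle: the only slightly delicate point is checking that the short-cut cycle $C$ has no chords, and the two conclusions of Proposition~\ref{lem:neighbors-are-consecutive}---that $N_H[v]$ is consecutive, and the quantitative bound $|N_H[v]|<|H|-7$---are precisely tailored to that purpose, the size bound in particular making all degenerate small-$m$ cases vanish so that no separate case analysis is needed.
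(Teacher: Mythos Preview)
Your proof is correct and follows exactly the approach the paper sketches: it notes in one sentence that if $v\notin\widehat N(H)$ has more than three neighbours on $H$, one can use $v$ as a shortcut for the inner vertices of the path induced by $N_H[v]$ to obtain a strictly shorter hole. You have simply filled in the details of that shortcut argument, using both conclusions of Proposition~\ref{lem:neighbors-are-consecutive} precisely as the paper intends.
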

Note that each hole $H$ in a prereduced graph contains at least $11$
vertices.  If $v \in \widehat N(H)$, then on any five consecutive
vertices of the hole $H$ and $v$, Proposition~\ref{lem:five-path}(1)
applies, which implies that $v$ is dominating in the closed
neighborhood of $H$.
\begin{corollary}\label{lem:common-neighbor-of-holes}
  Let $H$ be a hole in a prereduced graph.  If $v \in \widehat N(H)$,
  then $v$ is adjacent to all vertices in $N[H] \setminus \{v\}$.
\end{corollary}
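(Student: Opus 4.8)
The plan is to deduce the statement from a single application of Proposition~\ref{lem:five-path}(1), exploiting the fact that $v$ is adjacent to every vertex of $H$. First I would note that $v\notin H$: a vertex of a hole has exactly two neighbors inside $H$, whereas $|H|\ge 11 > 3$, so no vertex of $H$ can be adjacent to all of $H$; hence $v$ is a genuine common neighbor of $H$ lying outside it.

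Next, fix an arbitrary vertex $h_i$ of $H$ and consider the five consecutive vertices $h_{i-2},h_{i-1},h_i,h_{i+1},h_{i+2}$ (indices modulo $|H|$). Since $|H|\ge 11$, these five vertices are pairwise distinct, and since $H$ is a hole (a chordless cycle), $P=(h_{i-2}\,h_{i-1}\,h_i\,h_{i+1}\,h_{i+2})$ is a chordless path of length $4$. By hypothesis $v$ is adjacent to all of $H$, so in particular $v$ is adjacent to the two ends $h_{i-2},h_{i+2}$ of $P$ and to all of its inner vertices. Applying Proposition~\ref{lem:five-path}(1) to $P$ with $u=v$ (note $p=4\ge 4$, and $v$ is adjacent to both ends), its conclusion for the central vertex of $P$ gives $N[h_i]\subseteq N[v]$.

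Since $h_i$ was arbitrary, $N[H]=\bigcup_{h\in H}N[h]\subseteq N[v]$, and therefore every vertex of $N[H]\setminus\{v\}$ is a neighbor of $v$, which is precisely the claim.

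I do not expect any real obstacle here; the only point deserving a moment's care is checking that the five consecutive vertices are pairwise distinct and that the path they induce has no chord, which is exactly where the bound $|H|\ge 11$ for holes in a prereduced graph enters (for a short hole, e.g.\ a $5$-hole, the two outermost of the five chosen vertices would be adjacent and the argument would collapse). If desired, Proposition~\ref{lem:neighbors-are-consecutive} could be invoked to record that $N_H[v]=H$ here, but it is not needed for this corollary.
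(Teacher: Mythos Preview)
Your proof is correct and follows exactly the approach sketched in the paper: take any five consecutive vertices of $H$ as a chordless path (using $|H|\ge 11$), apply Proposition~\ref{lem:five-path}(1) with $u=v$ to get $N[h_i]\subseteq N[v]$ for the middle vertex, and conclude by letting $h_i$ range over $H$. The only difference is that you spell out the details the paper leaves implicit.
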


So far we characterized neighbors of holes in a prereduced graph: Any
vertex $v$ is adjacent to a (possibly empty) set of consecutive
vertices of a hole $H$; if $v$ is adjacent to all vertices of $H$,
then it is also adjacent to every neighbor of $H$.  From these facts
we now derive the relations between holes.  Following is the most
crucial concept of the section:

\begin{definition}
  Two holes $H_1$ and $H_2$ are called \emph{congenial} (to each
  other) if each vertex of one hole is a neighbor of the other hole,
  that is, $H_1 \subseteq N[H_2]$ and $H_2 \subseteq N[H_1]$.
\end{definition}

We remark that every hole is congenial to itself by definition.  
The definition is partially motivated by:

\begin{proposition}\label{lem:congenial-holes-cover}
  Let $\cal H$ be a set of holes all congenial to $H$.  For each $v
  \in H$, every hole in $\cal H$ intersects $N[v]$.
\end{proposition}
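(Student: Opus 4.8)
The plan is to read the statement straight off the definition of congeniality, with no appeal to the structural lemmas (Propositions~\ref{lem:neighbors-are-consecutive}, \ref{lem:five-path} or the corollaries) proved earlier in the section. Fix a vertex $v \in H$ and an arbitrary hole $H' \in \mathcal H$; the goal is to produce a vertex of $H'$ that lies in $N[v]$.

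Since $H'$ is congenial to $H$, the defining condition yields in particular $H \subseteq N[H']$, hence $v \in N[H']$. Unfolding $N[H'] = \bigcup_{u \in H'} N[u]$, there is some $u \in H'$ with $v \in N[u]$, i.e.\ either $v = u$ or $v \sim u$. In both cases $u \in N[v]$, because membership in a closed neighborhood is symmetric ($v \in N[u]$ iff $u \in N[v]$). Thus $u \in H' \cap N[v]$, so $H'$ intersects $N[v]$; as $H'$ was an arbitrary member of $\mathcal H$, this gives the claim.

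I do not expect any genuine obstacle here: the proposition is essentially a repackaging of the congeniality condition into the form that will be convenient when we later bound the minimal hole covers. The only point that needs a moment's care is that one must work with the \emph{closed} neighborhood on both sides of the definition — the holes $H$ and $H'$ may share vertices, and when $v$ itself happens to lie on $H'$ the witnessing vertex is simply $u = v$ — so the conclusion is genuinely about $N[v]$ and not about $N(v)$.
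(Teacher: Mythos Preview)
Your proof is correct and matches the paper's treatment: the proposition is stated there without proof, immediately after the definition of congenial holes, as a direct unpacking of that definition. Your argument---$v\in H\subseteq N[H']$ gives some $u\in H'$ with $v\in N[u]$, hence $u\in H'\cap N[v]$---is exactly the intended one-line justification.
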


Since a vertex in a hole cannot be a common neighbor of it,
Corollary~\ref{lem:common-neighbor-of-holes} and the definition of
congenial holes immediately imply:

\begin{corollary}\label{lem:congenial-not-dominating}
  For any pair of congenial holes $H_1$ and $H_2$ in a prereduced
  graph, $\widehat N(H_1) = \widehat N(H_2)$.  Moreover, no vertex of
  $H_1$ (resp., $H_2$) is a common neighbor of $H_2$ (resp., $H_1$).
\end{corollary}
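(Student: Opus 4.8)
The plan is to derive the whole statement from Corollary~\ref{lem:common-neighbor-of-holes} (a common neighbor of a hole is adjacent to everything in the closed neighborhood of that hole) together with the elementary fact, already recorded before the corollary and a byproduct of Proposition~\ref{lem:neighbors-are-consecutive}, that a vertex lying on a hole $H$ is adjacent to exactly its two cycle-neighbors in $H$ and hence, since $|H|\ge 4$, can never be a common neighbor of $H$.

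First I would settle the ``moreover'' part. Suppose for contradiction that some vertex $v\in H_1$ is a common neighbor of $H_2$. By the elementary fact above, $v\notin H_2$ (a vertex of $H_2$ is never in $\widehat N(H_2)$). Applying Corollary~\ref{lem:common-neighbor-of-holes} to $H_2$ and $v\in\widehat N(H_2)$, the vertex $v$ is adjacent to every vertex of $N[H_2]\setminus\{v\}$. Congeniality gives $H_1\subseteq N[H_2]$, so every vertex of $H_1\setminus\{v\}$ lies in $N[H_2]\setminus\{v\}$ and is therefore adjacent to $v$. But $v\in H_1$ has only two neighbors on the cycle $H_1$, and $|H_1|\ge 4$ (indeed $\ge 11$ in a prereduced graph), a contradiction. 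The symmetric argument rules out a vertex of $H_2$ being a common neighbor of $H_1$.

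With this in hand I would prove $\widehat N(H_1)=\widehat N(H_2)$. Let $v\in\widehat N(H_1)$. Since $v$ is a common neighbor of $H_1$, the ``moreover'' part (which says no vertex of $H_2$ is a common neighbor of $H_1$) forces $v\notin H_2$. By Corollary~\ref{lem:common-neighbor-of-holes} applied to $H_1$, $v$ is adjacent to every vertex of $N[H_1]\setminus\{v\}$; and congeniality gives $H_2\subseteq N[H_1]$, hence $H_2\subseteq N[H_1]\setminus\{v\}$ because $v\notin H_2$. Thus $v$ is adjacent to every vertex of $H_2$, i.e.\ $v\in\widehat N(H_2)$, proving $\widehat N(H_1)\subseteq\widehat N(H_2)$; the reverse inclusion follows by symmetry. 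No genuine obstacle is expected here: the statement is an immediate consequence of Corollary~\ref{lem:common-neighbor-of-holes} and the definition of congeniality, and the only point requiring a moment's care is to ensure that the candidate common neighbor $v$ does not itself sit on the hole it is supposed to dominate, which is exactly what the elementary fact guarantees.
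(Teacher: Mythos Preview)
Your proof is correct and matches the paper's intended argument: the paper simply states that the corollary follows immediately from Corollary~\ref{lem:common-neighbor-of-holes}, the definition of congenial holes, and the observation that a vertex on a hole is never a common neighbor of that hole. You have merely spelled out these immediate implications carefully, including the small but necessary check that the candidate common neighbor does not itself lie on the target hole.
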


We analyze next the relation between two non-congenial holes.  It
turns out that if not all vertices of a hole $H_1$ are adjacent to
another hole $H_2$, then, as shown in the following lemma, every
vertex of $H_1$ is adjacent to either all or none of the vertices of
$H_2$.

\begin{lemma}
  \label{lem:adjacent-holes}
  Let $H_1$ and $H_2$ be two adjacent holes in a prereduced graph 
  such that $H_1 \not\subseteq N[H_2]$.  Each neighbor of $H_2$ in
  $H_1$ is a common neighbor of $H_2$, i.e., $N_{H_1}[H_2] \subseteq
  \widehat N(H_2)$.  In particular, $H_1$ and $H_2$ are disjoint.
\end{lemma}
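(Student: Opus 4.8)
The plan is to argue by contradiction: suppose some vertex $v \in N_{H_1}[H_2]$ is \emph{not} a common neighbor of $H_2$, and manufacture a \fis\ of at most ten vertices in the (prereduced) graph. A useful first move is to reduce to the case where $v$ sits at the ``border'' of $N_{H_1}[H_2]$ along the cycle $H_1$, i.e., $v$ has a neighbor $w$ in $H_1$ with $w\notin N[H_2]$ (this $w$ plays the role of the extra vertex witnessing $H_1\not\subseteq N[H_2]$). Concretely, I would isolate the following local claim: \emph{if $v\in H_1\cap N[H_2]$ has a neighbor $w\in H_1\setminus N[H_2]$, then $v\in\widehat N(H_2)$.} Observe first that the hypothesis already forces $v\notin H_2$, since $w\sim v\in H_2$ would put $w$ in $N[H_2]$; so $v$ is adjacent to, but not contained in, $H_2$.

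Granting the local claim, the full statement follows quickly. Take any $v\in N_{H_1}[H_2]$ and let $J$ be the maximal sub-path of the cycle $H_1$ that contains $v$ and lies inside $N[H_2]$; since a vertex of $H_1\setminus N[H_2]$ exists, $J\ne H_1$, so $J$ has two endpoints, each adjacent along $H_1$ to a vertex outside $N[H_2]$, hence each satisfying the hypothesis of the local claim. Thus both endpoints of $J$ lie in $\widehat N(H_2)$; picking one of them, call it $a$, Corollary~\ref{lem:common-neighbor-of-holes} tells us that $a$ is adjacent to every vertex of $N[H_2]\setminus\{a\}$, so $J\setminus\{a\}\subseteq N(a)\cap H_1$, which has only two elements; therefore $|J|\le 3$, and if $|J|=3$ the two endpoints $a,b$ of $J$ would satisfy $a\sim b$ (again by Corollary~\ref{lem:common-neighbor-of-holes}, as $b\in N[H_2]$), a chord of the hole $H_1$ joining two vertices at distance two on it --- impossible. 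Hence $v$ is itself an endpoint of $J$, so $v\in\widehat N(H_2)$. Finally the ``in particular'' is automatic: a vertex of $H_2$ can never be a common neighbor of $H_2$ (it misses all but two vertices of $H_2$, and $|H_2|\ge 11$), so $N_{H_1}[H_2]\subseteq\widehat N(H_2)$ leaves no room for $H_1\cap H_2$.

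It remains to prove the local claim, and this is where the real work lies. So let $v\in H_1$ be adjacent to but not contained in $H_2$, let $w$ be an $H_1$-neighbor of $v$ with $w\notin N[H_2]$, and suppose $v\notin\widehat N(H_2)$. By Proposition~\ref{lem:neighbors-are-consecutive} the set $N_{H_2}(v)$ is a consecutive arc $A$ of $H_2$ of size at most $|H_2|-8$, so the complementary ``gap'' $B$ of non-neighbors of $v$ is a chordless path with at least eight interior vertices, none adjacent to $v$ or to $w$, whose two ends are exactly the two vertices of $A$ flanking the gap (both adjacent to $v$); in particular $v$ together with $B$ forms a hole obtained by short-cutting $H_2$ through $v$. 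Using this long gap-path $B$ as a base, the edge $vw$, and possibly one more vertex of $H_1$ adjacent to $w$ (or a few vertices of $A$), I would then exhibit a \fis\ on at most ten vertices: depending on $|A|$ and on whether the $H_1$-neighbor of $w$ other than $v$ (or a vertex slightly further along $H_1$) is adjacent to $H_2$, the structure that appears is a short hole, a long claw (the typical outcome when $|A|=1$), a net, a whipping top, or a small $\dag$- or $\ddag$-AW whose base is a subpath of $B$. The main obstacle is precisely this case analysis: there is no single configuration, and one must organize the possibilities --- indexed by $|N_{H_2}(v)|$ and by the adjacencies of the relevant vertices of $H_1$ to $H_2$ --- into a table in the spirit of Table~\ref{fig:neighbor-of-shallow-terminal}, verifying in each cell that, since the graph is prereduced (holes of length at most $10$ and AWs on at most $10$ vertices are excluded), the exhibited structure is the sought contradiction.
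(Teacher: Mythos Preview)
Your overall approach matches the paper's: reduce to a boundary vertex of the maximal arc $J\subseteq N_{H_1}[H_2]$, show that both endpoints of $J$ lie in $\widehat N(H_2)$, deduce $|J|\le 2$, and hence every vertex of $J$ is already an endpoint. Your bound on $|J|$ via Corollary~\ref{lem:common-neighbor-of-holes} works; the paper instead invokes Proposition~\ref{lem:common-neighbors-is-clique} (the set $\widehat N(H_2)$ is a clique, so two distinct endpoints of $J$ would be adjacent on $H_1$, forcing $|J|\le 2$ directly), which is slightly quicker.

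Where you sell yourself short is the case analysis for the local claim: it is far lighter than you anticipate and needs no $\dag$- or $\ddag$-AW, no whipping top, and no table. Write $v=u_0$, $w=u_{-1}$, and bring in exactly one further vertex $u_{-2}$ (the other $H_1$-neighbor of $w$). If $|N_{H_2}(u_0)|=1$, then $u_{-1},u_0$ together with the five consecutive $H_2$-vertices centered at the unique neighbor already form a long claw (no $u_{-2}$ needed). If $|N_{H_2}(u_0)|\ge 2$, write $N_{H_2}(u_0)=\{v_1,\dots,v_q\}$; should $u_{-2}$ be adjacent to any of $v_0,v_1,v_q,v_{q+1}$ there is a $4$- or $5$-hole through $u_{-2}u_{-1}u_0$, and otherwise $\{u_{-1},u_0,v_0,v_1,v_q,v_{q+1}\}$ is a net when $q=2$, while $\{u_{-2},u_{-1},u_0,v_0,v_1,v_q,v_{q+1}\}$ is a long claw when $q>2$. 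That is the entire case analysis --- three lines rather than a table.
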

  \begin{proof}
    Let $u$ be any vertex in $N_{H_1}[H_2]$, which is nonempty by
    assumption, and let $P$ be the maximal path in $H_1$ with the
    property that $u \in P \subseteq N_{H_1}[H_2]$; denote by $p$ the
    number of vertices of $P$.  Note that some vertices of $P$ can
    belong to $H_2$ (in particular, $u$ can be in $H_2$).  Observe
    that $p<|H_1|$, as by assumption, $H_1$ is not contained in
    $N[H_2]$.  Numbering the vertices in $H_1$ such that $P = u_0
    \dots u_{p-1}$ (the ordering of $H_1$ is immaterial when $p = 1$
    and then $u_1$ can be either neighbor of $u_0$ in $H_1$), the
    selection of $P$ means $u_i \sim H_2$ for each $0 \leq i < p$, and
    $u_{-1},u_{p} \not\sim H_2$ (it is immaterial whether $u_{-1} =
    u_p$ or not).  In the following, we show that both ends of $P$
    belong to $\widehat N(H_2)$, which induces a clique
    (Proposition~\ref{lem:common-neighbors-is-clique}). Thus either
    $u_0=u_{p-1}$ (i.e., $p=1$) or $u_0$ and $u_{p-1}$ are adjacent
    (i.e., $p=2$); in either case, we have $u \in \{u_0,u_{p-1}\}
    \subseteq \widehat N(H_2)$.  This proves the first assertion, and
    the second assertion ensues, as otherwise their common vertices
    will be common neighbors of $H_2$, which is not possible.

    Note that $u_0\not\in H_2$, as otherwise $u_{-1}$ is also adjacent
    to $H_2$, contradicting the maximality of $P$.  Similarly,
    $u_{-1},u_{-2}\not\in H_2$.  If $u_0$ has a unique neighbor $v$ in
    $H_2$, then the subgraph induced by $u_{-1}$, $u_{0}$ and five
    consecutive $H_2$ vertices centered at $v$ is a long claw (see
    Figure~\ref{fig:non-adjacent-hole-1}).  Now we consider the case
    $2 \leq |N_{H_2}[u_0]| \leq |H_2| - 7$
    (Proposition~\ref{lem:neighbors-are-consecutive}), and number the
    vertices of $H_2$ such that $N_{H_2}[u_0] = \{v_1 ,v_2 \dots,
    v_q\}$.  Note that $|N_{H_2}[u_0]|\leq |H_2|-7$ implies that
    $v_0\neq v_{q+1}$. If $u_{-2}$ is adjacent to $v_0$, $v_1$, $v_q$,
    or $v_{q+1}$, then there is a hole $(u_{-2} u_{-1} u_0 v_1 v_0
    u_{-2})$, $(u_{-2} u_{-1} u_0 v_1 u_{-2})$, $(u_{-2} u_{-1} u_0
    v_q u_{-2})$, or $(u_{-2} u_{-1} u_0 v_q v_{q+1} u_{-2})$,
    respectively.  Otherwise, $u_{-2}\not\sim \{v_0, v_1, v_{q},
    v_{q+1}\}$, then there is a net
    $\{u_{-1},u_{0},v_0,v_1,v_{q+1},v_q\}$ when $|N_{H_2}(u_0)|
    = 2$, or long claw $\{u_{-2},u_{-1},u_{0},v_0,v_1,v_{q+1},v_q\}$
    when $|N_{H_2}(u_0)| > 2$ (see
    Figure~\ref{fig:non-adjacent-hole-2}).  This proves $u_0 \in
    \widehat N(H_2)$, and with a symmetrical argument we can also
    prove $u_{p-1} \in \widehat N(H_2)$.
  \end{proof}
\begin{figure*}[t]
  \centering
  \begin{subfigure}[b]{0.45\textwidth}
    \small
    \centering
    \begin{tikzpicture}[scale=.45]
      \node[special,label=267:$u_0$] (u0) at (45:3) {};
      \node[corner,label=267:$u_{\!\!-\!1}$] (u1) at (75:3) {};
      \node[special,label=left:$u_p$] (up) at (305:3) {};
      \node[special,label=267:$u_{1}$]  at (20:3) {};
      \node[label=right:$H_1$]  at (180:3) {};

      \node[vertex,xshift=3.7cm] (v0) at (130:3) {};
      \node[corner,xshift=3.7cm] (v1) at (170:3) {};
      \node[vertex,xshift=3.7cm]  at (150:3) {};
      \node[vertex,xshift=3.7cm]  at (110:3) {};
      \node[corner,xshift=3.7cm]  at (90:3) {};
      \node[label=left:$H_2$,xshift=3.7cm]  at (0:3) {};

      \draw[dashed,blue,thick] (u0) arc (45:-50:3);
      \draw[blue,thick] (u0) arc (45:20:3);
      \draw[dashed] (u0) arc (45:315:3);
      \draw[dashed] (11.2,0) arc (0:360:3);

      \draw[at edge] (u1) arc (75:45:3) -- (v0);
      \draw[at edge] (v1) arc (170:90:3);
    \end{tikzpicture}

    \caption{$|N_{H_2} (u_0)| = 1$.}
    \label{fig:non-adjacent-hole-1}
  \end{subfigure}
  \qquad
  \begin{subfigure}[b]{0.45\textwidth}
    \centering
    \begin{tikzpicture}[scale=.45]
      \node[special,label=267:$u_0$] (u0) at (20:3) {};
      \node[corner,label=267:$u_{\!-\!1}$] (u1) at (45:3) {};
      \node[corner,label=267:$u_{\!-\!2}$] (u4) at (75:3) {};
      \node[special,label=left:$u_p$] (up) at (305:3) {};
      \node[label=right:$H_1$]  at (180:3) {};

      \node[vertex,xshift=3.7cm] (v2) at (190:3) {};
      \node[vertex,xshift=3.7cm] (v3) at (140:3) {};
      \node[corner,xshift=3.7cm]  at (215:3) {};
      \node[vertex,xshift=3.7cm] (v4) at (172:3) {};
      \node[vertex,xshift=3.7cm] (v5) at (158:3) {};
      \node[corner,xshift=3.7cm]  at (115:3) {};
      \node[label=left:$H_2$,xshift=3.7cm]  at (0:3) {};

      \draw[] (v4) -- (u0) -- (v5);
      \draw[dashed,blue,thick] (u0) arc (20:-50:3);
      \draw[dashed] (u0) arc (20:315:3);
      \draw[dashed] (11.2,0) arc (0:360:3);

      \draw[at edge] (u4) arc (75:20:3) -- (v2) arc (190:215:3);
      \draw[at edge] (u0) -- (v3) arc (140:115:3);
    \end{tikzpicture}

    \caption{$|N_{H_2} (u_0)| > 1$.}
    \label{fig:non-adjacent-hole-2}
  \end{subfigure}

  \caption{ Adjacencies between two non-congenial holes
    ($u_{-1}\not\sim H_2$)}
  \label{fig:neighboring-holes}
\end{figure*}
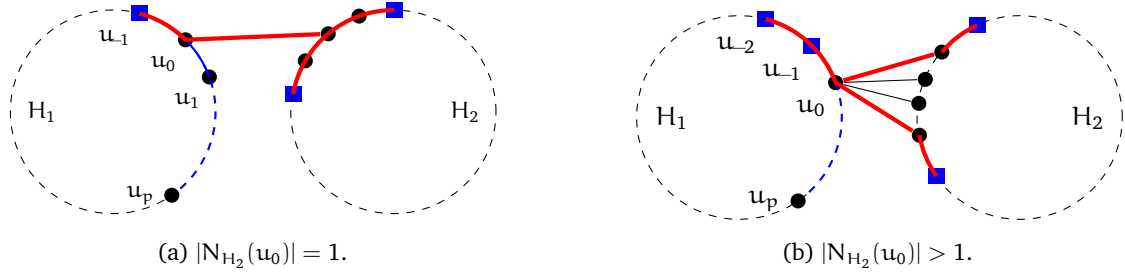

We are now ready to establish the transitivity of the congenial
relation.  The reflexivity and symmetry of this relation are clear
from definition; therefore congenial holes form an equivalence class.

\begin{lemma}
  \label{lem:congenial-equivalence}
  Let $H$, $H_1$, and $H_2$ be three holes in a prereduced graph $G$.  If
  both $H_1$ and $H_2$ are congenial to $H$, then $H_1$ and $H_2$ are
  congenial.
\end{lemma}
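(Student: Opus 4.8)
The plan is to argue by contradiction, exploiting that all three holes are long (at least $11$ vertices) in a prereduced graph. Suppose $H_1$ and $H_2$ are not congenial; by the symmetry of the statement assume $H_1\not\subseteq N[H_2]$. The first step is to record the cheap consequences of the hypotheses. Applying Corollary~\ref{lem:congenial-not-dominating} to the congenial pairs $(H,H_1)$ and $(H,H_2)$ gives $\widehat N(H_1)=\widehat N(H)=\widehat N(H_2)$; moreover a vertex lying on a hole is adjacent to exactly two vertices of that hole, so it is never a common neighbor of it, i.e.\ $H_1\cap\widehat N(H_1)=H_2\cap\widehat N(H_2)=\emptyset$. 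I would also note that any $v\in\widehat N(H)$ is, by Corollary~\ref{lem:common-neighbor-of-holes}, adjacent to everything in $N[H]$, hence to all of $H_1\cup H_2$ (recall $H_1,H_2\subseteq N[H]$); this is useful for ruling out degenerate configurations.

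The second step is a case distinction on whether $H_1$ and $H_2$ are adjacent. If they are, then since $H_1\not\subseteq N[H_2]$ we may invoke Lemma~\ref{lem:adjacent-holes} to get $N_{H_1}[H_2]\subseteq\widehat N(H_2)=\widehat N(H_1)$. But $N_{H_1}[H_2]$ is a nonempty subset of $H_1$ (nonempty precisely because $H_1$ and $H_2$ are adjacent), while $\widehat N(H_1)$ is disjoint from $H_1$ — a contradiction. If instead the failure of congeniality shows up as $H_2\not\subseteq N[H_1]$, the same argument with the roles of $H_1$ and $H_2$ exchanged applies. So this case closes immediately, and the entire difficulty is concentrated in the non-adjacent case.

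In the remaining case $H_1$ and $H_2$ are non-adjacent. Here I would first upgrade the hypotheses: chasing the inclusions $H_i\subseteq N[H]$ and $H\subseteq N[H_i]$ against non-adjacency of $H_1,H_2$ forces the three holes to be pairwise disjoint, with every vertex of $H$ having a neighbor in $H_1$ and a neighbor in $H_2$, and every vertex of $H_1\cup H_2$ having a neighbor in $H$. Pick $h\in H$ with a neighbor $a\in H_1$ and a neighbor $b\in H_2$; then $a\not\sim b$, so $(a,h,b)$ is an induced path bridging the three holes. The key structural tool for what follows is Proposition~\ref{lem:neighbors-are-consecutive}: $N_H[a]$ is a consecutive arc of $H$, and it is a \emph{proper} arc (if $a$ dominated $H$ then $a\in\widehat N(H)$ would be adjacent to all of $H_2$, contradicting non-adjacency); likewise for $b$ on $H$, and for $h$ on $H_1$ and on $H_2$. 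Using these arcs, I would walk along $H$, $H_1$, $H_2$ outward from $h$, $a$, $b$ until hitting the boundary of the relevant arc, and show that in every configuration one of the following appears: a hole of length at most $10$ obtained by shortcutting one of $H,H_1,H_2$ through $a$, $b$, or $h$ (when the arc in question is very short), or a long claw / net assembled from two short ``arms'' of $H_1$ and $H_2$ joined via a sub-path of $H$ (when the arcs are longer) — thereby contradicting prereducedness.

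I expect this last case to be the main obstacle: organizing the exhaustive but routine analysis of how $a$, $b$, $h$ attach to $H_1$, $H_2$, $H$ so that every branch visibly produces a short hole or one of the small AWs of Figure~\ref{fig:at}. What makes it go through is exactly the prereducedness bound (holes have at least $11$ vertices and there are no small \fiss), which guarantees enough room on each hole for the shortcut holes and spider legs to be realized; the bookkeeping is of the same flavor as — and can reuse — Propositions~\ref{lem:neighbors-are-consecutive} and \ref{lem:common-neighbors-is-clique} and Corollary~\ref{lem:common-neighbor-of-holes}, much as in the proof of Lemma~\ref{lem:adjacent-holes}. Once transitivity is established, reflexivity and symmetry being immediate from the definition, congeniality is an equivalence relation on the holes of a prereduced graph, which is the content needed toward Theorem~\ref{thm:reduced-instances-b}.
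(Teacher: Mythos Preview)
Your overall plan is correct and tracks the paper's proof closely: the same use of Corollary~\ref{lem:congenial-not-dominating} to get $\widehat N(H_1)=\widehat N(H)=\widehat N(H_2)$, and the same dispatch of the adjacent case via Lemma~\ref{lem:adjacent-holes}. The non-adjacent case is also handled along the same lines, but you are anticipating far more work than is actually needed.

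The paper's proof of the non-adjacent case is a three-line case split, not an ``exhaustive but routine analysis.'' The trick you are missing is that you do not need to walk along all three holes or track the arcs of $a$ and $b$. Pick any $h\in H$; by congeniality $N_{H_1}[h]=\{u_1,\dots,u_p\}$ and $N_{H_2}[h]=\{v_1,\dots,v_q\}$ are both nonempty proper consecutive arcs. The crucial observation is that since $H_1$ and $H_2$ are non-adjacent, \emph{every} vertex of $H_2$ is automatically nonadjacent to \emph{every} vertex of $H_1$. So a single vertex $v_1\in N_{H_2}(h)$ already gives you a pendant leaf attached to $h$ and isolated from all the $u_i$'s. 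Now branch only on $p$: if $p=1$ you get a long claw $\{v_1,h,u_{-1},u_0,u_1,u_2,u_3\}$ centered at $u_1$; if $p=2$ you get a net $\{v_1,h,u_0,u_1,u_3,u_2\}$ on the triangle $h u_1 u_2$; if $p\ge 3$ you get a long claw $\{v_0,v_1,u_0,u_1,h,u_p,u_{p+1}\}$ centered at $h$. No shortcut-hole arguments or walks on $H$ are required.
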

  \begin{proof} 
    According to Corollary~\ref{lem:congenial-not-dominating},
    $\widehat N(H_1) = \widehat N(H) = \widehat N(H_2)$.  If $H_1$ and
    $H_2$ are adjacent, then they have to be congenial, as otherwise
    Lemma~\ref{lem:adjacent-holes} implies that one of them contains a
    common neighbor of the other, hence a common neighbor of all three
    holes, which is impossible.  Assume hence that there is no edge
    between $H_1$ and $H_2$; in particular, they are disjoint.  Let
    $h$ be any vertex in $H$, and we number the vertices of $H_1$ and
    $H_2$ such that $N_{H_1}[h] = \{u_1,\dots,u_p\}$ and $N_{H_2}[h] =
    \{v_1,\dots,v_q\}$.
    Proposition~\ref{lem:neighbors-are-consecutive} implies that
    $u_0\neq u_{p+1}$ and $v_0\neq v_{p+1}$.  Note that $h$ is
    adjacent to some but not all vertices of both $H_1$ and $H_2$.
    There is
    \begin{inparaitem}
    \item a long claw $\{v_1,h,u_{-1},u_0,u_1,u_2,u_3\}$ when $p=1$;
    \item a net $\{v_1,h,u_0,u_1,u_3,u_2\}$ when $p=2$; or
    \item a long claw $\{v_0,v_1,u_0,u_1,h,u_p,u_{p+1}\}$ when $p\geq
      3$.
    \end{inparaitem}
  \end{proof}

  To prove Theorem~\ref{thm:reduced-instances-b}, we show that if
  there are two holes that are not congenial, then one of them is
  contained in a nontrivial module.  This is impossible in a reduced
  graph, where every nontrivial module induces a clique.  We construct
  this nontrivial module with the help of the following lemma, which
  shows that the common neighbors form a separator.
\begin{lemma}
  \label{lem:hole-is-module}
  Let $H$ be a hole that is the shortest among all the holes congenial
  to it in a prereduced graph $G$.  Then the set $\widehat N(H)$ of
    common neighbors of $H$ separates $N[H] \setminus \widehat N(H)$
  from $V(G) \setminus N[H]$.
\end{lemma}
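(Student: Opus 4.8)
The plan is to prove the equivalent assertion that $G$ contains no edge with one endpoint in $N[H]\setminus\widehat N(H)$ and the other in $V(G)\setminus N[H]$. This is enough: no vertex of $H$ can be a common neighbour of $H$ (each vertex of a hole has exactly two neighbours in it and $|H|\ge 4$), so $\widehat N(H)\subseteq N(H)$ and the three sets $\widehat N(H)$, $N[H]\setminus\widehat N(H)$, $V(G)\setminus N[H]$ partition $V(G)$; hence any path joining the latter two while avoiding $\widehat N(H)$ would have to use an edge between them.

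Suppose such an edge $ab$ exists, with $a\in N[H]\setminus\widehat N(H)$ and $b\in V(G)\setminus N[H]$. If $a\in H$ then $b\sim a\in H$ would place $b$ in $N(H)\subseteq N[H]$; so $a\in N(H)$, $a\notin H$, and also $b\not\sim H$. By Proposition~\ref{lem:neighbors-are-consecutive}, $N_H(a)$ is a set of $t$ consecutive hole vertices with $1\le t\le|H|-8$ (it is not all of $H$ since $a\notin\widehat N(H)$). Relabel $H=h_0h_1\cdots h_{|H|-1}$ so that $N_H(a)=\{h_1,\dots,h_t\}$, whence $h_0,h_{t+1}\not\sim a$, and recall $|H|\ge 11$. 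I now derive a contradiction in each of the cases $t=1$, $t=2$, $t\ge 3$. For $t=1$, the set $\{b,a,h_1,h_0,h_{|H|-1},h_2,h_3\}$ induces a long claw with centre $h_1$ and length-two legs $h_1ab$, $h_1h_0h_{|H|-1}$, $h_1h_2h_3$. For $t=2$, the set $\{b,a,h_1,h_2,h_0,h_3\}$ induces a net: the triangle $\{a,h_1,h_2\}$ with pendants $b,h_0,h_3$ attached at $a,h_1,h_2$ respectively. Both contradict that $G$ is prereduced. For $t\ge 3$, I first reroute $H$ through $a$: the cyclic sequence $C'=(a,h_1,h_0,h_{|H|-1},h_{|H|-2},\dots,h_{t+1},h_t)$ is a chordless cycle — the only candidate chord among its hole vertices is $h_1h_t$, which is absent because $t\ge 3$ — hence a hole with $|C'|=|H|-t+3$; moreover $C'\subseteq N[H]$ and $H\subseteq N[C']$ (the removed vertices $h_2,\dots,h_{t-1}$ all lie in $N(a)$), so $C'$ is congenial to $H$, and minimality of $H$ gives $|H|-t+3\ge|H|$, i.e.\ $t=3$. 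But then $\{b,a,h_0,h_1,h_2,h_3,h_4\}$ induces a $\dag$-AW on $7$ vertices (shallow terminal $b$, centre $a$, base $\{h_1,h_2,h_3\}$, base terminals $h_0$ and $h_4$), once more contradicting prereducedness.

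Checking that the listed six- and seven-vertex sets are induced and realise the stated \fiss\ is routine, using only that $H$ is a hole of length at least $11$ (so among the listed hole vertices only cyclically consecutive ones are adjacent), that $b$ has no neighbour on $H$, and the description of $N_H(a)$. The delicate point is the case $t\ge 3$: the natural small obstruction $\{b,a,h_0,h_1,\dots,h_{t+1}\}$ is a $\dag$-AW on $t+4$ vertices, which prereducedness rules out only for $t\le 6$, so to cover all $t$ one must first use the minimality of $H$ through the rerouted congenial hole $C'$ to collapse the problem to $t=3$. I expect this interplay between ``$C'$ is shorter-or-equal and congenial to $H$'' and ``$\{b,a,h_0,\dots,h_4\}$ is a forbidden $\dag$-AW'' to be the crux; the remaining work is just identifying which \fis\ appears.
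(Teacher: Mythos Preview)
Your proof is correct and follows essentially the same approach as the paper's: assume an edge between $N[H]\setminus\widehat N(H)$ and $V(G)\setminus N[H]$, argue its endpoint in $N[H]$ is not on $H$, and then case on $t=|N_H(a)|$ to exhibit a long claw ($t=1$), a net ($t=2$), a $7$-vertex $\dag$-AW ($t=3$), or a strictly shorter congenial hole ($t>3$). The only cosmetic difference is that the paper treats $t=3$ and $t>3$ as two separate cases, whereas you first use the rerouted hole $C'$ to force $t\le 3$ and then invoke the $\dag$-AW; the underlying ingredients are identical.
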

  \begin{proof}
    Suppose to the contrary, $N[H] \setminus \widehat N(H)$ and $V(G)
    \setminus N[H]$ are still connected in $G - \widehat N(H)$, then
    there is a pair of adjacent vertices $u \in N[H] \setminus
    \widehat N(H)$ and $v \in V(G) \setminus N[H]$.  Note that $u
    \not\in H$, and we have two adjacent vertices only one of which is
    adjacent to part of the hole $H$.  Depending on the number of
    neighbors of $u$ in $H$, we have either a long claw (when
    $|N_H(u)| = 1$), a net (when $|N_H(u)| = 2$), or a $\dag$-AW of
    size $7$ (when $|N_H(u)| = 3$), none of which can exist in a
    prereduced graph.  On the other hand, if $|N_H(u)| > 3$ then we
    can use $u$ to find another hole $H'$ that is strictly shorter
    than $H$; it is surely congenial to $H$, which contradicts the
    assumption.
  \end{proof}

We are now ready to prove Theorem~\ref{thm:reduced-instances-b}:
\paragraph{Theorem 2.2 (restated).}
  {All holes in a reduced graph are congenial to each other.}
\begin{proof}
  Suppose, for contradiction, that not all holes are congenial to each
  other.  By Lemma~\ref{lem:congenial-equivalence}, being congenial is
  an equivalence relation.  Hence there are two equivalence classes of
  holes, from each of which we pick a shortest one; let them be $H_1$
  and $H_2$.  Assume without loss of generality that $H_2$ has a
  vertex $v$ not in $N[H_1]$.  Lemma~\ref{lem:hole-is-module} implies
  that $\widehat N(H_1)$ separates $N[H_1]\setminus \widehat N(H_1)$
  and $V(G)\setminus N[H_1]$.  Either $\widehat N(H_1)=\emptyset$ and
  then $G$ is disconnected where $N[H_1]$ induces a connected
  component ($v\not\in N[H_1]$); or $\widehat N(H_1)$ is the neighbor
  of $N[H_1]$ and they are completely connected
  (Corollary~\ref{lem:common-neighbor-of-holes}).  In either case,
  the set $N[H_1]\setminus \widehat N(H_1)$ is a nontrivial module
  that does not induce a clique.  Thus
  Reduction~\ref{rule:non-interval-modules} is applicable and the
  graph is not reduced.
\end{proof}

\section{Hole covers}
\label{sec:holes}
A set of vertices is called a \emph{hole cover} of a graph $G$ if it
intersects every hole in $G$, and the removal of any hole cover makes
the graph chordal.  {A hole cover is minimal if any proper subset of
  it is not a hole cover.  Any {interval deletion} set makes a hole
  cover of the input graph, and thus contains a minimal hole cover.}
The goal of this section is to prove
Theorem~\ref{thm:alg-reduced-with-holes}, that is, to provide a
polynomial bound on the number of minimal hole covers in a reduced
graph and give a polynomial time algorithm to find all of them.

To simplify the task, observe that no minimal hole cover contains a
vertex that is not in any hole.

\begin{proposition}\label{lem:hole-cover-in-reduced-graphs}
  Let $\cal H$ be the set of all holes in a reduced graph $G$, and
  $G_0$ be the subgraph induced by $\bigcup_{H \in \cal H}H$.  A set
  $HC$ of vertices is a minimal hole cover of $G$ if and only if it is
  a minimal hole cover of $G_0$.
\end{proposition}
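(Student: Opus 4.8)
The plan is to reduce the whole statement to one elementary observation—that $G$ and $G_0$ have \emph{exactly the same holes}—and then to chase the definitions of ``hole cover'' and ``minimal''. First I would record the hole-coincidence fact. By construction, every hole $H\in\mathcal{H}$ satisfies $V(H)\subseteq V(G_0)$, and since $G_0=G[\bigcup_{H\in\mathcal{H}}V(H)]$ is an \emph{induced} subgraph of $G$, the chordless cycle $H$ stays chordless in $G_0$; hence $H$ is a hole of $G_0$. Conversely, any hole of $G_0$ is a chordless cycle of length at least $4$ sitting inside an induced subgraph of $G$, so it is a chordless cycle of the same length in $G$, i.e.\ a hole of $G$. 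Thus the holes of $G$ and the holes of $G_0$ are the same objects. (Note that reducedness of $G$ plays no role here; all that matters is that $G_0$ is the subgraph induced by the union of the holes.)

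For the forward direction, suppose $HC$ is a minimal hole cover of $G$. I would first show $HC\subseteq V(G_0)$: if some $v\in HC$ lay in no hole of $G$, then $HC\setminus\{v\}$ would still meet every hole of $G$ (each hole that $HC$ met, it met at a vertex other than $v$), contradicting minimality; so every vertex of $HC$ lies in some hole and $HC\subseteq V(G_0)$. Now $HC$ is a set of vertices of $G_0$ meeting every hole of $G_0$ (these being the holes of $G$), so it is a hole cover of $G_0$. For minimality in $G_0$: any proper subset $HC'\subsetneq HC$ that covered all holes of $G_0$ would cover all holes of $G$, hence be a hole cover of $G$ strictly smaller than $HC$—impossible. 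So $HC$ is a minimal hole cover of $G_0$.

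For the backward direction, suppose $HC$ is a minimal hole cover of $G_0$. Then $HC\subseteq V(G_0)\subseteq V(G)$ and $HC$ meets every hole of $G_0$, hence every hole of $G$, so $HC$ is a hole cover of $G$. If some $HC'\subsetneq HC$ were a hole cover of $G$, then, being a subset of $HC\subseteq V(G_0)$, it would be a set of vertices of $G_0$ meeting every hole of $G$, equivalently every hole of $G_0$—a hole cover of $G_0$ strictly smaller than $HC$, contradicting its minimality. Hence $HC$ is a minimal hole cover of $G$, and both implications are established.

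I do not expect a genuine obstacle in this argument. The only point needing a moment's care is that a hole cover of $G$ is \emph{a priori} allowed to use vertices outside $V(G_0)$, whereas a hole cover of $G_0$ cannot; but the minimality hypothesis instantly forbids such extraneous vertices, and that is exactly the mechanism by which the two notions coincide on the nose.
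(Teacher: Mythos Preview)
Your argument is correct and matches the paper's intent: the paper states this proposition without proof, prefacing it only with the observation that ``no minimal hole cover contains a vertex that is not in any hole,'' which is precisely the mechanism you make explicit. Your additional remark that reducedness is not actually needed here is accurate.
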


In this section we will focus on the subgraph $G_0$ induced by the
union of all holes in the reduced graph $G$.  The subgraph $G_0$ has
the same set of holes as $G$, and they remain pairwise congenial.
Moreover, each vertex of $G_0$ is in the closed neighborhood of each
hole $H$ of $G_0$, which means $G_0$ is connected.  As we have said
earlier, circular-arc graphs form an important example of graphs of
which all holes are pairwise congenial.  Thinking of $G_0$ as a
circular arc graph gives the intuition behind most statements to
follow.  But since this fact is not directly used in this paper, we
are not giving a proof for it.

\begin{proposition}
  \label{lem:interval-subgraph}
  The subgraph $G_0 - HC$ is an interval graph for each hole cover
  $HC$ of $G_0$.
\end{proposition}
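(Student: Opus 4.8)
The plan is to reduce the claim to showing that $G_0-HC$ is free of asteroidal witnesses, and then to rule out such a witness by exhibiting a bounded-size forbidden induced subgraph in the prereduced graph $G_0$. First, note that $G_0-HC$ is chordal, since $HC$ meets every hole of $G_0$; as a chordal graph is an interval graph exactly when it contains no asteroidal witness (Lekkerkerker and Boland), it suffices to prove $G_0-HC$ contains no AW. So suppose for contradiction that $W$ is one. Being an induced subgraph of the prereduced graph $G$, the graph $G_0-HC$ is prereduced, so $W$ is a $\dag$- or a $\ddag$-AW on at least $11$ vertices; write it $W=(s:c_1,c_2:l,B,r)$ with shallow terminal $s$, center(s) $c_1$ (and $c_2$), base $B=\{b_1,\dots,b_d\}$ with $d\ge 6$, and base terminals $l=b_0$, $r=b_{d+1}$.

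Next I would pin down how the vertices of $W$ lie with respect to a shortest hole. Fix a shortest hole $H_0$ of $G_0$ (if $G_0$ has no hole it is empty, being the union of all holes of $G$, and the claim is trivial). The holes of the reduced graph $G$ are pairwise congenial (Theorem~\ref{thm:reduced-instances-b}), hence every vertex of $G_0$ lies in $N[H_0]$. No vertex $w$ of $W$ is a common neighbor of $H_0$: otherwise Corollary~\ref{lem:common-neighbor-of-holes} would make $w$ adjacent to all of $N[H_0]\setminus\{w\}\supseteq V(G_0)\setminus\{w\}$, hence universal in $G_0$ and so in $W$, which is impossible, since no vertex of an AW is adjacent to all the others (the shallow terminal is nonadjacent to the base terminals, every base vertex is nonadjacent to the shallow terminal, and each center is nonadjacent to a base terminal). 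Consequently, by Proposition~\ref{lem:neighbors-are-consecutive} and Corollary~\ref{lem:no-more-than-3}, for every $w\in V(W)$ the set $\alpha(w):=N_{H_0}[w]$ is a nonempty block of at most three consecutive vertices of $H_0$. Finally, $HC$ being a hole cover meets $H_0$; fix $h\in HC\cap H_0$, so $h\notin V(W)$.

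Now I would squeeze a bounded-size forbidden subgraph out of $W$ together with $H_0$. The set $W-N[s]$ is exactly the chordless path $(l,b_1,\dots,b_d,r)$, of length $d+1\ge 8$, and $c_1$ is adjacent to all of its interior vertices $b_1,\dots,b_d$ but to neither endpoint; so Proposition~\ref{lem:five-path}(1), applied to the subpath $(b_1,\dots,b_d)$ with $u=c_1$, yields $N[b_j]\subseteq N[c_1]$, hence $\alpha(b_j)\subseteq\alpha(c_1)$, for all $j$ with $3\le j\le d-2$ --- so all but a bounded number of the arcs $\alpha(b_i)$ are contained in the three-vertex block $\alpha(c_1)$. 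For the few remaining base vertices and for $l$ and $r$ one uses the contradictory argument that recurs throughout the paper: two arcs on $H_0$ belonging to adjacent vertices of the base path cannot be far apart on $H_0$, since otherwise a short path through these two vertices closes up with an arc of $H_0$ into a chordless cycle of length less than $11$ --- the bound $|\alpha(\cdot)|\le 3$ is exactly what keeps this cycle chordless, by excluding spurious chords to $H_0$ --- which a prereduced graph forbids. Iterating, one confines all of $\alpha(l),\alpha(b_1),\dots,\alpha(b_d),\alpha(r)$ to a block of $H_0$ of bounded length; since $|H_0|\ge 11$ and $h\notin V(W)$, this block is contained in the chordless path $H_0-h$. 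With the whole base path pressed against such a short piece of $H_0$, the hole $H_0$ together with $s$ and $c_1$ (use $c_2$ in the $\ddag$-case) produces a hole of length at most $10$ or a $7$- or $8$-vertex $\dag$- or $\ddag$-AW, the desired contradiction.

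The main obstacle is the step that confines the arcs: one has to control precisely where $\alpha(b_0),\dots,\alpha(b_{d+1})$ sit on the cycle $H_0$ --- in particular ruling out that the base path ``wraps around'' $H_0$, or that its arcs split into several far-apart clusters --- and then read off a constant-size forbidden subgraph. This is exactly the kind of explicit case analysis (producing a net, long claw, tent, whipping top, short hole, or a small $\dag$- or $\ddag$-AW) carried out repeatedly in Sections~\ref{sec:shallow} and~\ref{sec:holes-in-modules}; the three-vertex arc bound, together with Proposition~\ref{lem:five-path}, is precisely what keeps every configuration that can occur inside a bounded-size witness, so that prereducedness of $G$ closes the argument.
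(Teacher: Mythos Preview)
You are working far harder than necessary, and the decisive part of your argument is only sketched. The paper dispatches the proposition in two lines via Theorem~\ref{thm:reduced-instances-ap}: every vertex of $G_0$ lies in some hole of $G$ and therefore is not simplicial in $G$; but the shallow terminal of any AW in the reduced graph $G$ is simplicial. Hence $G_0$ itself already contains no AW, and \emph{a fortiori} neither does its induced subgraph $G_0-HC$; since $G_0-HC$ is chordal by definition of a hole cover, it is interval. The idea you are missing is that AW-freeness is a property of $G_0$ before any hole cover is removed --- you never need to look at $HC$ or analyse how a putative $W$ sits against a particular hole.

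Your alternative route through Theorem~\ref{thm:reduced-instances-b} and arc-bounds on a shortest hole $H_0$ is a legitimate strategy in principle, but as written it has gaps. The inference ``since $|H_0|\ge 11$ and $h\notin V(W)$, this block is contained in the chordless path $H_0-h$'' is unjustified: $h\notin V(W)$ says nothing about whether some $\alpha(w)$ contains $h$, since $w$ may well be adjacent to $h$. More seriously, the heart of the argument --- confining all of $\alpha(l),\alpha(b_1),\dots,\alpha(r)$ to a bounded arc of $H_0$ and then reading off a forbidden subgraph of size at most $10$ --- is only asserted, with a pointer to ``exactly the kind of explicit case analysis'' done elsewhere. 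That analysis is not routine here (a naive chaining of the overlap condition from Proposition~\ref{lem:neighbors-of-neighbors}(1) gives a block of roughly $15$ vertices, which need not sit inside $H_0-h$ and does not obviously force wrapping to fail), and in any case it is entirely superseded by the one-line appeal to Theorem~\ref{thm:reduced-instances-ap}.
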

\begin{proof}
  Each vertex of $G_0$ belongs to some hole, and thus cannot be
  simplicial. Therefore, by Theorem~\ref{thm:reduced-instances-ap},
  $G_0$ contains no AW.  By definition, $G_0 - HC$ contains no hole;
  thus $G_0-HC$ is an interval graph.
\end{proof}

In what follows we prove a series of claims on how the neighborhood of
a vertex $v$ of a hole $H_1$ looks like in another hole $H_2$.  The
first statement is a paraphrase of
Corollary~\ref{lem:congenial-not-dominating}:
\begin{corollary}\label{lem:hole-not-common}
  No vertex $v$ of $G_0$ can be a common neighbor of any hole in
  $G_0$.
\end{corollary}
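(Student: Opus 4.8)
The plan is to derive this directly from the fact that all holes of $G_0$ are pairwise congenial, combined with Corollary~\ref{lem:congenial-not-dominating}. Recall that, by construction, $G_0$ is the subgraph induced by the union of all holes of the reduced graph $G$, so every vertex $v$ of $G_0$ lies on at least one hole; fix such a hole $H_1$ with $v \in H_1$. Let $H_2$ be an arbitrary hole of $G_0$. Our goal is to show $v \notin \widehat N(H_2)$, and since $v$ and $H_2$ are arbitrary this will give the corollary.

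First I would dispose of the degenerate case $H_1 = H_2$: a vertex of a hole has exactly two neighbors inside that hole, and since every hole in a prereduced graph has length at least $4$ (indeed at least $11$), it also has a non-neighbor inside the hole; hence $v \in H_2$ is not a common neighbor of $H_2$. Now suppose $H_1 \ne H_2$. Since $G_0$ is an induced subgraph of the reduced graph $G$ and has exactly the same set of holes as $G$, Theorem~\ref{thm:reduced-instances-b} applies and tells us that $H_1$ and $H_2$ are congenial. Corollary~\ref{lem:congenial-not-dominating} then states precisely that no vertex of $H_1$ is a common neighbor of $H_2$; in particular $v \notin \widehat N(H_2)$, as desired.

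There is essentially no obstacle here: the statement is a paraphrase of Corollary~\ref{lem:congenial-not-dominating} once one uses the defining property of $G_0$ (every vertex of $G_0$ is on some hole). The only minor point worth stating explicitly is that Corollary~\ref{lem:congenial-not-dominating} is formulated for a pair of congenial holes, so one should separately handle the trivial sub-case where the hole through $v$ happens to equal $H_2$.
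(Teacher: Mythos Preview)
Your proof is correct and follows essentially the same route as the paper, which presents the statement as an immediate paraphrase of Corollary~\ref{lem:congenial-not-dominating} combined with the fact that every vertex of $G_0$ lies on some hole and all holes of $G_0$ are pairwise congenial. Your explicit case split $H_1=H_2$ versus $H_1\ne H_2$ is harmless but unnecessary: the paper remarks that every hole is congenial to itself, so Corollary~\ref{lem:congenial-not-dominating} already covers that case.
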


Therefore, by definition of congenial holes and
Proposition~\ref{lem:neighbors-are-consecutive}, we can assume that
for every $v\in V(G_0)$ and hole $H$, we have that $N_H[v]$ is a
proper nonempty subset of $H$ and its vertices induce a path in $H$.
Fixing any ordering of the vertices in $H$, we can denote two ends of
the path as $\mathtt{begin}_H(v)$ and $\mathtt{end}_H(v)$
respectively; when $N_H[v]$ contains both $h_0$ and $h_{|H|-1}$, we
number vertices of $N_H[v]$ as $\{h_{-p},\dots,h_0,\dots,h_q\}$ where
both $p$ and $q$ are nonnegative, and then $\mathtt{begin}_H(v) = -p$,
$\mathtt{end}_H(v) = q$.

\begin{proposition}
  \label{lem:neighbors-of-neighbors}
  Let $H$ be a hole of $G_0$.  For any pair of adjacent vertices $u,v$
  of $G_0$, their closed neighborhoods in $H$ satisfy the following
  properties.
  \begin{enumerate}[(1)]
  \item $N_H[u]\cap N_H[v]\ne \emptyset$ and $N_H[u]\cup N_H[v]\ne
    H$.
  \item If $v$ is adjacent to neither $h_{\mathtt{begin}_H(u)}$ nor
    $h_{\mathtt{end}_H(u)}$, then $N[v]\subset N[u]$.
  \end{enumerate}
\end{proposition}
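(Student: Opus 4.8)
The plan is to run the usual scheme of this paper: assume a conclusion fails, and exhibit either a small forbidden induced subgraph or a hole with a common neighbor, both impossible here.

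\emph{Setup.} Since $u,v\in V(G_0)$ and every vertex of $G_0$ lies in the closed neighborhood of every hole of $G_0$, both $N_H[u]$ and $N_H[v]$ are nonempty; by Corollary~\ref{lem:hole-not-common} neither is all of $H$, so by Proposition~\ref{lem:neighbors-are-consecutive} each is a proper nonempty arc of $H$ whose complement in $H$ has at least $8$ vertices. I will assume $u,v\notin H$ (the case where one of them lies on $H$, so that its closed neighborhood in $H$ is a fixed triple, is straightforward), and write $N_H[u]=\{h_a,\dots,h_b\}$ so that $h_{\mathtt{begin}_H(u)}=h_a$ and $h_{\mathtt{end}_H(u)}=h_b$. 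For part~(1), nonempty intersection: suppose the two arcs are disjoint. Then along $H$ they are separated by two (possibly empty) gaps $G_1,G_2$ of vertices adjacent to neither $u$ nor $v$; letting $h_j,h_k$ be the endpoints of $N_H[u],N_H[v]$ bordering $G_1$, the cycle $(u,h_j,G_1,h_k,v)$ is chordless ($u$ sees only $h_j$ and $v$ only $h_k$ besides each other, and the interior is an arc of $H$), hence a hole on $|G_1|+4$ vertices; symmetrically there is a hole on $|G_2|+4$ vertices through $G_2$. If $\min(|G_1|,|G_2|)\le 6$ one of these lands in the forbidden range ($\le 10$ vertices). Otherwise both gaps have $\ge 7$ vertices, so $H_1:=(u,h_j,G_1,h_k,v)$ is a genuine hole, hence congenial to $H$ by Theorem~\ref{thm:reduced-instances-b}; but any vertex $w$ interior to $G_2$ lies in $H$, is adjacent to neither $u$ nor $v$, and (as $H$ is chordless) to no other vertex of $H_1$, contradicting $H\subseteq N[H_1]$.

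\emph{Part (1), union $\ne H$.} Suppose $N_H[u]\cup N_H[v]=H$. Put $\bar A:=H\setminus N_H[u]$ and $\bar B:=H\setminus N_H[v]$; these are disjoint nonempty arcs, $\bar A$ is completely joined to $v$ and $\bar B$ to $u$, and the cyclic order along $H$ is $\bar A,C_1,\bar B,C_2$ where $C_1\cup C_2=N_H[u]\cap N_H[v]\ne\emptyset$ by the previous part. If $C_1=\emptyset$, the last vertex $a_t$ of $\bar A$ is adjacent in $H$ to the first vertex $b_1$ of $\bar B$, and $(u,v,a_t,b_1)$ is a $4$-hole (the only possible chords $ua_t$ and $vb_1$ are absent); similarly if $C_2=\emptyset$. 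Otherwise pick $w_1\in C_1$ adjacent in $H$ to $a_t$ and $w_2\in C_2$ adjacent in $H$ to the first vertex $a_1$ of $\bar A$, and form $H^\ast:=(a_1,\dots,a_t,w_1,u,w_2)$. This is chordless — the $a_i$ form an arc of $H$, $a_t\sim w_1\sim u\sim w_2\sim a_1$, and no chord can occur ($u\not\sim\bar A$; among $\bar A$, $w_1$ sees only $a_t$ and $w_2$ only $a_1$; $w_1\not\sim w_2$ since $\bar B$ lies between them in $H$) — hence a hole of $G_0$. But $v$ is adjacent to every vertex of $H^\ast$ (to all $a_i$ since $\bar A\subseteq N_H[v]$, to $w_1,w_2$ since $C_1\cup C_2\subseteq N_H[v]$, and to $u$), and $v\notin H^\ast$, making $v$ a common neighbor of a hole of $G_0$, contradicting Corollary~\ref{lem:hole-not-common}.

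\emph{Part (2).} Assume $v\not\sim h_a,h_b$. By part~(1), $N_H[v]$ meets $\{h_a,\dots,h_b\}$ while avoiding its two endpoints, so $N_H[v]\subseteq\{h_{a+1},\dots,h_{b-1}\}$; in particular $b\ge a+2$. I would then apply Proposition~\ref{lem:five-path}(3) to the chordless path $P=(h_{a-1},h_a,\dots,h_b,h_{b+1})$ (of length $b-a+2\ge 4$, all of whose inner vertices are adjacent to $u$), which yields, for every $a<\ell<b$,
\[
N[h_\ell]\setminus\bigl(N(h_a)\cup N(h_b)\bigr)\subseteq N[u].
\]
Now take $x\in N[v]$, $x\notin\{u,v\}$, and suppose $x\not\sim u$. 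If $x\sim h_a$ then $(u,h_a,x,v)$ is a $4$-hole (the would-be chords $ux$ and $h_av$ are absent), a contradiction; symmetrically $x\not\sim h_b$. Applying part~(1) to the adjacent pair $\{x,v\}$, $N_H[x]$ meets $N_H[v]$, so $x$ is adjacent to some $h_{\ell_0}$ with $a<\ell_0<b$ (it cannot equal such an $h_{\ell_0}$, which lies in $N_H[u]$ and would force $x\sim u$). The displayed inclusion then gives $x\in N[h_{\ell_0}]\setminus(N(h_a)\cup N(h_b))\subseteq N[u]$, i.e.\ $x\sim u$, the contradiction; hence $N[v]\subseteq N[u]$.

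\emph{Main obstacle.} The only non-mechanical point is the ``union'' case of part~(1): the naive attempts — a short hole through $u$ and $v$, or congeniality of $H$ with a hole through $u$ alone — both fail because $u$ and $v$ each dominate all of $H$ except a short arc, so no far-away witness is available. The fix is to route the auxiliary hole $H^\ast$ entirely inside $\bar A\cup C_1\cup C_2\cup\{u\}$ and observe that $v$ dominates it (with the $C_1$- or $C_2$-empty subcase peeled off first, giving a $4$-hole instead). Everything else is direct checking against the list of small forbidden induced subgraphs together with Proposition~\ref{lem:five-path}.
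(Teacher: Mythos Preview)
Your proof is correct and follows essentially the same approach as the paper. For the intersection part of (1), both you and the paper build a hole through $u,v$ and a gap of $H$, deriving a contradiction either from a short hole or from non-congeniality with $H$; your two-gap case split is slightly more elaborate than the paper's one-sided argument, but equivalent. For the union part of (1), your hole $H^\ast=(\bar A,w_1,u,w_2)$ is, after unwinding the notation, exactly the paper's hole $(u\,h_\ell\,h_{\ell+1}\cdots h_{|H|-1}\,h_0\,u)$ with $v$ as common neighbor (your $w_1,w_2$ specialize to $h_0,h_\ell$), and your $C_i=\emptyset$ subcase matches the paper's $v\not\sim h_\ell$ (resp.\ $v\not\sim h_0$) 4-hole. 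For part (2), both proofs reduce to Proposition~\ref{lem:five-path}(3) on the path $(h_{a-1},\dots,h_{b+1})$ together with the 4-hole check for $x\sim h_a$ or $x\sim h_b$. The only presentational difference is that the paper treats the case $u\in H$ explicitly in part~(2), whereas you defer $u\in H$ or $v\in H$ as ``straightforward''; this is indeed routine (when $u\in H$, $N_H[v]=\{u\}$ and part~(1) applied to $x,v$ immediately gives $x\sim u$).
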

\begin{proof}
  We number vertices of $H$ such that $N_{H}[u] =
  \{h_0,\dots,h_{\ell}\}$; the order can be either way if
  $|N_{H}[u]| = 1$, i.e., $\ell = 0$.

  Statement (1) holds trivially if either or both of $u$ and $v$
  belong to $H$ (Proposition~\ref{lem:neighbors-are-consecutive}).  Hence we
  assume $u,v\not\in H$.  Suppose first, for contradiction, $N_{H}[u]
  \cap N_{H}[v] = \emptyset$; we may assume
  $\{h_{\ell_1},\dots,h_{\ell_2}\} = N_{H}[v]$ where $\ell < \ell_1
  \leq \ell_2 < |H|$.  If $\ell_2 \geq |H| - 3$, then $(u v h_{\ell_2}
  \dots h_{|H|} u)$ is a hole of length at most $6$.  Otherwise, $(u v
  h_{\ell_1} \dots h_{\ell} u)$ is a hole not congenial to $H$: in
  particular, the vertex $h_{|H| - 2}$ in $H$ is nonadjacent to it.
  In either case, we end with a contradiction; hence $N_{H}[u]$ and
  $N_{H}[v]$ must intersect.  Suppose now, for contradiction,
  $N_{H}[u] \cup N_{H}[v] = H$.  Then $v$ is adjacent to every vertex
  in ($h_{\ell + 1} h_{\ell + 2} \cdots h_{|H| - 1}$).
  Proposition~\ref{lem:neighbors-are-consecutive} and
  Corollary~\ref{lem:hole-not-common} imply $6< \ell < |H| - 6$.  If
  $v\not\sim h_{\ell}$, then $(u h_{\ell} h_{\ell + 1} v u)$ is a
  $4$-hole.  A symmetric argument applies when $v\not\sim h_{0}$.  Now
  suppose $v$ is adjacent to both $h_0$ and $h_{\ell}$, then ($u
  h_{\ell} h_{\ell+1}\cdots h_{|H| - 1} h_0 u$) is a hole and $v$ is a
  common neighbor of it (contradicting
  Corollary~\ref{lem:hole-not-common}).  None of the cases is possible, and
  hence $N_H[u]\cup N_H[v]\ne H$.

  The condition of statement (2) means that $v\not\sim h_0$ and
  $v\not\sim h_\ell$.  According to statement (1), and since $N_H[v]$
  is consecutive in $H$, we must have $N_{H}[v] \subseteq \{h_1,
  \dots,h_{\ell-1}\}$.  Note that $N_H[v]$ is nonempty and thus
  $\ell\ge 2$.  If $u\in H$, then $N_H[v] = \{u\} = \{h_1\}$, and
  statement (2) follows from statement (1); here we use the fact that
  every $x\in N[v]$ is in the neighborhood of $H$ and that $u$ is the
  only neighbor of $v$ in $H$.  Assume now $u\not\in H$; the argument
  below holds regardless of whether $v\in H$ or not.
  Let $x$ be any vertex in $N[v]$ different from $u$, and we argue
  $x\sim u$.  By statement (1), $N_H[x]$ must intersect $\{h_1,
  \cdots, h_{\ell - 1}\}$.  If $x$ is nonadjacent to
  $\{h_0,h_{\ell}\}$, then $N_H[x]$ is also a subset of $\{h_1,
  \cdots, h_{\ell - 1}\}$, and $x\sim u$ follows from
  Proposition~\ref{lem:five-path}(3) (taking ($h_{-1} h_0\cdots h_{\ell}
  h_{\ell+1}$) as the path).  Otherwise, $x$ is adjacent to at
  least one of $\{h_0,h_{\ell}\}$, then $x\not\sim u$ will imply a
  $4$-hole ($h_0 u v x h_0$) or ($h_{\ell+1} u v x h_0$), which is
  impossible.  The proof is now completed.
\end{proof}
Noting that the closed neighborhoods of two consecutive vertices in a
hole are incomparable, Proposition~\ref{lem:neighbors-of-neighbors}(2) has
the following corollary.
\begin{corollary}
  \label{lem:neighbors-of-neighbors-in-holes}
  Let $H$ and $H_1$ be two holes in $G_0$.  For each pair of
  consecutive vertex $u_i, u_{i+1}\in H_1$, at least one end of
  $N_H[u_i]$ is in   $N_H[u_{i+1}]$.
\end{corollary}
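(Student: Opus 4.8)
The plan is to obtain the statement immediately from Proposition~\ref{lem:neighbors-of-neighbors}(2) by a one-line contradiction, the only preparatory observation being that two consecutive vertices of a hole have incomparable closed neighborhoods. Throughout I would fix the holes $H,H_1$ and a consecutive pair $u_i,u_{i+1}$ of $H_1$; since $H_1$ is a hole, $u_i\sim u_{i+1}$, and (by Corollary~\ref{lem:hole-not-common} together with Proposition~\ref{lem:neighbors-are-consecutive}) $N_H[u_i]$ is a nonempty proper subset of $H$ inducing a path, so its two ends $h_{\mathtt{begin}_H(u_i)}$ and $h_{\mathtt{end}_H(u_i)}$ are well defined (possibly equal).

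The first step records the incomparability hinted at just before the statement. The vertex $u_{i-1}$ satisfies $u_{i-1}\sim u_i$ but $u_{i-1}\not\sim u_{i+1}$ — the latter edge would be a chord of $H_1$ — and $u_{i-1}\ne u_{i+1}$ because $|H_1|\ge 4$; hence $u_{i-1}\in N[u_i]\setminus N[u_{i+1}]$, and symmetrically $u_{i+2}\in N[u_{i+1}]\setminus N[u_i]$. Thus neither $N[u_i]\subseteq N[u_{i+1}]$ nor $N[u_{i+1}]\subseteq N[u_i]$.

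The second step is the contradiction. Suppose neither end of $N_H[u_i]$ lies in $N_H[u_{i+1}]=N[u_{i+1}]\cap H$. Since an end $h\in H$ lies in $N_H[u_{i+1}]$ exactly when $h=u_{i+1}$ or $h\sim u_{i+1}$, this assumption forces $u_{i+1}$ to be adjacent to neither $h_{\mathtt{begin}_H(u_i)}$ nor $h_{\mathtt{end}_H(u_i)}$, which is precisely the hypothesis of Proposition~\ref{lem:neighbors-of-neighbors}(2) applied to the adjacent pair $u=u_i$, $v=u_{i+1}$. That proposition then gives $N[u_{i+1}]\subset N[u_i]$, contradicting the incomparability established in the first step. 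Hence at least one end of $N_H[u_i]$ belongs to $N_H[u_{i+1}]$, as claimed.

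I expect no genuine obstacle here: the corollary invokes no forbidden-subgraph construction, and the single point requiring a careful word is the translation "an end of $N_H[u_i]$ lies in $N_H[u_{i+1}]$" $\iff$ "$u_{i+1}$ equals or is adjacent to that end", so that negating it yields exactly the non-adjacency hypothesis demanded by Proposition~\ref{lem:neighbors-of-neighbors}(2).
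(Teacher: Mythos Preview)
Your proof is correct and follows exactly the approach the paper indicates: the paper states the corollary as an immediate consequence of Proposition~\ref{lem:neighbors-of-neighbors}(2) together with the observation that closed neighborhoods of consecutive hole vertices are incomparable, and you have simply spelled out both of these ingredients in full. The care you take translating ``end not in $N_H[u_{i+1}]$'' into the non-adjacency hypothesis of Proposition~\ref{lem:neighbors-of-neighbors}(2) is appropriate and accurate.
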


The following lemmas characterize minimal hole covers of $G_0$.
\begin{lemma}\label{lem:hole-cover-is-clique}
  Any minimal hole cover of $G_0$ induces a clique.
\end{lemma}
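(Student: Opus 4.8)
The plan is to argue by contradiction. Suppose $HC$ is a minimal hole cover of $G_0$ that contains two non\-adjacent vertices $u,v$. Since $HC$ is minimal, neither $HC\setminus\{u\}$ nor $HC\setminus\{v\}$ is a hole cover, so we may fix holes $H_u,H_v$ with $H_u\cap HC=\{u\}$ and $H_v\cap HC=\{v\}$; in particular $v\notin H_u$ and $u\notin H_v$. Recall that in $G_0$ all holes are pairwise congenial, so $H_u$ and $H_v$ are congenial, and by Corollary~\ref{lem:hole-not-common} no vertex of $G_0$ is a common neighbour of any hole. Write $H_u=(g_0g_1\cdots g_{k-1}g_0)$ with $g_0=u$ and $H_v=(h_0h_1\cdots h_{m-1}h_0)$ with $h_0=v$.

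First I would pin down the local picture around the two ``attachment arcs''. Since $v\in N[H_u]\setminus H_u$ and $v$ is not a common neighbour of $H_u$, Proposition~\ref{lem:neighbors-are-consecutive} shows $N_{H_u}(v)$ is a non\-empty proper consecutive arc of $H_u$; as $v\not\sim u$ it misses $g_0$, so $N_{H_u}(v)=\{g_a,\dots,g_b\}$ with $1\le a\le b\le k-1$. Moreover $v$ cannot be adjacent to both neighbours $g_1,g_{k-1}$ of $u$ on $H_u$: otherwise $(v\,g_1\,u\,g_{k-1}\,v)$ is a $4$-hole (here $g_1\not\sim g_{k-1}$ since $H_u$ is a hole, and $v\not\sim u$), which is impossible in a prereduced graph. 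So, up to symmetry, $v\not\sim g_{k-1}$, i.e.\ $a\le b\le k-2$. By the symmetric argument, $N_{H_v}(u)=\{h_c,\dots,h_d\}$ is a non\-empty proper consecutive arc of $H_v$ with $1\le c\le d\le m-2$ (choosing $u\not\sim h_{m-1}$).

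The heart of the proof is then to assemble $H_u$ and $H_v$ into a structure a prereduced graph cannot contain. I would splice a piece of $H_u$ running from $u$ to where $v$ attaches with a piece of $H_v$ running from $v$ to where $u$ attaches -- for instance the closed walk $u\,g_1\cdots g_a\,v\,h_{m-1}\cdots h_d\,u$, which is a cycle by the edges $u\sim g_1$, $g_a\sim v$, $v\sim h_{m-1}$, $h_d\sim u$ -- and track its possible chords. With the position bounds above, a chord must be an edge among the $g$'s (forbidden unless consecutive on $H_u$), among the $h$'s (likewise), or an edge from $u$ or $v$ to a vertex of the opposite piece. If no such chord is present the walk is a hole: if it is short this already contradicts prereducedness, and otherwise a parallel splice of the two attachment arcs themselves (the sub\-paths $g_a\cdots g_b$ of $H_u$ and $h_c\cdots h_d$ of $H_v$, joined by crossing edges supplied by congeniality) yields a cycle missing both $u$ and $v$, hence a hole inside the chordal graph $G_0-HC$, which Proposition~\ref{lem:interval-subgraph} forbids. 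If instead a chord is present, it exhibits a short hole or one of the small asteroidal witnesses -- a long claw, net, tent, or a $7$-vertex $\dag$-AW of Figure~\ref{fig:at} -- again impossible.

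To drive the chord analysis I would use Proposition~\ref{lem:neighbors-of-neighbors} and Corollary~\ref{lem:neighbors-of-neighbors-in-holes} to control how, as one walks along $H_u$, the neighbourhood in $H_v$ slides: consecutive vertices of $H_u$ have intersecting and non\-covering $H_v$-neighbourhoods, and when one of them is ``interior'' the neighbourhoods are nested. This localizes the crossing vertices $g_a,g_b,h_c,h_d$ and, together with the fact that every such arc has size less than $|H|-7$ (Proposition~\ref{lem:neighbors-are-consecutive}), guarantees the spliced vertex sets are genuinely disjoint. The main obstacle is the degenerate case in which both attachment arcs are as small as possible (one or two vertices): there the naive splice can pick up a chord between the two pieces, and instead one must follow a chain of nested neighbourhoods via Proposition~\ref{lem:neighbors-of-neighbors}(2) to locate a legitimate crossing edge, or extract a $4$-hole from the failure of such a chain. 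This is where essentially all of the casework lives.
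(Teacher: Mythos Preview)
Your setup is right: fix $u,v\in HC$ nonadjacent and holes $H_u,H_v$ with $H_u\cap HC=\{u\}$, $H_v\cap HC=\{v\}$. But the construction you propose has a real gap.

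The walk $u\,g_1\cdots g_a\,v\,h_{m-1}\cdots h_d\,u$ contains both $u$ and $v$. Even if after shortcutting all chords you obtain a hole, $HC$ still meets it at $u$ or $v$, so there is no contradiction with $HC$ being a hole cover; and you give no reason why the resulting hole would be short. Your chord analysis also understates the difficulty: since $N_{H_v}(u)=\{h_c,\dots,h_d\}$, the vertex $u$ is adjacent to every $h_j$ with $c\le j\le d$, so your walk typically has a whole fan of chords from $u$, and repeatedly shortcutting just produces shorter cycles that still pass through $u$. The ``parallel splice'' of the two attachment arcs $g_a\cdots g_b$ and $h_c\cdots h_d$ is not well-defined either: congeniality tells you each $g_i$ sees \emph{some} vertex of $H_v$, but that vertex can simply be $v$ itself (indeed $g_i\sim v$ for $a\le i\le b$), so you have no guaranteed crossing edge between the two arcs that avoids $v$.

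The paper's construction is asymmetric and avoids this trap. Let $P_1$ be the arc $N_{H_u}[v]=\{g_a,\dots,g_b\}$ of $H_u$; note $u\notin P_1$ since $u\not\sim v$. Now look at where the \emph{endpoints} $g_a,g_b$ of $P_1$ attach to $H_v$: their $H_v$-neighbourhoods both contain $v$, so their union is a single arc $\{h_\alpha,\dots,h_\beta\}$ of $H_v$, and one first checks (via Proposition~\ref{lem:neighbors-of-neighbors}(1)) that this arc is proper. Let $P_2$ be the complementary arc $h_\beta\,h_{\beta+1}\cdots h_{\alpha}$ of $H_v$. The key technical step is that no inner vertex of $P_2$ is adjacent to $P_1$: if some inner $h_j$ were adjacent to some $g_i\in P_1$, then since $h_j\not\sim g_a,g_b$ by choice of $P_2$, Proposition~\ref{lem:five-path}(3) applied to $v$ and the path $(g_{a-1}\,g_a\cdots g_b\,g_{b+1})$ forces $h_j\sim v$, contradicting $h_j\notin\{h_\alpha,\dots,h_\beta\}$. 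In particular $v\notin P_2$. Since the endpoints of $P_2$ are adjacent to the endpoints of $P_1$ by construction, one extracts a hole from $P_1\cup P_2$ that lies entirely in $(H_u\setminus\{u\})\cup(H_v\setminus\{v\})$, hence in $G_0-HC$, contradicting that $HC$ is a hole cover. The missing idea in your attempt is precisely this: pass to the \emph{complementary} arc of $H_v$ determined by the endpoints of $P_1$, rather than trying to use $N_{H_v}(u)$ directly.
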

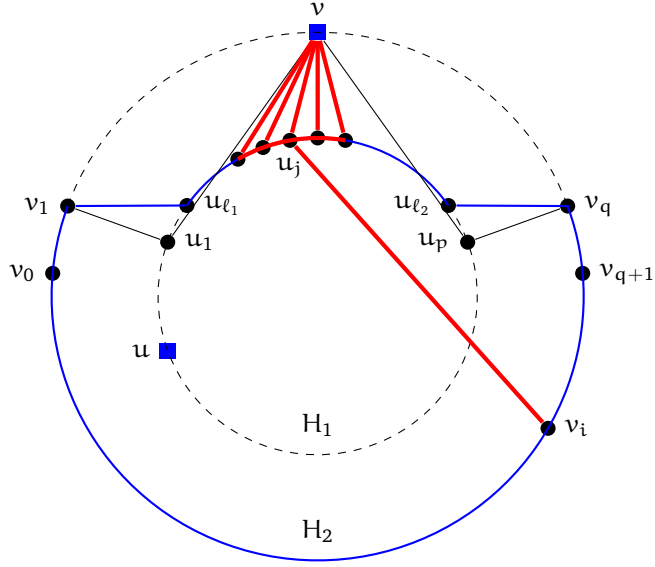
\begin{figure*}[t]
  \centering
  \begin{subfigure}[b]{0.75\textwidth}
    \centering
    \begin{tikzpicture}[scale=.7]
      \node[special,label=right:$u_1$] (u0) at (160:3) {};
      \node[special,label=left:$u_p$] (up) at (20:3) {};
      \node[vertex,label=below:$u_j$] (uj) at (100:3) {};
      \node[vertex] (u1) at (120:3) {};
      \node[vertex] (u2) at (110:3) {};
      \node[vertex] (u3) at (90:3) {};
      \node[vertex] (u4) at (80:3) {};
      \node[vertex,label=right:$u_{\ell_1}$] (u5) at (145:3) {};
      \node[vertex,label=left:$u_{\ell_2}$] (u6) at (35:3) {};
      \node[corner,label=left:$u$] (u) at (200:3) {};
      \node[label=above:$H_1$]  at (270:3) {};

      \node[corner,label=above:$v$] (v) at (90:5) {};
      \node[special,label=left:$v_1$] (v1) at (160:5) {};
      \node[special,label=left:$v_0$] (v0) at (175:5) {};
      \node[special,label=right:$v_q$] (vq) at (20:5) {};
      \node[special,label=right:$v_{q+1}$] (vqq) at (5:5) {};
      \node[vertex,label=right:$v_i$] (vi) at (330:5) {};
      \node[label=above:$H_2$]  at (270:5) {};

      \draw[blue,thick] (u6) arc (35:145:3) -- (v1) arc (160:380:5) --
      (u6);
      \draw[] (vq) -- (up) -- (v) -- (u0) -- (v1);
      \draw[dashed] (u5) arc (145:395:3);
      \draw[dashed] (v1) arc (160:20:5);

      \draw[at edge] (v) -- (uj) -- (vi);
      \foreach \s in {1,...,4}
        \draw[at edge] (u\s) -- (v);
      \draw[at edge] (u4) arc (80:120:3);
    \end{tikzpicture}
  \end{subfigure}

  \caption{Illustration for the proof of Lemma~\ref{lem:hole-cover-is-clique}.}
  \label{fig:hole-cover-is-clique}
\end{figure*}
\begin{proof}
  Suppose to the contrary, there is a minimal hole cover $HC$ that
  contains two nonadjacent vertices $u$ and $v$.  By the minimality of
  $HC$, there are two (unnecessarily disjoint) holes $H_1$ and $H_2$
  such that $HC \cap H_1 = \{u\}$ and $HC \cap H_2 = \{v\}$.  In
  particular, $u \not\in H_2$ and $v\not\in H_1$.  We number the
  vertices of $H_1$ such that $N_{H_1}[v] = \{u_1,u_2,\cdots, u_p\}$.
  The union of $N_{H_2}[u_1]$ and $N_{H_2}[u_p]$ is a consecutive set of
  vertices in $H_2$: they both contain $v$, and, by
  Proposition~\ref{lem:neighbors-are-consecutive}, are consecutive in $H_2$.
  We number the vertices of $H_2$ such that $u_1 \sim v_1$ and
  $N_{H_2}[u_1] \cup N_{H_2}[u_p] =\{v_1, \dots, v_q\}$.
    \begin{claim}\label{claim-0}
      At least one vertex of $H_2$ is adjacent to neither $u_1$ nor
      $u_p$.
    \end{claim}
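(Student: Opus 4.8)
The plan is to argue by contradiction: assume $N_{H_2}[u_1]\cup N_{H_2}[u_p]=H_2$, i.e.\ every vertex of $H_2$ is adjacent to $u_1$ or $u_p$, and derive an impossibility. Note first that $u_1,u_p\in V(G_0)$ and $u_1,u_p\sim v$ with $v\in H_2$, so $v\in N_{H_2}[u_1]\cap N_{H_2}[u_p]$; also $u_1\neq u$ and $u_p\neq u$ (since $u_1,u_p$ are adjacent to $v$ while $u$ is not), so $u_1,u_p\notin HC$. By Corollary~\ref{lem:hole-not-common} neither $u_1$ nor $u_p$ is a common neighbor of $H_2$, hence by Proposition~\ref{lem:neighbors-are-consecutive} each of $N_{H_2}[u_1],N_{H_2}[u_p]$ is a consecutive subset of $H_2$ of fewer than $|H_2|-7$ vertices. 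Since their union is all of $H_2$, each of them has at least $8$ vertices; in particular $u_1,u_p\notin H_2$ (a vertex of a hole has only three closed neighbors in it), and each of $u_1,u_p$ is adjacent to more than three vertices of $H_2$.

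\textbf{The easy case.} Suppose $v$ is an interior vertex of the path induced by $N_{H_2}[u_1]$ in $H_2$ (equivalently, $v$ is not an endpoint of that path). Using the shortcut construction from the paragraph preceding Corollary~\ref{lem:no-more-than-3} (applicable because $u_1\notin\widehat N(H_2)$ and $u_1$ has more than three neighbors in $H_2$), we obtain a hole $H_2'$ of $G_0$, strictly shorter than $H_2$, with vertex set $\bigl(V(H_2)\setminus(\text{inner vertices of }N_{H_2}[u_1])\bigr)\cup\{u_1\}$; since $v$ is an inner vertex of $N_{H_2}[u_1]$, we get $v\notin V(H_2')$. Then $HC\cap V(H_2')\subseteq (HC\cap H_2)\cup(HC\cap\{u_1\})=\{v\}$, and $v\notin V(H_2')$, so $HC$ misses the hole $H_2'$ of $G_0$ -- contradicting that $HC$ is a hole cover. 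The symmetric argument handles the case where $v$ is interior to $N_{H_2}[u_p]$.

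\textbf{The remaining case.} It remains to treat the case where $v$ is an endpoint of \emph{both} $N_{H_2}[u_1]$ and $N_{H_2}[u_p]$. As each of these arcs has $v$ as an endpoint and their union is $H_2$, they must emanate from $v$ in opposite directions around $H_2$ -- otherwise their union would be a single arc with $v$ as an endpoint, failing to contain the other neighbor of $v$ in $H_2$. Numbering $H_2=(h_0h_1\cdots h_{m-1})$ with $h_0=v$, we may write $N_{H_2}[u_1]=\{h_0,h_1,\dots,h_a\}$ and $N_{H_2}[u_p]=\{h_{-b},\dots,h_{-1},h_0\}$; the bounds from the first paragraph give $a+1,b+1\le m-8$, while covering $H_2$ forces $a+b\ge m-1$, hence $a,b\ge 8$. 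Now shortcut $H_2$ through $u_1$ to obtain a hole $H_2'$ of $G_0$ with $V(H_2')=\{u_1,h_0,h_a,h_{a+1},\dots,h_{m-1}\}$, of size $m-a+2\ge 10$. Since $a+b\ge m-1$, every $h_i$ with $a+1\le i\le m-1$ lies in $N_{H_2}[u_p]$; together with $h_0$ this gives $|N_{H_2'}[u_p]|\ge m-a=|H_2'|-2$. But $|H_2'|-2$ is neither equal to $|H_2'|$ (which is ruled out by Corollary~\ref{lem:hole-not-common}) nor strictly smaller than $|H_2'|-7$ (which Proposition~\ref{lem:neighbors-are-consecutive} would require) -- a contradiction. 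This completes the proof.

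\textbf{Main obstacle.} The crux is the last case. The point is that after the easy case, the only surviving configuration is $v$ being a common endpoint of the two arcs, forced to point in opposite directions; one then has to apply the shortcut a second time and track precisely which neighbors of $u_p$ survive into the new, shorter hole $H_2'$ -- just enough to push $|N_{H_2'}[u_p]|$ into the narrow window $\{|H_2'|-2,|H_2'|-1,|H_2'|\}$ that is forbidden by Proposition~\ref{lem:neighbors-are-consecutive} and Corollary~\ref{lem:hole-not-common}. The cardinality bookkeeping in the first paragraph (which forces each arc to have $\ge 4$, indeed $\ge 8$, vertices, thereby licensing the shortcuts) is the other place requiring care, but it is routine.
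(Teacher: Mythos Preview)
Your proof is correct, but it takes a genuinely different route from the paper's.

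The paper splits on whether $u_1$ and $u_p$ are adjacent. When $p=2$ (so $u_1\sim u_p$), Proposition~\ref{lem:neighbors-of-neighbors}(1) directly gives $N_{H_2}[u_1]\cup N_{H_2}[u_p]\ne H_2$. When $p>2$ (so $u_1\not\sim u_p$), the paper observes that at least one endpoint $v_i$ of the arc $N_{H_2}[u_1]$ is nonadjacent to $v$; since the next vertex $v_{i+1}$ outside that arc must then lie in $N_{H_2}[u_p]$, one reads off a $4$- or $5$-hole on $\{v,u_1,v_i,v_{i+1},u_p\}$, contradicting prereducedness. This is short and uses only the small-forbidden-subgraph hypothesis---it never touches the hole cover $HC$.

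Your argument instead exploits the ambient setup of Lemma~\ref{lem:hole-cover-is-clique}: that $HC$ is a hole cover with $HC\cap H_2=\{v\}$ and $u_1,u_p\notin HC$. In your easy case you shortcut through $u_1$ (or $u_p$) to manufacture a hole missed by $HC$; in the remaining case you shortcut through $u_1$ and then count neighbors of $u_p$ in the new hole, invoking the dichotomy of Proposition~\ref{lem:neighbors-are-consecutive} together with Corollary~\ref{lem:hole-not-common}. Both approaches are valid. The paper's is more self-contained---it really proves the claim for \emph{any} $u_1,u_p\in H_1$ adjacent to $v\in H_2$, independent of $HC$---while yours leans on more of the structural machinery already built for $G_0$. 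One minor remark: your bound $|H_2'|\ge 10$ in the remaining case can be sharpened to $\ge 11$ (from $a\le m-9$), though the weaker bound already suffices.
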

    \begin{proof}
      The claim follows from Proposition~\ref{lem:neighbors-of-neighbors}(1)
      when $p = 2$; hence we may assume $p > 2$, which means $u_1
      \not\sim u_p$ (note that $u_0\neq u_{p+1}$).  Suppose
      $N_{H_2}[u_1] \cup N_{H_2}[u_p] = H_2$, then by
      Proposition~\ref{lem:neighbors-are-consecutive}, we have $7 <
      |N_{H_2}[u_1]| < |H_2| - 7$, which means at least one end of the
      path induced by $N_{H_2}[u_1]$ is not adjacent to $v$.  Without
      loss of generality, let it be $v_i$ where $i =
      \mathtt{end}_{H_2}[u_1]$; noting that by assumption $v_{i+1}\sim
      u_p$, there is either a $4$-hole $(v u_1 v_i u_p v)$ (if
      $v_i\sim u_p$) or a $5$-hole $(v u_1 v_i v_{i+1} u_p v)$ (if
      $v_i\not\sim u_p$).  
      \renewcommand{\qedsymbol}{$\lrcorner$}
\end{proof}
  
In what follows we show the existence of a hole in $G - HC$, which
contradicts the assumption that $HC$ is a hole cover and thus proves
this lemma.  Denote by $P_1 = (u_1 u_2 \dots u_p)$ and $P_2 = (v_q
v_{q+1} \dots v_0 v_1)$.  By definition $u \not\in P_1$; to show $v
\not\in P_2$ it suffices to rule out the possibility that $v\in
\{v_1,v_q\}$, as by the numbering of $H_2$, $v$ is in
$\{v_1,v_1,\dots,v_q\}$.  According to
Corollary~\ref{lem:neighbors-of-neighbors-in-holes}, the two neighbors
of $v$ in $H_2$ are adjacent to either $u_1$ or $u_p$; however,
Claim~\ref{claim-0} implies that $v_0$ and $v_{q+1}$ are inner
vertices of $P_2$ and hence are not adjacent to $u_1$ or $u_p$.  We
now argue that each inner vertex $v_i$ of $P_2$ is not adjacent to
$P_1$ (see the thick edges in Figure~\ref{fig:hole-cover-is-clique}).
Suppose to the contrary, $v_i$ is adjacent to $P_1$.  Noting that $v_i
\not\sim u_1$, $v_i \not\sim u_p$, and $u_1\ne u_{p+1}$,
Proposition~\ref{lem:five-path}(3) applies, and we can conclude
$v_i\sim v$, which is impossible.  (It is immaterial whether $v_i\in
H_1$ or not.)  Now we construct the hole in $G - HC$ as follows.
Claim~\ref{claim-0} implies that the length of $P_2$ is at least $2$.
If $u_1 \sim v_q$, then $(u_1 P_2 u_1)$ is such a hole.  Otherwise by
assumption we have $u_p \sim v_q$.  Let $\ell_1 = \max \{i : u_i \sim
v_1 \text{ and }0 \leq i \leq p\}$, and $\ell_2 = \min \{i : u_i \sim
v_q \text{ and }\ell_1 \leq i \leq p\}$.  Then $(u_{\ell_1} u_{\ell_2}
P_2 u_{\ell_1})$ will be such a hole (see the solid hole in
Figure~\ref{fig:hole-cover-is-clique}).
\end{proof}

\begin{lemma}\label{lem:non-neighbor-of-hole-cover}
  For any minimal hole cover $HC$ of $G_0$ and any shortest hole $H$,
  there is a vertex $v\in H$ such that $N_{G_0}[v] \not\sim HC$.
\end{lemma}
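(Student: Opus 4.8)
The plan is to do everything on the shortest hole $H$, exploiting two inputs. Since $H$ is a \emph{shortest} hole, Corollary~\ref{lem:no-more-than-3} together with Corollary~\ref{lem:hole-not-common} (no vertex of $G_0$ is a common neighbor of any hole) shows that $|N_H[u]|\le 3$ for every $u\in V(G_0)$, and by Proposition~\ref{lem:neighbors-are-consecutive} the set $N_H[u]$ is a consecutive arc of $H$. Moreover $HC$ meets $H$ (it is a hole cover) and induces a clique (Lemma~\ref{lem:hole-cover-is-clique}); since $H$ is an induced cycle, two of its vertices are adjacent in $G_0$ iff they are consecutive on $H$, so $HC\cap H$ is a single vertex or a pair of consecutive vertices of $H$. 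I would fix a numbering $H=(h_0h_1\cdots h_{\ell-1})$, $\ell\ge 11$, with $h_0\in HC$ and $HC\cap H\subseteq\{h_0,h_1\}$.

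Next I would confine $HC$ to a short window. Put $A:=N_H[HC]=\bigcup_{w\in HC}N_H[w]$. Each arc $N_H[w]$ has at most three vertices and meets $HC\cap H$: if $w\in HC\cap H$ then $w\in N_H[w]$, and if $w\in HC\setminus H$ then, $HC$ being a clique, $w$ is adjacent to every vertex of $HC\cap H$. A short case check (depending only on whether $|HC\cap H|$ is $1$ or $2$) then gives $A\subseteq\{h_{-2},h_{-1},h_0,h_1,h_2\}$. The key step is to observe that $N_{G_0}[HC]\subseteq\{u\in V(G_0):N_H[u]\cap A\ne\emptyset\}$: for $u\in HC$ one has $h_0\in N_H[u]\cap A$ (if $u\in H$ use $h_0$ or $u$ itself, and if $u\notin H$ use $u\sim h_0$); and for $u\sim w$ with $w\in HC$, Proposition~\ref{lem:neighbors-of-neighbors}(1) gives $N_H[u]\cap N_H[w]\ne\emptyset$ with $N_H[w]\subseteq A$.

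Finally I would take $v:=h_5$ (any $h_j$ with $4\le j\le \ell-4$ works, and since $\ell\ge 11$ such $j$ exists), so that $\{h_{j-1},h_j,h_{j+1}\}$ is disjoint from $\{h_{-2},h_{-1},h_0,h_1,h_2\}\supseteq A$. For any $u\in N_{G_0}[v]$ we have $v\in N_H[u]$, and since $N_H[u]$ is a consecutive arc of at most three vertices containing $v$, it lies inside $\{h_{j-1},h_j,h_{j+1}\}$; hence $N_H[u]\cap A=\emptyset$, so by the inclusion above $u\notin N_{G_0}[HC]$. Therefore $N_{G_0}[v]\cap N_{G_0}[HC]=\emptyset$; in particular no vertex of $N_{G_0}[v]$ equals or is adjacent to a vertex of $HC$, i.e.\ $N_{G_0}[v]\not\sim HC$, as required. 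The one real obstacle is spotting that the shortest-hole hypothesis is precisely what pins $|N_H[u]|$ to at most $3$ for \emph{every} $u\in V(G_0)$ (via Corollary~\ref{lem:no-more-than-3}); this collapses the statement to a finite arc computation on $H$, and together with $HC$ being a clique (so $HC\cap H$ is a vertex or an edge) the rest is routine bookkeeping with indices modulo $\ell$.
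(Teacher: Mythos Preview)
Your strategy coincides with the paper's: both use that $HC$ is a clique (Lemma~\ref{lem:hole-cover-is-clique}), that every $u\in V(G_0)$ has $|N_H[u]|\le 3$ on a shortest hole (Corollaries~\ref{lem:no-more-than-3} and~\ref{lem:hole-not-common}), fix $h_0\in HC$, and take $v=h_5$. The only methodological difference is the endgame: the paper argues that $N_{G_0}[h_5]\sim HC$ would yield an $h_0$--$h_5$ path of length at most $3$ in $G_0$ and reads off a forbidden short hole from the two inner vertices, whereas you route everything through $H$-neighbourhoods via Proposition~\ref{lem:neighbors-of-neighbors}(1). Your packaging is arguably cleaner, since the short-hole check is already absorbed into that proposition.

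There is, however, a small arithmetic slip in your last paragraph. A consecutive arc of at most three vertices of $H$ containing $h_j$ need not lie in $\{h_{j-1},h_j,h_{j+1}\}$; it lies in $\{h_{j-2},h_{j-1},h_j,h_{j+1},h_{j+2}\}$ (e.g.\ $u$ with $N_H[u]=\{h_3,h_4,h_5\}$ when $j=5$). Consequently your parenthetical ``any $h_j$ with $4\le j\le \ell-4$ works'' fails at the endpoints: with $j=4$ one could have $N_H[u]=\{h_2,h_3,h_4\}$, which meets $A$. The correct range is $5\le j\le \ell-5$, and since $|H|\ge 11$ your actual choice $j=5$ is fine: $\{h_3,\dots,h_7\}$ and $\{h_{-2},\dots,h_2\}$ are disjoint modulo $|H|$. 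With this correction the argument goes through unchanged.
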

\begin{proof}
  We show this by construction.  By
  Corollaries~\ref{lem:no-more-than-3} and \ref{lem:hole-not-common},
  each vertex in $G_0$ has at most $3$ neighbors in $H$.  By
  Lemma~\ref{lem:hole-cover-is-clique}, $HC$ is a clique and hence $|H
  \cap HC| \le 2$.  We number the vertices of $H$ in a way that $h_0
  \in HC$ and $h_{1} \not\in HC$, and claim that $v = h_{5}$ is the
  asserted vertex.  Suppose to the contrary, $N_{G_0}[h_5]$ and $HC$
  are adjacent, then there is an $h_0$-$h_5$ path $P$ of length at
  most $3$ and all its inner vertices belong to $G_0$.  The case $P =
  (h_0 v h_5)$ is impossible, as by
  Proposition~\ref{lem:neighbors-are-consecutive} and
  Corollary~\ref{lem:no-more-than-3}, $v$ is adjacent to at most $3$
  consecutive vertices in $H$.  Now we may assume $P = (h_0 v_1 v_2
  h_5)$, and examine the neighbors of $v_1$ and $v_2$ in $H$.  By
  Corollary~\ref{lem:no-more-than-3}, we have $\mathtt{end}_H(v_1)
  \leq 2$ and $\mathtt{begin}_H(v_2) \geq 3$. This means that there is
  a hole $(v_1 h_i h_{i+1} \dots h_{j} v_2 v_1)$, where $i =
  \mathtt{end}_H(v_1)$ and $j = \mathtt{begin}_H(v_2)$, of length at
  least $4$ and at most $8$.
\end{proof}

We now relate minimal hole covers of $G_0$ to minimal separators in some
interval subgraphs.  In one direction of the proof, we need the
following claim.  Observe that in an interval representation of a
connected interval graph, the union of all the intervals also forms an
interval. Similarly, if there is a point $p$ in the real line such
that there are intervals not containing $p$ both to the left and to
the right of $p$, then the set of intervals containing $p$ is a clique
separator.

\begin{proposition}
  \label{lem:non-simplicial-neighborhood-is-separator}
  Let $v$ be a vertex in an interval graph $G$.  If $v$ is not
  adjacent to any simplicial vertex, then $N[v]$ is a separator of $G$.
\end{proposition}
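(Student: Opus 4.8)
The plan is to argue with an interval representation of $G$, taken with closed intervals; write $[\ell,r]$ for the interval assigned to $v$. We may assume $G$ is connected, since otherwise a separator is immediate: because $N[v]$ contains no simplicial vertex, $v$ cannot dominate its own component, so that component already contains a vertex outside $N[v]$ which $N[v]$ separates from any other component. The first and main step is to exhibit a vertex $a$ whose interval lies entirely to the left of $[\ell,r]$, that is, with right endpoint strictly below $\ell$, and symmetrically a vertex $b$ whose interval lies entirely to the right of $[\ell,r]$. To produce $a$, recall the standard fact that a vertex $a^{*}$ whose interval has the smallest right endpoint $r^{*}$ among all intervals is simplicial: every neighbour of $a^{*}$ has right endpoint at least $r^{*}$ and hence contains the point $r^{*}$, so $N[a^{*}]$ is a clique. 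Now if no vertex had its interval entirely to the left of $[\ell,r]$, then every right endpoint would be at least $\ell$, giving $\ell\le r^{*}\le r$; thus the point $r^{*}$ lies in both $a^{*}$'s interval and $v$'s interval $[\ell,r]$, so $a^{*}\in N[v]$. This places a simplicial vertex inside $N[v]$, contrary to the hypothesis, so $a$ exists; $b$ is obtained by the mirror argument.

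For the separation, put $p:=\ell$. Every interval that contains $p$ meets $[\ell,r]$ and therefore belongs to $N[v]$, so $G-N[v]$ has no vertex whose interval contains $p$. Also $a,b\notin N[v]$ (their intervals are disjoint from $[\ell,r]$, so neither vertex equals or is adjacent to $v$), $a\neq b$ (disjoint intervals), $a$'s interval lies entirely to the left of $p$, and $b$'s lies entirely to the right of $p$. Finally, no $a$-$b$ path can survive in $G-N[v]$: every vertex on such a path has an interval disjoint from $[\ell,r]$, hence lying entirely to the left of $[\ell,r]$ or entirely to the right of it, and since the path runs from $a$ (left) to $b$ (right), some consecutive pair of its vertices would have one interval entirely to the left of $[\ell,r]$ and the next entirely to the right of it, which cannot intersect --- contradicting adjacency. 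Therefore $N[v]$ separates $a$ from $b$, so $N[v]$ is a separator of $G$. (This is exactly the phenomenon recorded just before the statement: with vertices on both sides of the point $p$, the intervals through $p$ form a separator; here $p$ is chosen inside $v$'s interval, so the intervals through it lie within $N[v]$, and removing the whole of $N[v]$ only helps.)

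The step demanding the most care is the first one, and in particular its boundary behaviour, which also pins down how the hypothesis must be read. One has to use ``$v$ is adjacent to no simplicial vertex'' in the strong form that $N[v]$ contains no simplicial vertex (equivalently, $v$ is itself non-simplicial and has no simplicial neighbour): otherwise, if no vertex lay entirely to the left of $v$, the vertex $a^{*}$ above could coincide with $v$, which would only force $v$ to be simplicial rather than yield a contradiction. In the context where the proposition is applied the relevant vertex lies on holes and so is never simplicial, which makes this reading costless. Beyond that point the proof is elementary interval bookkeeping --- the leftmost right endpoint produces a simplicial vertex, and two intervals separated by a point are disjoint --- with no further case analysis once $a$ and $b$ are located, and $p$ may be taken to be any point of $[\ell,r]$.
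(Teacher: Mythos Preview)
Your proof is correct and takes essentially the same approach as the paper: both fix an interval representation, use the fact that the vertex with smallest right endpoint is simplicial to exhibit a vertex strictly to the left of $v$'s interval (and symmetrically one to the right), and then argue that any point of $v$'s interval is covered only by intervals in $N[v]$, so removing $N[v]$ disconnects the two sides. Your careful remark about reading the hypothesis as ``$N[v]$ contains no simplicial vertex'' is a fair clarification that applies equally to the paper's own argument.
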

\begin{proof}
  We consider an interval representation of $G$.  Without loss of
  generality, we assume that no two intervals have the same ends.
  Denote by $x$ the interval with the smallest right end, and by $y$
  the interval with the largest left end.  It is easy to see that $x$
  and $y$ are simplicial.  If $x\sim y$, then the graph is a complete
  graph (every interval contains the interval between the left end of
  $y$ and the right end of $x$); thus every vertex is adjacent to a
  simplicial vertex, and the assertion is vacuously true.  Therefore,
  we can assume $x \not\sim y$, and let $p$ be an arbitrary point in
  interval $v$.  By assumption $v$ is not adjacent to $x$ or $y$,
  which means that $x$ is to the left of $p$ and $y$ is to the right
  of $p$.  As every interval that contains $p$ is in $N[v]$, in the
  subgraph $G-N[v]$ that contains $x$ and $y$, no interval contains
  $p$; hence $x$ and $y$ are disconnected.  In other words, $N[v]$ is
  an $x$-$y$ separator.
\end{proof}

According to Lemma~\ref{lem:non-neighbor-of-hole-cover}, every minimal
hole cover satisfies the condition in the following lemma; hence the
lemma applies to all of them.  Note that $G_0 - N_{G_0}[v]$ is the
same as $G_0 - N[v]$.  

\begin{lemma}
  \label{lem:hole-cover-is-separator}
  Let $v$ be a vertex in a shortest hole $H$ of $G_0$, and $X$ induce
  a clique nonadjacent to $N_{G_0}[v]$.
  Set $X$ forms a minimal hole cover of $G_0$ if and only if $X$ is a
  minimal separator of $G_0 - N[v]$.
\end{lemma}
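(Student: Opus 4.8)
The plan is to pass to the interval graph $G' := G_0 - N[v]$ (this is precisely the graph ``$G_0 - N[v]$'' in the statement; it is an interval graph because $N[v]$ is a hole cover of $G_0$ — every hole is congenial to $H$, hence meets $N[v]$ by Proposition~\ref{lem:congenial-holes-cover} — while $G_0$ contains no AW, as in the proof of Proposition~\ref{lem:interval-subgraph}). Since $X$ is nonadjacent to $N[v]$, we have $X\subseteq V(G')$ and $X\cap N[v]=\emptyset$, so $G'-X=G_0-X-N[v]$. Write $H=h_0h_1\cdots h_{m-1}$ with $v=h_0$ and $m=|H|\ge 11$; then $H\cap N[v]=\{h_{m-1},h_0,h_1\}$ (as $H$ is induced), the vertices $h_2,h_{m-2}$ lie in $V(G')$ and are \emph{not} in $X$ (they are adjacent to $h_1,h_{m-1}\in N[v]$), and $H\setminus N[v]=h_2h_3\cdots h_{m-2}$ is a chordless path of $G'$ with distinguished ends $h_2$ and $h_{m-2}$. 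I would in fact prove the sharper statement that $X$ is a minimal hole cover of $G_0$ if and only if $X$ is a minimal $h_2$-$h_{m-2}$ separator of $G'$; since every minimal $a$-$b$ separator is a minimal separator, and since a proper subset of a hole cover can never be a minimal hole cover, this sharper equivalence implies the lemma.

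For the forward direction, assume $X$ is a minimal hole cover. Then $G_0-X$ is interval (Proposition~\ref{lem:interval-subgraph}). First I claim $v$ is adjacent to no simplicial vertex of $G_0-X$: any neighbour $w$ of $v$ lies in a hole $H_w$ of $G_0$, and the two $H_w$-neighbours of $w$ are adjacent to $w\in N[v]$, hence cannot belong to $X$ (as $X\not\sim N[v]$), and being nonadjacent they witness that $w$ is not simplicial in $G_0-X$; moreover $v$ itself is non-simplicial there (its $H$-neighbours $h_{m-1},h_1$ survive and are nonadjacent). By Proposition~\ref{lem:non-simplicial-neighborhood-is-separator}, $N[v]=N_{G_0-X}[v]$ is a separator of $G_0-X$, i.e.\ $G'-X$ is disconnected. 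To see it actually separates $h_2$ from $h_{m-2}$, I would fix an interval representation of $G_0-X$: since $I_{h_2}$ and $I_{h_{m-2}}$ are disjoint from $I_v$ but meet $I_{h_1}$ and $I_{h_{m-1}}$ respectively, which meet $I_v$, and since $h_2\not\sim h_{m-2}$, $h_1\not\sim h_{m-2}$, $h_{m-1}\not\sim h_2$, a short case analysis forces $I_{h_2}$ and $I_{h_{m-2}}$ onto opposite sides of $I_v$; as every interval meeting $I_v$ lies in $N[v]$, the set $N[v]$ separates $h_2$ from $h_{m-2}$, so $X$ is an $h_2$-$h_{m-2}$ separator of $G'$. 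For minimality, minimality of the hole cover gives, for each $z\in X$, a hole $H_z$ with $H_z\cap X=\{z\}$; since $z\notin N[v]$ but the two ends of the path $H_z\setminus N[v]$ are adjacent to $N[v]$, the vertex $z$ is interior to that path, so $H_z\setminus\{z\}$ (which avoids $X\setminus\{z\}$) reconnects $h_2$ to $h_{m-2}$ in $G'-(X\setminus\{z\})$ — hence $X\setminus\{z\}$ is not an $h_2$-$h_{m-2}$ separator.

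For the converse, assume $X$ is a minimal separator of $G'$ (automatically a clique, $G'$ being interval). If some hole $H'$ of $G_0$ avoided $X$, then $Q':=H'\setminus N[v]$ would be a nonempty chordless path of $G'$ avoiding $X$ (using that $H'$ is congenial to $H$, so meets $N[v]$ and is not contained in it), and one argues, again from the way a hole must wind around $N[v]$, that $Q'$ reconnects $h_2$ and $h_{m-2}$ in $G'-X$, contradicting that $X$ is a separator; hence $X$ is a hole cover. Its minimality as a hole cover then follows from the forward direction: a hole cover $X'\subsetneq X$ would contain a minimal hole cover, which by the forward direction is a (hence plain) separator of $G'$ strictly inside $X$, contradicting minimality of $X$. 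The hard part, occurring both in the minimality step above and in this converse, is the recurring \emph{reconnection} claim: for every hole $H'$ of $G_0$ and every set $S$ nonadjacent to $N[v]$ and disjoint from $H'$, the vertices $h_2$ and $h_{m-2}$ lie in the same component of $G'-S$. Intuitively this is transparent from the circular-arc-like structure of $G_0$ (all holes wind around the same region, and $H'\setminus N[v]$ spans the complementary arc just as $H\setminus N[v]=h_2\cdots h_{m-2}$ does); making it rigorous is the main obstacle, and I expect it to require a careful tracking of how the consecutive neighbourhoods $N_H[\,\cdot\,]$ of the vertices of $H'$ progress around $H$ (Corollary~\ref{lem:neighbors-of-neighbors-in-holes}), together with the shortest-hole bound $|N_H(\,\cdot\,)|\le 3$ (Corollary~\ref{lem:no-more-than-3}), so as to exclude the ``shortcuts'' that would let $H'$ re-enter $N[v]$ without crossing $S$.
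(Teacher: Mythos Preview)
Your approach is essentially the paper's: the forward direction via Proposition~\ref{lem:interval-subgraph} plus Proposition~\ref{lem:non-simplicial-neighborhood-is-separator}, and the backward direction via the ``reconnection'' claim that every hole's remnant path lands on opposite sides of $v$. The paper organises this more cleanly by dropping minimality altogether: it proves directly that, for \emph{any} clique $X$ nonadjacent to $N[v]$, the set $X$ is a hole cover of $G_0$ if and only if $G_0-N[v]-X$ is disconnected. Since the hypotheses on $X$ are inherited by every subset, the minimal versions then follow immediately, and you avoid juggling the sharper $h_2$--$h_{m-2}$ statement against the lemma's weaker ``minimal separator'' conclusion.

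There is one genuine gap in your converse. Showing that an unbroken hole $H'$ reconnects $h_2$ and $h_{m-2}$ in $G'-X$ does \emph{not} by itself contradict ``$X$ is a separator of $G'$'': a priori $G'-X$ could be disconnected with $h_2,h_{m-2}$ in the same component. What the paper actually proves (and what plugs this gap) is stronger: for \emph{every} hole $H'$, the two ends of the path $H'\setminus N_{H'}[v]$ are adjacent to $\{h_1,h_2\}$ and $\{h_{-1},h_{-2}\}$ respectively, and since $X$ is a clique it removes at most two (consecutive) vertices from this path, leaving each surviving piece connected to $h_2$ or to $h_{-2}$. Hence every vertex of $G'-X$ reaches one of $h_2,h_{-2}$; if some hole avoids $X$, then its full remnant path bridges the two sides and $G'-X$ is connected. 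This two-sided landing of the ends is exactly your ``reconnection'' obstacle, and the paper dispatches it by a short case analysis using precisely the tools you anticipated: Corollary~\ref{lem:no-more-than-3} to bound $|N_H[\,\cdot\,]|$ and Corollary~\ref{lem:neighbors-of-neighbors-in-holes} (together with Proposition~\ref{lem:neighbors-of-neighbors}(1) and Proposition~\ref{lem:five-path}) to rule out both ends falling on the same side.
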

\begin{proof}
  It suffices to show that $X$ is a hole cover of $G_0$ if and only if
  it is a separator of $G_0 - N[v]$.

  $\Rightarrow$ Clearly, each component of $G_0 - N[v]$ contains a
  neighbor of $N[v]$. As $X$ is not adjacent to $N[v]$, the set $X$
  cannot fully contain a component of $G_0-N[v]$, which implies that
  the number of components of $G_0 - N[v] - X$ is no less than that of
  $G_0 - N[v]$.  Therefore, if $G_0 - N[v]$ is not connected, then
  neither is $G_0 - N[v] - X$, and $X$ makes a trivial separator for
  $G_0 - N[v]$.  In the following argument of this direction we may
  assume $G_0 - N[v]$ is connected, and it suffices to show that $G_0
  - N[v] - X$ is not connected.  By
  Proposition~\ref{lem:interval-subgraph}, $G_0-X$ is an interval
  subgraph; as $G_0$ itself contains no simplicial vertex, any vertex
  $x$ that is simplicial in $G_0-X$ must be a neighbor of $X$:
  otherwise $N_{G_0 - X}(x) = N_{G_0}(x)$ and cannot be a clique.  As
  $N[v]$ is not adjacent to $X$ by assumption, $v$ is not adjacent to
  any simplicial vertex of the interval graph $G_0-X$.  Therefore,
  according to
  Proposition~\ref{lem:non-simplicial-neighborhood-is-separator}, the
  removal of $N[v]$ disconnects $G_0-X$.  This finishes the proof of
  the ``only if'' direction.

    $\Leftarrow$ Let us start from a close scrutiny of $G_0 -
    N[v]$.  According to
    Proposition~\ref{lem:neighbors-are-consecutive}, the removal of
    $N[v]$ transforms each hole into a path of length at least $7$; in
    particular, let $P$ be the path induced by $H\setminus N_H[v]$. In
    the argument to follow, we show that ends of each such path are
    connected to the ends of $P$ respectively; the further removal of $X$
    separates each path into at most two sub-paths; hence if there is
    a hole disjoint from $X$, then the path left by it is able to
    connect every sub-path and thereby every vertex, which is
    impossible.

    We number the vertices of $H$ such that $v=v_0$.  Then $N_H[v] =
    \{v_{-1},v_0,v_1\}$ and the ends of $P$ are $v_{-2}$ and $v_2$.
    Let $H'$ be another hole of $G_0$, and $P'$ be the path induced by
    $H'\setminus N_{H'}[v]$.  We may number the vertices of $H'$ such
    that $N_{H'}[v] = \{h_1,\dots,h_p\}$.  As a result, the ends of
    $P'$ are $h_0$ and $h_{p+1}$.
    \begin{claim}
      The ends $h_0$ and $h_{p+1}$ of $P'$ are adjacent to
      $\{v_1,v_2\}$ and $\{v_{-1},v_{-2}\}$, respectively.  
    \end{claim}
    \begin{proof}
      By Corollary~\ref{lem:no-more-than-3},
      $N_H[h_1]\subset\{v_{-2},v_{-1}, v_0,v_1,v_2\}$, and according
      to Proposition~\ref{lem:neighbors-of-neighbors}(1), $h_0$ is
      adjacent to either $\{v_1,v_2\}$ or $\{v_{-1},v_{-2}\}$; a
      symmetric argument works for $h_{p+1}$.  Since the length of $H$
      is at least $11$, the sets $\{v_1,v_2\}$ and $\{v_{-1},v_{-2}\}$
      are disjoint.  It remains to show that the ends of $P'$ cannot
      be adjacent to both $\{v_1,v_2\}$ or both $\{v_{-1},v_{-2}\}$.
      Suppose, for contradiction, both $h_0$ and $h_{p+1}$ are
      adjacent to $\{v_1,v_2\}$.  We consider three cases.

      Suppose first that both $h_0$ and $h_{p+1}$ are adjacent to
      $v_1$.  According to
      Corollary~\ref{lem:neighbors-of-neighbors-in-holes} (applied on
      the adjacent vertices $v_0,v_1$ and the hole $H'$), at least one
      end of $N_{H'}[v_1]$ is in $N_{H'}[v_0]$, i.e.,
      $\{h_1,\dots,h_p\}$.  As a result, $N_{H'}[v_1]$ must contain
      all of $\{h_{p+1}, h_{p+2}, \dots, h_0\}$, and then
      $N_{H'}[v_0]\cup N_{H'}[v_1] = H'$.  This contradicts
      Proposition~\ref{lem:neighbors-of-neighbors}(1) and is
      impossible.

      Suppose now that $h_0$ and $h_{p+1}$ are both adjacent to $v_2$.
      Note that $v_2\not\in H'$; otherwise, since $v_0\not\sim v_2$,
      we must have $v_2 = h_{-1}$, and then ($v_0 h_1 h_0 v_2 h_{p+1}
      h_p v_0$) is a $6$-hole, which is impossible.  We can apply
      Proposition~\ref{lem:five-path}(1) on $v_2$, $v_0$, and path
      $(h_{-1} h_0 h_1\dots h_p h_{p+1} h_{p+2})$ to conclude $v_0\sim
      v_2$, which is impossible.

      In the remaining cases, $h_0$ and $h_{p+1}$ are adjacent to
      $v_1$ and $v_2$, respectively.  Without loss of generality, we
      consider $h_0\sim v_1$ and $h_{p+1}\sim v_2$.  Clearly $h_p \ne
      v_2$ as they have different adjacencies to $v_0$; likewise,
      $h_{p+1} \ne v_1$ and $h_{p+1} \ne v_2$.  We exclude $h_p =
      v_1$: then $p=1$ by $h_0\sim v_1$, and $v_0\sim h_0$ by
      Corollary~\ref{lem:neighbors-of-neighbors-in-holes}, which
      contradicts the numbering of $H'$.  Then $h_p\sim v_2$, as
      otherwise there is a hole $(v_2 v_1 v_0 h_p h_{p+1} v_2)$ or
      $(v_2 v_1 h_p h_{p+1} v_2)$; likewise, $h_p\sim v_1$.  It
      follows that, by Corollary~\ref{lem:no-more-than-3},
      $h_p\not\sim v_{-1}$, and $p>1$.  On the other hand,
      $h_0\not\sim v_{-1}$, as otherwise there is a hole $(h_0 v_{-1}
      v_0 v_1 h_0)$.  We can apply Proposition~\ref{lem:five-path}(3)
      on $v_1$, $v_{-1}$, and path $(h_{-1} h_0 h_1\dots h_p v_{p+1})$
      to conclude $v_{-1}\sim v_1$, which is impossible.

      A symmetric argument applies to $\{v_{-1},v_{-2}\}$.  
      \renewcommand{\qedsymbol}{$\lrcorner$}
    \end{proof}
    We may assume without loss of generality,
    $h_{p+1}\sim\{v_1,v_2\}$, and then $h_{0}\sim\{v_{-1},v_{-2}\}$.
    Let $\ell_1$ be the smallest index such that $\ell_1>p$ and
    $h_{\ell_1}\in N[v_2]$; for its existence, observe that $\ell_1 =
    p+1$ if $h_{p+1}\in N[v_2]$, otherwise by
    Corollary~\ref{lem:neighbors-of-neighbors-in-holes},
    $h_{\mathtt{end}_{H'}(v_1)}$ must be in $N[v_2]$.  For $p+1\le
    i\le \ell_1$, it holds that $h_i\sim v_1$, and since $v_1\in
    N[v]$, by assumption ($X\not\sim N[v]$), we have $h_i\not\in X$.
    Symmetrically, we can find the largest index $\ell_2$ such that
    $\ell_2\le |H|$ and $h_{\ell_2}\in N[v_{-2}]$.  For $\ell_2\le
    i\le |H|$, it holds that $h_i\sim v_{-1}$ and $h_i\not\in X$.  On
    the other hand, as $X$ is a clique, it contains at most two
    vertices of $P'$, which are in $\{h_{\ell_1+1}, \dots,
    h_{\ell_2-1}\}$.  Therefore, the removal of $X$ either leaves $P'$
    intact (when $X$ is disjoint from $P'$), or separates $P'$ into
    two sub-paths.  In the later case, the two sub-paths, containing
    $h_{\ell_1}$ and $h_{\ell_2}$ respectively, are connected to $v_2$
    and $v_{-2}$ respectively.

    To prove the ``if'' direction, we need to show that $X$ intersects
    every hole.  Suppose, for contradiction, that $X$ is disjoint from
    some hole $H_1$.  Then the path $P_1$ induced by $H_1\setminus
    N_{H_1}[v]$ remains a path of $G_0 - N[v] - X$.  Since $P_1$ is
    adjacent to both $v_2$ and $v_{-2}$, we conclude that $v_2$ and
    $v_{-2}$ are connected in $G_0 - N[v] - X$.  We have seen that for
    each hole $H'$, the vertices left from $H'$ after the removal of
    $N[v]$ and $X$ are connected to at least one of $v_2$ and
    $v_{-2}$.  Therefore, the subgraph $G_0 - N[v] - X$ is connected,
    contradicting the assumption that $X$ is a separator of $G_0 -
    N[v]$.  This finishes the proof of the ``if'' direction.
  \end{proof}

We are now ready to prove Theorem~\ref{thm:alg-reduced-with-holes}.
We remark that the quadratic bound can be improved to linear with more
careful analysis.
\paragraph{Theorem 2.3 (restated).}
  Every reduced graph of $n$ vertices contains at most $n^2$ minimal
  hole covers, and they can be enumerated in ${O}(n^3)$ time.
\begin{proof}
  Let $G_0$ be induced by the union of the holes of $G$.  On the one
  hand, according to Lemmas~\ref{lem:non-neighbor-of-hole-cover} and
  \ref{lem:hole-cover-is-separator}, each minimal hole cover of $G$
  corresponds to a minimal separator of the interval subgraph $G_0 -
  N[v]$ for some vertex $v$ of a shortest hole $H$.  On the other
  hand, there are at most $n$ minimal separators in $G_0 - N[v]$ for
  each vertex $v\in H$, which implies a quadratic bound for the total
  number of minimal hole covers of $G$.  To enumerate them, we try
  every vertex $v\in H$ and enumerate all minimal separators of
  $G_0-N[v]$.
  \end{proof}

\section{Caterpillar decompositions}
\label{sec:caterp-decomp}
This section proves Theorem~\ref{thm:alg-reduced-and-chordal} by
providing the claimed algorithm for {\sc interval deletion} on nice
graphs.  Recall that a nice graph is chordal and contains no small AW,
and every shallow terminal in a nice graph is simplicial; nice graphs
are hereditary.  Our algorithm finds an AW satisfying a certain
minimality condition, from which we can construct a set of ten
vertices that intersects some minimum interval deletion set.  Hence it
branches on deleting one of these ten vertices.  The set of all shallow
terminals, denoted by $ST(G)$, can be found in polynomial time as
follows.  For each triple of vertices, we check whether or not they
forms the terminals for an AW.  If yes, then one of them is
necessarily shallow.  The following lemma ensures that all shallow
terminals can be found as such.

\begin{proposition}
\label{lem:shallow-is-shallow}
  In a nice graph, all AWs with the same set of terminals have the
  same shallow terminal.
\end{proposition}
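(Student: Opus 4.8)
Within a single AW $W$ the shallow terminal is determined intrinsically: it is the only terminal $t$ for which $W-N_W[t]$ is an induced path, since deleting the closed neighborhood of a base terminal leaves the center(s) of $W$ with too many neighbors to form a path. So the real content of the proposition is to relate two \emph{different} AWs on the same terminal set, and the plan is to do this through the module/separator structure around shallow terminals given by Theorem~\ref{thm:shallow-is-module}, deriving a contradiction from chordality. Suppose for contradiction that a triple $T=\{x,y,z\}$ is the terminal set of AWs $W_1$ and $W_2$ with distinct shallow terminals, $x$ for $W_1$ and $y$ for $W_2$; then $z$ is a base terminal of both, $y$ is a base terminal of $W_1$, and $x$ is a base terminal of $W_2$. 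Since $G$ is nice it is prereduced and chordal, so each $W_i$ is a $\dag$- or $\ddag$-AW, for which we use the unified center notation. Because $W_i$ is an induced subgraph of $G$, and inside an AW the shallow terminal misses every base vertex and both base terminals, the path through the base and base terminals of $W_1$ is an induced $y$-$z$ path $P_1$ in $G$, of length at least $7$, avoiding $N_G[x]$; symmetrically, that of $W_2$ is an induced $x$-$z$ path $P_2$ in $G$, of length at least $7$, avoiding $N_G[y]$.

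Next I would place $P_1$ and $P_2$ against the clique separators from Theorem~\ref{thm:shallow-is-module}. For $W_1$ it produces a clique $C_1=N(x)\cap N(B_1)$ such that the component $M_1$ of $G-C_1$ containing $x$ is completely connected to $C_1$; in particular $C_1$ contains the center(s) of $W_1$, and since every base terminal of $W_1$ fails to be adjacent to at least one of those centers, neither $y$ nor $z$ lies in $M_1$, so $C_1$ separates $x$ from $B_1\cup\{y,z\}$. Let $b$ be the neighbor of $x$ on $P_2$. As $P_2$ joins $x\in M_1$ to $z$, which lies in another component of $G-C_1$, it meets $C_1$; but $P_2$ is an induced path with endpoint $x$, so its only vertex in $N(x)\supseteq C_1$ is $b$, whence $b\in C_1$. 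By the symmetric argument the neighbor $b'$ of $y$ on $P_1$ lies in $C_2=N(y)\cap N(B_2)$. Consequently $b$ is adjacent to every vertex of $C_1$, so to a center $c$ of $W_1$, and $b'$ is adjacent to a center $c'$ of $W_2$; moreover $c$ is adjacent to all of $B_1$ (so to $b'$) and $c'$ is adjacent to all of $B_2$ (so to $b$).

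The final step, which I expect to be the main obstacle, is to turn these incidences into a forbidden configuration. Already $c\sim b$, $c\sim b'$, $c'\sim b$ and $c'\sim b'$ exhibit a $4$-cycle on $\{c,b,c',b'\}$, so one would like to rule out its two possible chords $c\sim c'$ and $b\sim b'$; if a chord is present, the fallback is the long cycle obtained by gluing $P_1$ and $P_2$ at their common end $z$ and closing up through $c$ (adjacent to a base vertex of $W_1$ and to $b\in C_1$), or through $c'$. One then passes to a shortest cycle among the vertices involved and argues --- using that $P_1,P_2$ are induced, that $C_1$ (resp.\ $C_2$) separates $x$ (resp.\ $y$) from the bases, that $z$ has a unique neighbor in each base, and Proposition~\ref{lem:five-path} for the chords that emanate from the centers --- that this shortest cycle still has length at least $4$, hence is a hole, contradicting chordality of $G$. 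Carrying the $\dag$ and $\ddag$ cases uniformly through the generalized center notation, never invoking a center's adjacency to its own base terminal, is what keeps this case analysis manageable; that bookkeeping, together with the chord control in the last step, is where the real work lies.
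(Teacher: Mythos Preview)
Your proposal is incomplete precisely where you say it is: the ``final step'' is only a plan. You correctly establish $b\in C_1$, $b'\in C_2$, and the four adjacencies $c\sim b$, $c\sim b'$, $c'\sim b$, $c'\sim b'$, but you never rule out $c=c'$, nor the chords $c\sim c'$ and $b\sim b'$, and the fallback of gluing $P_1$ and $P_2$ at $z$ and controlling chords via Proposition~\ref{lem:five-path} is asserted rather than carried out. (For what it is worth, $b\ne b'$ does follow easily since $b\in B_2$ is nonadjacent to the shallow terminal $y$ of $W_2$ while $b'\sim y$; but $c=c'$ and $c\sim c'$ genuinely require further argument.) So as written this is a sketch, not a proof.

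More to the point, the paper's argument avoids all of this machinery. It uses neither chordality nor Theorem~\ref{thm:shallow-is-module}, only the prereduced property. The observation is that for any AT $\{x,y,z\}$ in $G$, some pair---say $y,z$---has every path in $G-N[x]$ of length at least $4$: if all three defining paths could be taken of length at most $3$, their union would carry an AW on at most $9$ vertices, contradicting prereducedness. But in any AW with terminals $\{x,y,z\}$, the shallow terminal $s$ reaches each of the other two terminals by a path of length at most $3$ through the center(s) avoiding the neighborhood of the remaining terminal (for a $\ddag$-AW this path is $s\,c_2\,r$ or $s\,c_1\,l$; for a $\dag$-AW it is $s\,c\,b_d\,r$ or $s\,c\,b_1\,l$). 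Hence $s\notin\{y,z\}$, so $s=x$ for every such AW. This pins down the shallow terminal directly from the triple, without ever comparing two AWs, and turns the proposition into two lines.
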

\begin{proof}
  Of any AT $\{x,y,z\}$, there must be a vertex, say,
  $x$, such that the shortest $y$-$z$ path in $G - N[x]$ has length at
  least $4$, as otherwise there is an AW of size at most $9$, which
  contradicts the definition of nice graphs.  Therefore, neither $y$
  nor $z$ can be the shallow terminal in an AW with terminals
  $\{x,y,z\}$.
\end{proof}
It should be noted that this does not rule out the possibility of a
vertex being a base terminal of an AW and the shallow terminal of
another AW.  If this happens, these AWs necessarily have at least one
different terminal.  Recall that by
Theorem~\ref{thm:reduced-instances-ap}, every vertex in $ST(G)$ is
simplicial in $G$.  For each $\dag$- or $\ddag$-AW, its shallow
terminal is in $ST(G)$ by definition, its base terminals might or
might not be in $ST(G)$, and none of the non-terminal vertices can be
in $ST(G)$ (as they are not simplicial).  From Lemma~\ref{lem:shallow}
we can derive
\begin{proposition}\label{lem:min-at}
  Let $s$ be a shallow terminal in a nice graph.  There is an AW of
  which every base vertex
  is adjacent to all vertices of $N(s)\setminus ST(G)$.
\end{proposition}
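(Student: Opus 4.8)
The plan is to start from an arbitrary AW with shallow terminal $s$ and repeatedly shrink its base via Lemma~\ref{lem:shallow} until every vertex of $N(s)\setminus ST(G)$ is forced into category ``full.''

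First I would fix any $\dag$- or $\ddag$-AW $W$ with shallow terminal $s$; such an AW exists because $s\in ST(G)$, and in a nice, hence prereduced, graph every AW is of one of these two types, so Lemma~\ref{lem:shallow} applies. Consider a vertex $x\in N(s)\setminus ST(G)$ and its category with respect to $W$ in the sense of Lemma~\ref{lem:shallow}(2). I claim $x$ cannot be in category ``none'': otherwise Lemma~\ref{lem:shallow}(2) tells us that replacing $s$ by $x$ in $W$ yields another AW, in which $x$ plays the role of the shallow terminal; then $x\in ST(G)$, contradicting the choice of $x$. Hence every vertex of $N(s)\setminus ST(G)$ is either ``full'' or ``partial'' with respect to $W$, and the same dichotomy holds with respect to \emph{every} AW whose shallow terminal is $s$, since the argument uses only the shallow terminal and its (unchanged) neighborhood.

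Next I would run the following procedure. While some $x\in N(s)\setminus ST(G)$ is in category ``partial'' with respect to the current AW $W$ (with shallow terminal $s$ and base $B$), apply Lemma~\ref{lem:shallow}(2) to obtain an AW $W'$ whose shallow terminal is again $s$, one of whose centers is $x$, and whose base $B'$ is a proper sub-path of $B$; replace $W$ by $W'$ and repeat. The size of the base strictly decreases at each iteration and stays a nonnegative integer, so the procedure terminates. By the observation of the previous paragraph, throughout the procedure every vertex of $N(s)\setminus ST(G)$ is in category ``full'' or ``partial'' with respect to the current AW.

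When the procedure halts with an AW $W$, no vertex of $N(s)\setminus ST(G)$ is ``partial'' with respect to $W$, and none is ``none'' either; hence every such vertex is in category ``full,'' i.e.\ adjacent to all base vertices of $W$. Equivalently, every base vertex of $W$ is adjacent to every vertex of $N(s)\setminus ST(G)$, which is exactly the claim. I do not expect a genuine obstacle here: the only point requiring a little attention is that shrinking the base preserves the ``full-or-partial, never-none'' dichotomy for vertices of $N(s)\setminus ST(G)$, which is immediate because all the AWs produced along the way share the shallow terminal $s$.
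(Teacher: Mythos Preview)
Your proposal is correct and takes essentially the same approach as the paper: both argue that vertices of $N(s)\setminus ST(G)$ cannot be in category ``none'' (else they would be shallow terminals) and use the ``partial'' clause of Lemma~\ref{lem:shallow} to shrink the base. The only cosmetic difference is that the paper picks an AW with shallow terminal $s$ and \emph{shortest possible base} up front---so ``partial'' is immediately ruled out by minimality---whereas you reach the same fixed point by iterated shrinking; these are equivalent phrasings of the same extremal argument.
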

\begin{proof}
  Let $W$ be an AW with shallow terminal $s$ and shortest possible
  base.  Applying Lemma~\ref{lem:shallow} on any vertex $x\in
  N(s)\setminus ST(G)$ and $W$, it cannot be in category ``partial''
  by the minimality of $W$. Vertex $x$ cannot be in category ``none''
  either, otherwise $x$ is a shallow terminal, contradicting $x\in
  N(s)\setminus ST(G)$. Thus every vertex in $N(s)\setminus ST(G)$ is
  in category ``full.''
\end{proof}

Now that the graph is chordal, it makes sense to discuss its clique
tree, which shall be the main structure of this section.  No
generality will be lost by assuming $G$ is connected.  Since no inner
vertex of a shortest path can be simplicial, the removal of simplicial
vertices will not disconnect a connected graph; hence $G-ST(G)$ is a
connected interval graph.  This observation suggests a clique tree of
$G$ with a very nice structure.  A \emph{caterpillar (tree)} is a tree
that consists of a central path and all other vertices are leaves
connected to it.
\begin{proposition}
  \label{lem:caterpillar}
  In polynomial time we can construct a clique tree ${\cal T}$ for a
  connected nice graph $G$ such that
  \begin{itemize}
  \item ${\cal T}$ is a caterpillar;
  \item every shallow terminal of $G$ appears only in one leaf node of
    ${\cal T}$; and
  \item every other vertex in $G$ appears in some nodes of the central
    path of ${\cal T}$.
  \end{itemize}
\end{proposition}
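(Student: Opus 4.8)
The plan is to obtain $\mathcal{T}$ from a clique path of $G' := G - ST(G)$ by attaching one extra leaf for each class of shallow terminals sharing a common closed neighborhood. Since $G'$ is a connected interval graph (as noted just before the statement), Theorem~\ref{thm:linear-tree} provides a linear order $K'_1, \dots, K'_m$ of its maximal cliques such that, for every vertex, the cliques containing it occur consecutively; joining consecutive cliques then yields a clique path of $G'$, which will be the spine (central path) of $\mathcal{T}$. By Theorem~\ref{thm:reduced-instances-ap} every $s \in ST(G)$ is simplicial, so $N_G[s]$ is the unique maximal clique of $G$ containing $s$; hence any two shallow terminals lying in a common maximal clique have the same closed neighborhood. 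Therefore $ST(G)$ partitions into classes $S_1, \dots, S_t$, where the vertices of $S_j$ share a common closed neighborhood $K_j$, a maximal clique of $G$; put $R_j := K_j \setminus ST(G)$, which is a clique of $G'$. I will attach each $K_j = R_j \cup S_j$ as a new leaf to a spine node $K'_{i(j)}$ with $R_j \subseteq K'_{i(j)}$.

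The step I expect to be the main obstacle is showing that $R_j$ is never itself a maximal clique of $G'$; otherwise $K_j$ would properly contain a spine node and could not be attached as a genuine leaf. For this I invoke Proposition~\ref{lem:min-at}: fix $s \in S_j$ and an AW $W$ with shallow terminal $s$ all of whose base vertices are adjacent to every vertex of $N(s) \setminus ST(G) = R_j$. The base vertex $b_1$ is non-terminal, hence not simplicial, so $b_1 \in V(G')$; and $b_1 \not\sim s$, so $b_1 \notin N(s) \supseteq R_j$. Thus $R_j \cup \{b_1\}$ is a strictly larger clique of $G'$, so $R_j$ is not maximal in $G'$; in particular $R_j$ lies inside some maximal clique of $G'$, so a suitable $K'_{i(j)}$ exists and can be located by scanning $K'_1, \dots, K'_m$. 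Together with the polynomial-time computation of $ST(G)$ (by checking triples, as described earlier) and of the clique path of $G'$, the construction is polynomial.

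It remains to verify that $\mathcal{T}$, whose nodes are $K'_1, \dots, K'_m$ together with $K_1, \dots, K_t$, is a clique tree of $G$ of the asserted shape. These are exactly the maximal cliques of $G$: a maximal clique of $G$ disjoint from $ST(G)$ is a maximal clique of $G'$, hence some $K'_i$, and conversely each $K'_i$ is still maximal in $G$ (a vertex of $ST(G)$ adjacent to all of $K'_i$ would force $K'_i \subseteq R_{j'} \subsetneq K'_l$ for some maximal clique $K'_l$ of $G'$, using that $R_{j'}$ is not maximal, contradicting the maximality of $K'_i$); and a maximal clique of $G$ meeting $ST(G)$ equals $N_G[s]$ for some $s$, i.e.\ some $K_j$. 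Next I check the subtree property, which by standard clique-tree theory makes $\mathcal{T}$ a clique tree: every $s \in S_j$ lies only in the node $K_j$; and for $v \notin ST(G)$ the spine nodes containing $v$ form a consecutive subpath $K'_a, \dots, K'_b$ by Theorem~\ref{thm:linear-tree}, while every leaf $K_j$ containing $v$ has $v \in R_j$ (as $v \notin ST(G)$) and hence $v \in K'_{i(j)}$, so $a \le i(j) \le b$ and this leaf hangs off the subpath; thus the cliques of $G$ containing $v$ induce a subtree. Finally, the spine is a path and every other node is a leaf, so $\mathcal{T}$ is a caterpillar, each shallow terminal appears only in the single leaf node of its class, and every non-shallow-terminal vertex appears in a spine node, as required.
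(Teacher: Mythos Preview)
Your proof is correct and follows essentially the same approach as the paper: build the spine from a clique path of $G-ST(G)$, attach each $N[s]$ as a leaf, and use Proposition~\ref{lem:min-at} to guarantee that $N[s]\setminus ST(G)$ is strictly contained in some spine node. Your version is a bit more explicit---grouping shallow terminals by closed neighborhood and verifying the subtree property directly---but the key idea and the crucial use of Proposition~\ref{lem:min-at} match the paper's argument.
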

\begin{proof}
  Let us inspect every maximal clique $K$ of $G$.  If $K$ contains
  some shallow terminal $s$, then $K$ must be $N[s]$: Being a clique,
  $K \subseteq N[s]$; this containment cannot be proper as $K$ is
  maximal and $N[s]$ induces a clique.  Otherwise, $K \cap ST(G) =
  \emptyset$, then $K$ is also a maximal clique of $G - ST(G)$.  On
  the other hand, a maximal clique $K'$ of $G - ST(G)$ has to be a
  maximal clique of $G$ as well; otherwise it is a proper subset of
  some maximal clique $K$ of $G$ that must contain a shallow terminal
  $s$, hence $K=N[s]$ and $K'\subseteq N[s]\setminus ST(G)$, which,
  however, according to Proposition~\ref{lem:min-at}, cannot be
  maximal in $G - ST(G)$.  Therefore, a maximal clique of $G$ is
  either $N[s]$ for some vertex $s\in ST(G)$ or a maximal clique of
  $G-ST(G)$. We construct the claimed clique tree as follows.  First
  use Theorem~\ref{thm:linear-tree} to make a clique path ${\cal T'}$
  for the interval subgraph $G - ST(G)$, then for each $s\in ST(G)$,
  attach $N[s]$ as a leave to ${\cal T'}$ at some maximal clique that
  properly contains $N[s]\setminus ST(G)$ (arbitrarily pick from
  multiple choices).
\end{proof}

Within a caterpillar decomposition, we number the nodes in the central
path as $K_0,K_1,\dots$.  By Proposition~\ref{lem:caterpillar} and the
definition of clique trees, each vertex not in $ST(G)$ is contained in
some consecutive nodes of the central path.  For each vertex $v\not\in
ST(G)$, we denote by $\mathtt{first}(v)$ and $\mathtt{last}(v)$ the
smallest and, respectively, largest indices of nodes that contain $v$.
In any $\dag$- or $\ddag$-AW, every vertex of the base is
non-simplicial, hence belongs to the central path of the caterpillar
decomposition.  By assumption, $d = |B|\ge 3$ and $b_1 \not\sim b_d$;
as a result, the nodes that contain $b_1$ and $b_d$ are disjoint.
When numbering the vertices of the base, we follow the convention that
$\mathtt{last}(b_1) < \mathtt{first}(b_d)$, i.e., base $B$ goes ``from
left to right.''  Given a numbering of the base, the base terminals
$l$ and $r$ can be distinguished from each other based on their
adjacency with $b_1$ and $b_d$. Similarly, in the case of a
$\ddag$-AW, the centers $c_1$ and $c_2$ can be distinguished from each
other, as they have different adjacency relations with $l$ and
$r$.\footnote{Note that we are not relying on the relation between
  $\mathtt{first}(c_1)$ and $\mathtt{first}(c_2)$ or that between
  $\mathtt{last}(c_1)$ and $\mathtt{last}(c_2)$, and they will not
  matter in the proofs to follow.}

By observing the adjacencies and nonadjacencies between vertices of an
AW and their possible positions in an interval representation of $G -
ST(G)$, the following is straightforward and hence stated here without
proof.  In order to avoid pointless repetition, we are again using the
same generalized notation for both $\dag$- and $\ddag$-AW as
stipulated in Section~\ref{sec:reduct-rules-branch}.

\begin{proposition}\label{lem:at-indices}
  Let $(s:c_1,c_2:l,B,r)$ be a $\dag$- or $\ddag$-AW in a nice graph
  $G$.  In a caterpillar decomposition of $G$,
  \begin{align}\label{eq:big-picture}
    \big( \mathtt{first}(b_1) \leq& \mathtt{last}(l) < \!\!\big)\;
    \mathtt{first}(c_2), \mathtt{first}(b_2) \leq \mathtt{last}(b_1)
    < \dots&
    \nonumber\\
    \leq& \mathtt{first}(b_{i}) \leq \mathtt{last}(b_{i-1}) <
    \mathtt{first}(b_{i+1})
    \leq \mathtt{last}(b_{i}) < \mathtt{first}(b_{i+2})& \nonumber\\
    \leq \dots <& \mathtt{first}(b_d) \leq \mathtt{last}(b_{d-1}),
    \mathtt{last}(c_1) \;\big( \!\!< \mathtt{first}(r) \leq
    \mathtt{last}(b_d) \big),&
  \end{align}
where relations in parentheses only hold when $l \not\in ST(G)$ and $r
\not\in ST(G)$, respectively.
\end{proposition}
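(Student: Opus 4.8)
The plan is to reduce everything to the interval representation of $G-ST(G)$ carried by the central path of $\mathcal{T}$. First I would record that, by the construction in the proof of Proposition~\ref{lem:caterpillar}, the central path $K_0,K_1,\dots$ of $\mathcal{T}$ is exactly a clique path of the connected interval graph $G-ST(G)$; hence, assigning to every $v\notin ST(G)$ the (index) interval $I(v):=[\mathtt{first}(v),\mathtt{last}(v)]$ gives an interval representation of $G-ST(G)$ in which, for $u,v\notin ST(G)$, $u\sim v$ if and only if $I(u)\cap I(v)\neq\emptyset$ (the clique-path/consecutiveness property of interval graphs, Theorem~\ref{thm:linear-tree}). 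Every non-terminal vertex of the AW, i.e.\ the center(s) $c_1,c_2$ and the base vertices $b_1,\dots,b_d$, is non-simplicial in $G$: each of them has two non-adjacent neighbours among $\{b_1,\dots,b_d,c_1,c_2\}$, using $d\ge 3$ (so $b_1\not\sim b_d$). Hence these vertices lie outside $ST(G)$ and have well-defined intervals; only the base terminals $l,r$ may belong to $ST(G)$, which is precisely why the relations mentioning $\mathtt{last}(l)$ and $\mathtt{first}(r)$ appear in parentheses.

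Next I would extract the "spine" of \eqref{eq:big-picture}, the part involving only $b_1,\dots,b_d$, from the fact that $b_1,\dots,b_d$ induces a path: $b_i\sim b_{i+1}$ and $b_i\not\sim b_j$ for $|i-j|\ge 2$. As is routine for induced paths in an interval representation, the intervals $I(b_1),\dots,I(b_d)$ form a monotone family up to reversal, and the convention $\mathtt{last}(b_1)<\mathtt{first}(b_d)$ fixes the orientation so that $\mathtt{first}(b_1)\le\mathtt{first}(b_2)\le\dots\le\mathtt{first}(b_d)$ and $\mathtt{last}(b_1)\le\dots\le\mathtt{last}(b_d)$. Then $\mathtt{first}(b_i)\le\mathtt{last}(b_{i-1})$ comes from $b_{i-1}\sim b_i$, the relation $\mathtt{first}(b_{i+1})\le\mathtt{last}(b_i)$ from $b_i\sim b_{i+1}$, and the two strict inequalities $\mathtt{last}(b_{i-1})<\mathtt{first}(b_{i+1})$ and $\mathtt{last}(b_i)<\mathtt{first}(b_{i+2})$ from $b_{i-1}\not\sim b_{i+1}$ and $b_i\not\sim b_{i+2}$ together with disjointness of integer intervals and the fixed orientation. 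This yields the chain from $\mathtt{first}(b_2)\le\mathtt{last}(b_1)$ all the way to $\mathtt{first}(b_d)\le\mathtt{last}(b_{d-1})$.

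Finally I would pin down the centers and base terminals. Since $c_1$ and $c_2$ are each adjacent to every base vertex, $I(c_2)\cap I(b_1)\neq\emptyset$ and $I(c_1)\cap I(b_d)\neq\emptyset$ immediately give $\mathtt{first}(c_2)\le\mathtt{last}(b_1)$ and $\mathtt{first}(b_d)\le\mathtt{last}(c_1)$. Suppose now $l\notin ST(G)$. From $l\sim b_1$ we get $\mathtt{first}(b_1)\le\mathtt{last}(l)$ and $\mathtt{first}(l)\le\mathtt{last}(b_1)$; from $l\not\sim b_2$ the intervals $I(l)$ and $I(b_2)$ are disjoint, and since $\mathtt{first}(l)\le\mathtt{last}(b_1)\le\mathtt{last}(b_2)$, the interval $I(l)$ must lie to the left of $I(b_2)$, so $\mathtt{last}(l)<\mathtt{first}(b_2)$; for $\mathtt{last}(l)<\mathtt{first}(c_2)$ note $l\not\sim c_2$ (for a $\dag$-AW, $c_2=c\not\sim l$; for a $\ddag$-AW, $c_2\not\sim l$), so $I(l)\cap I(c_2)=\emptyset$, and if $I(c_2)$ were to the left of $I(l)$ then, using $c_2\sim b_d$ and the convention, $\mathtt{first}(b_d)\le\mathtt{last}(c_2)<\mathtt{first}(l)\le\mathtt{last}(b_1)<\mathtt{first}(b_d)$, which is absurd; hence $I(l)$ lies left of $I(c_2)$. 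A symmetric argument at the right end, using $r\sim b_d$, $r\not\sim b_{d-1}$, and $c_1\not\sim r$, gives (when $r\notin ST(G)$) $\mathtt{first}(r)\le\mathtt{last}(b_d)$, $\mathtt{last}(b_{d-1})<\mathtt{first}(r)$, and $\mathtt{last}(c_1)<\mathtt{first}(r)$. Concatenating these inequalities with the spine gives \eqref{eq:big-picture}. There is no real obstacle here — this is why the authors state the proposition without proof — the only mildly delicate ingredients being that adjacency of non-shallow vertices equals intersection of their central-path intervals and that an induced path realises a monotone interval family; the one point requiring care is to invoke only the adjacencies and non-adjacencies legitimate in the unified $\dag$/$\ddag$ notation, i.e.\ never to use $c_1\sim l$, $c_2\sim r$, or $c_1\sim c_2$.
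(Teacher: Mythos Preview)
Your approach is exactly the one the paper intends: the paper states this proposition without proof, merely noting that it follows from ``observing the adjacencies and nonadjacencies between vertices of an AW and their possible positions in an interval representation of $G-ST(G)$,'' which is precisely the reduction to the clique path of $G-ST(G)$ that you carry out. There is one small slip in your justification that every non-terminal vertex is non-simplicial: your claimed witness ``two non-adjacent neighbours among $\{b_1,\dots,b_d,c_1,c_2\}$'' does not work for $b_1$ (and symmetrically $b_d$), since the neighbours of $b_1$ in that set---namely $b_2$ and the centre(s)---are pairwise adjacent; the correct witnesses are $l$ and $b_2$ (respectively $r$ and $b_{d-1}$). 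This does not affect the remainder of the argument.
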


Nodes that contain non-terminal vertices of an AW appear consecutively
in the central path of ${\cal T}(G)$. We would like to identify a
minimum set of consecutive nodes whose union contains all non-terminal
vertices of the AW.

\begin{definition}
  Let ${\cal T}$ be a caterpillar decomposition of a nice graph $G$.
  We define $\gemini[{p,q}] = \bigcup_{p \leq i \leq q} K_i$ for a
  pair of indices $p \leq q$, and $\gemini(W) =
  \gemini[{\mathtt{last}(b_1),\mathtt{first}(b_d)}]$ for an AW $W$.
  Set $\gemini(W)$ will be referred to as the \emph{container} of $W$,
  and we say it is \emph{minimal} if there exists no AW $W'$ such that
  $\gemini(W') \subset \gemini(W)$.
\end{definition}

Let us observe that every base vertex of $W$ appears in $\gemini(W)$
and no shorter subsequence of nodes contains every base vertex.
Moreover, the following proposition shows that the centers also appear
in $\gemini(W)$ (recall that $\widehat N(B)$ is the set of common
neighbors of $B$ and every center is in $\widehat N(B)$).
\begin{proposition}
  $K_{\mathtt{last}(b_1)}\cap K_{\mathtt{first}(b_d)} = \widehat
  N(B)$.
\end{proposition}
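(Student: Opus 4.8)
The plan is to compare, on each side of the claimed equality, the set of central-path indices occupied by the vertices involved. Write $p=\mathtt{last}(b_1)$ and $q=\mathtt{first}(b_d)$. First I would record two routine facts. Since $d\ge 3$ and the base path $l\,b_1\cdots b_d\,r$ is chordless, $b_1\not\sim b_d$, so the sets of nodes containing $b_1$ and $b_d$ are disjoint and the numbering convention gives $p<q$; and since the maximal cliques containing a fixed vertex induce a subtree of ${\cal T}$, while the path in ${\cal T}$ joining the central-path nodes $K_p,K_q$ is exactly $K_p,K_{p+1},\dots,K_q$, we have $K_p\cap K_q=\bigcap_{p\le j\le q}K_j$ — the vertices lying in \emph{every} node from $K_p$ to $K_q$. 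I would also note up front that each base vertex is non-simplicial (for $b_1$ take the non-adjacent neighbours $l,b_2$; for an interior $b_i$ take $b_{i-1},b_{i+1}$; for $b_d$ take $b_{d-1},r$) and that each vertex of $\widehat N(B)$ is non-simplicial (being adjacent to the non-adjacent pair $b_1,b_3$), so all of these vertices avoid $ST(G)$ and therefore occupy a contiguous block of central-path nodes; recall also that, by the construction in Proposition~\ref{lem:caterpillar}, the central-path nodes are precisely the maximal cliques of $G-ST(G)$.

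For $\widehat N(B)\subseteq K_p\cap K_q$: given $x\in\widehat N(B)$, both $x$ and $b_1$ are vertices of $G-ST(G)$ and adjacent there, so the edge $xb_1$ lies in a maximal clique of $G-ST(G)$, i.e.\ in a central-path node $K_j$ with $j\le\mathtt{last}(b_1)=p$; thus $\mathtt{first}(x)\le p$. Symmetrically, from $x\sim b_d$ I get $\mathtt{last}(x)\ge\mathtt{first}(b_d)=q$. Because $x$'s nodes are contiguous, $x\in K_j$ for all $p\le j\le q$, and in particular $x\in K_p\cap K_q$.

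For $K_p\cap K_q\subseteq\widehat N(B)$: take $v\in K_p\cap K_q$, so $v\in K_j$ for every $p\le j\le q$. Proposition~\ref{lem:at-indices} yields $\mathtt{first}(b_1)<\cdots<\mathtt{first}(b_d)=q$ and $p=\mathtt{last}(b_1)<\cdots<\mathtt{last}(b_d)$; hence for each $i$ one has $\mathtt{first}(b_i)\le q$ and $\mathtt{last}(b_i)\ge p$, so the block $\{\mathtt{first}(b_i),\dots,\mathtt{last}(b_i)\}$ meets $\{p,\dots,q\}$ in some index $j_i$, giving $b_i\in K_{j_i}$. These inequalities also show that no $b_i$ lies in $K_p\cap K_q$: for $i\ge 3$, $\mathtt{first}(b_i)\ge\mathtt{first}(b_3)>\mathtt{last}(b_1)=p$, so $b_i\notin K_p$; for $i\in\{1,2\}$, $\mathtt{last}(b_i)\le\mathtt{last}(b_2)<\mathtt{first}(b_4)\le q$ (here $d\ge 4$ is used), so $b_i\notin K_q$. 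Therefore $v\ne b_i$, and since $v$ and $b_i$ both belong to the clique $K_{j_i}$ it follows that $v\sim b_i$. As $i$ was arbitrary, $v\in\widehat N(B)$, completing the proof.

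I do not foresee a real obstacle; the argument is bookkeeping with indices. The only things requiring care are (i) verifying that all the vertices in play ($x$, the $b_i$, and any $v\in K_p\cap K_q$) lie outside $ST(G)$, so that both the ``contiguous block of central-path nodes'' description and the identification of central-path nodes with the maximal cliques of $G-ST(G)$ apply; (ii) reading off the needed monotonicities of $\mathtt{first}$ and $\mathtt{last}$ along the base from Proposition~\ref{lem:at-indices}; and (iii) using $d\ge 6$ (which holds since $G$ is prereduced) so that the small-$d$ degeneracies do not interfere, in particular so that no base vertex itself lands in $K_p\cap K_q$.
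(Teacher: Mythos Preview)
Your proof is correct and follows essentially the same route as the paper's: both use the clique-tree subtree property to identify $K_p\cap K_q$ with the vertices spanning all central-path nodes between $K_p$ and $K_q$, and then read off adjacency to every $b_i$ from the positions given by Proposition~\ref{lem:at-indices}. Your version is simply more explicit—you spell out why each $b_i$ meets the range $[p,q]$, why no $b_i$ itself lies in $K_p\cap K_q$, and why all relevant vertices avoid $ST(G)$—whereas the paper compresses the argument to two sentences (and in fact has a typo, writing $\mathtt{last}(v)<\mathtt{last}(b_d)$ where $\mathtt{last}(v)<\mathtt{first}(b_d)$ is meant).
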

\begin{proof}
  By definition, a vertex of the left side is in $K_i$ for every
  $\mathtt{last}(b_1)\le i\le \mathtt{first}(b_d)$, and thus belongs
  to $\widehat N(B)$.  On the other hand, if a vertex $v$ does not
  belong to the left side, then either $\mathtt{first}(v) >
  \mathtt{last}(b_1)$ or $\mathtt{last}(v) < \mathtt{last}(b_d)$,
  which implies $v\not\sim b_1$ or $v\not\sim b_d$ respectively.  In
  either case, we have $v\not\in\widehat N(B)$.
\end{proof}
In Section~\ref{sec:holes-in-modules}, we considered holes of the
shortest length and observed that a vertex sees either all or at most
three vertices in such a hole.  Here for an AW whose container is
minimal and base consists of the inner vertices of a shortest $l$-$r$
path specified below, we can observe an analogous statement about the
number of base vertices a vertex can see.

\begin{definition}
  Let $W=(s:c_1,c_2:l,B,r)$ be an AW in a nice graph such that
  $\gemini(W)$ is minimal.  We say $B$ is a \emph{short} base if $(l B
  r)$ is a shortest $l$-$r$ path in the subgraph induced by
  $\big(\gemini(W) \setminus \widehat N(B) \big) \cup \{l, r\}$.
\end{definition}

The following lemma shows that if the base is not short, then we can
get an AW with a shorter base. In particular, this implies that a
vertex of $\gemini(W)\setminus \widehat N(B)$ can see at most $3$
consecutive vertices of the base.  
\begin{lemma}
\label{lem:shortbase}
  Let $W=(s;c_1,c_2;l,B,r)$ be an AW such that $\gemini(W)$ is
  minimal. Then there is an $W'$ such that $\gemini(W')=\gemini(W)$
  and $W'$ has a short base.
\end{lemma}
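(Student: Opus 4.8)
The plan is to keep the shallow terminal $s$, the center(s) $c_1,c_2$, and the base terminals $l,r$ of $W$, and replace only the base. Write $H$ for the subgraph of $G$ induced by $\big(\gemini(W)\setminus\widehat N(B)\big)\cup\{l,r\}$. Since every $b_i$ lies in $\gemini(W)$ but, being nonadjacent to itself, not in $\widehat N(B)$, the sequence $(l\,b_1\cdots b_d\,r)$ is a chordless $l$-$r$ path inside $H$, so $l$ and $r$ are connected there; I would fix a shortest $l$-$r$ path $Q=(l=u_0,u_1,\dots,u_m=r)$ in $H$ and set $B'=(u_1,\dots,u_{m-1})$, $W'=(s:c_1,c_2:l,B',r)$. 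It then suffices to show (a) $W'$ is an AW of the same type as $W$, and (b) $\gemini(W')=\gemini(W)$; the short-base property of $W'$ falls out of these.

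The core is (a). Because $Q$ is a shortest path, $(l\,B'\,r)$ is chordless, so the edges and non-edges along the base required for a $\dag$- or $\ddag$-AW hold. Every center of $W$ lies in $\widehat N(B)$, which by the preceding proposition equals $K_{\mathtt{last}(b_1)}\cap K_{\mathtt{first}(b_d)}$ and is therefore adjacent to every vertex of $\gemini(W)\supseteq\{u_1,\dots,u_{m-1}\}$; the mutual adjacencies of $s,c_1,c_2,l,r$ are inherited verbatim from $W$; and no $u_j$ can be a center (centers are in $\widehat N(B)$, the $u_j$ are not) or equal $s$ (which sits in a leaf node by Proposition~\ref{lem:caterpillar}, whereas $u_j\in\gemini(W)$). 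The only point needing work — and the only place where the minimality of $\gemini(W)$ is used — is that $s\not\sim u_j$ for every $j$. Suppose $s\sim u_j$; then $u_j\notin W$ (base vertices of an AW are nonadjacent to $s$). Since $u_j\in\gemini(W)$ it lies in some central node $K_i$ with $\mathtt{last}(b_1)\le i\le\mathtt{first}(b_d)$, and the node-intervals of $b_1,\dots,b_d$ cover this range (consecutive ones overlap, by Proposition~\ref{lem:at-indices}), so $u_j$ is adjacent to at least one base vertex of $W$; as $u_j\notin\widehat N(B)$, it is not adjacent to all of them. Thus $u_j$ is in category ``partial'' of Lemma~\ref{lem:shallow}(2), which produces an AW with shallow terminal $s$, a center $u_j$, and base a proper sub-path $b_a\cdots b_e$ of $B$; its container is $\gemini[\mathtt{last}(b_a),\mathtt{first}(b_e)]$, and since $\mathtt{first}(\cdot)$ and $\mathtt{last}(\cdot)$ are strictly increasing along $b_1,\dots,b_d$ (Proposition~\ref{lem:at-indices}), $(a,e)\ne(1,d)$ makes this container strictly smaller than $\gemini(W)$, contradicting minimality. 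Hence $s\not\sim u_j$ and $W'$ is an AW of the same type. I would also record that $m\ge 4$: a length-$2$ path $(l,u_1,r)$ would force $u_1$ adjacent to all of $B$ (otherwise $u_1$ together with a maximal sub-path of $(l\,b_1\cdots b_d\,r)$ it misses is a hole, impossible since $G$ is chordal), so $u_1\in\widehat N(B)$, contradicting $u_1\in H$; and a length-$3$ path would make $W'$ an induced non-interval subgraph on at most $7$ vertices (a net in the $\dag$ case), hence a \fis\ of $G$ of size at most $10$, impossible in the prereduced graph $G$. In particular $d'=|B'|\ge 3$, so $u_1\not\sim u_{m-1}$.

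For (b), observe that $u_1,u_{m-1}\in\gemini(W)=\bigcup_{\mathtt{last}(b_1)\le i\le\mathtt{first}(b_d)}K_i$, so $\mathtt{last}(u_1)\ge\mathtt{last}(b_1)$ and $\mathtt{first}(u_{m-1})\le\mathtt{first}(b_d)$; combining $u_1\sim l$ with the position of $l$ — via Proposition~\ref{lem:at-indices} when $l\notin ST(G)$, and via $N[l]\subseteq N[b_1]$ (valid because $l$ is simplicial and $b_1\in N[l]$) when $l\in ST(G)$ — gives $\mathtt{first}(u_1)\le\mathtt{last}(b_1)$, hence $u_1\in K_{\mathtt{last}(b_1)}$, and symmetrically $u_{m-1}\in K_{\mathtt{first}(b_d)}$. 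Since $\mathtt{last}(b_1)<\mathtt{first}(b_d)$ and $u_1\not\sim u_{m-1}$, it follows that $\mathtt{last}(u_1)<\mathtt{first}(u_{m-1})$, and that $l,r$ are the base terminals next to $b'_1:=u_1$ and $b'_{d'}:=u_{m-1}$; consequently $\gemini(W')=\gemini[\mathtt{last}(u_1),\mathtt{first}(u_{m-1})]\subseteq\gemini(W)$, and minimality of $\gemini(W)$ forces equality. Finally, the map $[p,q]\mapsto\gemini[p,q]$ is injective on intervals of node indices (if $q'<q$, a vertex whose first node is $q$ — one exists since $K_q$ is a maximal clique — lies in $\gemini[p,q]$ but not in $\gemini[p,q']$, and symmetrically on the left), so $[\mathtt{last}(u_1),\mathtt{first}(u_{m-1})]=[\mathtt{last}(b_1),\mathtt{first}(b_d)]$ and hence $\widehat N(B')=K_{\mathtt{last}(u_1)}\cap K_{\mathtt{first}(u_{m-1})}=\widehat N(B)$. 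Thus the subgraph appearing in the definition of a short base for $W'$ is exactly $H$, and $(l\,B'\,r)=Q$ is by construction a shortest $l$-$r$ path in it, so $B'$ is a short base.

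The main obstacle is the step in (a) proving $s\not\sim u_j$: this is the one place the minimality hypothesis on $\gemini(W)$ is genuinely needed, it must be routed through Lemma~\ref{lem:shallow}(2), and it is intertwined with the low-level bookkeeping that excludes the short/degenerate choices of $Q$ — which in turn relies on $G$ being both chordal and prereduced.
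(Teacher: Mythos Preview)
Your proof is correct and follows essentially the same approach as the paper: keep $s$, the center(s), and $l,r$, replace $B$ by a chordless $l$-$r$ path in $\big(\gemini(W)\setminus\widehat N(B)\big)\cup\{l,r\}$, and use minimality of $\gemini(W)$ together with Lemma~\ref{lem:shallow}(2) (category ``partial'') to rule out $s\sim u_j$; the container inclusion then forces $\gemini(W')=\gemini(W)$. The only cosmetic difference is that the paper iterates (any chordless replacement path works, and one shortens repeatedly until the base is short), whereas you go straight to a shortest path; your direct route requires the extra bookkeeping that $\widehat N(B')=\widehat N(B)$ (your injectivity argument), which the paper's iterative phrasing sidesteps since ``short base'' is checked intrinsically at each step.
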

\begin{proof}
  We show that if $(l P r)$ is a chordless $l$-$r$ path in the
  subgraph induced by $\big(\gemini(W) \setminus \widehat N(B)\big)
  \cup \{l, r\}$, then we can replace the base $B$ of $W$ by $P$ to
  obtain another AW $W_P = (s:c_1,c_2:l,P,r)$.  Clearly the center(s)
  of $W$ belong to $\widehat N(B)$, thereby adjacent to every other
  vertex in $\gemini(W)$, and hence to $P$.  It is also easy to verify
  that no vertex in $\gemini(W)\setminus \widehat N(B)$ is adjacent to
  $s$: if such a vertex exists, then Lemma~\ref{lem:shallow}
  classifies it as ``partial'' with respect to $W$, hence there is
  another AW $W'$ such that $B'\subset B$ and $\gemini(W') \subset
  \gemini(W)$, which contradicts the minimality of
  $\gemini(W)$. Therefore, $W_p$ is indeed an AW.  Letting $b'_1$ and
  $b'_{d'}$ be the first and, respectively, last vertices of $P$, the
  selection of $P$ implies $\mathtt{last}(b'_1)\ge\mathtt{last}(b_1)$
  and $\mathtt{first}(b'_{d'})\le\mathtt{first}(b_{d})$, hence
  $\gemini(W_P) \subseteq \gemini(W)$; as the latter is already
  minimal, they must be equal. Therefore, if the base of $W$ is not
  short, then we can find another AW with the same container and
  shorter base.  Applying this argument repeatedly will eventually
  procure an AW with the same container and having a short base.
\end{proof}

With all pertinent definitions and observations, we are now ready to
present the main lemma of this section which justifies our branching
rule.  Without an upper bound on the number of vertices in an AW---in
particular, the length of its base can be arbitrarily long---trying
each vertex in it cannot be done in FPT time.  Thus we have to avoid
most but a (small) constant number of base vertices---those are close
to the base terminals---to procure the claimed algorithm.  To further
decrease the number of vertices we need to consider, observing that
the central path of the caterpillar decomposition has a linear
structure, we start from the \emph{leftmost} minimal container.  By
definition, minimal containers cannot properly contain each other, and
thus the one with smallest begin-index also has the smallest
end-index.  In particular, the leftmost minimal container is unique,
though it might be observed by more than one AWs, and can be
identified in polynomial time.  With this additional condition, if
another AW intersects $\gemini(W)$, it has to come ``from the right.''

Let $W$ be an AW of leftmost minimal container and having a short
base. We claim that there is a minimum interval deletion set that
breaks $W$ in a canonical way: it contains either one of a constant
number of specific vertices of $W$, or a specific minimum separator
(details are given below) breaking the base of $W$.  Therefore, by
branching into ten directions, we can guess one vertex of this interval
deletion set.\footnote{A slightly weaker version of
  Lem~\ref{lem:chordal-to-interval} is given in the appendix.  The
  proof of Lem~\ref{lem:chordal-to-interval}, trying to minimizing the
  number of branching directions, has to consider many cases and is
  ponderous.  In contrast, the proof of the weaker version uses only
  the fact that a vertex that is not a common neighbor of $B$ sees at
  most three vertices in it; hence the underlying ideas are easier to
  understand.}

For each $\mathtt{last}(b_1)\le i< \mathtt{first}(b_{d-1})$, let us
define $S_i = K_i\cap K_{i+1}$ to be the {\em $i$th separator.}  Note
that $S_i$ contains $\widehat N(B)$ as a proper subset.
\begin{lemma}\label{lem:chordal-to-interval}
  Let ${\cal T}$ be a caterpillar decomposition of a nice graph $G$,
  and $W=(s:c_1,c_2:l,B,r)$ be an AW in $G$ such that
  \begin{itemize}
  \item $\mathtt{first}(b_d)$ is the smallest among all AWs; 
  \item $\gemini(W)$ is minimal; and
  \item $B$ is a short base.
  \end{itemize}
  Let $\ell$ be the minimum index such that $\mathtt{last}(b_1)\le
  \ell< \mathtt{first}(b_{d-2})$ and the cardinality of $S_{\ell}$ is
  minimum among $\{ S_i : \mathtt{last}(b_1) \leq i <
  \mathtt{first}(b_{d-2}) \}$.  There is a minimum interval deletion
  set to $G$ that either contains one of the 9 vertices
  \[
  V_B = \{s,c_1,c_2, l, b_1, b_{d-2}, b_{d-1}, b_{d}, r\},
  \]
  or the whole set $X = S_{\ell} \setminus N$, where $N = \widehat
  N(B)$.
\end{lemma}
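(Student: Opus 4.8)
The plan is a replacement argument on an arbitrary minimum solution. Fix a minimum interval deletion set $Q$; if $Q\cap V_B\neq\emptyset$ we are done, so assume $Q$ is disjoint from $V_B$, and the goal is to turn $Q$ into another minimum interval deletion set that contains all of $X$. Since $G$ is chordal it has no holes, so every minimal forbidden set of $G$ is an AW; as $Q$ is a solution it meets $W$, and because $Q$ avoids $s,c_1,c_2,l,r,b_1,b_{d-2},b_{d-1},b_d$ and a prereduced graph forces $d\geq 6$, the vertex of $W$ lying in $Q$ is some interior base vertex $b_j$ with $2\leq j\leq d-3$. Feeding the index inequalities of Proposition~\ref{lem:at-indices} into this, every such $b_j$ appears in a clique node $K_i$ with $\mathtt{last}(b_1)\leq i<\mathtt{first}(b_{d-2})$, so $Q$ meets a window $\Omega$ of consecutive clique nodes around the middle of the base of $W$ (the exact left/right cutoffs of $\Omega$ need to be chosen with some care). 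The candidate replacement is $Q^{*}=(Q\setminus Z)\cup X$ with $Z=Q\cap\Omega$, $X=S_{\ell}\setminus N$, and $N=\widehat N(B)\subseteq S_i$ for every window index $i$; note $X\subseteq\Omega$, $Q^{*}\subseteq V(G)$, and $|Q^{*}|=|Q|-|Z|+|X|$, so it remains to prove $|X|\leq|Z|$ and that $G-Q^{*}$ is an interval graph.

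For the cardinality bound I would show $Z$ must contain a vertex cut of the window with the common-neighbour clique $N$ removed. The centers lie in $N$ and are hence adjacent to all of $\Omega$, and $N\subseteq N(s)$ by Lemma~\ref{lem:common-neighbor-of-base}; so if $Z$ missed some chordless $l$--$r$ path routed through $\Omega\setminus N$ and avoiding $N[s]$, then—shortening that path via Lemma~\ref{lem:shortbase}—it could be re-assembled with $s,c_1,c_2,l,r$ into an AW of $G-Q$ whose container is again $\gemini(W)$, contradicting that $Q$ is a solution. The window is chordal with clique path the $K_i$, so after deleting $N$ its minimal separators are exactly the $S_i\setminus N$, and the minimum cut between its two ends equals $\min_i|S_i\setminus N|=|S_{\ell}|-|N|=|X|$ by the choice of $\ell$. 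Hence $|Z|\geq|X|$ and $|Q^{*}|\leq|Q|$.

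For interval-preservation: $G-Q^{*}$ is chordal (as $G$ is), so it suffices to rule out an AW $F=(s':c'_1,c'_2:l',B',r')$ in $G-Q^{*}$, which is also an AW of $G$; I split on where the base $B'$ sits relative to the minimal separator $S_{\ell}=N\uplus X$. If $B'$ lies entirely at indices $\leq\ell$, then the largest index used by $B'$ is $<\mathtt{first}(b_{d-2})\leq\mathtt{first}(b_d)$, contradicting that $\mathtt{first}(b_d)$ is smallest among all AWs. If $B'$ lies entirely at indices $>\ell$, then its leftmost vertex has last-index $>\ell\geq\mathtt{last}(b_1)$, and combined with the minimality of $\mathtt{first}(b_d)$ this forces $\gemini(F)\subsetneq\gemini(W)$, contradicting minimality of $\gemini(W)$ (the sub-case where $F$ reaches strictly right of $\gemini(W)$ but still pokes into $\Omega$ is the delicate one and is pushed back onto $W$ using the two minimality conditions together with the fact from Lemma~\ref{lem:shortbase} that a non-common-neighbour of $B$ sees at most three consecutive base vertices). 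Otherwise $B'$ straddles index $\ell$, so—consecutive base vertices of $F$ sharing a node—some base vertex $b'_m$ of $F$ lies in $S_{\ell}$; as $X\subseteq Q^{*}$ we get $b'_m\in N=\widehat N(B)\subseteq N(s)$, whence Lemma~\ref{lem:shallow} classifies $b'_m$ as ``full'' with respect to $W$, making it adjacent to all of $N(s)$, which—for a non-simplicial base vertex on a chordless path avoiding $N[s']$, using $N\subseteq N[s]$—yields a small AW or a chord, contradicting niceness.

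I expect the last paragraph to be the main obstacle: the straddling case and, even more, the case of an AW that stays to the right of $\gemini(W)$ yet overlaps the replaced window. The only levers available are the two minimality hypotheses on $W$, the inclusion $N\subseteq N[s]$, and the ``sees $\le 3$ base vertices'' property, and marshalling them into a clean dichotomy is exactly the ponderous case analysis the footnote warns about; it is also what dictates the precise list $V_B$, since these nine vertices are the ones that a competing AW provably cannot avoid once it has been forced to one side of the chosen separator.
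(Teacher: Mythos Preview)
Your overall architecture matches the paper's: a replacement $Q^{*}=(Q\setminus Z)\cup X$, with $|X|\le|Z|$ established via a min-cut argument on the clique path (this is essentially the paper's final claim). The genuine gap is in showing $Q^{*}$ is a solution, specifically in your straddling case.

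You argue that if some base vertex $b'_m$ of $W'$ lies in $N=\widehat N(B)$, then Lemma~\ref{lem:shallow} makes $b'_m$ ``full'' with respect to $W$, hence adjacent to all of $N(s)$, and that this ``yields a small AW or a chord, contradicting niceness.'' This does not follow. The adjacency of $b'_m$ to $N(s)$ says nothing about the internal structure of $W'$: the other vertices of $W'$ need not lie in $N(s)$, the shallow terminal $s'$ is unrelated to $s$, and there is no chord forced in $l'B'r'$. The paper's resolution is entirely different and is the technical heart of the lemma. If $x\in B'\cap N$, then (since the interval of $x$ contains that of $u$) every neighbour of $u$ is adjacent to $x$; on the chordless path $l'B'r'$ this forces $u$ to be an endpoint, namely $u=l'$ and $x=b'_1$. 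One then takes any $w\in X\setminus Q$ (nonempty by the standing assumption $X\not\subseteq Q$, which your write-up never invokes) and checks, vertex by vertex using the interval positions and the minimality of $W$, that $w$ has exactly the same adjacencies to $W'\setminus\{u\}$ as $u$ does. Hence $(s':c'_1,c'_2:w,B',r')$ is an AW disjoint from $Q$, contradicting that $Q$ is a solution. The swap $u\mapsto w$ is the missing idea.

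Your ``entirely to the right'' case also leans on $B'\cap N=\emptyset$ without having it. When $\mathtt{first}(b'_{d'})>\mathtt{first}(b_d)$ you do not get $\gemini(W')\subsetneq\gemini(W)$, and the path $l'B'$ could in principle cross $S_\ell$ through $N$ rather than through $X$; only after $B'\cap N=\emptyset$ is secured (the paper's Claim) and $c'_2\in N$ is shown can one conclude that $l'B'$ must pass through $X\subseteq Q^{*}$, the actual contradiction. Finally, your case split is on the location of $B'$, but the vertex $u\in W'\cap\Omega$ need not be a base vertex of $W'$ at all; the paper first rules out $u\in\{s',r',c'_1,c'_2\}$ (this is where Claim~\ref{claim-1} is used), leaving $u\in\{l'\}\cup B'$, and the $u=l'$ possibility is precisely what the swap argument above handles.
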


\begin{proof}
  We prove by construction.  Let $Q$ be any minimum interval deletion
  set; we may assume $Q \cap V_B = \emptyset$, and $X \not\subseteq
  Q$, as otherwise $Q$ satisfies the asserted condition and we are
  finished.  We claim $Q' = (Q \setminus V_{I}) \cup X$, where $V_{I}
  = \gemini[{\mathtt{last}(b_2), \mathtt{first}(b_{d-3})}] \setminus
  N$, is the desired interval deletion set, which fully contains $X$
  in particular.

  As $G$ is chordal, all \mfiss\ in $G$ are AWs.  To show that $Q'$
  makes an interval deletion set to $G$, it suffices to argue that if
  there exists an AW $W'$ avoiding $Q'$ then we can also find an AW
  $W''$, not necessarily the same as $W'$, avoiding $Q$.  Suppose $W'
  = (s':c'_1,c'_2:l',B',r')$ is the AW in $G - Q'$. By the
  construction of $Q'$, this AW must intersect $V_{I} \setminus X$;
  let $u \in W' \cap (V_{I} \setminus X)$.  Clearly, $u$ can neither
  be $s'$, as $u\not\in ST(G)$, nor $r'$, as otherwise according to
  Proposition~\ref{lem:at-indices}, $\mathtt{first}(b'_{d'}) <
  \mathtt{first}(u) < \mathtt{first}(b_{d})$, contradicting the
  selection of $W$.  The following claim rules out the possibility
  that $u\in \{c'_1,c'_2\}$.

  \begin{claim}\label{claim-1}
    For each vertex $v \in \gemini[0,\mathtt{first}(b_{d-2})] \setminus
    N$, we have $\mathtt{last}(v) < \mathtt{first}(b_{d})$, and $v
    \not\sim ST(G)$.
  \end{claim}
  \begin{proof}
    By definition, if $v$ is adjacent to $B$, then $v\sim b_i$ for
    some $i\le d-3$.  If $v\sim b_d$, then $B$ is not a short base, as
    there would be a a shorter (not necessarily chordless) $l$-$r$
    path $(l,\dots,b_i,v,b_d,r)$. Therefore, $v\not\sim b_d$ and it
    follows that $\mathtt{last}(v) < \mathtt{first}(b_{d})$.  Suppose
    to the contrary of the second assertion, $v$ is adjacent to the
    shallow terminal $x$ of some AW $W_1$.  We apply
    Lemma~\ref{lem:shallow}(2) on $v$ and $W_1$.  As $v\not\in ST(G)$,
    it has to be in categories ``full'' or ``partial.''  In either
    case, there exists an AW whose base is fully contained in
    $\gemini[{\mathtt{first}(v), \mathtt{last}(v)}]$, contradicting
    the selection of $W$.  
    \renewcommand{\qedsymbol}{$\lrcorner$}
  \end{proof}

  Therefore, either $u=l'$ or $u\in B'$.  Now we focus on the
  chordless path $l'B'r'$, which we shall refer to by $P'$, and how it
  reaches $u$ when going from $r'$ to $l'$.  Recall that every vertex of $B'$ appears in the
  central path of the caterpillar decomposition.
  Figure~\ref{fig:claim-2} depicts non-terminal vertices of $W$ in an
  interval representation of the interval subgraph $G-ST(G)$, where
  base terminals $l$ and $r$ are illustrated with dashed lines as they
  might belong to $ST(G)$.  The main observation here is: for any
  vertex $u$ in $V_I$, if another vertex $z\in N(u)\setminus N$ (the
  thick segment) reaches outside of $\gemini(W)$, then $z\sim b_d$,
  and $u$ and $z$ will make a short cut between $b_{d-4}$ and $b_d$,
  which is impossible.

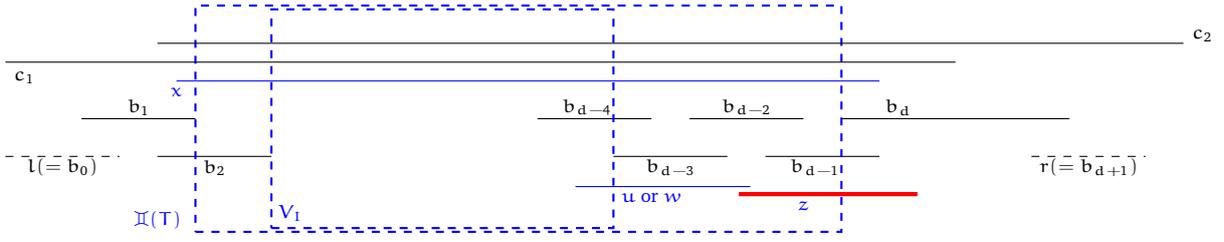
\begin{figure*}[t]
  \centering
\begin{tikzpicture}[scale=.5]
\scriptsize

\draw[dashed] (0,2) -- (3,2);  %s_l
\draw  (2,3) -- (5,3);  %s_r
\draw  (4,2) -- (7,2);  %s_1
\draw  (14,3) -- (17,3);  %s_1
\draw  (16,2) -- (19,2);  %s_r
\draw  (18,3) -- (21,3);  %s_1
\draw  (20,2) -- (23,2);  %s_r
\draw  (22,3) -- (28,3);  %s_1
\draw[dashed]  (27,2) -- (30,2);  %s_r

\node at (1.5,1.7) {$l (=b_{0})$};
\node at (3.5,3.3) {$b_{1}$};
\node at (5.5,1.7) {$b_{2}$};
\node at (17.5,1.7) {$b_{d-3}$};
\node at (21.3,1.7) {$b_{d-1}$};
\node at (28.5,1.7) {$r (=b_{d+1})$};
\node at (15.3,3.3) {$b_{d-4}$};
\node at (19.5,3.3) {$b_{d-2}$};
\node at (23.5,3.3) {$b_d$};

\draw[blue,thick,dashed] (7,0.1) -- (16,0.1) -- (16,5.9) -- (7,5.9) -- (7,0.1);  %s_1
\draw[blue,thick,dashed] (5,0) -- (22,0) -- (22,6) -- (5,6) -- (5,0);  %s_1
\node[blue] at (4,.3) {$\gemini(T)$};
\node[blue] at (7.5,.5) {$V_I$};

\draw  (0,4.5) -- (25,4.5);  %s_2
\draw  (4,5) -- (31,5);  %s_2
\node at (.5,4.1) {$c_{1}$};
\node at (31.5,5.2) {$c_{2}$};

\draw[blue] (4.5,4) -- (23,4);  %s_2
\draw[blue] (15,1.2) -- (19.6,1.2);  %s_2
\draw[ultra thick, red] (19.3,1) -- (24,1);  %s_2

\node[blue] at (4.5,3.7) {$x$};
\node[blue] at (17,.9) {$u$ or $w$};
\node[blue] at (21,.7) {$z$};

\end{tikzpicture}
  \caption{Interval representation of non-terminal vertices of a
    leftmost minimal AW.}
  \label{fig:claim-2}
\end{figure*}

  \begin{claim}
    $B'\cap N = \emptyset$.
  \end{claim}
  \begin{proof} 
    Suppose the contrary and let $x$ be a vertex in $B'\cap N$ (see
    Figure~\ref{fig:claim-2}).  Then $s\sim x$ follows from
    Lemma~\ref{lem:common-neighbor-of-base}.  We claim that every
    neighbor $z$ of $u$ is adjacent to $x$. Note that $z\not\sim x \in
    N$ implies either $\mathtt{last}(z) < \mathtt{last}(b_1)$ or
    $\mathtt{first}(z) > \mathtt{first}(b_d)$ and the latter is ruled
    out by the definition of $u$ and Claim~\ref{claim-1}. Let $B_1$ be
    the subset of the inner vertices of the $z$-$r$ path
    $(z,u,\dots,b_d,r)$. Then one of the following AW contradicts the
    minimality of the choice of $W$: $(s:c_1,x:z,B_1,r)$ (when
    $c_1\sim z$ and $x\sim r$), $(s:c_1:z,B_1,r)$ (when $c_1\not\sim
    z$), or $(s:x:z,B_1,r)$.

    As $x$ and $u$ are both in the chordless path $P'$ and $x$ is
    adjacent to every neighbor of $u$, vertex $u$ has to be one end of
    $P'$.  More specifically, $u=l'$ and $x = b'_1$.  A further
    consequence is that $u$ is the only vertex in $W'\cap V_I$: the
    argument above applies to any vertex $u'\in W'\cap V_I$, and thus
    $u'=l'=u$.  Now we show, for any vertex $w$ in $X \setminus Q$,
    which is nonempty by assumption, it has the same neighbors as $u$
    in $W'$, and hence $(s':c'_1,c'_2:w,B',r')$ is an AW in $G - Q$,
    contradicting the assumption that $Q$ is an interval deletion set
    to $G$.

    First, if a vertex is in $N$, then it is adjacent to both $w$ and
    $u$. Vertex $b'_1$ ($=x$) is in $N$. We claim that $c'_1$ is also
    in $N$ when $W'$ is a $\ddag$-AW. Otherwise, observe that
    $\mathtt{first}(c'_1) \le \mathtt{last}(u) < \mathtt{first}(b_d)$
    as $c'_1\sim u$ and $u$ satisfies the condition of
    Claim~\ref{claim-1}.  Let $B_1$ be the path $(b_1,\dots, b_i, u,
    c'_1)$, where $b_i$ is the first base vertex of $W$ adjacent to
    $u$. Now one of the following AW contradicts the minimality of
    $W$: $(s: x,c_2: l, B_1, b'_{d'})$ (when $x\sim l$ $c_2\sim
    b'_{d'}$), $(s: x: l, B_1, b'_{d'})$ (when $x\not\sim l$), or $(s:
    c_2: l, B_1, b'_{d'})$ (when $c_2\not\sim b'_{d'}$).

    Second, $c'_2\not\sim u$ ($=l'$) implies $c'_2\not\in N$. We claim
    that $c'_2\not\sim w$. Otherwise, observe that
    $\mathtt{first}(c'_2) \le \mathtt{last}(w) < \mathtt{first}(b_d)$
    as $c'_2\sim w$ and $w$ satisfies the condition of
    Claim~\ref{claim-1}.  Let $B_1$ be the path $(b_1,\dots, b_i, w,
    c'_2)$, where $b_i$ is the first base vertex of $W$ adjacent to
    $w$. Now one of the following AWs contradicts the minimality of
    $W$: $(s: x,c_2: l, B_1, b'_{d'})$ (when $x\sim l$ and $c_2\sim
    b'_{d'}$), $(s: x: l, B_1, b'_{d'})$ (when $x\not\sim l$), or $(s:
    c_2: l, B_1, b'_{d'})$ (when $c_2\not\sim b'_{d'}$).

    Finally, we claim that $w\not\sim b'_2$. Otherwise, observe that
    $\mathtt{first}(b'_2) \le \mathtt{last}(w) < \mathtt{first}(b_d) <
    \mathtt{first}(b'_3)$ as $b'_2\sim w$, $w$ satisfies the condition
    of Claim~\ref{claim-1}, and $b'_3\not\sim x$ ($=b'_1\in N$). Let
    $B_1$ be the path $(b_1,\dots, w, b'_2)$. Now one of the following
    AWs contradicts the minimality of $W$: $(s:x,c_2:l,B_1, b'_3)$
    (when $x\sim l$ and $c_2\sim b'_3$), $(s:c_2:l,B_1, b'_3)$ (when
    $c_2\not\sim b'_3$), or $(s:x:l,B_1, b'_3)$ (when $x\not\sim l$).
    Moreover, from $\mathtt{last}(w) < \mathtt{first}(b'_2) <
    \mathtt{first}(b'_i)$, it can be easily inferred that $w\not\sim
    b'_i$ for any $3\le i\le d'+1$.
    \renewcommand{\qedsymbol}{$\lrcorner$}    
  \end{proof}

  Now that $P'$ reaches $u$ not through $N$, next we show that the
  center $c'_2$ has to be in $N$ as it is adjacent to all base
  vertices $b_i$ of $W$ for $d-3\le i\le d$.

  \begin{claim}\label{claim-3}
    $c'_2 \in N$.
  \end{claim}
  \begin{proof}
    Suppose to the contrary, $c'_2 \not\in N$, then $c'_2$ cannot be
    adjacent to $b_1$.  From $c'_2\sim b'_1$ and $c'_2\not\sim b_1$ we
    can derive $\mathtt{last}(b'_1) \ge \mathtt{first}(c'_2) >
    \mathtt{last}(b_1)$.  On the other hand, as $\gemini(W)$ is
    minimal, it does not properly contain $\gemini(W')$, which implies
    $\mathtt{first}(b'_d) > \mathtt{first}(b_d)$.  Therefore,
    $\mathtt{last}(c'_2) \ge \mathtt{first}(b'_d) >
    \mathtt{first}(b_d)$.  (See Figure~\ref{fig:claim-2}.)  By the
    selection of $W$ and $B$, if a vertex $z$ is adjacent to both $u$
    and $b_d$, then $z\in N$, as otherwise there exists a path
    $(l,b_1,\dots,b_p,u,z,b_d,r)$, where $p\le {d-4}$, shorter than $l
    B r$.  In particular, the vertex next to $u$ in the path $l' B'
    r'$ is not adjacent to $b_d$; letting $u=b'_i$ where $i<d'$, it
    means $last(b'_{i+1}) <first(b_d)$.  From $c'_2\sim b'_{i+1}$ we
    can conclude $\mathtt{first}(c'_2) < \mathtt{first}(b_d)$, and
    then $c'_2$ and $b_d$ are adjacent, which further implies that $u$
    is not adjacent to $c'_2$.  In other words, $u$ has to be $l'$.
    Let $p$ be the index such that $b_p\not\sim c'_2$ and $b_{p+1}\sim
    c'_2$, which exists by assumption.  We now show $b_{d-2}\not\sim
    c'_2$, and $p\ge d-2$, by contradiction.
    \begin{inparaitem}
    \item If $s'$ is adjacent to every vertex in $N$, then $(s':
      c_1,c'_2: b_{p}, b_{p+1} \dots b_d, r)$ or $(s': c'_2: b_{p},
      b_{p+1} \dots b_d, r)$ would be an AW that contradicts the
      selection of $W$.
    \item If $s'$ is not adjacent to $x\in N$, then $(s: x,c_2: l,
      b_{1} \dots b_{p+1}, c'_2, s')$, $(s: x: l, b_{1} \dots b_{p+1},
      c'_2, s')$ or $(s: c_2: l, b_{1} \dots b_{p+1}, c'_2, s')$ would
      be an AW that contradicts the selection of $W$: noting that
      $\mathtt{first}(c'_2)\le \mathtt{last}(b_{d-2}) <
      \mathtt{first}(b_{d}) $ as $c'_2\sim b_{d-2}$.
    \end{inparaitem}
    However, $(s':c'_1,c'_2:b_p,b'_q,\dots,b'_{d'}, r')$,
    $(s':c'_1:b_p,b'_q,\dots,b'_{d'}, r')$, or
    $(s':c'_2:b_p,b'_q,\dots,b'_{d'}, r')$, where $q$ is the largest
    index such that $b'_q\sim b_p$, will be an AW in $G-Q$, which is
    impossible as $Q$ is an interval deletion set to $G$.
    \renewcommand{\qedsymbol}{$\lrcorner$}
  \end{proof}

  An immediate consequence of Claim~\ref{claim-3} is $c'_2\sim u$,
  hence $u\in B'$.  By Proposition~\ref{lem:at-indices},
  $\mathtt{first}(b'_1) \le \mathtt{first}(u) <
  \mathtt{first}(b_{d-2})$.  Then from Claim~\ref{claim-1} and the
  fact $l' \sim b'_1$, it can be inferred that $l' \not\in ST(G)$.
  Now $\mathtt{last}(l')$ is defined, and $\mathtt{last}(l') \le
  \mathtt{first}(c'_2) < \mathtt{last}(b_1)$; the selection of $W$
  implies $\mathtt{first}(b'_{d'}) \ge \mathtt{first}(b_{d})$.
  Therefore, the $l'$-$b'_{d'}$ path $l'B'$ has to go through $X$, and
  we end with a contradiction.

  This verifies that $Q'$ is an interval deletion set to $G$, and it
  remains to show that $Q'$ is minimum, from which the lemma follows.

  \begin{claim}
    $|Q'| \leq |Q|$.
  \end{claim}
  \begin{proof}
    It will suffice to show that $Q \cap V_I$ makes a $b_1$-$b_{d-2}$
    separator in $G - N$, and then the claim ensues as
    $$|Q'| = |Q\setminus V_I| + |X| \leq |Q\setminus V_I| + |Q\cap V_I| =
    |Q|.$$ Suppose to the contrary, there is a chordless
    $b_1$-$b_{d-2}$ path $P$.  We can extend $P$ into an $l$-$r$ path
    $P^+ = (l P b_{d-1} b_{d} r)$, which is disjoint from $Q$ and $N$.
    Within $P^+$ there is a chordless $l$-$r$ path $(l B_1 r)$.  By
    assumption, $\{s,c_1,c_2\}\cap Q=\emptyset$; every vertex in $B_1$
    satisfies the condition of Claim~\ref{new-claim-1}, and hence
    nonadjacent to $s$.  Moreover, $c_1,c_2\in N$, and therefore both
    $c_1$ and $c_2$ are adjacent to every vertex of $B_1$.  Thus,
    $(s:c_1,c_2:l,B_1,r)$ is an AW in $G-Q$, which is impossible.
    \renewcommand{\qedsymbol}{$\lrcorner$}
  \end{proof}
\end{proof}

To complete the proof of Theorem~\ref{thm:alg-reduced-and-chordal}, we
need one last piece of the jigsaw, i.e., to find the AW required by
Lemma~\ref{lem:chordal-to-interval}.

\paragraph{Theorem 2.4 (restated).}
  There is a $10^k\cdot n^{O(1)}$ time algorithm for \textsc{interval
    deletion} on nice graphs.
\begin{proof}
  Based on Lemma~\ref{lem:chordal-to-interval}, it suffices to show how
  to find such an AW, and then the standard branching will deliver the
  claimed algorithm.  For any triple of vertices $\{x,y,z\}$ and pair
  of indices $\{p,q\}$ for the nodes in the central path of the
  caterpillar decomposition, we can check whether or not there is an
  AW $W$ whose terminals are $\{x,y,z\}$ and non-terminal vertices are
  fully contained in $\gemini[p,q]$.  Therefore, in ${O}(n^6)$ time we
  are able to find the correct terminals and indices, from which the
  short base $B$ can also be easily constructed.  This finishes the
  construction of the AW required by
  Lemma~\ref{lem:chordal-to-interval}.
\end{proof}

\section{Concluding remarks}
\label{sec:remark}
We have classified \textsc{interval deletion} to be FPT by presenting
a $c^k \cdot n^{{O}(1)}$-time algorithm with $c=10$.  The constant $c$
might be improvable, and let us have a brief discussion on how to
achieve this.  The current constant $10$ comes from
Reduction~\ref{rule:small-forbidden-subgraph} and
Theorem~\ref{thm:alg-reduced-and-chordal}.  The constant in
Reduction~\ref{rule:small-forbidden-subgraph} is not tight, and it can
be replaced by $8$.  We choose the current number for the convenience
for later argument; for example, if we do not break AWs of size $9$ in
preprocessing, then we have to use a far more complicated proof for
Proposition~\ref{lem:shallow-is-shallow}.  In other words, the real
dominating step is to break ATs in nice graphs, where
we need to branch into $10$ cases.  As a nice graph exhibits a linear
structure, it might help to apply dynamic programming here.  To
further lower the constant $c$, we need to break small \fiss\ in a
better way than the brute-force in our algorithm.  So a natural
question is: Can it be $c=2$?

It is known that \textsc{chordal completion} can be solved in
polynomial time if the input graph is a circular-arc graph
\cite{kloks-98-fill-in-circular-arc} while \textsc{interval
  completion} remains NP-hard on chordal graphs
\cite{peng-06-interval-completion-on-choral-graphs}.  It would be
interesting to inquire the complexity of \textsc{interval deletion} on
chordal graphs and other graph classes.  At least, can it be solved in
polynomial time if the input graph is nice, which, if positively
answered, would suggest that all the troubles are small forbidden
subgraphs.  We leave open the parameterized complexity of
\textsc{interval edge deletion}, which instead asks for a set of $k$
edges whose removal makes an interval graph
\cite{goldberg-95-interval-edge-deletion,bodlaender-95-fpt-computational-biology}.
To adapt our approach to this problem, one needs a reasonable bound
for the number of {edge hole covers} for congenial holes.

As having been explored in \cite{narayanaswamy-13-d-cos-r}, we would
also like to ask which other problems can be formulated as or reduced
to \textsc{interval deletion} and then solved with our algorithm.
Both practical and theoretical consequences are worth further
investigation.

{
\small

}

\appendix
\subsection*{Appendix. A simpler and weaker version of
  Lemma~\ref{lem:chordal-to-interval}}
\label{app:improve}
\paragraph{Lemma a.}
  Let ${\cal T}$ be a caterpillar decomposition of a nice graph $G$,
  and $W=(s:c_1,c_2:l,B,r)$ be an AW in $G$ such that
  \begin{itemize}
  \item $\mathtt{first}(b_d)$ is the smallest among all AWs; 
  \item $\gemini(W)$ is minimal; and
  \item $B$ is a short base.
  \end{itemize}
  Let $\ell$ be the minimum index such that $\mathtt{last}(b_2)\le
  \ell< \mathtt{first}(b_{d-5})$ and the cardinality of $S_{\ell}$ is
  minimum among $\{ S_i : \mathtt{last}(b_2) \leq i <
  \mathtt{first}(b_{d-5}) \}$.  There is a minimum interval deletion
  set to $G$ that either contains one of the 13 vertices
  \[
  V_B = \{s,c_1,c_2, l, b_1, b_2, b_{d-5}, b_{d-4}, b_{d-3}, b_{d-2},
  b_{d-1}, b_{d}, r\},
  \]
  or the whole set $X = S_{\ell} \setminus N$, where $N = \widehat
  N(B)$.
\begin{proof}
  We prove by construction.  Let $Q$ be any minimum interval deletion
  set; we may assume $Q \cap V_B = \emptyset$, and $X \not\subseteq
  Q$, as otherwise $Q$ satisfies the asserted condition and we are
  finished.  We claim $Q' = (Q \setminus V_{I}) \cup X$, where $V_I =
  \gemini[{\mathtt{last}(b_3), \mathtt{first}(b_{d-6})}] \setminus N$,
  is the desired interval deletion set, which fully contains $X$ in
  particular.  By definition of $V_I$, any vertex $z\in V_I$ is
  adjacent to some vertex $b_i$ for $4\le i\le d-7$, then as $B$ is
  short and $z\not\in N$, we have
  \begin{equation}
    \label{eq:u}
    \mathtt{first}(b_{2}) \le \mathtt{last}(b_1) < \mathtt{first}(z)\le
    \mathtt{last}(z)  < \mathtt{first}(b_{d-4}) \le \mathtt{last}(b_{d-5}).
  \end{equation}

  As $G$ is chordal, all \mfiss\ in $G$ are AWs.  To show that $Q'$
  makes an interval deletion set to $G$, it suffices to argue that if
  there exists an AW $W'$ avoiding $Q'$ then we can also find an AW,
  not necessarily the same as $W'$, avoiding $Q$.  Suppose $W' =
  (s':c'_1,c'_2:l',B',r')$ is an AW in $G - Q'$. By the construction
  of $Q'$, this AW must intersect $V_{I} \setminus X$; let $u \in W'
  \cap (V_{I} \setminus X)$.  Clearly, $u$ can neither be $s'$, as
  $u\not\in ST(G)$, nor $r'$, as otherwise according to
  Proposition~\ref{lem:at-indices}, $\mathtt{first}(b'_{d'}) <
  \mathtt{first}(u) < \mathtt{first}(b_{d})$, contradicting the
  selection of $W$.  The following claim further rules out the
  possibility that $u\in \{c'_1,c'_2\}$.

  \begin{claim}\label{new-claim-1}
    For each vertex $v \in \gemini[0,\mathtt{first}(b_{d-2})] \setminus
    N$, we have $\mathtt{last}(v) < \mathtt{first}(b_{d})$, and $v
    \not\sim ST(G)$.
  \end{claim}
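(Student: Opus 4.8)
The plan is to prove this exactly as Claim~\ref{claim-1} was proved inside Lemma~\ref{lem:chordal-to-interval}: the hypotheses on $W$ here are identical ($B$ is a short base, $\gemini(W)$ is minimal, and $\mathtt{first}(b_d)$ is smallest among all AWs of $G$), so the same two-step argument goes through verbatim. I would begin by recording two harmless observations: every vertex of $\gemini[0,\mathtt{first}(b_{d-2})]$ lies in a node of the central path of ${\cal T}$, hence is non-simplicial and in particular not in $ST(G)$ by Proposition~\ref{lem:caterpillar}; and no base vertex of $W$ belongs to $N=\widehat N(B)$, since $b_1\not\sim b_d$ when $d\ge 3$. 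Then I would prove the two assertions in order, the second relying on the first.

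For the first assertion I would argue by contradiction: assume $\mathtt{last}(v)\ge\mathtt{first}(b_d)$. Since $v\in\gemini[0,\mathtt{first}(b_{d-2})]$ we have $\mathtt{first}(v)\le\mathtt{first}(b_{d-2})$, and the chain of inequalities in Proposition~\ref{lem:at-indices} gives $\mathtt{first}(b_{d-2})\le\mathtt{last}(b_{d-3})$ and $\mathtt{last}(b_1)<\mathtt{first}(b_{d-2})<\mathtt{first}(b_d)$. Comparing node-intervals, the interval $[\mathtt{first}(v),\mathtt{last}(v)]$ meets $[\mathtt{first}(b_{d-3}),\mathtt{last}(b_{d-3})]$, meets $[\mathtt{first}(b_d),\mathtt{last}(b_d)]$, and meets $[\mathtt{last}(b_1),\mathtt{first}(b_d)]$; hence $v\sim b_{d-3}$, $v\sim b_d$, and $v\in\gemini(W)$, so in fact $v\in\gemini(W)\setminus N$. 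Then $(l=b_0,b_1,\dots,b_{d-3},v,b_d,r)$ is an $l$-$r$ walk of length $d$ all of whose vertices lie in $\big(\gemini(W)\setminus\widehat N(B)\big)\cup\{l,r\}$, so it contains an $l$-$r$ path strictly shorter than the length-$(d+1)$ path $(l B r)$; this contradicts $B$ being a short base. Hence $\mathtt{last}(v)<\mathtt{first}(b_d)$.

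For the second assertion I would suppose $v$ is adjacent to the shallow terminal $x$ of some AW $W_1$ (every AW of a nice graph is a $\dag$- or $\ddag$-AW) and apply Lemma~\ref{lem:shallow}(2) to $v$ and $W_1$. Since $v\notin ST(G)$, $v$ cannot be in category ``none'' (such a vertex is the shallow terminal of the AW obtained from $W_1$ by substituting it for $x$). If $v$ is in category ``partial'', Lemma~\ref{lem:shallow}(2) yields an AW with one center $v$ and base a proper sub-path of the base of $W_1$; if $v$ is in category ``full'', then $W_1$ itself is an AW all of whose base vertices are adjacent to $v$. In either case there is an AW every base vertex of which is adjacent to $v$, hence shares a clique-tree node with $v$, so its whole base lies in $\gemini[\mathtt{first}(v),\mathtt{last}(v)]$; its last base vertex $b'$ then satisfies $\mathtt{first}(b')\le\mathtt{last}(v)<\mathtt{first}(b_d)$ by the first assertion, contradicting the minimality of $\mathtt{first}(b_d)$ over all AWs. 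Therefore $v\not\sim ST(G)$.

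The only step that is not pure bookkeeping is the interval comparison in the first assertion: under the assumption $\mathtt{last}(v)\ge\mathtt{first}(b_d)$ one must be sure that $v$ genuinely sees $b_{d-3}$ (so the shortcut lands at index at most $d-3$ and the resulting $l$-$r$ path is strictly shorter) and that $v$ genuinely lies in $\gemini(W)$ (so the shortcut path stays inside the induced subgraph used in the definition of a short base). Both follow at once from the overlap of consecutive base node-intervals recorded in Proposition~\ref{lem:at-indices}, but this is where the details should be written with some care; everything else is a direct transcription of the proof of Claim~\ref{claim-1}.
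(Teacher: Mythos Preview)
Your proposal is correct and follows essentially the same approach as the paper's proof: use the short-base hypothesis to rule out $v\sim b_d$ via a shortcut path, then invoke Lemma~\ref{lem:shallow}(2) on $v$ and any AW whose shallow terminal it sees to produce an AW whose base is confined to $\gemini[\mathtt{first}(v),\mathtt{last}(v)]$, contradicting the minimality of $\mathtt{first}(b_d)$. One small imprecision: the claim that a vertex in a central-path node is ``non-simplicial'' is not quite right (endpoints of the central path can harbour simplicial vertices), but the conclusion you actually need, $v\notin ST(G)$, follows directly from Proposition~\ref{lem:caterpillar} since shallow terminals appear only in leaf nodes.
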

  \begin{proof}
    By definition, if $v$ is adjacent to $B$, then $v\sim b_i$ for
    some $i\le d-3$.  If $v\sim b_d$, then $B$ is not a short base, as
    there would be a a shorter (not necessarily chordless) $l$-$r$
    path $(l,\dots,b_i,v,b_d,r)$. Therefore, $v\not\sim b_d$ and it
    follows that $\mathtt{last}(v) < \mathtt{first}(b_{d})$.  Suppose
    to the contrary of the second assertion, $v$ is adjacent to the
    shallow terminal $x$ of some AW $W_1$.  We apply
    Lemma~\ref{lem:shallow}(2) on $v$ and $W_1$.  As $v\not\in ST(G)$,
    it has to be in categories ``full'' or ``partial.''  In either
    case, there exists an AW whose base is fully contained in
    $\gemini[{\mathtt{first}(v), \mathtt{last}(v)}]$, contradicting
    the selection of $W$.  \renewcommand{\qedsymbol}{$\lrcorner$}
  \end{proof}

  Therefore, either $u=l'$ or $u\in B'$.  Now we focus on the
  chordless path $l'B'r'$, which we shall refer to by $P'$, and how it
  reaches $u$ when going from $r'$ to $l'$.  Recall that every vertex
  of $B'$ appears in the central path of the caterpillar
  decomposition.  

  \begin{claim}\label{new-claim-2}
    $B'\cap N = \emptyset$.
  \end{claim}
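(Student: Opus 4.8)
The plan is to assume, for contradiction, that $B'\cap N$ contains a vertex $x$, and to follow the same scheme as the analogous claim in the proof of Lemma~\ref{lem:chordal-to-interval}. Since $x\in N=\widehat N(B)$, vertex $x$ is adjacent to every base vertex of $W$, and hence, by Lemma~\ref{lem:common-neighbor-of-base}, also to the shallow terminal $s$. First I would pin down the roles of $u$ and $x$ inside $W'$ --- showing $u=l'$, $x=b'_1$, and that $u$ is the only vertex of $W'\cap V_I$ --- and then derive a contradiction by replacing $u$ in $W'$ with a vertex of $X\setminus Q$.

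Recall that $x\in N$ occurs in every central-path node of $\mathcal T$ from $K_{\mathtt{last}(b_1)}$ to $K_{\mathtt{first}(b_d)}$. By \eqref{eq:u} the whole interval of $u$ lies strictly between $\mathtt{last}(b_1)$ and $\mathtt{first}(b_{d-4})$, hence strictly inside the span of $x$; since every neighbour of $u$ lies on the central path (as $u\not\sim ST(G)$ by Claim~\ref{new-claim-1}), $x$ shares a node with $u$ and with each neighbour of $u$, so $x\sim u$ and $x$ is adjacent to all of $N(u)$. Now $x$ and $u$ both lie on the chordless path $P'=l'B'r'$ and are adjacent, hence are consecutive on $P'$; were $u$ an interior vertex of $P'$, its other neighbour on $P'$ would also be adjacent to $x$, giving a chord of $P'$. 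Thus $u$ is an endpoint of $P'$, and since $u\neq s'$ and $u\neq r'$ have already been excluded, $u=l'$ and its unique $P'$-neighbour $x=b'_1$. Applying the same argument to an arbitrary $u'\in W'\cap V_I$ (such a $u'$ cannot lie in $X\subseteq Q'$, so it too satisfies \eqref{eq:u} and the hypotheses above) yields $u'=l'=u$, so $W'\cap V_I=\{u\}$.

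With these facts in hand I would produce the contradiction. Pick $w\in X\setminus Q$, which is nonempty because $X\not\subseteq Q$, and put $W''=(W'\setminus\{u\})\cup\{w\}=(s':c'_1,c'_2:w,B',r')$. First, $W''$ avoids $Q$: since $W'\cap V_I=\{u\}$, the set $W'\setminus\{u\}$ is disjoint from $V_I$ and so meets $Q$ only possibly in $Q\setminus V_I\subseteq Q'$, which $W'$ avoids, while $w\notin Q$. Second, $W''$ is again a $\dag$- or $\ddag$-AW, for which it suffices that $w$ and $u$ have identical adjacencies inside $W'$: using $w\in S_\ell\setminus N$ together with \eqref{eq:u} and Proposition~\ref{lem:at-indices}, one checks that $w\sim b'_1=x$ (because $x\in N$ and $\ell$ lies in the span of $x$), that $w\not\sim s'$ (Claim~\ref{new-claim-1} applies to $w$, which sits at index $\ell<\mathtt{first}(b_{d-5})$), and that $w$ lies to the left of $b'_2,\dots,b'_{d'},r'$. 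The three assertions that genuinely need the minimality of $W$ are $c'_2\not\sim w$, $c'_1\in N$ when $W'$ is a $\ddag$-AW, and $w\not\sim b'_2$; each is proved by assuming its negation and short-cutting through $w$ and one of $c'_1,c'_2,b'_2$ to exhibit an AW whose container is properly contained in $\gemini(W)$, contradicting the minimality of $\gemini(W)$, respectively of $\mathtt{first}(b_d)$. Once $w\not\sim b'_2$ is known, the position bounds give $w\not\sim b'_i$ for every $i\ge 2$. Hence $W''$ is an AW in $G-Q$, contradicting that $Q$ is an interval deletion set, and the claim follows.

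The step I expect to be the main obstacle is the bookkeeping in the last paragraph: one must line up the indices $\mathtt{first}(\cdot)$ and $\mathtt{last}(\cdot)$ of $w$, $u$, $x$, $s'$, $c'_1$, $c'_2$, $b'_2$ and of the base vertices $b_i$ of $W$ so that each ``short-cut AW'' provably has its container strictly inside $\gemini(W)$. This is precisely where the loose margins of this weaker statement pay off: with $V_B$ holding $13$ vertices and $V_I=\gemini[{\mathtt{last}(b_3),\mathtt{first}(b_{d-6})}]\setminus N$, the vertices $w$, $u$ and the witnessing base vertices stay far enough apart that only the crude fact that a vertex outside $N$ sees at most three consecutive base vertices is ever invoked.
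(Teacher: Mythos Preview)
Your overall architecture is correct and matches the paper: assume $x\in B'\cap N$, use the containment of $u$'s interval in that of $x$ to force $u=l'$, $x=b'_1$, and $W'\cap V_I=\{u\}$; then swap $u$ for some $w\in X\setminus Q$ and argue that the resulting $(s':c'_1,c'_2:w,B',r')$ is an AW avoiding $Q$.

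Where you diverge is in the verification that $w$ and $u$ have identical adjacencies inside $W'$. You propose to handle $c'_1\in N$, $c'_2\not\sim w$, and $w\not\sim b'_2$ by constructing auxiliary AWs and appealing to the minimality of $\gemini(W)$ --- essentially porting the case analysis from the proof of Lemma~\ref{lem:chordal-to-interval}. That works, but it defeats the purpose of the appendix lemma. The paper's proof here uses \emph{no} auxiliary AWs at all for these three items; everything follows from pure interval arithmetic together with the ``at most three consecutive base vertices'' bound. Concretely: if $c'_i$ is adjacent to either $u$ or $w$, then by \eqref{eq:u} we get $\mathtt{first}(c'_i)<\mathtt{last}(b_{d-5})$, while $\mathtt{last}(c'_i)\ge\mathtt{first}(b'_{d'})\ge\mathtt{first}(b_d)$; so $c'_i$ meets more than three base vertices of $W$ and is forced into $N$, hence adjacent to both $u$ and $w$. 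Similarly, from $b'_3\not\sim b'_1\in N$ one reads off $\mathtt{last}(b'_2)\ge\mathtt{first}(b'_3)>\mathtt{first}(b_d)$, and since $b'_2\not\sim u$ gives $b'_2\notin N$, the short-base bound pushes $\mathtt{first}(b'_2)>\mathtt{last}(b_{d-5})$, whence $w\not\sim b'_i$ for all $i\ge 2$.

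You actually anticipate this in your final paragraph (``only the crude fact \ldots\ is ever invoked''), so the right fix is simply to commit to that route and drop the AW constructions. That is precisely what the extra margin in $V_B$ and $V_I$ buys, and it is what distinguishes this appendix proof from the ten-vertex version.
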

  \begin{proof} 
    Suppose the contrary and let $x$ be a vertex in $B'\cap N$.  By
    definition of $N$ and (\ref{eq:u}), we have $\mathtt{first}(x) <
    \mathtt{first}(u)\le \mathtt{last}(u)< \mathtt{last}(x)$.  Then
    every neighbor of $u$, which is not in $ST(G)$ according to
    Claim~\ref{new-claim-1}, is thus adjacent to $x$.  As $x$ and $u$ are
    both in the chordless path $P'$, vertex $u$ has to be one end of
    it.  More specifically, $u=l'$ and $x = b'_1$.  A further
    consequence is that $u$ is the only vertex in $W'\cap V_I$: the
    argument above applies to any vertex $u'\in W'\cap V_I$, and thus
    $u'=l'=u$.  

    Now we show, for any vertex $w$ in $X \setminus Q$, which is
    nonempty by assumption, it has the same neighbors as $u$ in $W'$,
    and hence $(s':c'_1,c'_2:w,B',r')$ is an AW in $G - Q$,
    contradicting the assumption that $Q$ is an interval deletion set
    to $G$.  Observe that any vertex in $N$ is adjacent to both $u$
    and $w$.  
    \renewcommand{\qedsymbol}{$\lrcorner$} 
    \begin{itemize}
    \item The assumption $w\not\in N$ implies $w\not\sim s$:
      otherwise, $w$ is adjacent to both $s$ and $B$ but not in $N$,
      and we can apply Lemma~\ref{lem:shallow} to $W$ and $w$, which is
      in category ``partial,'' to obtain an AW with strictly smaller
      container.
    \item By the selection of $W$, we have $\mathtt{last}(c'_i)\ge
      \mathtt{first}(b'_d)\ge \mathtt{first}(b_d)$ for both $i=1,2$.
      If $c'_i$, where $i=1$ or $2$, is adjacent to one of $u$ and
      $w$, then (\ref{eq:u}) implies $\mathtt{first}(c'_i)<
      \mathtt{last}(b_{d-5})$; as $B$ is short, $c'_i$ must be in $N$,
      and then adjacent to both $u$ and $w$.
    \item Vertex $b'_1$ ($=x$) is in $N$, hence adjacent to $w$.
    \item By definition, $b'_3\sim b'_2$ and $b'_3\not\sim b'_1$($\in
      N$) imply $\mathtt{last}(b'_2)\ge \mathtt{first}(b'_3) >
      \mathtt{last}(b'_1)\ge \mathtt{first}(b_d)$.  On the other hand,
      $b'_2\not\sim u$ implies $b'_2\not\in N$.  Then as $B$ is short,
      $\mathtt{first}(b'_2) > \mathtt{last}(b_{d-5})$.  Therefore,
      from (\ref{eq:u}) we can conclude for $2\le i\le d'+1$, it holds
      that $\mathtt{first}(b'_i)> \mathtt{last}(w)$ and thus
      $w\not\sim b'_i$. \qedhere
    \end{itemize}
  \end{proof}

  \begin{claim}\label{new-claim-3}
    $c'_2 \in N$.
  \end{claim}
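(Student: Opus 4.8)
The plan is a proof by contradiction: assume $c'_2\notin N$ and exhibit a forbidden structure that violates the minimality of $\gemini(W)$, the selection of $W$ (smallest $\mathtt{first}(b_d)$), or the fact that $Q$ is an interval deletion set. The first task is to locate $c'_2$ on the central path of ${\cal T}$. Since $W'$ is an AW, the selection of $W$ gives $\mathtt{first}(b'_{d'})\ge\mathtt{first}(b_d)$, and $c'_2\sim b'_{d'}$ yields $\mathtt{last}(c'_2)\ge\mathtt{first}(b_d)$. For the matching bound on $\mathtt{first}(c'_2)$, let $z$ be the neighbour of $u$ on the chordless path $P'=l'B'r'$ towards $r'$. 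By $(\ref{eq:u})$, $u$ lies strictly to the left of $b_{d-4}$, so $u\neq b'_{d'}$ and hence $z\in B'$; then $z\notin N$ by Claim~\ref{new-claim-2} and $c'_2\sim z$. Because $B$ is short, $z$ sees at most three consecutive base vertices of $W$; moreover, if $z\sim b_d$, then writing $b_p$ for the rightmost base vertex adjacent to $u$ (so $p\le d-5$ by $(\ref{eq:u})$), the path $(l,b_1,\dots,b_p,u,z,b_d,r)$ is an $l$--$r$ path of length at most $d-1$ inside $\big(\gemini(W)\setminus N\big)\cup\{l,r\}$, contradicting that $B$ is short. Hence $z\not\sim b_d$, so $\mathtt{first}(c'_2)\le\mathtt{last}(z)<\mathtt{first}(b_d)$; together with the earlier bound this forces $c'_2\in K_{\mathtt{first}(b_d)}$, and in particular $c'_2\sim b_d$.

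The base vertices of $W$ adjacent to $c'_2$ form a contiguous block containing $b_d$, and $c'_2\not\sim b_1$ (otherwise $c'_2$ would be adjacent to all of $B$, i.e., lie in $N$), so there is a largest index $p\ge1$ with $c'_2\not\sim b_p$; thus $c'_2\sim b_{p+1},\dots,b_d$ and $c'_2\not\sim b_1,\dots,b_p$. Now I would build the contradicting AW, distinguishing whether $s'$ is adjacent to every vertex of $N$. Since $s'\in ST(G)$ is simplicial, by Proposition~\ref{lem:caterpillar} all of its non-simplicial neighbours lie in a single central node $K_m$, so $c'_2\in K_m$ and $s'$ meets at most two consecutive base vertices of $W$. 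If $s'$ sees all of $N$, then $c_1$ (which is in $N$, hence adjacent to $s'$, to $c'_2\in\gemini(W)$, and to all of $B$) and $c'_2$ serve as the two centres of an AW whose shallow terminal is $s'$, whose base is a sub-path of $b_{p+1},\dots,b_d$ that avoids the at most two base vertices in $K_m$, and whose base terminals are $b_p$ and $r$ (using the $\ddag$-version if $c'_2\sim r$ and the $\dag$-version with centre $c'_2$ otherwise); since this base begins no earlier than $b_{p+1}$ with $p+1\ge2$, its container is properly contained in $\gemini(W)$, contradicting minimality --- and in the remaining sub-cases the same adjacencies instead place a copy of the AW (built on $b_p$, $c'_1,c'_2$, and a tail of $B'$) inside $G-Q$, contradicting that $Q$ is an interval deletion set. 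If some $x\in N$ has $s'\not\sim x$, one symmetrically forms an AW with shallow terminal $s$, centre(s) among $\{x,c_2\}$ (both in $N$ and hence adjacent to $s$ by Lemma~\ref{lem:common-neighbor-of-base}), base terminal $l$, and on the other side the path $(b_1,\dots,b_{p+1},c'_2,s')$ --- legitimate because $c'_2\not\sim b_1,\dots,b_p$ and $s\not\sim c'_2$ (else $c'_2$ is in category ``partial'' for $W$, producing a smaller container by Lemma~\ref{lem:shallow}); since $\mathtt{first}(c'_2)<\mathtt{first}(b_d)$, the last base vertex of this AW has $\mathtt{first}$-index below $\mathtt{first}(b_d)$, contradicting the selection of $W$. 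Either way we obtain a contradiction, so $c'_2\in N$.

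The main obstacle is this last step. One must check, over the cases --- whether $W$, $W'$, and each constructed structure is a $\dag$- or $\ddag$-AW; whether $s'$ (resp.\ $s$) is adjacent to a given vertex of $N$; and the adjacencies of the candidate centres $c_1,c_2,c'_2$ to the candidate base terminals $b_p,b_d,r,l,s,s'$ --- that the object written down is genuinely an AW, so that pre-reducedness, the minimality of $\gemini(W)$, the selection of $W$, or $Q$ being an interval deletion set each time produces the contradiction. Keeping track of the clique-indices via Proposition~\ref{lem:at-indices} and the short-base property (a vertex outside $N$ sees at most three consecutive base vertices) is routine; the genuinely delicate points are choosing the sub-path of the base that avoids the $\le2$ base vertices of $K_m$ while remaining long enough to constitute an AW, and dispatching the degenerate situations in which $u$ or $z$ coincides with a vertex of $W$, or in which $p$ lies very close to $d$.
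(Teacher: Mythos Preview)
Your approach is correct in spirit but takes the hard road when an easy one is available; moreover, you yourself flag the case analysis as incomplete. The paper's proof of this claim is a three-line direct argument with no contradiction and no constructed AWs. The point of the appendix version of the lemma is precisely that the larger buffer set $V_B=\{s,c_1,c_2,l,b_1,b_2,b_{d-5},\dots,b_d,r\}$ makes this claim trivial.

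Here is the argument you missed. Since $u=b'_i$ for some $0\le i\le d'$ (including the case $u=l'=b'_0$), Proposition~\ref{lem:at-indices} together with \eqref{eq:u} gives $\mathtt{first}(b'_1)\le \mathtt{last}(u)<\mathtt{last}(b_{d-5})$. By Claim~\ref{new-claim-2} we have $b'_1\notin N$, so the short-base property (a vertex outside $N$ meets at most three consecutive base vertices of $W$) forces $\mathtt{last}(b'_1)<\mathtt{first}(b_{d-2})\le\mathtt{last}(b_{d-3})$. Now $c'_2\sim b'_1$ because $c'_2$ is a center of $W'$, hence $\mathtt{first}(c'_2)\le\mathtt{last}(b'_1)\le\mathtt{last}(b_{d-3})$; and $\mathtt{last}(c'_2)\ge\mathtt{first}(b'_{d'})\ge\mathtt{first}(b_d)$ by the selection of $W$. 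Therefore $c'_2$ is adjacent to each of $b_{d-3},b_{d-2},b_{d-1},b_d$ --- four base vertices --- so again by the short-base property $c'_2\in N$.

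What went wrong in your write-up is that you anchored $\mathtt{first}(c'_2)$ via the auxiliary vertex $z$ (the successor of $u$ on $P'$), which only yields $\mathtt{first}(c'_2)<\mathtt{first}(b_d)$; this is too weak to conclude directly, so you were forced into the elaborate case split. Anchoring instead via $b'_1$ (the leftmost base vertex of $W'$, to which $c'_2$ is always adjacent) immediately gives the stronger bound $\mathtt{first}(c'_2)<\mathtt{first}(b_{d-2})$, and the claim follows in one line. The long contradiction argument you sketch is essentially the proof of the corresponding claim in the main Lemma~\ref{lem:chordal-to-interval}, where $V_B$ stops at $b_{d-2}$ rather than $b_{d-5}$ and the counting argument genuinely fails; in the appendix setting that machinery is unnecessary.
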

  \begin{proof}
    As $u= b'_i$ for some $0\le i\le d'$,
    Proposition~\ref{lem:at-indices} and (\ref{eq:u}) imply
    $\mathtt{first}(b'_1)\left(\le \mathtt{last}(l')\right) \le
    \mathtt{last}(u) < \mathtt{last}(b_{d-5})$.  By
    Claim~\ref{new-claim-2}, $b'_1$ is not in $N$ and adjacent to at
    most $3$ vertices of $B$; thus $\mathtt{last}(b'_1) <
    \mathtt{first}(b_{d-2}) \le\mathtt{last}(b_{d-3})$.  On the other
    hand, by the selection of $W$, we have $\mathtt{last}(c'_2)\ge
    \mathtt{first}(b'_d)\ge \mathtt{first}(b_d)$.  Therefore, $c'_2$
    is adjacent to at least $4$ vertices of $B$ and is in $N$.
    \renewcommand{\qedsymbol}{$\lrcorner$}
  \end{proof}

  From Claim~\ref{new-claim-3} we can conclude $c'_2\sim u$ and then
  $u\in B'$.  By Proposition~\ref{lem:at-indices},
  $\mathtt{first}(b'_1) \le \mathtt{first}(u) <
  \mathtt{first}(b_{d-4})$.  Then from Claim~\ref{new-claim-1} and the
  fact $l' \sim b'_1$, it can be inferred that $l' \not\in ST(G)$.
  Now $\mathtt{last}(l')$ is defined, and $\mathtt{last}(l') <
  \mathtt{first}(c'_2)\le \mathtt{last}(b_1)$; the selection of $W$
  implies $\mathtt{first}(b'_{d'}) \ge \mathtt{first}(b_{d})$.
  Therefore, the $l'$-$b'_{d'}$ path $l'B'$ has to go through $X$, and
  we end with a contradiction.  This verifies that $Q'$ is an interval
  deletion set to $G$, and it remains to show that $Q'$ is minimum,
  from which the lemma follows.

  \begin{claim}
    $|Q'| \leq |Q|$.
  \end{claim}
  \begin{proof}
    It will suffice to show that $Q \cap V_I$ makes a $b_2$-$b_{d-5}$
    separator in $G - N$, and then the claim ensues as
    $$|Q'| = |Q\setminus V_I| + |X| \leq |Q\setminus V_I| + |Q\cap V_I| =
    |Q|.$$ Suppose to the contrary, there is a chordless
    $b_2$-$b_{d-5}$ path $P$.  We can extend $P$ into an $l$-$r$ path
    $P^+ = (l b_1 P b_{d-4} b_{d-3} b_{d-2} b_{d-1} b_{d} r)$, which
    is disjoint from $Q$ and $N$.  Within $P^+$ there is a chordless
    $l$-$r$ path $(l B_1 r)$.  By assumption, $\{s,c_1,c_2\}\cap
    Q=\emptyset$; every vertex in $B_1$ satisfies the condition of
    Claim~\ref{new-claim-1}, and hence nonadjacent to $s$.  Moreover,
    $c_1,c_2\in N$, and therefore both $c_1$ and $c_2$ are adjacent to
    every vertex of $B_1$.  Thus, $(s:c_1,c_2:l,B_1,r)$ is an AW in
    $G-Q$, which is impossible.
    \renewcommand{\qedsymbol}{$\lrcorner$}
  \end{proof}
\end{proof}
\end{document}